\newtheorem{theorem}{Theorem}[section]
\newtheorem{main}[theorem]{Theorem (Main Result)}
\newtheorem{proposition}[theorem]{Proposition}
\newtheorem{definition}[theorem]{Definition}
\newtheorem{remark}[theorem]{Remark}
 \newcommand{\A}{\mathcal{A}} 
  \newcommand{\B}{\mathcal{B}}
    \newcommand{\C}{\mathcal{C}}
    \newcommand{\D}{\mathcal{D}}
\newcommand{\Q}{\mathcal{Q}}
        \newcommand{\Hi}{\mathcal{H}}
        \newcommand{\G}{{C(\mathbb{G})}}
        \newcommand{\OG}{{\mathcal{O}(\mathbb{G})}}
         \newcommand{\Ha}{{C(\mathbb{H}_1)}}
        \newcommand{\OHa}{{\mathcal{O}(\mathbb{H}_1)}}
                 \newcommand{\Hb}{{C(\mathbb{H}_2)}}
        \newcommand{\OHb}{{\mathcal{O}(\mathbb{H}_2)}}
     \newcommand{\Ga}{{C(\mathbb{G}_1)}}
                \newcommand{\OGa}{{\mathcal{O}(\mathbb{G}_1)}}
     \newcommand{\Gb}{{C(\mathbb{G}_2)}}
                \newcommand{\OGb}{\mathcal{O}(\mathbb{G}_2)}
  \newcommand{\Gdual}{\hat{\mathbb{G}}}
        \newcommand{\IrredG}{\Irred(\mathbb{G})}
                \newcommand{\IrredGa}{\Irred(\mathbb{G}_1)}
        \newcommand{\IrredHa}{\Irred(\mathbb{H}_1)}
        \newcommand{\vtimes}{\overline{\otimes}}
  \newcommand{\boksdot}[1][H]{\stackrel[#1]{}{\boxdot}}
  \newcommand{\bokstimes}[1][H]{\stackrel[#1]{}{\boxtimes}}
\let \DMO \DeclareMathOperator
\DMO{\Hom}{Hom}
\DMO{\Mor}{Mor}
\DMO{\Irred}{Irred}
\DMO{\dom}{dom}
\DMO{\Ker}{Ker}
\DMO{\GL}{GL}
\DMO{\SL}{SL}
\DMO{\M}{M}
\DMO{\vect}{vect} 
\DMO{\id}{id}
\DMO{\ad}{ad}
\DMO{\mult}{mult}
\DMO{\sgn}{sgn}
 \DMO{\inv}{inv}
 \DMO{\QISORnulb}{QISO_R^0}
 \DMO{\QISOb}{QISO}
              \newcommand{\QISOR}[1][\A,\Hi,D]{\QISOb_R(#1)}
         \newcommand{\QISORnul}[1][\A,\Hi,D]{\QISOb^0_R(#1)}
           \newcommand{\QISORnuleen}[1][\A_1,\Hi,D]{\QISOb^0_R(#1)}
    \newcommand{\QISORtilde}[1][\tilde \A,\tilde\Hi,\tilde D]{\QISOb_{\tilde R}(#1)}
         \newcommand{\QISORtildenul}[1][\tilde \A,\tilde \Hi,\tilde D]{\QISOb^0_{\tilde R}(#1)}
\DMO{\Tr}{Tr}
\title{Deformations of spectral triples and their quantum isometry groups via monoidal equivalences}
\author{Liebrecht De Sadeleer \\ Department of Mathematics, KU Leuven\footnote{Celestijnenlaan 200B, 3001 Leuven, Belgium, Email: liebrecht.desadeleer@wis.kuleuven.be}}
\date{}                                           % Activate to display a given date or no date
\numberwithin{equation}{section}
\begin{document}
\maketitle
\begin{abstract} 
In this paper, we propose a new procedure to deform spectral triples and their quantum isometry groups. The deformation data are a spectral triple $(\A,\Hi, D)$, a compact quantum group $\mathbb G$ acting algebraically and by orientation-preserving isometries on $(\A,\Hi,D)$ and a unitary fiber functor $\psi$ on $\mathbb G$. The deformation procedure is a proper generalization of the cocycle deformation of Goswami and Joardar. %Finally, we prove that the quantum isometry group of the deformed spectral triple is found through $\psi$.
\end{abstract}
\tableofcontents

\section*{Introduction}
An important source of examples of non-commutative manifolds in the sense of A. Connes (spectral triples, \cite{Connes1994}) relies on 2-cocycle deformations. For instance, the so-called `isospectral deformations' (\cite{Connes2001}) of compact spin manifolds admitting an action of a torus (or an action of the abelian group $\mathbb R^d$) may be seen as a by-product of Rieffel's machinery which, given a $C^*$- or Fr\'echet-algebra $\mathbb A$ on which $\mathbb R^d$ acts, produces a one-parameter continuous field of $C^*$-algebras $\{\mathbb A_\theta\}_{\theta \in \mathbb R}$ with $\mathbb A_0=\mathbb A$. The cocycle involved in this case is the usual Moyal 2-cocycle in $\mathbb R^d$. When $\mathbb A$ is the algebra underlying a spectral triple $(\mathbb A,\Hi,D)$, and the action of $\mathbb R^d$ lifts to an isometric action on $\Hi$, i.e. an action commuting with $D$, Rieffel machinery produces a new (family of) spectral triple(s) $(\mathbb A_\theta,\Hi,D)$. The paradigm there consists in the noncommutative torus within its metric version. \\ \\
In the present work, we generalize the deformation procedure through quantum group 2-cocycles (Goswami-Joardar, \cite{Goswami2014}) which is a way to produce new spectral triples from a given one. Our procedure is based on the notion of monoidal equivalence (introduced by Bichon , de Rijdt and Vaes, \cite{Bichon2005}) of (some subgroup of) its quantum isometry group (\cite{Goswami2009}). The generalized procedure here leads to examples that cannot be obtained by 2-cocycle deformations. 
%One obtains the following diagram:
%
%\begin{center}
%$\begin{diagram}
%(\A,\Hi,D)& \curvearrowleft & \mathbb{G}_1 \\ 
% \dDashto&&\dTo_\varphi \\ 
%(\A\boksdot[\OGa]\B,\Hi\bokstimes[\Ga] L^2(\B),(D\otimes 1_{L^2(\B)})_{|_{\Hi\bokstimes[\Ga] L^2(\B)}})&\qquad \curvearrowleft \qquad& \mathbb{G}_2
%\end{diagram}$
%\end{center} 
\\ \\
The paper is structured as follows. In the first section we recall some basic material and in the second, we describe the deformation procedure. In the third section we show that 2-cocycle deformations are particular cases of our deformation procedure. Moreover not all examples are coming from 2-cocycles: in the fourth section, we give such an example that is not a 2-cocycle deformation, proving our procedure is a proper generalization of the one by Goswami and Joardar. Finally in the last section, we prove that the quantum isometry group of the deformed spectral triple is a certain deformation of the quantum isometry group of the original spectral triple.
\\ \\
Before we end this introduction, we will clarify some notation. Given a Hilbert space $\Hi$, the inner product $\langle\cdot,\cdot \rangle$ is linear in the second variable. Moreover, for $\xi,\eta\in \Hi$, $\xi^*$ is the functional $\Hi\to \mathbb C:\eta\mapsto \langle \xi,\eta\rangle$ and $\xi\eta^*$ the rank one operator $\Hi\to \Hi:\zeta\to \xi\langle \eta,\zeta\rangle$. We will denote by $B(\Hi)$ resp. $\mathcal{K}(\Hi)$ the bounded resp. compact operators on $\Hi$ and for a bounded or unbounded operator $D$ on $\Hi$, $\sigma(D)$ will be used to denote its spectrum. Given a $C^*$-algebra $A$, the multiplier algebra of $A$ will be denoted by $\mathcal{M}(A)$ and for a subset $B$ of $A$, we define $\langle B \rangle$ to be the linear span of $B$, $[B]$ the closed linear span, $S(B)$ the $^*$-algebra generated by the elements of $B$ and $C^*(B)$ the $C^*$-subalgebra of $A$ generated by the elements of $B$. Furthermore, we use $\omega_{\xi,\eta}$ to denote the linear functional which maps $a\in B(\Hi)$ to $\langle \xi,a\eta\rangle$ where $\xi,\eta\in \Hi$, having linearity in the inner product in the second variable.
\\ \\
An algebraic tensor product will be denoted by $\odot$ while the minimal $C^*$-algebraic tensor product and a tensor product of Hilbert spaces is denoted by $\otimes$. %The tensor product of von Neumann algebras is denoted by $\vtimes$. 
We will also use the legnumbering notation in three and multiple tensor products: for $a\in A\otimes A$, we let $a_{12}=a\otimes 1_A, a_{23}=1_A\otimes a_{23}$ and $a_{13}=(\id\otimes \tau)(a\otimes 1_A)$, all three elements in $A\otimes A\otimes A$ where $\tau(a\otimes b)=b\otimes a$.
\\ \\
For a Hopf algebra $H$, the coproduct, counit and antipode will be denoted by $\Delta$, $\varepsilon$ and $S$ resp. We also use the Sweedler notation $\Delta(h)=h_{(1)}\otimes h_{(2)}$. A left, resp. right $H$-comodule is a vector space $A$ endowed with a linear map $\alpha: A\to H\odot A$ resp. $\alpha:A\to A\odot H$ satisfying $(\Delta\otimes \id)\alpha=(\id\otimes \alpha)\alpha$ resp. $(\alpha\otimes \id)\alpha=(\id\otimes \Delta)\alpha$. If $A$ is an algebra and $\alpha$ is multiplicative, it is called a coaction of $H$ on $A$ and $A$ is called an $H$-comodule algebra. If $A$ and $B$ are a right resp. left $H$-comodule algebra with resp. coactions $\alpha$ and $\beta$, $A\boksdot B$ will denote the algebra $\{z\in A\odot B| (\alpha\otimes \id)(z)=(\id\otimes \beta)(z)\}$.

\section{Compact quantum groups and Monoidal equivalences}\label{sec-cqg}

We start this section with a short overview of the theory of compact quantum groups. The theory is essentially developed in \cite{Woronowicz1987}, \cite{Woronowicz1998} and also explained in \cite{Maes1998}.

\subsection{Compact quantum groups and representations}\label{subsec-cqg}

\begin{definition}[\cite{Woronowicz1998}]
A compact quantum group $\mathbb{G}$ is a pair $(\G,\Delta)$ where $\G$ is a unital, seperable $C^*$-algebra and $\Delta : \G\to \G\otimes \G$ a unital $^*$-morphism such that
\begin{enumerate}
\item $(\Delta\otimes \id)\Delta = (\id\otimes \Delta)\Delta$
\item $[\Delta(\G)(\G\otimes 1)]=\G\otimes \G=[\Delta(\G)(1\otimes \G)]$
\end{enumerate}
\end{definition}
implementing coassociativity and the cancellation properties.

Moreover there exists a unique state $h$ on $\G$ which is left and right invariant in the sense that $(\id\otimes h)\Delta(x)=h(x)1_\G=(h\otimes \id)\Delta(x)$ for all $x\in \G$ (\cite{Woronowicz1987,Woronowicz1998,Maes1998}).This state is called the Haar state of $\mathbb{G}$.
In the classical case that $\G=C(G)$ for a classical compact group $G$, the Haar state is the state on $C(G)$ obtained by integrating along the Haar measure.\\
\\
It is well known that, like compact groups, compact quantum groups have a rich representation theory (\cite{Woronowicz1987,Woronowicz1998,Maes1998}). A right unitary representation of a compact quantum group $\mathbb{G}=(\G,\Delta)$ on a Hilbert space $\Hi$ is a unitary element $U$ of $\mathcal{M}(\mathcal{K}(\Hi)\otimes \G)$ satisfying
$ (\id \otimes \Delta)U=U_{12}U_{13}.$ Analgously, a left unitary representation of $\mathbb G$ on $\Hi$ is a unitary element $U$ of $\mathcal{M}(\G\otimes \mathcal{K}(\Hi))$ satisfying
$ (\Delta \otimes\id)U=U_{13}U_{23}.$ In this paper all representations will be right representations unless indicated otherwise.
The dimension of $\Hi$ is called the dimension of the representation.
Identifying $\mathcal{M}(\mathcal{K}(\Hi)\otimes \G)$ with $B(\Hi\otimes \G)$, the $C^*$-algebra of $\G$-linear adjointable maps on the Hilbert-$C^*$-module $\Hi\otimes \G$, we will also see representations as maps $u:\Hi\to\Hi\otimes \G:\xi \to U(\xi\otimes 1_{\G})$ satisfying that $\langle u(\xi),u(\eta)\rangle_{\G}=\langle \xi,\eta\rangle 1_{\G}$, $(u\otimes \id)u=(\id\otimes \Delta)u$ and $[u(\xi)(1\otimes a):\xi\in \Hi,a \in \G]=\Hi\otimes \G$.

Moreover, there is the notion of tensor product of representations: if $U$ and $V$ are representations of a quantum group $\mathbb{G}=(\G,\Delta)$ on Hilbert spaces $\Hi_1,\Hi_2$ respectively, the tensor product $U\otimes V$ of $U$ and $V$ is defined as $U\otimes V = U_{13}V_{23}\in \mathcal{M}(\mathcal{K}(\Hi_1\otimes\Hi_2)\otimes \G)$. Furthermore, we call a representation $U$ of $\mathbb G$ on $\Hi$ irreducible if $\Mor(U,U)= \mathbb C 1_{B(\Hi)}$ where $$\Mor(U^1,U^2):=\{S\in B(\Hi_2,\Hi_1) | (S\otimes 1_\G)U^2=U^1(S\otimes 1_\G)\}$$ for representations $U_1$ and $U_2$ on $\Hi_1$ resp. $\Hi_2$. An important result states that every irreducible representation is finite dimensional and that every unitary representation is unitarily equivalent to a direct sum of finite dimensional irreducible representations. Finally, for every irreducible unitary representations, there exist the notion of contragredient representation (\cite{Woronowicz1998},\cite{Maes1998}). \\

For a compact quantum group $\mathbb{G}$, we denote by $\Irred(\mathbb{G})$ the set of equivalence classes of irreducible representations of $\mathbb{G}$ and for $x\in\Irred(\mathbb{G})$, we will always take a unitary representative $U^x\in B(\Hi_x)\otimes \G$. By $\varepsilon$, we will denote the class of the trivial representation $1_\G$.

Also for a compact quantum group $\mathbb{G}=(\G,\Delta)$ and an equivalence class $x\in \Irred(\mathbb G)$, we will denote by $(\omega_{\xi,\eta}\otimes \id_\G)U^x$ a matrix coefficient where $\xi,\eta \in \Hi$ and define $\OG$ to be the linear span of matrix coefficients of all irreducible (hence finite dimensional) representations of $\mathbb{G}$:
$$\OG=\langle (\omega_{\xi,\eta}\otimes \id_\G)U^x| x\in \Irred(\mathbb{G}), \xi,\eta \in \Hi_x \rangle,$$ even more, the matrix coefficients of the irreducible representations form a basis of $\OG$. Note that $\OG$ is a unital dense $^*$-subalgebra of $\G$ which has, endowed with the restriction of $\Delta$ to $\OG$, the structure of a Hopf $^*$-algebra. This is a very nontrivial result obtained in \cite{Woronowicz1998}, see also \cite{Maes1998}. Also, for a $x\in \Irred(\mathbb{G})$, let $\OG_x=\langle (\omega_{\xi,\eta}\otimes \id_\G)U^x| \xi,\eta \in \Hi_x \rangle$. Then we have $\Delta:\OG_x\to \OG_x\odot\OG_x$ and $\OG_x^*=\OG_{\overline{x}}$.

\begin{definition}[\cite{Bedos2001}]
Let $\mathbb{G}$ be a compact quantum group. The reduced $C^*$-algebra $C_r(\mathbb{G})$ is defined as the norm closure of $\OG$ in the GNS-representation with respect to the Haar state $h$ of $\mathbb{G}$.The universal $C^*$-algebra $C_u(\mathbb{G})$ is defined as the $C^*$-envelope of $\OG$. 
%Finally, the von Neumann algebra $L^\infty(\mathbb{G})$ is defined as the von Neumann algebra generated by $C_r(\mathbb{G})$. 
Note that if $\mathbb{G}$ is the dual of a discrete (classical) group $\Gamma$, we have $C_r(\mathbb{G})=C^*_r(\Gamma), C_u(\mathbb{G})=C^*_u(\Gamma)$.
% and $L^\infty(\mathbb{G})=\mathcal{L}(\Gamma)$
\end{definition}

\begin{remark}\label{remarkversions}
Note that, for a given compact quantum group $\mathbb{G}$, we have surjective morphisms between the different completions of $\OG$: $C_u(\mathbb{G})\to C(\mathbb{G})\to C_r(\mathbb{G})$. We will think of all these algebras as describing the same quantum group.
\end{remark}

\begin{definition}
Let $\mathbb{G}=(C_u(\mathbb G),\Delta_{\mathbb{G}})$ and $\mathbb{H}=(C_u(\mathbb{H}),\Delta_{\mathbb{H}})$ be compact quantum groups equipped with their universal $C^*$-norms. Suppose moreover that there exists a surjective map $\theta:\G\to C(\mathbb{H})$ satisfying $\Delta_{\mathbb{H}}\circ \theta=(\theta\otimes\theta) \Delta_{\mathbb{G}}$. Then we call $\mathbb{H}$ a quantum subgroup of $\mathbb{G}$. Equivalently, $\mathbb G$ is called a quantum supergroup of $\mathbb H$.
\end{definition}

%\subsection{Discrete quantum groups and duals of compact quantum groups}
\begin{definition}[\cite{Maes1998}]
Let $\mathbb G$ be a compact quantum group. Let 
$$c_0( \hat{\mathbb G})= \oplus_{x\in \IrredG}B(\Hi_x),\qquad \ell^\infty(\Gdual)=\prod_{x\in\IrredG}B(\Hi_x).$$ Then we call $\Gdual$ the dual quantum group which has the structure of a discrete quantum group (see \cite{VanDaele1996} for the definition and results). 
\end{definition}
Using the notation $\mathbb V=\oplus_{x\in \IrredG}U^x$, we can define the dual comultiplication
$$\hat \Delta: \ell^\infty(\Gdual)\to \ell^\infty(\Gdual)\vtimes\ell^\infty(\Gdual) : (\hat \Delta \otimes \id)(\mathbb V)=\mathbb V_{13} \mathbb{V}_{23}.$$

\subsection{Actions of compact quantum groups and the spectral subalgebra}\label{subsec-actions}
\begin{definition}[\cite{Podles1995}]
Let $B$ be a unital $C^*$-algebra and $\mathbb{G}=(\G,\Delta)$ a compact quantum group. A right action of $\mathbb{G}$ on $B$ is a unital $*$-homomorphism $\beta: B\to B\otimes \G$ such that 
\begin{enumerate}
\item $(\beta\otimes \id_\G)\beta=(\id_B\otimes \Delta)\beta$
\item $[\beta(B)(1\otimes \G)]=B\otimes \G$.
\end{enumerate}
Analogously, a left action is a unital $^*$-morphism $\beta': B\to \G \otimes  B$ satisfying the analogous conditions. 
We say that the action is ergodic if $B^\beta=\{b\in B|\beta(b)=b\otimes 1\}=\mathbb{C}1_B$.
\end{definition}

One can choose to call the map in this definition `a coaction' as it is a coaction of the $C^*$-algebra $\G$ on $B$. However, we choose to call it an action of the compact quantum group in order to be compatible with the classical case: if $\G=C(G)$ and $B=C(X)$ with $G$ a classical compact group and $X$ a compact space, it is an action of $G$ on $X$. \\
One can prove that in the case of ergodic actions, there is a unique invariant state on $B$ (\cite{Boca1995}), which we will denote by $\omega$.

Note that the most evident example is a quantum group acting on itself by comultiplication. In that situation, one can check that $\omega=h$.

Using the intimate link between the ergodic action of a compact quantum group on a unital $C^*$-algebra and the representations of the quantum group, one has the following result.

\begin{proposition}[\cite{Boca1995}]
Let $B$ be a unital $C^*$-algebra and $\beta:B\to \G\otimes B$ a left action of $\mathbb{G}$ on $B$. Define for every $x\in\IrredG$, 
$$K_x=\{\zeta\in \Hi_x\otimes B\mid U^x_{12}\zeta_{13}=(\id_{\Hi_x}\otimes \beta)\zeta\}$$ and
$$\B_x=\langle (\xi^*\otimes 1_B)\zeta\mid \zeta\in K_x,\xi\in \Hi_x\rangle.$$
Then the spaces $\B_x$ with $x\in \IrredG$ are called the spectral subspaces of $B$ and $$\B=\langle (\xi^*\otimes 1_B)\zeta\mid x\in\IrredG,\zeta\in K_x,\xi\in \Hi_x\rangle$$ is a dense unital $^*$-subalgebra of $B$ which we will call the spectral subalgebra of $B$ with respect to $\beta$. Moreover $\beta_{|_{\B}}$ is an algebraic coaction of the Hopf $^*$-algebra $(\OG,\Delta)$ on $\B$. 
\end{proposition}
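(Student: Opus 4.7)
I need to establish four things: that $\B$ contains $1_B$, is closed under multiplication and involution, is norm-dense in $B$, and that $\beta|_\B$ is an algebraic left $\OG$-coaction. The first three properties will follow from direct manipulation of the defining equation $U^x_{12}\zeta_{13} = (\id \otimes \beta)\zeta$; density is the main analytic input, obtained through a Peter--Weyl projection built from the Haar state.

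Expanding $\zeta = \sum_i \xi_i \otimes b_i \in K_x$ in an orthonormal basis of $\Hi_x$, the defining equation becomes the coordinate identity $\beta(b_i) = \sum_k u^x_{ik} \otimes b_k$, where $u^x_{ik}$ are the matrix coefficients of $U^x$; this is the starting point for all algebraic checks. The unit belongs to $\B$ via the trivial representation, and the coaction property of $\beta|_\B$ is immediate, since $\beta((\xi^*\otimes 1)\zeta) = \sum_{i,k}\langle\xi,\xi_i\rangle u^x_{ik}\otimes b_k$ lies in $\OG_x \odot \B_x$. For closure under multiplication I take $\zeta \in K_x$ and $\zeta' \in K_y$, form $\tilde\zeta := \sum_{i,j}\xi_i\otimes\eta_j\otimes b_ic_j$ in $\Hi_x\otimes\Hi_y\otimes B$, and check $\tilde\zeta \in K_{x\otimes y}$ using $U^{x\otimes y} = U^x_{13}U^y_{23}$ together with the multiplicativity of $\beta$. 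Decomposing $x\otimes y$ into irreducibles via intertwining isometries $V \in \Mor(z, x\otimes y)$ and observing that $(V^*\otimes 1_B)\tilde\zeta \in K_z$ (by applying the intertwining relation on the $\G$-leg) then writes the product $((\xi^*\otimes 1)\zeta)((\eta^*\otimes 1)\zeta')$ as a finite sum of elements in the $\B_z$. Closure under involution is analogous: passing to the contragredient $\bar x$ and using $S(u^x_{ij}) = (u^x_{ji})^*$ with the appropriate $f_1$-twist in the non-Kac setting produces, from $\zeta \in K_x$, a canonical element of $K_{\bar x}$ whose suitable slice realises $b^* \in \B_{\bar x}$.

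The main obstacle is norm-density of $\B$ in $B$. My strategy is to construct a projection $E^x : B \to \B_x$ for each $x \in \IrredG$ by slicing $\beta(b)$ against matrix coefficients of $U^x$, weighted by the Woronowicz character $f_1$, so that the coordinate identity defining $K_x$ is satisfied by the resulting element. The Peter--Weyl orthogonality of matrix coefficients under the Haar state then provides a formal decomposition $b = \sum_x E^x(b)$, and convergence is ensured by the cancellation axiom $[\beta(B)(\G\otimes 1)] = \G \otimes B$ combined with the norm-density of $\OG$ in $\G$: any $b \in B$ is arbitrarily close to finite sums of the form $\sum_i (h\otimes\id)((a_i\otimes 1)\beta(b_i))$ with $a_i \in \OG$, and each such slice is already a finite sum of elements in spectral subspaces. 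The technical heart is ensuring that the $f_1$-weighted projection behaves correctly in the non-Kac setting, which is precisely where Woronowicz's modular theory for compact quantum groups enters the argument.
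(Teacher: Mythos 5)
The paper does not prove this proposition at all: it is quoted verbatim from Boca's work (the citation \cite{Boca1995}) and used as a black box, so there is no in-paper argument to compare against. Your sketch is the standard proof of that cited result and is essentially correct: the coordinate identity $\beta(b_k)=\sum_i u^x_{ki}\otimes b_i$ extracted from the defining relation of $K_x$ is the right starting point, the product and involution are handled exactly as in Boca/Podle\'s (decomposing $x\otimes y$ via isometries in $\Mor(z,x\otimes y)$, and unitarizing the contragredient with the $f_1$-twist), and the density argument via Haar-weighted spectral projections combined with the cancellation axiom $[\beta(B)(\G\otimes 1)]=\G\otimes B$ is the standard one. The only step I would ask you to write out in full is the claim that each slice $(h\otimes\id)\bigl(((u^x_{kl})^*\otimes 1)\beta(b)\bigr)$ lands in $\B_x$ \emph{as defined through} $K_x$: you must check that an element $b$ with $\beta(b)\in\OG_x\odot B$ can be reassembled into a vector $\zeta\in K_x$ (using coassociativity and the counit on the coordinates of $\beta(b)$), since a priori the two descriptions of the spectral subspace are not identical. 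That verification is routine but it is the hinge on which your density argument turns.
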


\begin{remark}
An action $\beta:B\to B\otimes \G$ of $\mathbb{G}$ on $B$ is called universal if $B$ is the universal $C^*$-algebra of $\B$. It is called reduced if the map $(\id\otimes h)\beta:B\to B$ onto the fixed point algebra $B^\beta$ is faithful.\\
In remark \ref{remarkversions} we saw that a compact quantum group can be described using different $C^*$-algebras, having the same underlying (dense) Hopf $^*$-subalgebra. Similarly here, given an action $\beta :B\to B\otimes \G$ of $\mathbb{G}$ on $B$, passing through $\B$ we can associate to it its universal and reduced $C^*$-completions $B_u$ and $B_r$, and we have surjective morphisms: $B_u\to B\to B_r$. %So again, we will identify two actions if the underlying Hopf $^*$-algebras are the same.
\end{remark}

\subsection{Monoidal equivalences between compact quantum groups}\label{subsec-moneq}
\begin{definition}[\cite{Bichon2005}]\label{defmoneq}
Let $\mathbb{G}_1=(\Ga,\Delta_1)$ and $\mathbb{G}_2=(\Gb,\Delta_2)$ be two compact quantum groups. $\mathbb{G}_1$ and $\mathbb{G}_2$ are called monoidally equivalent if there exists a bijection $\varphi: \Irred(\mathbb{G}_1)\to \Irred(\mathbb{G}_2)$ which satisfies $\varphi(\varepsilon_{\mathbb{G}_1})=\varepsilon_{\mathbb{G}_2}$ together with linear isomorphisms:
\begin{multline*}
\varphi:\Mor(x_1\otimes\ldots\otimes x_r,y_1\otimes\ldots\otimes y_k)\\\to \Mor(\varphi(x_1)\otimes\ldots\otimes \varphi(x_r),\varphi(y_1)\otimes\ldots\otimes \varphi(y_k))\qquad\qquad
\end{multline*}
satisfying 
\begin{eqnarray}\label{condmoneq}
&\varphi(1)=1,\qquad \varphi(S\otimes T)=\varphi(S)\otimes \varphi(T),&\nonumber\\
&\varphi(S^*)=\varphi(S)^*,\qquad \varphi(ST)=\varphi(S)\varphi(T)&
\end{eqnarray}
whenever the formulas make sense. The collection of maps is called a monoidal equivalence.
\end{definition}
Note that this is indeed the usual definition of equivalence between strict monoidal categories, but adapted to the concrete case of the category of representations of a compact quantum group.
\begin{definition}[\cite{Bichon2005}]\label{defunitfiberfunctor}
Let $\mathbb G= (\G,\Delta)$ be a compact quantum group. A unitary fiber functor is a collection of maps $\psi$ such that 
\begin{itemize}
\item for every $x\in \IrredG$, there is a finite dimensional Hilbert space $\Hi_{\psi(x)}$,
\item there are linear maps 
\begin{multline}
\psi: \Mor(x_1\otimes \ldots\otimes x_k,y_1\otimes \ldots\otimes y_s)\\\to B(\Hi_{\psi(y_1)}\otimes \ldots\otimes\Hi_{\psi(y_s)}, \Hi_{\psi(x_1)}\otimes \ldots \otimes \Hi_{\psi(x_k)})
\end{multline} 
which satisfy the equations \eqref{condmoneq} of definition \ref{defmoneq}.
\end{itemize}
\end{definition}
\begin{remark}[\cite{Bichon2005}]\label{unitfibfunctinducesmoneq}
To define a unitary fiber functor it suffices to attach to every $x\in \IrredG$ a finite dimensional Hilbert space $\Hi_{\psi(x)} (\Hi_\varepsilon=\mathbb C)$ and to define the linear maps
$$\psi: \Mor(x_1\otimes \ldots\otimes x_k,y)\to B(\Hi_{\psi(y)},\Hi_{\psi(x_1)}\otimes \ldots\otimes \Hi_{\psi(x_k)})$$ for $k=1,2,3$ satsifying 
\begin{eqnarray}
\psi(1)=1&&\\
\psi(S)^*\psi(T)=\psi(S^*T) & \text{if} &S\in \Mor(x\otimes y,a), T\in \Mor(x\otimes y,b)\\
(\psi(S)\otimes \id)\psi(T)=\psi((S\otimes \id)T) & \text{if} &S\in \Mor(x\otimes y,a), T\in \Mor(a\otimes z,b)\\
(\id\otimes\psi(S) )\psi(T)=\psi((\id\otimes S )T) & \text{if} &S\in \Mor(x\otimes y,a), T\in \Mor(a\otimes z,b)
\end{eqnarray}
together with a non-degenerateness condition 
\[[\psi(S)\xi|a\in \IrredG,S\in \Mor(b\otimes c,a),\xi\in \Hi_{\psi(a)}]=\Hi_{\psi(b)}\otimes \Hi_{\psi(c)}\]
\end{remark}

In fact, the notions of unitary fiber functor and monoidal equivalence are equivalent, which is stated in the following proposition, taken from Proposition 3.12 in \cite{Bichon2005}.
\begin{proposition}\label{unfibfunctismoneq}
Let $\mathbb G_1$ be a compact quantum group and $\psi$ a unitary fiber functor on it. Then there exist a unique universal compact quantum group $\mathbb{G}_2$ with underlying Hopf algebra $(\OGb,\Delta_2)$ with unitary representations $U^{\psi(x)}\in B(\Hi_{\psi(x)})\otimes \Gb$, $x\in \Irred(\mathbb G_1)$ such that
\begin{enumerate}
\item $U^{\psi(y)}_{13}U^{\psi(z)}_{23}(\psi(S)\otimes 1)=(\psi(S)\otimes 1)U^{\psi(x)}$ for all $S\in \Mor(y\otimes z,x)$,
\item the matrix coefficients of the $U^{\psi(x)}$, $x\in \Irred(\mathbb G_1)$ form a linear basis of $\OGb$.
\end{enumerate}
Moreover, the set $\{U^{\psi(x)}|x\in \mathbb{G}_1\}$ forms a complete set of irreducible representations of $\mathbb G_2$ and the unitary fiber functor $\psi$ on $\mathbb G_1$ will induce a monoidal equivalence $\varphi:\mathbb{G}_1\to \mathbb{G}_2$.
\end{proposition}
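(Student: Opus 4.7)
The plan is to build $\mathbb{G}_2$ as a universal Hopf $*$-algebra encoding the prescribed intertwiners of $\psi$, then produce a Haar functional to promote it to a compact quantum group, and finally read off the monoidal equivalence. The construction is modeled on Woronowicz's Tannaka--Krein reconstruction, now guided by $\psi$ rather than by an abstract tensor $C^*$-category.

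First I would let $\OGb$ be the universal unital $*$-algebra generated by symbols $u^{\psi(x)}_{ij}$, indexed by $x\in \IrredGa$ and $1\le i,j\le \dim \Hi_{\psi(x)}$, subject to the relations that each matrix $U^{\psi(x)}=(u^{\psi(x)}_{ij})\in B(\Hi_{\psi(x)})\odot\OGb$ be unitary and that
\[U^{\psi(y)}_{13}U^{\psi(z)}_{23}(\psi(S)\otimes 1)=(\psi(S)\otimes 1)U^{\psi(x)}\qquad \text{for all }S\in\Mor(y\otimes z,x).\]
Setting $\Delta_2(u^{\psi(x)}_{ij})=\sum_k u^{\psi(x)}_{ik}\otimes u^{\psi(x)}_{kj}$, both families of relations are preserved: unitarity by matrix multiplicativity, the intertwining relations by functoriality of $\psi$ on tensor products as encoded in \eqref{condmoneq}. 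Hence $\Delta_2$ descends to a coassociative $*$-homomorphism on $\OGb$. A counit $\varepsilon(u^{\psi(x)}_{ij})=\delta_{ij}$ and an antipode built from conjugate representations $U^{\psi(\bar x)}$ (obtained by transporting the duality morphisms of $\IrredGa$ through $\psi$) yield the Hopf $*$-algebra structure, and universality of $(\OGb,\{U^{\psi(x)}\})$ with respect to property~(1) --- and therefore uniqueness --- is immediate from the definition.

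The main obstacle is showing that $\OGb$ is nontrivial and, more strongly, that the matrix coefficients of the $U^{\psi(x)}$ are linearly independent; this then yields irreducibility, mutual inequivalence and completeness of $\{U^{\psi(x)}\}$ via the usual Peter--Weyl argument. Spanning is comparatively routine: any product $u^{\psi(y)}_{ij}u^{\psi(z)}_{kl}$ can be rewritten as a linear combination of entries of $U^{\psi(x)}$ with $x$ ranging over the irreducibles in the $\mathbb G_1$-decomposition of $y\otimes z$, by applying the intertwining relation to an orthonormal basis of $\bigoplus_x\Mor(y\otimes z,x)$ and using completeness $\sum_{x,S}\psi(S)\psi(S)^*=1$. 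For linear independence I would follow the Bichon--De Rijdt--Vaes strategy and build a linking $*$-algebra $\B$ generated by symbols $b^x_{\xi,\eta}$ with $x\in \IrredGa$, $\xi\in \Hi_x$, $\eta\in \Hi_{\psi(x)}$, satisfying relations dictated by $\psi$, and carrying an $\OGa$--$\OGb$ bi-coaction. A positive linear functional on $\B$ constructed from the Haar state of $\mathbb G_1$ and the inner products on the $\Hi_{\psi(x)}$ restricts on the $\OGb$-side to a faithful Haar state, proving linear independence and, after GNS-completion, providing the $C^*$-algebra $\Gb$.

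The monoidal equivalence $\varphi:\mathbb G_1\to\mathbb G_2$ is then essentially forced: put $\varphi(x):=[U^{\psi(x)}]\in \Irred(\mathbb G_2)$ and use $\psi$ on morphism spaces. The identities \eqref{condmoneq} hold by hypothesis on $\psi$; bijectivity of $\psi$ between $\Mor(x_1\otimes\ldots\otimes x_r,y_1\otimes\ldots\otimes y_k)$ and $\Mor(\varphi(x_1)\otimes\ldots\otimes \varphi(x_r),\varphi(y_1)\otimes\ldots\otimes \varphi(y_k))$ reduces, by semisimplicity and Frobenius reciprocity, to the equality of the multiplicity of the trivial subrepresentation in the two corresponding tensor products, which is preserved because $\psi$ is a $*$-functor sending an orthonormal system of morphisms to an orthonormal system.
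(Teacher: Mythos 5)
The paper does not actually prove this proposition: it is imported verbatim as Proposition 3.12 of Bichon--De Rijdt--Vaes \cite{Bichon2005}, so your sketch is really being measured against the argument in that source, whose strategy you correctly identify. The skeleton is right: the universal $^*$-algebra on the $u^{\psi(x)}_{ij}$ with unitarity and intertwining relations, the coproduct/counit/antipode checks, the spanning argument via the completeness relation $\sum_{x,S}\psi(S)\psi(S)^*=1$ (which does transport through $\psi$ because $\psi$ is multiplicative, $^*$-preserving and unital, together with the non-degeneracy condition of Remark \ref{unitfibfunctinducesmoneq}), and the derivation of the monoidal equivalence at the end once irreducibility, inequivalence and fullness are in hand.

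The genuine gap is the one sentence carrying all the weight: ``a positive linear functional on $\B$ constructed from the Haar state of $\mathbb G_1$ and the inner products on the $\Hi_{\psi(x)}$ restricts on the $\OGb$-side to a faithful Haar state.'' Two problems. First, positivity and faithfulness of $\omega$ on the linking algebra $\B$ is \emph{the} hard content of the theorem (it is Theorem 3.9 of \cite{Bichon2005}, the paper's Theorem \ref{monoidalGalois}); it is exactly where the unitarity of the fiber functor enters, it requires the explicit identification $\B\cong\bigoplus_x B(\Hi_{\psi(x)},\Hi_x)^*$ and a concrete formula for $\omega$ on each block (involving the Woronowicz $F_x$-matrices), and without it nothing rules out the universal algebra being $0$ or the ``Haar state'' failing to be positive. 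Asserting it does not discharge it. Second, the phrase ``restricts on the $\OGb$-side'' is not meaningful as stated, since $\OGb$ is not a subalgebra of $\B$: to get linear independence of the $u^{\psi(x)}_{ij}$ from that of the coefficients of the $X^x$ you must either realize $\OGb$ inside a cotensor product of two linking algebras (as in Proposition \ref{propGbtildeBB}) or argue through the coaction $\beta_2:\B\to\B\odot\OGb$ determined by $(\id\otimes\beta_2)X^x=X^x_{12}U^{\psi(x)}_{13}$. Also note a minor circularity to untangle in your last paragraph: the Frobenius-reciprocity count of morphism spaces for $\mathbb G_2$ presupposes that the $U^{\psi(x)}$ are irreducible, pairwise inequivalent and exhaust $\Irred(\mathbb G_2)$, which itself is a consequence of the linear-basis property you are in the middle of establishing, so the order of the steps matters.
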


The following theorems of Bichon et al. will be crucial in our main result. They explain what extra structure a monoidal equivalence induces.

The first theorem follows from Theorem 3.9 and Proposition 3.13 of \cite{Bichon2005}.
\begin{theorem}[\cite{Bichon2005}]\footnote{In the original statement of \cite{Bichon2005}, the coaction $\beta_1$ is a right coaction of $\OGa$, but for what follows, we want a left coaction of $\OGa$ and a right coaction of $\OGb$. Applying Bichon's theorem on the inverse monoidal equivalence $\varphi':\mathbb G_2\to \mathbb G_1$, one gets the theorem stated here. Note that, when doing that, we should write $X^{\varphi(x)}$, $x\in \Irred(\mathbb G_1)$ but for notational convenience, we write $X^{x}, x\in\Irred(\mathbb G_1).$}\label{monoidalGalois}
Let  $\mathbb{G}_1$ be a compact quantum group and let $\psi$ be a unitary fiber functor on $\mathbb G_1$. Denote with $\varphi :\mathbb G_1 \to \mathbb G_2 $ the monoidal equivalence induced by $\psi$ (see previous proposition).
 \begin{enumerate}
\item There exists a unique unital $^*$-algebra $\B$ equipped with a faithful state $\omega$ and unitary elements $X^x \in B(\Hi_{\varphi(x)},\Hi_{x})\odot \B$ for all $x \in \Irred(\mathbb{G}_1)$ satisfying
\begin{enumerate}
\item $X^y_{13}X^z_{23}(\varphi(S)\otimes 1)=(S\otimes 1)X^x$ for all $S\in \Mor(y\otimes z,x)$,
\item the matrix coefficients of the $X^x$ form a linear basis of $\B$,
\item $(\id\otimes \omega)(X^x)=0$ if $x\neq \varepsilon$.
\end{enumerate}

\item There exist unique commuting coactions $\beta_1 : \B \to \OGa\odot \B $ and $\beta_2 : \B \to \B \odot \OGb$ satisfying $$ (\id\otimes \beta_1)(X^x)=U^{x}_{12}X^x_{13}\qquad \text{and} \qquad (\id\otimes \beta_2)(X^x)=X^x_{12}U^{\varphi(x)}_{13} $$ for all $x\in \IrredG$. Moreover, $\omega(b)1_B=(h\otimes\id_B)\beta_1(b)$.
\item The state $\omega$ is invariant under $\beta_1$ and $\beta_2$. Denoting by $B_{r}$ the $C^*$-algebra generated by $\B$ in the GNS-representation associated with $\omega$ and denoting by $B_u$ the universal enveloping $C^*$-algebra of $\B$, the Hopf algebraic coactions $\beta_1$ and $\beta_2$ admit unique extensions to actions of the compact quantum groups on $B_r$, resp. $B_u$. These actions are reduced, resp. universal and they are ergodic and of full quantum multiplicity (see \cite{Bichon2005} for the definition).
%\item Every reduced, resp. universal, ergodic action of full quantum multiplicity, arises in this way from a monoidal equivalence.
\end{enumerate}
\end{theorem}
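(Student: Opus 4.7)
My plan is to construct $\B$ explicitly by generators and relations, following the bi-Galois/linking algebra approach. Fix orthonormal bases of $\Hi_x$ and $\Hi_{\varphi(x)}$ for each $x\in\Irred(\mathbb{G}_1)$, and let $\B_0$ be the free unital $^*$-algebra on matrix symbols $X^x_{ij}$ interpreted as the entries of intertwiners $X^x\in B(\Hi_{\varphi(x)},\Hi_x)\odot \B$. Quotient $\B_0$ by the relations in (a), together with unitarity relations $(X^x)^*X^x=1$ and $X^x(X^x)^*=1$ --- the latter are themselves special cases of (a) applied to the (co)evaluation morphisms $t_x\in\Mor(\varepsilon,x\otimes\overline{x})$ and their images under $\psi$. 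Uniqueness of the pair $(\B,\{X^x\})$ then follows from (b) by a standard universal-property argument.

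The main obstacle is non-triviality of this quotient: a priori the relations in (a) could collapse $\B$ to zero, and one must show that each isotypical piece indexed by $x$ has the full dimension $\dim\Hi_x\cdot\dim\Hi_{\varphi(x)}$. This is the analogue of Woronowicz's Peter-Weyl theorem for $\OG$ and constitutes the core Tannakian input of the theorem. I would establish it by invoking Proposition \ref{unfibfunctismoneq}, which produces $\mathbb{G}_2$ concretely, and then realizing $\B$ as the universal $^*$-algebra classifying compatible natural transformations linking the forgetful functor of $\mathrm{Rep}(\mathbb{G}_1)$ with $\varphi$ composed with that of $\mathrm{Rep}(\mathbb{G}_2)$. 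Equivalently, one exhibits $\B$ as a bi-Galois object for $(\OGa,\OGb)$, whose existence is guaranteed precisely by the monoidal equivalence.

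Given non-triviality, the remainder is bookkeeping. The state $\omega$ is forced by (c) on every isotypical piece and extended using the decomposition in (b); faithfulness comes from recognising the sesquilinear form $(a,b)\mapsto\omega(a^*b)$ on each matrix-coefficient block as a positive definite pairing, with weights given by the quantum dimensions supplied by $\psi$. The coactions $\beta_1,\beta_2$ are defined by the stated formulas on generators; one checks their well-definedness against the relations in (a), the coaction axioms are immediate because $U^x$ and $U^{\varphi(x)}$ are representations, and they commute because they occupy disjoint legs of the triple tensor product. The identity $\omega(b)1=(h\otimes\id)\beta_1(b)$ follows by evaluation on matrix coefficients, using orthogonality of the Haar state on $\OGa$ combined with (c).

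For part~3, invariance of $\omega$ under $\beta_1$ and $\beta_2$ drops out of the previous identity and its $\beta_2$-analogue. The GNS construction with respect to $\omega$ yields $B_r$, and since $\omega$ is invariant, the coactions extend continuously to $B_r$; the universal $C^*$-completion $B_u$ receives the extensions via its universal property applied to bounded $^*$-representations of $\B$. Ergodicity of both actions reduces to $\B^{\beta_i}=\mathbb{C}1$, which is immediate from (c) together with the isotypical decomposition of $\B$. Full quantum multiplicity is built into the construction, since the $x$-isotypical component of $\B$ is isomorphic as a vector space to $B(\Hi_{\varphi(x)},\Hi_x)$, whose dimension equals $\dim\Hi_x\cdot\dim\Hi_{\varphi(x)}$.
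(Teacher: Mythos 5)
First, a point of reference: the paper offers no proof of this statement at all --- it is imported wholesale from \cite{Bichon2005} (Theorem 3.9 and Proposition 3.13 there), with the footnote merely explaining the left/right swap obtained by applying the source's theorem to the inverse equivalence $\varphi':\mathbb G_2\to\mathbb G_1$. So the only meaningful comparison is with the proof in the cited source. Your outline follows the same overall strategy, but packages the construction of $\B$ differently. Bichon--De Rijdt--Vaes do \emph{not} build $\B$ as a free $^*$-algebra on the symbols $X^x_{ij}$ modulo the relations in 1(a); they define the underlying vector space directly as $\bigoplus_{x}B(\Hi_{\varphi(x)},\Hi_x)^*$ and introduce the product and involution through the decomposition of tensor products of irreducibles and the maps $\psi(S)$. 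With that definition each isotypical block manifestly has dimension $\dim\Hi_x\cdot\dim\Hi_{\varphi(x)}$, so the ``non-collapse'' problem you correctly identify as the crux simply does not arise; the work is instead shifted to verifying associativity, the compatibility of $^*$ with the conjugation morphisms, and the unitarity of the $X^x$ (the last two using the unitarity axioms of the fiber functor via the morphisms $t_x\in\Mor(\varepsilon,x\otimes\overline x)$, exactly as you note). Your route trades those verifications for a lower bound on the dimension of the quotient, which is arguably the harder direction; your proposal to obtain it by ``realizing $\B$ as a bi-Galois object whose existence is guaranteed by the monoidal equivalence'' is where the argument becomes circular in spirit, since the existence of that bi-Galois object is precisely what the theorem asserts. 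The remaining items in your sketch (the formula forcing $\omega$, positivity of the pairing with weights given by the positive matrices $F_x$ implementing the orthogonality relations, well-definedness and commutation of $\beta_1,\beta_2$ on disjoint legs, the identity $\omega(b)1=(h\otimes\id)\beta_1(b)$, ergodicity from 1(c), and full quantum multiplicity from the block dimensions) all match the source and are fine as stated.

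The one genuine soft spot, then, is the non-triviality step: as written it is asserted rather than proved, and the appeal to Proposition \ref{unfibfunctismoneq} only gives you the quantum group $\mathbb G_2$, not a nonzero module over your quotient algebra. To close the gap within your framework you would need to exhibit an explicit nonzero representation of the relations --- which in practice means writing down the BdRV algebra $\bigoplus_x B(\Hi_{\varphi(x)},\Hi_x)^*$ anyway and checking it satisfies 1(a). At that point the generators-and-relations detour buys nothing, and I would recommend adopting the direct construction of the source.
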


\begin{definition}
In what follows, we will call $\B$ the $\mathbb G_1-\mathbb G_2$-bi-Galois object associated with $\varphi$. 
\end{definition}

In the spirit of this theorem, we can introduce the notion of isomorphism of unitary fiber functors, which will be equivalent to the isomorphism of the associated bi-Galois objects.

\begin{definition}[Def. 3.10 in \cite{Bichon2005}]
Let $\psi$ and $\psi'$ be two unitary fiber functors on a compact quantum group $\mathbb G$. We say they are isomorphic if there exist unitaries $u_x\in B(\Hi_{\varphi(x)},\Hi_{\psi(x)})$ such that
\[\psi'(S)=(u_{y_1}\otimes\ldots\otimes u_{y_k})\psi(S)(u^*_{x_1}\otimes \ldots\otimes u^*_{x_r})\] for all $S\in\Mor(y_1\otimes\ldots\otimes y_k,x_1\otimes \ldots\otimes x_r)$.

\begin{proposition}
Let $\psi$ and $\psi'$ be two unitary fiber functors on a compact quantum group $\mathbb G$. Let $\B_{\psi'}$ and $\B_\psi$ be the associated bi-Galois objects with respective coactions $\beta_{\psi}, \beta_\psi'$. Then $\psi$ and $\psi'$ are isomorphic as unitary fiber functors if and only if there exists a $^*$-isomorphism $\lambda: \B_\psi\to \B_{\psi'}$ satisfying $(\lambda\otimes \id)\beta_\psi=\beta_{\psi'}\lambda$.
\end{proposition}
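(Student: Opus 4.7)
The plan is to exploit the universal characterization of the bi-Galois object provided by Theorem \ref{monoidalGalois}, according to which $(\B_\psi, \{X^x_\psi\})$ is uniquely determined up to equivariant $^*$-isomorphism by properties (a)--(c) stated there.

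For $(\Rightarrow)$, given unitaries $u_x : \Hi_{\psi(x)} \to \Hi_{\psi'(x)}$ implementing the isomorphism of fiber functors, I would form the twisted elements $\tilde X^x := X^x_{\psi'}(u_x \otimes 1) \in B(\Hi_{\psi(x)}, \Hi_x) \odot \B_{\psi'}$ and verify that the pair $(\B_{\psi'}, \{\tilde X^x\})$ satisfies (a)--(c) of Theorem \ref{monoidalGalois} \emph{with respect to} $\psi$. Unitarity, the fact that matrix coefficients remain a basis of $\B_{\psi'}$, and the vanishing of $(\id \otimes \omega_{\psi'})(\tilde X^x)$ for $x \neq \varepsilon$ are immediate. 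The key check is (a): for $S \in \Mor(y \otimes z, x)$,
\[
(\tilde X^y)_{13}(\tilde X^z)_{23}(\psi(S) \otimes 1) = (X^y_{\psi'})_{13}(X^z_{\psi'})_{23}\bigl((u_y \otimes u_z)\psi(S) \otimes 1\bigr) = (X^y_{\psi'})_{13}(X^z_{\psi'})_{23}(\psi'(S) u_x \otimes 1) = (S \otimes 1)\tilde X^x,
\]
using the identity $\psi'(S) = (u_y \otimes u_z)\psi(S)u_x^*$ and the defining relation for $X^x_{\psi'}$. The uniqueness in Theorem \ref{monoidalGalois} then produces a $^*$-isomorphism $\lambda : \B_\psi \to \B_{\psi'}$ with $(\id \otimes \lambda)(X^x_\psi) = \tilde X^x$. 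Equivariance is automatic, because the relation $(\id \otimes \beta_\psi)(X^x_\psi) = U^x_{12}(X^x_\psi)_{13}$ only involves $U^x$, so multiplying on the right by $u_x \otimes 1$ commutes with the coaction.

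For $(\Leftarrow)$, given an equivariant $^*$-isomorphism $\lambda$, I would set $Y^x := (\id \otimes \lambda)(X^x_\psi)$ and consider
\[
Z^x := (X^x_{\psi'})^* Y^x \in B(\Hi_{\psi(x)}, \Hi_{\psi'(x)}) \odot \B_{\psi'}.
\]
The equivariance of $\lambda$ transports the $U^x$-equivariance of $X^x_\psi$ over to $Y^x$, so the $U^x_{12}$ factors cancel in $(\id \otimes \beta_{\psi'})(Z^x)$ and one finds $(\id \otimes \beta_{\psi'})(Z^x) = (Z^x)_{13}$. By the ergodicity clause of Theorem \ref{monoidalGalois}(3) the $\beta_{\psi'}$-fixed subalgebra is $\mathbb{C} \cdot 1$, whence $Z^x = u_x \otimes 1$ for a unique $u_x \in B(\Hi_{\psi(x)}, \Hi_{\psi'(x)})$, unitary since $X^x_{\psi'}$ and $Y^x$ are. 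Rewriting as $(\id \otimes \lambda)(X^x_\psi) = X^x_{\psi'}(u_x \otimes 1)$, applying $(\id \otimes \lambda)$ to $(X^y_\psi)_{13}(X^z_\psi)_{23}(\psi(S) \otimes 1) = (S \otimes 1)X^x_\psi$ and cancelling the unitaries $X^y_{\psi'}$, $X^z_{\psi'}$, $X^x_{\psi'}$ then yields $(u_y \otimes u_z)\psi(S) = \psi'(S) u_x$ for every $S \in \Mor(y \otimes z, x)$. With $u_\varepsilon = 1$ (from $X^\varepsilon = 1$) and by the characterization in Remark \ref{unitfibfunctinducesmoneq}, these relations promote the $u_x$ to an isomorphism of fiber functors.

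The main technical hurdle is the ergodicity step in $(\Leftarrow)$: absent the ergodicity of $\beta_{\psi'}$, the element $Z^x$ would only be guaranteed to live in $B(\cdot) \odot (\B_{\psi'})^{\beta_{\psi'}}$, and no scalar unitary $u_x$ could be isolated. The other direction is, once the twist $\tilde X^x$ is introduced, essentially a bookkeeping exercise delivered by the uniqueness clause of Theorem \ref{monoidalGalois}.
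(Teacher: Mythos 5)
Your proof is correct; note, however, that the paper itself gives no proof of this proposition (it is stated, inside the definition environment, as a quoted result of Bichon--De Rijdt--Vaes), so your argument is supplying details the paper omits rather than paralleling an existing one. Both directions are handled the right way: the forward implication by checking that the twisted unitaries $X^x_{\psi'}(u_x\otimes 1)$ satisfy conditions 1(a)--(c) of Theorem \ref{monoidalGalois} for $\psi$ and invoking the uniqueness clause, and the converse by using ergodicity --- equivalently the identity $\omega(b)1=(h\otimes\id)\beta_1(b)$ from Theorem \ref{monoidalGalois}(2), which already holds at the algebraic level --- to force $Z^x=(X^x_{\psi'})^*(\id\otimes\lambda)(X^x_\psi)$ to be a scalar unitary $u_x\otimes 1$. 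Two small points are worth recording. First, since $\beta_\psi$ is the left coaction $\B_\psi\to\OG\odot\B_\psi$, the intertwining relation should be read as $(\id\otimes\lambda)\beta_\psi=\beta_{\psi'}\lambda$ (the statement's $(\lambda\otimes\id)$ is a slip of the legs); your computations implicitly use the correct version. Second, in the converse you obtain $\psi'(S)=(u_y\otimes u_z)\psi(S)u_x^*$ only for $S\in\Mor(y\otimes z,x)$, whereas the definition of isomorphic unitary fiber functors demands it on all morphism spaces; as you indicate, this propagates via Remark \ref{unitfibfunctinducesmoneq}, because a unitary fiber functor is determined by its values on such morphism spaces and general morphisms are built from these by composition, tensoring and adjoints, operations which both $\psi$ and $\psi'$ preserve --- but that one-line justification deserves to be made explicit.
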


%\begin{remark}\label{rem-characterizationOGfromB}
%One can easily see that if $\B$ is a ($\mathbb G_1$-$\mathbb G_2$)-bi-Galois object, then $$\OGa=S(\{(\id_{\OGa}\otimes f_B)\beta_1(b)\mid b\in \B,f_B\in \B^*\})$$ and $$\OGb=S( \{(f_B\otimes\id_{\OGb})\beta_2(b)\mid b\in \B,f_B\in \B^*\}).$$ We will use this characterization in the last section.
%\end{remark}

\end{definition}
There is even more, De Rijdt and Vander Vennet proved in \cite{DeRijdt2010} that there exists a bijection between actions of monoidal equivalent compact quantum groups. Indeed, let $\mathbb{G}_1$ and $\mathbb{G}_2$ be two compact quantum groups, $\varphi: \mathbb{G}_1\to  \mathbb{G}_2$ be a monoidal equivalence between them. Let $\B,\beta_1,\beta_2,X^x$ be as in the previous theorem. Suppose moreover that we have a $C^*$-algebra $D_1$ and an action $\alpha_1: D_1\to D_1\otimes \Ga$ of $\mathbb G_1$ on $D_1$. Using the dense Hopf $^*$-algebras, we have a coaction $\alpha_1:\D_1\to \D_1\odot \OGa$ of $\OGa$ on $\D_1$ and we can define the $^*$-algebra:
$$ \D_2=\D_1\stackrel[\OGa]{}{\boxdot}\B=\{a\in \D_1\odot \B| (\alpha_1\otimes \id_\B)(a)=(\id_{\D_1}\otimes \beta_1)(a)\}.$$ Moreover, in \cite{DeRijdt2010}, the authors prove that  the same construction with the inverse monoidal equivalence $\varphi^{-1}$ will give $\D_1$ again up to isomorphism.
%and analogously on the von Neumann algebraic level, we have:
%$$ D_2^{vN}=D_1^{vN}\stackrel[L^\infty(\mathbb{G})]{}{\overline{\boxtimes}}B^{vN}=\{a\in D\overline{\otimes} B^{vN}| (\alpha_1\otimes \id_{B^{vN}})(a)=(\id_{D_1^{vN}}\otimes \beta_1)(a)\}.$$

\begin{theorem}\label{actiondeformation}
Given the data above, there exists an action $\alpha_2=(\id\otimes \beta_2)_{|_{\D_2}}$ on $\D_2$. Moreover, if $\alpha_1$ is ergodic, $\alpha_2$ is ergodic as well. % and $\mult_q(x)=\mult_q(\varphi(x))$ for all $x\in \Irred(\mathbb{G}_2)$. Finally, $\D_2$ is the spectral subalgebra of $D_2^{vN}$ and hence SOT-dense in $D_2^{vN}$.
\end{theorem}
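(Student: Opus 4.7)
My plan is to verify, in order, that (i) the formula $\alpha_2 := (\id_{\D_1} \otimes \beta_2)|_{\D_2}$ genuinely lands in $\D_2 \odot \OGb$, (ii) the pair $(\D_2,\alpha_2)$ satisfies the coaction axioms, and (iii) when $\alpha_1$ is ergodic, the fixed-point subalgebra of $\alpha_2$ collapses to $\mathbb{C}\cdot 1$. The essential algebraic input throughout will be the bi-Galois structure on $\B$ provided by Theorem \ref{monoidalGalois}, in particular the commutation of $\beta_1$ and $\beta_2$ and the spectral decomposition of $\B$.

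For (i), the only nontrivial point is that $\alpha_2(a)$ still satisfies the cotensor condition defining $\D_2$ on its first two legs. Writing $a = \sum_i d_i \otimes b_i$ with $d_i \in \D_1$ and $b_i \in \B$, the hypothesis $a \in \D_2$ reads $\sum_i \alpha_1(d_i) \otimes b_i = \sum_i d_i \otimes \beta_1(b_i)$. Applying $\id \otimes \id \otimes \beta_2$ to both sides and invoking the commutation identity $(\id_{\OGa} \otimes \beta_2)\beta_1 = (\beta_1 \otimes \id_{\OGb})\beta_2$ from Theorem \ref{monoidalGalois}(2), one obtains exactly $(\alpha_1 \otimes \id \otimes \id)(\alpha_2(a)) = (\id \otimes \beta_1 \otimes \id)(\alpha_2(a))$, which is the statement $\alpha_2(a) \in \D_2 \odot \OGb$. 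Part (ii) is then routine: $\id \otimes \beta_2$ is a $^*$-homomorphism on $\D_1 \odot \B$ which restricts to $\D_2$ by (i), and coassociativity and counitality for $\alpha_2$ descend directly from those of $\beta_2$.

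For (iii), I exploit the decomposition $\B = \bigoplus_{x \in \IrredGa} \B_x$ coming from the linear basis of matrix coefficients of the $X^x$ (Theorem \ref{monoidalGalois}(1)(b)); under $\beta_2$ each $\B_x$ transforms like the $\mathbb{G}_2$-representation $U^{\varphi(x)}$. Since $\varphi(\varepsilon) = \varepsilon$ and $\varphi$ is a bijection of irreducibles, the invariants satisfy $\B^{\beta_2} = \B_\varepsilon = \mathbb{C} 1_{\B}$. It follows that $(\D_1 \odot \B)^{\id \otimes \beta_2} = \D_1 \otimes 1$, and an element $d \otimes 1$ lies in $\D_2$ precisely when $\alpha_1(d) = d \otimes 1$. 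Therefore $\D_2^{\alpha_2}$ is canonically identified with $\D_1^{\alpha_1}$, and ergodicity of $\alpha_1$ yields ergodicity of $\alpha_2$.

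I expect the main subtlety to lie in step (i): it requires careful bookkeeping of the four tensor legs and the precise placement of the commutation identity between $\beta_1$ and $\beta_2$. Everything else should follow mechanically from the structural statements of Theorem \ref{monoidalGalois}.
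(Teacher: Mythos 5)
Your argument is correct. Note, however, that the paper gives no proof of Theorem \ref{actiondeformation} at all: it is quoted from De Rijdt--Vander Vennet \cite{DeRijdt2010}, so there is no internal proof to compare against, and what you have written is essentially the standard argument from that reference. Your step (i) is the right computation — applying $\id\otimes\id\otimes\beta_2$ to the cotensor relation and rewriting the right-hand side via the commutation $(\id\otimes\beta_2)\beta_1=(\beta_1\otimes\id)\beta_2$ yields precisely the cotensor condition on the first two legs of $\alpha_2(a)$, and since the cotensor product is a kernel of a linear map it commutes with $-\odot\OGb$, so $\alpha_2(a)\in\D_2\odot\OGb$. Step (iii) is also sound: $\B^{\beta_2}=\B_\varepsilon=\mathbb{C}1_\B$ because the matrix coefficients of $X^x$ transform under $U^{\varphi(x)}$ and a nontrivial irreducible has no invariants, whence $(\D_1\odot\B)^{\id\otimes\beta_2}=\D_1\otimes 1$ and $\D_2^{\alpha_2}\cong\D_1^{\alpha_1}$. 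The only caveat worth recording is one of level: the paper's notion of ``action'' is a $C^*$-algebraic one, whereas your proof (like the paper's actual use of the theorem, e.g.\ in Proposition \ref{defalgebra}) establishes the Hopf-algebraic coaction of $\OGb$ on the $^*$-algebra $\D_2$ and algebraic ergodicity; passing to a genuine CQG action requires completing $\D_2$ (universally or in the GNS representation of the induced invariant state), which is routine and is how the paper itself handles such completions, so I would not count it as a gap.
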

To end this subsection, we have a look at the inverse monoidal equivalence. We rephrase Proposition 7.6 from \cite{DeRijdt2010} in our notations.

\begin{proposition}\label{propGbtildeBB}
Let $\mathbb G_1$ and $\mathbb G_2$ be two compact quantum group and $\varphi:\mathbb G_1\to \mathbb G_2$ a monoidal equivalence with bi-Galois object $\B$. Denote by $\varphi^{-1}:\mathbb G_2\to \mathbb G_1$ the inverse monoidal equivalence with bi-Galois object $\tilde \B$ generated by the matrix coefficients of unitaries $Z^y\in B(\Hi_{\varphi^{-1}(y)}, \Hi_{y})\odot \tilde \B$, $y\in \Irred(\mathbb G_2)$ and coactions $\delta_1:\tilde B\to \tilde B\odot \OGa$ and $ \delta_2:\tilde B\to \OGb\odot \tilde \B$ such that
$$(\id\otimes\delta_1)Z^y= Z^y_{12}U^{\varphi^{-1}(y)}_{13} \qquad \text{and} \qquad  (\id\otimes \delta_2)Z^y=U^y_{12} Z^y_{13}.$$
Then
$$\pi: \OGa\to \B\boksdot[\OGb]\tilde{\B} \qquad \text{with} \qquad (\id\otimes\pi)(U^x)=X^x_{12}Z_{13}^{\varphi(x)}$$ is a $^*$-isomorphism intertwining the comultiplication $\Delta_1$ with the coaction $(\beta_1\otimes \id)=(\id\otimes \delta_1)$.
\end{proposition}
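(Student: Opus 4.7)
The plan is to define $\pi$ on the linear basis of $\OGa$ given by the matrix coefficients of the irreducibles $U^x$, $x\in\IrredGa$, via the prescribed formula $(\id\otimes \pi)(U^x)=X^x_{12}Z^{\varphi(x)}_{13}$, and then to verify in turn that (i) the image lies in the cotensor product $\B\boksdot[\OGb]\tilde\B$; (ii) $\pi$ intertwines $\Delta_1$ with the two $\OGa$-coactions $\beta_1\otimes \id$ and $\id\otimes \delta_1$; (iii) $\pi$ is a $^*$-algebra morphism; and (iv) $\pi$ is a bijection. The essential input is that the intertwining identity of Theorem~\ref{monoidalGalois}(a) is available both for $\B$ (with the monoidal equivalence $\varphi$) and for $\tilde\B$ (with its inverse $\varphi^{-1}$, see the present Proposition), and that these two identities mirror each other through $\varphi$ in a way that allows for cancellation.

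For (i) and (ii) a direct four-leg computation suffices. For (i), both $(\id\otimes\beta_2\otimes\id)$ and $(\id\otimes\id\otimes\delta_2)$ transform $X^x_{12}Z^{\varphi(x)}_{13}$ into the common element $X^x_{12}U^{\varphi(x)}_{13}Z^{\varphi(x)}_{14}$, using $(\id\otimes\beta_2)(X^x)=X^x_{12}U^{\varphi(x)}_{13}$ and $(\id\otimes\delta_2)(Z^{\varphi(x)})=U^{\varphi(x)}_{12}Z^{\varphi(x)}_{13}$. For (ii), applying $\beta_1\otimes\id$ (resp.\ $\id\otimes\delta_1$) to $X^x_{12}Z^{\varphi(x)}_{13}$ gives $U^x_{12}X^x_{13}Z^{\varphi(x)}_{14}$ (resp.\ $X^x_{12}Z^{\varphi(x)}_{13}U^x_{14}$), which is exactly the image of $(\id\otimes \Delta_1)(U^x)=U^x_{12}U^x_{13}$ under $\id\otimes \pi$ on the appropriate leg.

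The main technical step, and what I expect to be the main obstacle, is the multiplicativity in (iii). For $S\in \Mor(y\otimes z,x)$, the decomposition $(U^y_{13}U^z_{23})(S\otimes 1)=(S\otimes 1)U^x$ of $U^y\otimes U^z$ into irreducibles reduces multiplicativity to the identity
\[X^y_{13}Z^{\varphi(y)}_{14}X^z_{23}Z^{\varphi(z)}_{24}(S\otimes 1\otimes 1)=(S\otimes 1\otimes 1)X^x_{13}Z^{\varphi(x)}_{14}\]
in $B(\Hi_x,\Hi_y\otimes\Hi_z)\odot \B\odot\tilde\B$. Since $Z^{\varphi(y)}_{14}$ and $X^z_{23}$ share no legs they commute, so the left-hand side rearranges as $X^y_{13}X^z_{23}\,Z^{\varphi(y)}_{14}Z^{\varphi(z)}_{24}(S\otimes 1\otimes 1)$. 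Now applying the $\tilde\B$-identity $Z^{\varphi(y)}_{13}Z^{\varphi(z)}_{23}(S\otimes 1)=(\varphi(S)\otimes 1)Z^{\varphi(x)}$ extracts a $\varphi(S)$ in the middle, and then the $\B$-identity $X^y_{13}X^z_{23}(\varphi(S)\otimes 1)=(S\otimes 1)X^x$ absorbs that $\varphi(S)$ while producing the outer $S$ on the right-hand side. This mirrored cancellation of $\varphi(S)$ is the crux: it uses both bi-Galois objects $\B$ and $\tilde\B$ in an essential way and cannot be executed with a single Galois object. The $^*$-preservation then follows analogously by applying the same pair of identities to the conjugation morphisms $r_x\in \Mor(\varepsilon,\bar x\otimes x)$, combined with the fact that $\varphi(\bar x)=\overline{\varphi(x)}$ in any monoidal equivalence.

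Bijectivity in (iv) is more routine. Injectivity is immediate from the linear independence of the matrix coefficients of $X^x$ and $Z^{\varphi(x)}$ in their respective spectral subspaces of $\B$ and $\tilde\B$. For surjectivity, one decomposes $\B=\bigoplus_{y\in \Irred(\mathbb G_2)}\B_y$ and $\tilde\B=\bigoplus_y\tilde\B_y$ under the $\OGb$-coactions $\beta_2$ and $\delta_2$, observes that the cotensor product picks out precisely the $\OGb$-diagonal pieces inside each $\B_y\odot \tilde\B_y$, and identifies this diagonal piece with the image of $\OGa_{\varphi^{-1}(y)}$ under $\pi$, thereby exhausting $\B\boksdot[\OGb]\tilde\B$.
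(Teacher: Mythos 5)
Your proof is correct; note that the paper itself gives no argument for this proposition (it explicitly imports it as Proposition 7.6 of \cite{DeRijdt2010}), and your direct verification --- defining $\pi$ on the basis of matrix coefficients, checking the cotensor condition and the two coaction intertwinings by four-leg computations, reducing multiplicativity to the mirrored cancellation of $\varphi(S)$ through the identities $Z^{\varphi(y)}_{13}Z^{\varphi(z)}_{23}(S\otimes 1)=(\varphi(S)\otimes 1)Z^{\varphi(x)}$ and $X^{y}_{13}X^{z}_{23}(\varphi(S)\otimes 1)=(S\otimes 1)X^{x}$, and getting bijectivity from the spectral decomposition of $\B$ and $\tilde\B$ under the $\OGb$-coactions --- is exactly the standard proof of that cited result. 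The only blemish is cosmetic: in your displayed multiplicativity identity the right-hand side should carry $X^{x}$ and $Z^{\varphi(x)}$ on the single leg $\Hi_x$ (i.e.\ $(S\otimes 1\otimes 1)X^{x}_{12}Z^{\varphi(x)}_{13}$ with $S$ then expanding that leg into $\Hi_y\otimes\Hi_z$), not on legs $13$ and $14$ of the four-leg space.
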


\section{Deformation procedure for spectral triples}\label{sec-mondef}

Before we start with the description of the deformation procedure, we recapitulate the notion of spectral triples and that of CQG acting on spectral triples.
\subsection{Spectral triples and compact quantum groups acting on them}

\begin{definition}[\cite{Connes1994}]
A (compact) spectral triple $(\A, \Hi,D)$ consists of
\begin{enumerate}
\item a unital $^*$-algebra $\A$ acting as bounded operators on $\Hi$, 
\item a Hilbert space $\Hi$,
\item an unbounded selfadjoint operator $D$ on $\Hi$ with compact resolvent such that $[D,a]$ is bounded for all $a\in \A$.
% and $(1+D^2)^{-1}$ is compact.
\end{enumerate}
\end{definition}
%\begin{remark}
%Note that, denoting with $A$ the $C^*$-completion of $\A^\infty$, we have $$\A^\infty=\{a\in A |a\cdot\dom(D)\subseteq \dom(D) \text{ and } \| [D,a]\|<\infty\}.$$
%\end{remark}
%
\begin{definition}[\cite{Connes1994}]\label{isospectrtrpl}
Two spectral triples $(\A_1, \Hi_1,D_1)$ and $(\A_2, \Hi_2,D_2)$ are called isomorphic, if there exists an isomorphism of Hilbert spaces $\phi: \Hi_1\to \Hi_2$ and an isomorphism of $^*$-algebras $\lambda:\A_1\to \A_2$ such that $\phi D_1=D_2 \phi$ and $\phi(a\xi)=\lambda(a)\phi(\xi)$ for arbitrary $\xi \in \Hi_1, a\in \A_1$.
\end{definition}

In \cite{Bhowmick2009,Goswami2009} Bhowmick and Goswami described how compact quantum groups can act isometrically and orientation-preserving on a non-commutative manifold, i.e. a spectral triple.
\begin{definition}[\cite{Bhowmick2009}]
Let $(\A,\Hi,D)$ be a compact spectral triple, $\mathbb G=(\G,\Delta)$ a compact quantum group and $U$ a unitary representation of $\mathbb G$ on $\Hi$. Then $\mathbb G$ is said to act by orientation-preserving isometries on $(\A,\Hi,D)$ with $U$ if
\begin{itemize}
\item for every state $\phi$ on $M$, we have $U_\phi D=DU_\phi$ where $U_\phi := (\id \otimes \phi)(U)$,
\item $(\id \otimes \phi)\circ \alpha_U(a)\in \A''$ for all $a\in \A$ and state $\phi$ on $M$; where $\alpha_U(T):= U(T\otimes 1)U^*$ for $T\in B(\Hi)$.
\end{itemize}
\end{definition}

This definition is a very strong one: it ensures the existence of a universal object in the category of all compact quantum groups acting by orientation-preserving isometries. However, in some cases the second condition is to weak: the quantum group representation on $\Hi$ may behave badly with respect to the algebra $\A$ in the sense that the induced action of the CQG on $\A$ is not a CQG-action on the $C^*$-closure of $\A$. This is in some situations a disadvantage. Therefore, we note the following proposition of Goswami, found in \cite{Goswami2014}.

\begin{proposition}\label{prop-existsalgactsubalg}
Let $(\A,\Hi,D)$ and $(\G,\Delta,U)$ be as above. Then there exists an algebra $\A_1$ such that 
\begin{enumerate}
\item $\A_1$ is SOT-dense in the von Neumann Algebra $M=\A''$,
\item $\alpha_U$ is algebraic on $\A_1$, i.e. $(\alpha_U)_{|_{\A_1}}:\A_1\to \A_1\odot \OG$,
\item $[D,a]$ is bounded for every $a\in \A_1$,
\item $(\A_1,\Hi,D)$ is again a spectral triple.
\end{enumerate}
\end{proposition}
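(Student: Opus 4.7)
The plan is to build $\A_1$ as a Peter--Weyl type algebraic subalgebra of $M=\A''$. First, observe that $\alpha_U(T)=U(T\otimes 1)U^*$ is normal on $B(\Hi)$ and, by the second hypothesis, maps $\A$ into $M\vtimes \G$; normality then yields an honest coaction of the von Neumann algebraic compact quantum group on $M$, still denoted $\alpha_U:M\to M\vtimes \G$. This places us within the framework of compact quantum group actions on von Neumann algebras, so that a spectral subalgebra analogous to the one recalled in Subsection \ref{subsec-actions} is available.

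For each $a\in \A$, each $x\in \IrredG$ and each matrix coefficient $b\in \OG_x$, define the spectral slice
\[
P^x_b(a):=(\id\otimes h(\,\cdot\, b^*))\alpha_U(a)\in M,
\]
and let $\A_1$ be the unital $^*$-subalgebra of $M$ generated by all such elements. The standard Peter--Weyl decomposition for CQG-actions on von Neumann algebras expresses $a$ as the SOT-limit of its finite Peter--Weyl partial sums, so SOT-density of $\A$ in $M$ transfers to $\A_1$, yielding (1). Since each $P^x_b(a)$ lies in the $U^x$-spectral subspace and tensor products of irreducibles decompose into irreducibles, the $^*$-algebra $\A_1$ inherits a Peter--Weyl grading and $\alpha_U$ restricts to an algebraic coaction $\A_1\to \A_1\odot \OG$, giving (2).

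The crux is (3). The hypothesis $U_\phi D=DU_\phi$ for every state $\phi$ implies, via bounded Borel functional calculus of $D$ and the density of states, the commutation $U(D\otimes 1)=(D\otimes 1)U$ on an appropriate core. Hence for any $T\in B(\Hi)$ with $[D,T]$ bounded,
\[
[D\otimes 1,\alpha_U(T)]=\alpha_U([D,T]),
\]
and applying the (normal, first-factor) slice $(\id\otimes h(\,\cdot\, b^*))$ gives
\[
[D,P^x_b(a)]=P^x_b([D,a]).
\]
Thus $[D,P^x_b(a)]$ is bounded whenever $[D,a]$ is, and the Leibniz rule together with $[D,c^*]=-[D,c]^*$ extends boundedness to all products and adjoints, i.e.\ to all of $\A_1$. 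Combined with the compact resolvent of $D$ inherited from the original spectral triple, this gives (3) and (4).

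The main obstacle is the careful passage from the weak isometric hypothesis $U_\phi D=DU_\phi$ for all states $\phi$ to the unbounded operator identity $U(D\otimes 1)=(D\otimes 1)U$; one must be mindful of the domain of $D\otimes 1$ and check that $U$ preserves it, which is the one nontrivial analytic input. Once this commutation is in hand, everything else reduces to a routine combination of Peter--Weyl slicing, normality of slice maps, the Leibniz rule, and SOT-density of the algebraic spectral decomposition.
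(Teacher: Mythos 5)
Your argument is correct in outline and is essentially the intended one: the paper offers no proof of its own beyond citing Sections 4.4.3--4.4.4 and Theorem 4.10 of Goswami--Joardar \cite{Goswami2014}, and what you reconstruct there --- the normal extension of $\alpha_U$ to $M=\A''$, the Fourier/spectral slices $(\id\otimes h(\,\cdot\,b^*))\alpha_U(a)$ generating $\A_1$, and the commutation $U(D\otimes 1)=(D\otimes 1)U$ --- is precisely the content of those sections. The one step worth making fully explicit is the interchange $[D,(\id\otimes\omega)X]=(\id\otimes\omega)[D\otimes 1,X]$: since every $U_\phi$ commutes with the spectral projections of $D$, the representation $U$ restricts to each $V_\lambda\otimes C(\mathbb G)$ with $V_\lambda$ the finite-dimensional eigenspaces of $D$, so all operators in sight decompose along $\bigoplus_\lambda V_\lambda$ and the commutator identity can be checked blockwise, after which your Leibniz-rule argument closes $\A_1$ under products and adjoints as claimed.
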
 
\begin{proof}
This follows from sections 4.4.3 and 4.4.4 and theorem 4.10 in \cite{Goswami2014}
\end{proof}

Driven by proposition \ref{prop-existsalgactsubalg}, we will use the following definition:
\begin{definition}\label{def-algactcqg}
Let $(\A,\Hi,D)$ be a compact spectral triple, $\mathbb G=(\G,\Delta)$ a compact quantum group and $U$ a unitary representation of $\mathbb G$ on $\Hi$. Then $\mathbb G$ is said to act \textbf{algebraically} and by orientation-preserving isometries on $(\A,\Hi,D)$ with $U$ if
\begin{itemize}
\item for every state $\phi$ on $M$, we have $U_\phi D=DU_\phi$ where $U_\phi := (\id \otimes \phi)(U)$,
\item $\alpha_U$ is algebraic on $\A$, i.e. $(\alpha_U)_{|_{\A}}:\A\to \A\odot \OG$ where $\alpha_U(T):= U(T\otimes 1)U^*$ for $T\in B(\Hi)$.
\end{itemize}
\end{definition}

In what follows, we will always work with compact quantum groups acting algebraically on the algebra $\A$.

\subsection{Deformation procedure for spectral triples}
In this subsection we will describe the actual deformation procedure for spectral triples. The deformation data to start with are:
\begin{itemize}
\item a spectral triple $(\A,\Hi,D)$ of compact type,
\item a compact quantum group $\mathbb G_1=(\Ga,\Delta_1)$ acting algebraically and by orientation-preserving isometries on $(\A,\Hi,D)$ with a unitary representation $U$ and
\item  a unitary fiber functor $\psi$ on $\mathbb G_1$.
\end{itemize}

The unitary fiber functor will induce a new compact quantum group $\mathbb G_2$ and a $^*$-algebra $\B$ with left resp. right coaction of $\OGa$ resp. $\OGb$. Using this, one can deform the data one by one to obtain a new, deformed, spectral triple on which $\mathbb G_2$ acts in an appropriate way.

To be more precise, consider the following:

\begin{enumerate}
\item As $\psi$ is a unitary fiber functor on $\mathbb{G}_1$, following theorem \ref{monoidalGalois} there exists a compact quantum group $\mathbb G_2$ and a monoidal equivalence $\varphi:\mathbb G_1 \to \mathbb G_2$. We will call $\mathbb G_2$ the deformed quantum group.
\item Let $(\B,\omega)$ be the $^*$-algebra and faithful invariant state associated to $\varphi$ with the coactions 
$$\beta_1:\B\to \OGa\odot \B \qquad \text{and} \qquad \beta_2:\B\to \B\odot \OGb.$$ 
%and denote by $B_1^{red}$ the $C^*$-algebra generated by $\B$ in the GNS-representation associated to $\omega$. Let $\beta_1^{red}:B_1^{red}\to \Ga\otimes B_1^{red}$ be the action on the reduced $C^*$-algebra level. Moreover, let $B_1^{vN}$ be the von Neumann algebra generated by $(\B,\omega)$ and $\beta_1^{vN}:B_1^{vN}\to \LGa\vtimes B_1^{vN}$ the von Neumann algebraic action;
%\item

% and analogously, define $B_2^{red}$, $B_2^{vN}$ and $\beta_2^{red},\beta_2^{vN}$;
\item Let $X^x\in B(\Hi_{\varphi(x)},\Hi_x)\odot \B$ be the unitaries such that $$(\id\otimes \beta_1)X^x=U_{12}^{x}X^x_{13} \qquad \text{and} \qquad (\id\otimes \beta_2)X^x=X^x_{12}U_{13}^{\varphi(x)}.$$
\item Let $u: \Hi\to\Hi \otimes \Ga:\xi\mapsto U(\xi\otimes 1)$ be the representation of $\mathbb{G}_1$ on $\Hi$ and denote by $\alpha=\ad_U:\A\to \A\odot \OGa:a\to U(a\otimes1_\Ga)U^*$ the algebraic coaction of $\OGa$ on $\A$.
%\item let $M=\A''$ be the von Neumann algebra generated by $\A$ and $\alpha'': M\to M\vtimes \LG$ the von Neumann algebraic action.  
%Denote with $\A_0$ the spectral subalgebra of $M$ for this action.
\end{enumerate}

We start by introducing the deformed data and proving some basic facts about them.

\begin{proposition}\label{defhilbertspace}
Defining $L^2(\B)$ to be the GNS representation of $\B$ with respect to $\omega$ and $\Lambda: \B\to L^2(\B)$ the GNS map, we have:
\begin{enumerate}
\item  there exists a unitary representation $\beta'_1$ of $\Ga$ on $L^2(\B)$ such that $\beta'_1(\Lambda(b))=(\id\otimes \Lambda)(\beta_1(b))$. 
\item $\beta'_1$ is ergodic, i.e. if $\xi \in L^2(\B)$ such that $\beta_1'(\xi)=1\otimes \xi$, then $\xi \in \mathbb{C}\Lambda(1_\B)$.
\item The vector space $\Hi\stackrel[\Ga]{}{\boxtimes}L^2(\B)=\{\xi \in \Hi\otimes L^2(\B)| U_{12}\xi_{13}=(\id_\Hi\otimes \beta'_1)(\xi)\}$ is a Hilbert space which we denote by $\tilde{\Hi}$.
\end{enumerate}
\end{proposition}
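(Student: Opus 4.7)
The plan is to handle the three statements in order, reducing each to a direct computation with the coaction $\beta_1$ and the invariance properties of $\omega$.

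For (1), I would first define $\beta_1'$ on the dense subspace $\Lambda(\B)\subset L^2(\B)$ by the given formula, noting that $(\id\otimes\Lambda)\beta_1(b)$ lies in $\OGa\odot L^2(\B)\subseteq \Ga\otimes L^2(\B)$. Viewing $\Ga\otimes L^2(\B)$ as a Hilbert $\Ga$-module, I would verify isometry via
\[
\langle\beta_1'(\Lambda(b)),\beta_1'(\Lambda(c))\rangle_\Ga=(\id\otimes\omega)(\beta_1(b)^*\beta_1(c))=(\id\otimes\omega)\beta_1(b^*c)=\omega(b^*c)\cdot 1_\Ga,
\]
using that $\beta_1$ is a $^*$-homomorphism and that $\omega$ is $\beta_1$-invariant in the sense $(\id\otimes\omega)\beta_1=\omega(\cdot)1_\Ga$ (Theorem \ref{monoidalGalois}). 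Extending by density gives an isometry, and the representation axiom $(\Delta_1\otimes\id_{L^2(\B)})\beta_1'=(\beta_1')_{13}(\beta_1')_{23}$ follows at once from the coaction identity $(\Delta_1\otimes\id_\B)\beta_1=(\id\otimes\beta_1)\beta_1$ by postcomposing with $\Lambda$. The non-degeneracy $[\beta_1'(L^2(\B))(\Ga\otimes 1)]=\Ga\otimes L^2(\B)$ follows from the Hopf-algebraic density $[(\OGa\otimes 1)\beta_1(\B)]=\OGa\odot\B$, so $\beta_1'$ promotes to a unitary in $\M(\Ga\otimes\mathcal K(L^2(\B)))$.

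For (2), I would use the averaging projection $P_0:=(h\otimes\id_{L^2(\B)})\beta_1'$, where $h$ is the Haar state of $\mathbb{G}_1$. Using $(h\otimes\id_\B)\beta_1(b)=\omega(b)\cdot 1_\B$ from Theorem \ref{monoidalGalois}, a computation on the dense subspace gives
\[
P_0\Lambda(b)=(h\otimes\Lambda)\beta_1(b)=\Lambda\bigl((h\otimes\id_\B)\beta_1(b)\bigr)=\omega(b)\Lambda(1_\B)=\langle\Lambda(1_\B),\Lambda(b)\rangle\,\Lambda(1_\B).
\]
Hence $P_0$ extends to the rank-one projection onto $\mathbb{C}\Lambda(1_\B)$, and since $P_0$ is exactly the orthogonal projection onto the $\beta_1'$-fixed subspace, ergodicity follows.

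For (3), I would argue that $\tilde\Hi$ is a closed subspace of the Hilbert space $\Hi\otimes L^2(\B)$. The cleanest route is to test the defining equation against states of $\Ga$: since $\Ga^*$ separates points, $\xi\in\tilde\Hi$ if and only if for every state $\phi$ on $\Ga$ one has $(\id\otimes\phi\otimes\id)(U_{12}\xi_{13})=(\id\otimes\phi\otimes\id)(\id\otimes\beta_1')(\xi)$. Each such equation involves only bounded operators on $\Hi\otimes L^2(\B)$ (left multiplication by $(\id\otimes\phi)(U)$ on the first leg, and by $(\phi\otimes\id)(\beta_1')$ on the second), so each cuts out a closed subspace; $\tilde\Hi$ is their intersection, hence closed. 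The main obstacle is the bookkeeping in (1): promoting the pointwise formula to an honest unitary in $\M(\Ga\otimes\mathcal K(L^2(\B)))$ and checking non-degeneracy in the $C^*$-module sense. Once this framework is in place, (2) is an explicit projection computation and (3) is a closed-intersection argument.
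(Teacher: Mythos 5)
Your proposal is correct and follows essentially the same route as the paper: part (1) is the paper's terse ``extend by density using invariance of $\omega$'' argument with the isometry computation and non-degeneracy written out explicitly, part (2) is the same averaging-projection argument with $P=(h\otimes\id)\beta_1'$, and part (3) realizes $\tilde\Hi$ as a closed set cut out by continuous linear conditions (the paper phrases this as the kernel of the single bounded map $u\otimes\id_{L^2(\B)}-\id_\Hi\otimes\beta_1'$, while you intersect kernels obtained by slicing with states --- an immaterial difference). No gaps.
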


\begin{proof}
\begin{enumerate}
\item As $\omega$ is faithful on $\B$, $\Lambda$ is injective and hence $\beta_1'$ is well defined on $\Lambda(\B)$. Using that $\beta_1$ is a well defined coaction of $\OGa$ and that $\omega$ is $\beta_1$-invariant, $\beta_1'$ can be extended to a unitary representation on $L^2(\B)$.

\item Let $\xi$ be an element in $L^2(\B)$ satisfying $\beta'_1(\xi)=1\otimes \xi$. Take a sequence $(b_n)_n$ in $\B$ with $\Lambda(b_n)\to \xi$ in $L^2$-norm, then for $P=(h\otimes \id_{L^2(\B)})\beta_1'$ we see that $P(\Lambda(b_n))\to P(\xi)=\xi$ since $P$ is a continuous operator on $L^2(\B)$. Seeing that $P(\Lambda(b_n))=\omega(b_n)\Lambda(1_\B)\in \mathbb C 1_\B$ concludes this proof.

\item It follows directly that $\Hi\stackrel[\Ga]{}{\boxtimes}L^2(\B)$ is a vector subspace of the tensor product Hilbert space $\Hi\otimes L^2(\B)$. As the representations $u$ and $\beta_1'$ of $\mathbb G_1$ on $\Hi$ resp. $L^2(\B)$ are continuous and $\Hi\stackrel[\Ga]{}{\boxtimes}L^2(\B)$ is the kernel of $u\otimes \id_{L^2(\B)}-\id_\Hi\otimes \beta_1'$, $\Hi\stackrel[\Ga]{}{\boxtimes}L^2(\B)$ is complete. 
\end{enumerate}

\end{proof}

\begin{proposition}\label{propdefhilbertspace}
We have 
\begin{enumerate}
\item $\Hi_x \stackrel[\Ga]{}{\boxtimes}L^2(\B)$ is isomorphic with $\Hi_{\varphi(x)} $ for all $x\in \Irred(\mathbb G_1)$.
\item \[\Hi\bokstimes[\Ga] L^2(\B)=\bigoplus_{\lambda\in \sigma(D)}V_{\lambda}\bokstimes[\Ga] L^2(\B)\] where $V_\lambda$ is the eigenspace of $\lambda \in \sigma(D)$. 
\item $V_\lambda\bokstimes[\Ga] L^2(\B)$ is finite dimensional for each $\lambda \in \sigma(D)$.
\end{enumerate}
\end{proposition}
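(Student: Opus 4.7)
The plan is to construct, for each $x \in \Irred(\mathbb{G}_1)$, an explicit isometric isomorphism $\Phi_x: \Hi_{\varphi(x)} \to \Hi_x \bokstimes[\Ga] L^2(\B)$ using the bi-Galois unitary $X^x$ of Theorem \ref{monoidalGalois}. Parts (2) and (3) will then follow from the spectral decomposition of $D$ together with elementary finite-dimensional representation theory.

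For (1), set $\Phi_x(\eta) := (\id \otimes \Lambda)\bigl(X^x(\eta \otimes 1_\B)\bigr) \in \Hi_x \otimes L^2(\B)$. Unitarity of $X^x \in B(\Hi_{\varphi(x)},\Hi_x) \odot \B$ together with $\omega(1_\B)=1$ makes $\Phi_x$ isometric, and the defining relation $(\id \otimes \beta_1)X^x = U^x_{12}X^x_{13}$ becomes, after applying $\id \otimes \Lambda$, the equivariance $(\id \otimes \beta_1')\Phi_x(\eta) = U^x_{12}\Phi_x(\eta)_{13}$, so the image lies in the cotensor product. For surjectivity I use that the matrix coefficients of $\{X^y\}_{y\in\Irred(\mathbb{G}_1)}$ form a basis of $\B$ (Theorem \ref{monoidalGalois}): writing the equivariance relation in coordinates shows that for each $y$ and each fixed column index $j$ the family $\{\Lambda(X^y_{ij})\}_i$ spans an irreducible copy of type $y$ in $L^2(\B)$ under $\beta_1'$, so the multiplicity of each $y$ in $L^2(\B)$ equals $\dim \Hi_{\varphi(y)}$. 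Matching coefficients against an orthonormal basis of $\Hi_x$ then identifies an element of $\Hi_x\bokstimes[\Ga]L^2(\B)$ with an intertwiner from $\Hi_x$ into the $x$-isotypical subspace of $L^2(\B)$; the space of such intertwiners has dimension $\dim \Hi_{\varphi(x)}$, matching the dimension of the image of $\Phi_x$, and surjectivity follows.

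Part (2) is then direct: since $D$ has compact resolvent, $\Hi = \bigoplus_{\lambda \in \sigma(D)} V_\lambda$ with each $V_\lambda$ finite-dimensional, and $U_\phi D = D U_\phi$ for every state $\phi$ forces $U$ to preserve each eigenspace. The cotensor product is the kernel of a map that respects this decomposition, so it splits as $\Hi \bokstimes[\Ga] L^2(\B) = \bigoplus_\lambda V_\lambda \bokstimes[\Ga] L^2(\B)$. Part (3) follows by decomposing each finite-dimensional $V_\lambda$ into finitely many irreducibles $V_\lambda \cong \bigoplus_k \Hi_{x_k}$ and applying (1) to each summand, giving $V_\lambda \bokstimes[\Ga] L^2(\B) \cong \bigoplus_k \Hi_{\varphi(x_k)}$, a finite direct sum of finite-dimensional spaces.

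The main obstacle is (1), specifically the identification of the cotensor product with the multiplicity space of $x$ inside $L^2(\B)$. The delicate step is the bookkeeping with conventions: $U$ is a right representation while $\beta_1$ is a left coaction, and one has to match the leg-numbering in the definition of $\bokstimes[\Ga]$ so that the cotensor condition is exactly the intertwiner condition into the $x$-isotypical part. Once this is set up, the proof becomes a direct computation with the explicit basis of $\B$ supplied by Theorem \ref{monoidalGalois}.
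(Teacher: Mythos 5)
Your proposal is correct, and it follows the paper's construction almost verbatim: your $\Phi_x$ is exactly the paper's map $f_x:\xi\mapsto X^x(\xi\otimes\Lambda(1_\B))$, and your arguments for parts (2) and (3) coincide with the paper's (spectral decomposition of $D$, commutation of $U$ with $D$ giving subrepresentations $U_\lambda$ on $V_\lambda$, then part (1) applied to the irreducible summands of each $V_\lambda$). The one point of genuine divergence is the surjectivity of $\Phi_x$. The paper exhibits an explicit inverse $g_x(z)=(\id\otimes\omega_1')\bigl({X^x}^*z\bigr)$ with $\omega_1'=\langle\Lambda(1_\B),\cdot\rangle$, and uses ergodicity of $\beta_1'$ (Proposition \ref{defhilbertspace}(2)) to show ${X^x}^*z=g_x(z)\otimes\Lambda(1_\B)$, whence $f_x$ and $g_x$ are mutually inverse unitaries. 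You instead identify $\Hi_x\bokstimes[\Ga]L^2(\B)$ with the multiplicity space of $x$ in $L^2(\B)$ and count dimensions using the fact that the matrix coefficients of the $X^y$ form a basis of $\B$. Both routes work. The paper's inverse is more economical and immediately gives unitarity; your count buys a clearer structural picture (the cotensor product \emph{is} the multiplicity space), but for it to yield equality rather than merely $\dim\Hi_{\varphi(x)}\leq\dim\bigl(\Hi_x\bokstimes[\Ga]L^2(\B)\bigr)$ you should say explicitly that the $x$-isotypical component of $L^2(\B)$ is exactly $\Lambda(\B_x)$ and nothing more --- this follows from density of $\Lambda(\B)$ in $L^2(\B)$ together with orthogonality of distinct isotypical components, which is standard but is the load-bearing step of your version and deserves a sentence.
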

Motivated by the first fact, we will call $\Hi_{\varphi(x)} $ the deformation of $\Hi_x$ for $x\in \Irred(\mathbb G_1)$.

\begin{proof}
\begin{enumerate}
\item Analogously to $\Hi\bokstimes[\Ga]L^2(\B)$, we define for $x\in \Irred(\mathbb G_1)$ $$\Hi_{x} \stackrel[\Ga]{}{\boxtimes}L^2(\B) =\{z\in \Hi_{x} \otimes L^2(\B)|U_{12}^{x}z_{13}=(\id\otimes \beta_1')(z)\}.$$
Now note that, for $x\in \Irred(\mathbb{G}_1)$ and $\xi \in \Hi_{\varphi(x)}$, $X^x(\xi\otimes\Lambda(1_\B)) \in \Hi_x \stackrel[\Ga]{}{\boxtimes}L^2(\B)$ and for $z \in \Hi_x \stackrel[\Ga]{}{\boxtimes}L^2(\B)$, $(\id_{\Hi_{\varphi(x)} }\otimes \omega_1')({X^x}^*z)\in \Hi_{\varphi(x)}$ where $\omega_1': L^2(\B)\to \mathbb{C}:\eta\mapsto \langle \Lambda(1),\eta\rangle$. Hence we can define the following maps: 
$$f_x:\Hi_{\varphi(x)} \to \Hi_x \stackrel[\Ga]{}{\boxtimes}L^2(\B):\xi\mapsto X^x(\xi\otimes\Lambda(1_\B))$$ 
$$g_x:\Hi_x \stackrel[\Ga]{}{\boxtimes}L^2(\B)\to \Hi_{\varphi(x)} :z\mapsto (\id_{\Hi_{\varphi(x)} }\otimes \omega_1')({X^x}^*z).$$ Using that $\beta_1'$ is ergodic (Proposition \ref{defhilbertspace}(2)), one can check that $g_x(z)\otimes \Lambda(1_{\B})={X^x}^*z$ which ensures that $f_x$ and $g_x$ are inverse to each other. Finally, using that $X^x$ is unitary, it is easy to see that $f_x$ and $g_x$ are also unitary.

\item Note first that as $D$ has compact resolvent, there exist a sequence  of real eigenvalues $(\lambda_n)_n$ with finite dimensional eigenspaces and such that $\lim_{n\to \infty}\lambda_n=\infty$. Hence we have $\Hi=\bigoplus_{\lambda\in\sigma(D)} V_\lambda$ and also $\Hi\otimes L^2(\B)=\bigoplus_{\lambda\in\sigma(D)} V_\lambda\otimes L^2(\B).$
As $U$ and $D$ commute, there is a subrepresentation $U_\lambda$ of $U$ on $V_\lambda$ for every eigenvalue $\lambda$ such that for $V_\lambda\bokstimes[\Ga]L^2(\B):=\{\xi\in V_\lambda\otimes L^2(\B)|(U_\lambda)_{12}\xi_{13}=(\id\otimes \beta_1')\xi\}$ we have $$\Hi\bokstimes[\Ga]L^2(\B)=\bigoplus_{\lambda\in \sigma(D)}V_\lambda\bokstimes[\Ga]L^2(\B).$$

\item Finally, decomposing $U_\lambda$ into irreducible representations of $\mathbb G_1$, we have $V_\lambda=\Hi_{x_1}\oplus\ldots\oplus \Hi_{x_l}$ for some $l\in \mathbb N$, $x_i\in \Irred(\mathbb G_1)$. Hence 
\begin{eqnarray}
V_\lambda\bokstimes[\Ga]L^2(\B)&=&(\Hi_{x_1}\oplus\ldots\oplus \Hi_{x_l})\bokstimes[\Ga]L^2(\B)\nonumber\\
&=&(\Hi_{x_1}\bokstimes[\Ga]L^2(\B))\oplus\ldots\oplus (\Hi_{x_l}\bokstimes[\Ga]L^2(\B)) \nonumber \\
&=&\Hi_{\varphi(x _1)}\oplus\ldots\oplus \Hi_{\varphi(x_l)}
\end{eqnarray}
where we used the first statement of this proposition. This last direct sum of finite dimensional Hilbert spaces implies $V_\lambda\bokstimes[\Ga]L^2(\B)$ to be finite dimensional. 
\end{enumerate}
\end{proof}

\begin{proposition}\label{defdiracop}
$D\otimes \id_{L^2(\B)}$ restricts to an unbounded selfadjoint operator $\tilde D$ on $\tilde{\Hi}=\Hi\bokstimes[\mathbb G_1]L^2(\B)$ of compact resolvent. 
\end{proposition}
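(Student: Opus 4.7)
The plan is to leverage the orthogonal decomposition
\[\tilde \Hi=\bigoplus_{\lambda\in\sigma(D)}V_\lambda\bokstimes[\Ga]L^2(\B)\]
from Proposition \ref{propdefhilbertspace}(2), together with the finite-dimensionality of each summand from part (3) of that proposition. Since $D$ commutes with $U$, each eigenspace $V_\lambda$ is $U$-invariant and $D|_{V_\lambda}=\lambda\cdot\id_{V_\lambda}$, so $D\otimes \id_{L^2(\B)}$ acts on the whole of $V_\lambda\otimes L^2(\B)$ as the scalar $\lambda$ and thus trivially preserves the intertwining condition cutting out the fibre $V_\lambda\bokstimes[\Ga] L^2(\B)$. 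Consequently $D\otimes \id_{L^2(\B)}$ restricts to an operator $\tilde D$ on $\tilde \Hi$ which is diagonal with respect to the above decomposition.

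Concretely, I would define $\tilde D$ intrinsically as the diagonal operator with maximal domain
\[\dom(\tilde D)=\Big\{\xi=\sum_\lambda \xi_\lambda \in \tilde \Hi \ \Big|\ \sum_\lambda \lambda^2\|\xi_\lambda\|^2<\infty\Big\},\qquad \tilde D\xi=\sum_\lambda\lambda\,\xi_\lambda,\]
with $\xi_\lambda\in V_\lambda\bokstimes[\Ga]L^2(\B)$, and verify that this coincides with the restriction of $D\otimes \id_{L^2(\B)}$. Density of $\dom(\tilde D)$ is immediate as it contains the algebraic direct sum of the (finite-dimensional) summands. Selfadjointness then follows from the standard diagonal-operator argument: $\tilde D$ admits an orthonormal basis of eigenvectors with real eigenvalues, obtained by gluing ONBs of each finite-dimensional summand.

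For compactness of the resolvent, the eigenvalues of $\tilde D$ form a subset of $\sigma(D)$, so they satisfy $|\lambda|\to\infty$ because $D$ has compact resolvent, and the corresponding eigenspaces are finite-dimensional by Proposition \ref{propdefhilbertspace}(3); hence $(\tilde D\pm i)^{-1}$ is diagonal with eigenvalues tending to zero and finite multiplicities, so it is a norm-limit of finite-rank operators and therefore compact. The main obstacle is not conceptual but organisational: one must verify that the diagonal definition of $\tilde D$ really matches the restriction of the unbounded operator $D\otimes \id_{L^2(\B)}$ to $\tilde \Hi\cap \dom(D\otimes \id_{L^2(\B)})$, which reduces to compatibility of the direct-sum decomposition with $D\otimes \id_{L^2(\B)}$ and ultimately follows from $[D,U]=0$.
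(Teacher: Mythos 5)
Your proposal is correct and follows essentially the same route as the paper: both decompose $\tilde\Hi$ as $\bigoplus_{\lambda\in\sigma(D)}V_\lambda\bokstimes[\Ga]L^2(\B)$, use that $D$ acts as the scalar $\lambda$ on $V_\lambda$ to define $\tilde D$ diagonally with the natural domain, and deduce selfadjointness and compact resolvent from the finite-dimensionality of the summands (Proposition \ref{propdefhilbertspace}(3)) together with $|\lambda_n|\to\infty$. Your write-up merely spells out a few standard verifications (density of the domain, the norm-limit-of-finite-rank argument) that the paper leaves implicit.
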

\begin{proof}
As $D$ has compact resolvent, its restriction $D_\lambda$ to $V_\lambda$ is multiplication with $\lambda$ for every $\lambda$ in the spectrum. Therefore $D_\lambda\otimes \id$ can be restricted to $V_\lambda \bokstimes[\Ga]L^{2}(\B)\subset V_\lambda\otimes L^2(\B)$. Taking the direct sum we get an unbounded operator $\tilde D$ on $\Hi \bokstimes[\Ga]L^{2}(\B)$ with domain $\{(\xi_n)_n\in \bigoplus_{\lambda_n\in\sigma(D)}V_{\lambda_n} \bokstimes[\Ga]L^{2}(\B)|\sum_n|\lambda_n|^2\|\xi_n\|^2<\infty\}.$ By construction, we have $\tilde D= \sum_{\lambda\in \sigma (D)}\lambda (P_\lambda\otimes \id)$, where $P_\lambda$ is the projection $\Hi\to V_\lambda$. Hence it is of compact resolvent by proposition \ref{propdefhilbertspace}(3) and selfadjoint as $D$ is selfadjoint. Moreover, as $\Hi\otimes L^2(\B)=\bigoplus_{\lambda\in \sigma(D)} V_\lambda \otimes L^{2}(\B)$ it is the restriction of $$D\otimes \id_{L^2(\B)}=\bigoplus_{\lambda\in \sigma(D)} D_\lambda \otimes \id_{L^{2}(\B)} : \bigoplus_{\lambda\in \sigma(D)} V_\lambda \otimes L^{2}(\B) \to \bigoplus_{\lambda\in \sigma(D)} V_\lambda \otimes L^{2}(\B)$$ to $\Hi\bokstimes[\Ga]L^2(\B)$ concluding the proof.
\end{proof}

\begin{proposition}\label{defalgebra}
Define $\tilde{\A}=\A\stackrel[\OGa]{}{\boxdot}\B:= \{z\in\A\odot\B\ | (\alpha\otimes \id_\B)(z)=(\id_\A\otimes \beta_1)(z)\}$. Then $\tilde{\A}$ is a $^*$-algebra endowed with a coaction $\alpha_2=(\id\otimes \beta_2)_{|_{\tilde \A}}:\tilde\A\to \tilde\A\odot \OGb$ of $\OGb$. Moreover, $\tilde \A$ acts by bounded operators on $\tilde \Hi$: for $z\in \tilde{\A}$, we have $\tilde{L_z}: \tilde{\Hi}\to \tilde{\Hi}: v\mapsto zv$ by multiplication on $\B$ and action of $\A$ on $\Hi$ as a bounded operator on $\tilde{\Hi}$.

%Moreover $\tilde{\A}$ is SOT-dense in $\tilde{M}=M \stackrel[\LGa]{}{\overline{\boxtimes}} B'':= \{z\in M \vtimes B''\ | (\alpha\otimes \id_\B)(z)=(\id_\A\otimes \beta)(z)\}$.
\end{proposition}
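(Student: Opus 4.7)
The plan is to establish the three assertions of the proposition in turn. First, $\tilde\A$ is a $^*$-subalgebra of $\A\odot\B$: the defining relation $(\alpha\otimes\id_\B)(z)=(\id_\A\otimes\beta_1)(z)$ is stable under products and the $^*$-involution because both $\alpha$ and $\beta_1$ are unital $^*$-homomorphisms, and the unit $1_\A\otimes 1_\B$ clearly satisfies it.

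Next, to see that $\alpha_2=(\id\otimes\beta_2)_{|_{\tilde\A}}$ defines a coaction of $\OGb$ on $\tilde\A$, the main point is well-definedness, i.e.\ that $\alpha_2(z)\in\tilde\A\odot\OGb$. I would apply the two maps $(\alpha\otimes\id_\B\otimes\id_{\OGb})$ and $(\id_\A\otimes\beta_1\otimes\id_{\OGb})$ to $(\id_\A\otimes\beta_2)(z)\in\A\odot\B\odot\OGb$. Since $\alpha$ and $\beta_2$ operate on disjoint tensor legs they commute as maps, so after invoking the $\tilde\A$-condition on $z$ the first expression reduces to $(\id_\A\otimes(\id_{\OGa}\otimes\beta_2)\beta_1)(z)$ while the second gives $(\id_\A\otimes(\beta_1\otimes\id_{\OGb})\beta_2)(z)$. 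These agree by the commutation of $\beta_1$ and $\beta_2$ guaranteed in Theorem~\ref{monoidalGalois}(2). Coassociativity and the counit identity for $\alpha_2$ then descend directly from the corresponding properties of $\beta_2$.

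For the third assertion, given $z=\sum_k a_k\otimes b_k\in\A\odot\B$ I interpret $\tilde{L}_z$ as $\sum_k a_k\otimes L_{b_k}$ on $\Hi\otimes L^2(\B)$, where $L_{b_k}$ denotes left multiplication. Boundedness is immediate because $\A$ acts boundedly on $\Hi$ and the left regular representation $L\colon\B\to B(L^2(\B))$ extends boundedly to $B_r$ by Theorem~\ref{monoidalGalois}(3). To show $\tilde{L}_z(\tilde\Hi)\subset\tilde\Hi$ I would use the two equivariance identities
\[
u(a\xi)=\alpha(a)\,u(\xi),\qquad \beta_1'(L_b\eta)=(\id\otimes L)(\beta_1(b))\,\beta_1'(\eta),
\]
valid for $a\in\A$, $b\in\B$, $\xi\in\Hi$, $\eta\in L^2(\B)$; the first is immediate from $\alpha(a)=U(a\otimes 1)U^*$ and the second follows from $\beta_1'(\Lambda(b))=(\id\otimes\Lambda)\beta_1(b)$ combined with the multiplicativity of $\beta_1$. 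Using these, both $(u\otimes\id)(\tilde{L}_z\xi)$ and $(\id\otimes\beta_1')(\tilde{L}_z\xi)$ can be rewritten, respectively, as the natural action of $(\alpha\otimes\id_\B)(z)$ on $(u\otimes\id)(\xi)$ and of $(\id_\A\otimes\beta_1)(z)$ on $(\id\otimes\beta_1')(\xi)$ inside $\Hi\otimes\Ga\otimes L^2(\B)$. The $\tilde\A$-condition on $z$ and the $\tilde\Hi$-condition on $\xi$ then force equality.

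The most delicate part is the tensor-leg bookkeeping in the third step, where three actions---of $\A$ on $\Hi$, of $\OGa\subset\Ga$ on the middle factor by multiplication, and of $\B$ on $L^2(\B)$ by left multiplication---must be combined consistently. Once the equivariance of the left regular representation on the GNS space of $\omega$ is pinned down, the rest is a direct computation using the defining conditions of $\tilde\A$ and $\tilde\Hi$.
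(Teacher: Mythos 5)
Your proof is correct and follows essentially the same route as the paper: the key step in both is the computation showing that $\tilde{L}_z$ preserves the defining relation of $\tilde\Hi$, using $(\alpha\otimes\id_\B)(z)=(\id_\A\otimes\beta_1)(z)$ together with the equivariance of the $\A$-action on $\Hi$ and of left multiplication on $L^2(\B)$. The only difference is cosmetic: for the first two assertions the paper simply invokes Theorem~\ref{actiondeformation} (the De Rijdt--Vander Vennet transfer of actions), whereas you verify the subalgebra property and the well-definedness of $\alpha_2$ directly via the commutation of $\beta_1$ and $\beta_2$ --- which is precisely the argument underlying that theorem.
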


\begin{proof}
The first statement is an application of theorem \ref{actiondeformation}. 
For the second, note that $\tilde \A\subset \A\odot \B$ and $\A\odot \B$ acts by bounded operators on $\Hi\otimes L^2(\B)$. Hence it suffices to prove that $\tilde{\A}$ leaves $\tilde{\Hi}$ invariant. Indeed, we have for $a\in \tilde{\A},\xi \in \tilde{\Hi}$
$$(\id_\Hi\otimes \beta'_1)(a\xi)=(\id_\A\otimes \beta_1)(a)(\id_\Hi\otimes \beta'_1)(\xi)=(\alpha\otimes \id_\B)(a)U_{12}\xi_{13}=U_{12}a_{13}U^*_{12}U_{12}\xi_{13}=U_{12}(a\xi)_{13}.$$

\end{proof}

\begin{theorem}
$(\tilde{\A},\tilde{\Hi},\tilde{D})$ constitues a spectral triple. 
\end{theorem}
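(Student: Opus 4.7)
The plan is to reduce the theorem to a single nontrivial check. Propositions \ref{defhilbertspace}(3), \ref{defalgebra} and \ref{defdiracop} already establish that $\tilde\Hi$ is a Hilbert space, that $\tilde\A$ is a $^*$-algebra acting by bounded operators on $\tilde\Hi$, and that $\tilde D$ is unbounded selfadjoint with compact resolvent on $\tilde\Hi$. Unitality of $\tilde\A$ follows immediately from $1_\A\otimes 1_\B \in \tilde\A$, since both $\alpha$ and $\beta_1$ send the respective units to the tensor unit. Hence the only remaining ingredient is boundedness of the commutator $[\tilde D,\tilde L_z]$ for every $z\in\tilde\A$.

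For this step, I would take $z\in\tilde\A\subset\A\odot\B$ and write it as a finite sum $z=\sum_{i=1}^n a_i\otimes b_i$ with $a_i\in\A$ and $b_i\in\B$. Recall from the proofs of Propositions \ref{defdiracop} and \ref{defalgebra} that $\tilde D$ is the restriction of $D\otimes \id_{L^2(\B)}$ to $\tilde\Hi$, and that $\tilde L_z$ is the restriction to $\tilde\Hi$ of the bounded operator $\sum_i a_i\otimes L_{b_i}$ on $\Hi\otimes L^2(\B)$, where $L_{b_i}$ denotes left multiplication by $b_i$ on $L^2(\B)$ (bounded because $\B$ sits inside the $C^*$-algebra $B_r$). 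The next point is to verify that $\tilde L_z$ preserves $\dom(\tilde D)=\dom(D\otimes \id_{L^2(\B)})\cap\tilde\Hi$: preservation of $\tilde\Hi$ is exactly Proposition \ref{defalgebra}, while preservation of $\dom(D\otimes \id_{L^2(\B)})$ follows from the fact that each $a_i$ preserves $\dom(D)$ in the original spectral triple. A direct calculation on this common domain then gives
\[
[\tilde D,\tilde L_z] \;=\; \Bigl(\sum_{i=1}^n [D,a_i]\otimes L_{b_i}\Bigr)\Bigr|_{\tilde\Hi},
\]
which is bounded as a finite sum of bounded operators restricted to a closed subspace.

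I do not foresee a serious obstacle: all the structural work was done in constructing $\tilde\A$, $\tilde\Hi$ and $\tilde D$, and the compact-resolvent property was already secured in Proposition \ref{defdiracop} via the finite-dimensionality result of Proposition \ref{propdefhilbertspace}(3). The only mild care needed is in tracking the domain of the unbounded operator $\tilde D$, but this reduces cleanly to the spectral triple axiom for the original data $(\A,\Hi,D)$.
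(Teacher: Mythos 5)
Your proposal is correct and follows essentially the same route as the paper: reduce to boundedness of $[\tilde D,\tilde L_z]$, check that $\tilde\A$ preserves $\dom(\tilde D)$ via the core $\dom(D)\odot L^2(\B)$, and conclude from boundedness of $[D,a_i]$. The only point you gloss is that passing from preservation of the core $\dom(D)\odot L^2(\B)$ to preservation of all of $\dom(D\otimes\id)$ requires the approximation/closedness argument (boundedness of $[D\otimes\id,z]$ on the core makes $(D\otimes\id)z\xi_n$ Cauchy), which is exactly the step the paper spells out.
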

\begin{proof}
Combining all the previous propositions, it suffices to prove that the commutator of $\tilde D$ with an element $a\in \tilde \A$ is bounded. For that, we will first prove that $\tilde \A$ leaves the domain of $\tilde D$ invariant and secondly we will proof that the commutator of $\tilde D$ and an arbitrary $a \in\tilde \A$ is bounded. 
%We will proceed in different steps. 
Let $z$ be an arbitrary element in $\A\odot \B$ and let $\xi$ be an arbitrary vector in $\dom(D\otimes \id)$. We will prove $z\xi \in\dom(D\otimes \id)$. As $\xi\in \dom(D\otimes \id)$, there exists a sequence $\xi_n$ in $\dom(D)\odot L^2(\B)$ such that simultaneously $\xi_n\to \xi$ and $(D\otimes \id)\xi_n\to (D\otimes \id)\xi$ for $n\to \infty$. Note that as $\A$ leaves the domain of $D$ invariant, $\A \odot \B$ leaves the core $\dom(D)\odot L^2(\B)$ of $D\otimes \id$ invariant and hence $z\xi_n\in \dom(D)\odot L^2(\B)$ for all $n$. Moreover, as $\A$ has bounded commutator with $D$, one can prove that $[D\otimes \id,z]$ is bounded on $\dom(D)\odot L^2(\B)$ and $(D\otimes \id)z(\xi_n)_n$ is a Cauchy sequence and thus converging. As $z\xi_n$ is an element of the core converging to $z\xi$ and $((D\otimes \id)z(\xi_n))_n$ converges, we know that $z\xi\in \dom(D\otimes \id)$ and $(D\otimes \id)z\xi_n\to (D\otimes \id)z\xi$. We can conclude that 
$(\A\odot \B)(\dom (D\otimes \id))\subset \dom(D\otimes \id)$ and it follows directly that $\tilde \A(\dom( \tilde D))\subset \dom(\tilde D)$.

Finally, we prove that $\tilde Dz-z \tilde D$ is indeed bounded on the domain of $\tilde D$. Let $\xi \in \dom(\tilde D)$ arbitrary and take a sequence $\xi_n\to \xi$ in $\dom(D)\odot L^2(\B)$. Then we know from above, that simultaneously
\begin{eqnarray*}
%\xi_n&\to& \xi,\\
%z\xi_n&\to &z\xi,\\
(D\otimes \id)z\xi_n&\to&(D\otimes \id)z \xi,\\
z(D\otimes \id)\xi_n&\to& z(D\otimes \id)\xi
\end{eqnarray*} and that $[D\otimes \id,z]$ is bounded on $\dom(D)\odot L^2(\B)$. Combining that, one can prove that indeed $\tilde D z - z \tilde D$ is  bounded on the domain.

\end{proof}

\begin{theorem} There exists a unitary representation $\tilde{U}$ of $\Gb$ on $\Hi\stackrel[\Ga]{}{\boxtimes}L^2(\B)$ such that $\mathbb G_2$ acts algebraically and by orientation-preserving isometries on $(\tilde{\A},\tilde{\Hi},\tilde{D})$ with $\tilde U$.

%with respect to the action $\alpha_2=(\id\otimes \beta_2)_{|_{\tilde{\A}}}:\tilde{\A}\to \tilde{\A}\otimes \OGb$.

\end{theorem}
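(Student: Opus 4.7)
The plan is to construct $\tilde U$ from the right coaction $\beta_2:\B\to \B\odot \OGb$, in complete analogy with how $\beta_1'$ was obtained from $\beta_1$ in Proposition~\ref{defhilbertspace}, and then check the two conditions of Definition~\ref{def-algactcqg} one after the other. Since $\omega$ is also $\beta_2$-invariant by Theorem~\ref{monoidalGalois}(3), the assignment $\beta_2'(\Lambda(b))=(\Lambda\otimes \id)\beta_2(b)$ is well defined on $\Lambda(\B)$ and extends by density to a unitary right representation $\beta_2': L^2(\B)\to L^2(\B)\otimes \Gb$ of $\mathbb G_2$. I then set $\tilde u := (\id_\Hi\otimes \beta_2')_{|\tilde \Hi}$ and let $\tilde U\in \mathcal M(\mathcal K(\tilde\Hi)\otimes \Gb)$ be the unitary associated to $\tilde u$, provided this really lands in $\tilde\Hi\otimes \Gb$.

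The key step is showing this restriction is well defined. Labelling the four tensor legs by $\Hi$, $\Ga$, $L^2(\B)$, $\Gb$, take $z\in \tilde\Hi$, so $U_{12}z_{13}=(\id_\Hi\otimes \beta_1')(z)$. By Theorem~\ref{monoidalGalois}(2) the coactions $\beta_1$ and $\beta_2$ commute on $\B$; passing to $L^2(\B)$ gives $(\beta_1'\otimes \id_{\Gb})\beta_2'=(\id_{\Ga}\otimes \beta_2')\beta_1'$. A direct leg-numbering computation then yields
\[
(\id_\Hi\otimes \beta_1'\otimes \id_{\Gb})(\id_\Hi\otimes \beta_2')(z)
=(\id_\Hi\otimes \id_\Ga\otimes \beta_2')(U_{12}z_{13})
=U_{12}\bigl[(\id_\Hi\otimes \beta_2')(z)\bigr]_{134},
\]
which is precisely the defining condition for $(\id_\Hi\otimes \beta_2')(z)\in \tilde\Hi\otimes \Gb$. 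The representation axiom $(\tilde u\otimes \id)\tilde u=(\id\otimes \Delta_2)\tilde u$ is then inherited from that of $\beta_2'$, and unitarity follows from restricting a unitary to an invariant closed subspace.

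For the isometry condition, recall that $\tilde D$ is the restriction of $D\otimes \id_{L^2(\B)}$ to $\tilde\Hi$ (Proposition~\ref{defdiracop}) and that $\tilde U$ acts only on the $L^2(\B)$-leg; hence for any state $\phi$ on $\Gb$, $\tilde U_\phi=\id_\Hi\otimes (\id\otimes \phi)\beta_2'$ commutes with $D\otimes \id_{L^2(\B)}$ on each eigenspace $V_\lambda\bokstimes[\Ga]L^2(\B)$ of $\tilde D$, on which $\tilde D$ acts by the scalar $\lambda$ (Proposition~\ref{propdefhilbertspace}(2)). For the algebraic condition, for elementary tensors $a\otimes b\in \A\odot\B\subset B(\Hi\otimes L^2(\B))$ the fact that $\tilde U$ acts trivially in the $\Hi$-leg gives $\ad_{\tilde U}(a\otimes b)=(\id_\A\otimes \beta_2)(a\otimes b)$, and by linearity the same holds on all of $\A\odot \B$. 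Restricting to $\tilde\A=\A\boksdot[\OGa]\B$, Theorem~\ref{actiondeformation} guarantees that $(\id\otimes \beta_2)$ takes values in $\tilde\A\odot \OGb$, so $\ad_{\tilde U}(\tilde\A)\subset \tilde\A\odot \OGb$ as required.

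The main technical hurdle is the second step, the leg-numbering verification that $\id_\Hi\otimes \beta_2'$ preserves $\tilde\Hi$; it is a one-line identity once the commutation of $\beta_1$ and $\beta_2$ is threaded correctly through the four legs $\Hi,\Ga,L^2(\B),\Gb$, but keeping track of the four tensor factors demands some care. Everything else, in particular the isometry and algebraicity conditions, falls out once $\tilde U$ has been correctly identified with $\id_\Hi\otimes \beta_2'$ restricted to $\tilde\Hi$.
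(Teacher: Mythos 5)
Your proposal is correct and follows essentially the same route as the paper: the unitary $\beta_2'$ you build on $L^2(\B)$ is exactly the paper's $\tilde U_0$ (constructed there via Lemma 5 of Boca from the $\beta_2$-invariance of $\omega$), the leg-numbering verification that $\id_\Hi\otimes\beta_2'$ preserves $\tilde\Hi$ via the commutation of $\beta_1$ and $\beta_2$ is the same key step, and the isometry and algebraicity checks (commutation with $\tilde D$ as a restriction of $D\otimes\id$, and $\ad_{\tilde U}=(\id\otimes\beta_2)_{|\tilde\A}$ combined with Theorem~\ref{actiondeformation}) match the paper's argument.
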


\begin{proof}
Using the coaction $\beta_2:\B\to \B\odot \OGb$ and the CQG-action $\beta_2: B_u\to B_u\otimes C(\mathbb G_2)$, one can construct, along the lines of Lemma 5 in \cite{Boca1995} and the discussion above it, a representation $\tilde U_0\in \mathcal{M}(\mathcal{K}(L^2(\B))\otimes \Gb)$ such that 
$$\tilde U_0(\Lambda(b)\otimes a)=(\Lambda\otimes \id_{\Gb})\big(\beta_2(b)(1_\B\otimes a)\big).$$ Moreover, we know this is a unitary representation and furthermore,
\begin{equation}\label{beta2implementation}
\beta_2(b)=\tilde U_0(b\otimes \id)\tilde U_0^*.
\end{equation}
Now one can prove that $\id_\Hi\otimes \tilde U_0\in \mathcal{M}(\mathcal{K}(\Hi\otimes L^2(\B))\otimes \Gb)$ restricts to a representation $\tilde U \in \mathcal{M}(\mathcal{K}(\Hi\stackrel[\Ga]{}{\boxtimes}L^2(\B))\otimes \Gb)$. Indeed, as $\beta_1$ and $\beta_2$ commute, one has $(\beta_1'\otimes \id_{\Gb})\tilde U_0=(\id_{\Ga}\otimes \tilde U_0)(\beta_1'\otimes \id_{\Gb})$ and hence for $\xi\in \Hi\stackrel[\Ga]{}{\boxtimes}L^2(\B)$ and $a\in \Gb$ one has 
\begin{eqnarray*}
(\id_\Hi\otimes\beta_1'\otimes \id_{\Gb})(\id_\Hi\otimes \tilde U_0)(\xi\otimes a)&=&(\id_\Hi\otimes\id_{\Ga}\otimes \tilde U_0)(\id_{\Hi}\otimes\beta_1'\otimes \id_{\Gb})(\xi\otimes a)\\
&=&(\id_\Hi\otimes\id_{\Ga}\otimes \tilde U_0)(U\otimes\id_{L^2(\B)}\otimes \id_{\Gb})(\xi_{13}\otimes a)\\
&=&(U\otimes\id_{L^2(\B)}\otimes \id_{\Gb})\big((\id_\Hi\otimes \tilde U_0)(\xi\otimes a)\big)_{134}
\end{eqnarray*}

Then it suffices to prove that $\tilde{U}$ commutes with the Dirac operator of the deformed spectral triple and that there is a coaction of $\OGb$ on $\tilde{\A}$. 
As $\tilde D$ is the restriction of $D\otimes \id_{L^2(B)}$ and $\tilde U$ is the restriction of $ \id_\Hi\otimes \tilde U_0$, it follows directly that they commute. Using theorem \ref{actiondeformation}, we know that, given the coaction 
$$\alpha_1=\ad_{U}: \A\to \A\odot \OGa: a\to U(a\otimes \id_{\A})U^*,$$ 
there is a coaction $\alpha_2: \tilde{\A}\to\tilde{\A}\odot \OGb: z\to (\id_\A\otimes \beta_2)(z)$. Using \eqref{beta2implementation}, $\alpha_2=\id_\A\otimes \ad_{\tilde U_0}$ and regarding elements of $\A$ as operators on $\Hi$, we have $\alpha_2=\ad_{\tilde U}.$
 %It follows immediately that $(\id_\A\otimes \phi)\alpha_2(z)\in \tilde\A\subset \tilde\A''$ for $z\in \tilde{\A}$ and $\phi$ a state on $\Gb$.
\end{proof}

\begin{main}\label{mainthm}
Let $(\A,\Hi,D)$ be a compact spectral triple and let $\mathbb G_1=(\Ga,\Delta_1)$ be a compact quantum group acting algebraically and by orientation-preserving isometries on $(\A,\Hi,D)$ with a unitary representation $U$. Moreover let $\psi$ be a unitary fiber functor on $\mathbb G_1$. 
 \\ \\Then there exist a spectral triple $(\tilde\A,\tilde\Hi,\tilde D)$, a compact quantum group $\mathbb G_2= (\Gb,\Delta_2)$ monoidally  equivalent with $\mathbb G_1$ and a unitary representation $\tilde U$ of $\mathbb G_2$ on $\tilde\Hi$ such that the monoidal equivalence is associated to $\psi$ and $\mathbb G_2$ acts algebraically and by orientation-preserving isometries on the new spectral triple with $\tilde U$. 
 \\\\
 Denoting $\B$ to be the $\mathbb G_1 - \mathbb G_2$-bi-Galois object, one has
 \begin{equation}\label{mainequation}
 \tilde \A=\A\stackrel[\OGa]{}{\boxdot}\B,  \qquad \tilde\Hi=\Hi\stackrel[\Ga]{}{\boxtimes}L^2(\B), \qquad \tilde D = (D\otimes \id_{L^{2}(\B)})_{|_{\tilde \Hi}}.
 \end{equation}
\end{main}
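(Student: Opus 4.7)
The Main Theorem is really a synthesis of the intermediate results already assembled in this section, so my strategy is to weave those propositions together into a single statement. The plan is to start by invoking Proposition~\ref{unfibfunctismoneq} to pass from the unitary fiber functor $\psi$ to a compact quantum group $\mathbb G_2$ and an induced monoidal equivalence $\varphi:\mathbb G_1\to\mathbb G_2$, and then Theorem~\ref{monoidalGalois} to produce the bi-Galois $^*$-algebra $\B$, the faithful invariant state $\omega$, the unitaries $X^x\in B(\Hi_{\varphi(x)},\Hi_x)\odot\B$ and the two commuting coactions $\beta_1,\beta_2$. These are the raw materials from which the three objects in \eqref{mainequation} are built.

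Next, I would define $\tilde\Hi$, $\tilde\A$ and $\tilde D$ exactly as in \eqref{mainequation} and cite the already-established propositions to verify the spectral triple axioms. Proposition~\ref{defhilbertspace} shows $\tilde\Hi$ is a Hilbert space (with the ergodic unitary representation $\beta_1'$ on $L^2(\B)$); Proposition~\ref{propdefhilbertspace} decomposes it along the spectrum of $D$ into finite-dimensional pieces isomorphic to $\bigoplus_i\Hi_{\varphi(x_i)}$; Proposition~\ref{defdiracop} then extracts the selfadjoint operator $\tilde D$ with compact resolvent as the restriction of $D\otimes\id_{L^2(\B)}$; and Proposition~\ref{defalgebra} (which appeals to Theorem~\ref{actiondeformation}) identifies $\tilde\A$ as a $^*$-algebra acting by bounded operators on $\tilde\Hi$. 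The immediately preceding theorem combines these to conclude that $(\tilde\A,\tilde\Hi,\tilde D)$ is a genuine spectral triple, where the core argument is that $\tilde\A$ preserves $\dom(\tilde D)$ and that the commutator $[\tilde D, z]$ is bounded for $z\in\tilde\A$; this is reduced to the corresponding property of $(\A,\Hi,D)$ by noting that $z\in\A\odot\B$ and that $D\otimes\id$ commutes with the factor $\B$, so only the $\A$-part contributes.

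For the quantum symmetry, I would construct $\tilde U$ by first building, via the Boca-style construction recalled in \cite{Boca1995}, the representation $\tilde U_0\in\mathcal{M}(\mathcal K(L^2(\B))\otimes\Gb)$ implementing $\beta_2$ on $L^2(\B)$; then $\id_\Hi\otimes\tilde U_0$ acts on $\Hi\otimes L^2(\B)$, and I would check it restricts to a unitary representation $\tilde U$ on $\tilde\Hi$ by using the commutation $(\beta_1'\otimes\id)\tilde U_0=(\id\otimes\tilde U_0)(\beta_1'\otimes\id)$ coming from the commutation of $\beta_1$ and $\beta_2$. That $\tilde U$ commutes with $\tilde D$ is then automatic since both are restrictions of operators that already commute on $\Hi\otimes L^2(\B)$; the algebraicity of the induced adjoint action follows because $\ad_{\tilde U}=\id_\A\otimes\ad_{\tilde U_0}=\id_\A\otimes\beta_2$ on $\tilde\A$, which lands in $\tilde\A\odot\OGb$ by the second part of Theorem~\ref{actiondeformation}.

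The main obstacle I expect is not any single computation but rather the bookkeeping required to show that the restriction $\tilde D$ behaves correctly as an unbounded operator, in particular that $\tilde\A$ preserves its domain and yields bounded commutators: one must argue via a core ($\dom(D)\odot L^2(\B)$) and show simultaneously that $z\xi_n\to z\xi$, $(D\otimes\id)z\xi_n$ is Cauchy, and $[D\otimes\id,z]$ is bounded on the core, before passing to the intersection with $\tilde\Hi$. Once this is in place, everything else is bookkeeping; the final statement of the theorem is obtained simply by writing the definitions of $\tilde\A$, $\tilde\Hi$, $\tilde D$ in the form \eqref{mainequation} and citing the preceding propositions.
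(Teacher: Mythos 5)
Your proposal is correct and follows essentially the same route as the paper: the Main Theorem is indeed proved there by assembling Proposition \ref{unfibfunctismoneq}, Theorem \ref{monoidalGalois}, Propositions \ref{defhilbertspace}--\ref{defalgebra} and the two theorems immediately preceding it, with the only genuinely delicate step being the domain/core argument for $\tilde D$ that you correctly identify. Nothing is missing.
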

In what follows, we will call this deformation procedure `monoidal deformation'.\\\\

To end this section, we will show that via the inverse monoidal equivalence on the deformed quantum group and spectral triple, one can obtain the original data again. 

\begin{theorem}
Let $(\A,\Hi,D)$ be a spectral triple, $\mathbb G_1$ a compact quantum group acting algebraically and by orientation-preserving isometries on $(\A,\Hi,D)$. Let $\psi$ be a unitary fiber functor, inducing a monoidal equivalence $\varphi:\mathbb G_1\to \mathbb G_2$ with bi-Galois object $\B$. Denote by $\varphi^{-1}:\mathbb G_2\to \mathbb G_1$ the inverse monoidal equivalence with bi-Galois object $\tilde \B$. Then
$$\Big(\A\boksdot[\OGa]\B \boksdot[\OGb] \tilde \B, \Hi \bokstimes[\Ga] L^2(\B)\bokstimes[\Gb] L^2(\tilde\B), D\otimes \id_{L^2(\B)}\otimes \id_{L^2(\tilde \B)}\Big)$$ is isomorphic with $(\A,\Hi,D)$ as spectral triples (definition \ref{isospectrtrpl}).
\end{theorem}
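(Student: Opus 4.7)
The plan is to bootstrap everything from Proposition \ref{propGbtildeBB}, which provides the $^*$-isomorphism $\pi : \OGa \to \B \boksdot[\OGb] \tilde{\B}$ intertwining $\Delta_1$ with $(\beta_1 \otimes \id) = (\id \otimes \delta_1)$. The required data of Definition \ref{isospectrtrpl} — a $^*$-algebra isomorphism $\lambda$, a unitary $\phi$, and the two compatibilities — will all be obtained by lifting $\pi$ to the algebra and Hilbert-space levels.

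For the algebra, applying $\id_\A \otimes \pi^{-1}$ identifies
\[ \A \boksdot[\OGa] \bigl(\B \boksdot[\OGb] \tilde{\B}\bigr) \;\cong\; \A \boksdot[\OGa] \OGa \;=\; \{ w \in \A \odot \OGa \mid (\alpha \otimes \id) w = (\id \otimes \Delta_1) w\}, \]
and a one-line counit argument (apply $\id \otimes \varepsilon$) shows the right-hand side equals $\alpha(\A)$, which is in turn $^*$-isomorphic to $\A$ since $\alpha$ is injective; let $\lambda$ be the resulting composite. For the Hilbert space, I propose to construct $\phi : \Hi \to \Hi \bokstimes[\Ga] L^2(\B) \bokstimes[\Gb] L^2(\tilde{\B})$ componentwise on each irreducible $\mathbb G_1$-summand $\Hi_x$ of each $D$-eigenspace via
\[ \phi(\xi) \;=\; X^x_{12}\, Z^{\varphi(x)}_{13}\bigl(\xi \otimes \Lambda(1_\B) \otimes \Lambda(1_{\tilde{\B}})\bigr), \]
which is the Hilbert-space avatar of the formula $(\id \otimes \pi)(U^x) = X^x_{12} Z^{\varphi(x)}_{13}$ appearing in Proposition \ref{propGbtildeBB}. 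That $\phi|_{\Hi_x}$ is unitary onto the relevant summand follows from two successive applications of Proposition \ref{propdefhilbertspace}(1): first $\Hi_x \bokstimes[\Ga] L^2(\B) \cong \Hi_{\varphi(x)}$ via $X^x$, then $\Hi_{\varphi(x)} \bokstimes[\Gb] L^2(\tilde{\B}) \cong \Hi_{\varphi^{-1}(\varphi(x))} = \Hi_x$ via $Z^{\varphi(x)}$.

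The Dirac compatibility $\phi D = \tilde{\tilde D} \phi$ is then immediate, because $\tilde{\tilde D}$ is by construction the restriction of $D \otimes \id \otimes \id$ and $\phi$ respects the eigenspace decomposition of $D$. The equivariance $\phi(a\xi) = \lambda(a) \phi(\xi)$ expresses precisely that $\pi$ transports the coaction $\alpha = \ad_U$ of $\OGa$ on $\A$ to the combined bi-coaction on $\B \boksdot[\OGb] \tilde{\B}$, and it reduces to the defining intertwiner identities of the $X^x$ and $Z^{\varphi(x)}$. The main obstacle I expect is in the Hilbert-space step: verifying that $\phi$ surjects onto the whole iterated cotensor, i.e.\ that every vector simultaneously satisfying the $\mathbb G_1$- and $\mathbb G_2$-invariance constraints factors in the stated form. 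This will require a Fourier-type decomposition using complete reducibility of $U$ on each eigenspace together with the spanning property of matrix coefficients of the $X^x$ in $\B$ and of the $Z^y$ in $\tilde{\B}$.
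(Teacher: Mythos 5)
Your proposal follows essentially the same route as the paper's proof: the algebra isomorphism is the chain $\A \cong \A\boksdot[\OGa]\OGa \cong \A\boksdot[\OGa]\B\boksdot[\OGb]\tilde\B$ obtained from Proposition \ref{propGbtildeBB}, and the Hilbert-space unitary is exactly the componentwise map $\xi\mapsto X^x_{12}Z^{\varphi(x)}_{13}(\xi\otimes\Lambda(1_\B)\otimes\tilde\Lambda(1_{\tilde\B}))$ with equivariance checked from the intertwiner identities. The surjectivity you flag as the main obstacle is already settled by Proposition \ref{propdefhilbertspace}(1), since the maps $f_x$ and $g_x$ there are mutually inverse unitaries and can be applied twice in succession, which is precisely what the paper does.
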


\begin{proof}
From proposition \ref{propGbtildeBB}, one obtains the following $^*$-isomorphisms:
$$\A\stackrel{\alpha_U}{\to} \A\boksdot[\OGa] \OGa\stackrel{\id\otimes \pi}{\to} \A\boksdot[\OGa] \B\boksdot[\OGb]\tilde \B$$ which are all compatible with the coaction of $\Gb$. Furthermore, recall the unitaries
$$f^{\varphi}_x:\Hi_{\varphi(x)} \to \Hi_x \stackrel[\Ga]{}{\boxtimes}L^2(\B):\xi^{\varphi(x)}\mapsto X^x(\xi^{\varphi(x)}\otimes\Lambda(1_\B))$$ for $x\in \Irred(\mathbb G_1)$ of proposition \ref{propdefhilbertspace}. Note that these unitaries intertwine the representations of $\mathbb G_2$ on the two Hilbert spaces. We then also have $$f^{\varphi^{-1}}_{\varphi(x)}:\Hi_{x} \to \Hi_{\varphi(x)} \stackrel[\Gb]{}{\boxtimes}L^2(\tilde \B):\eta^x\mapsto Z^{\varphi(x)}(\eta^x\otimes\tilde \Lambda(1_{\tilde\B}))$$ and combining them, we have a unitary:
$$ \theta^x: \Hi_{x} \to \Hi_x \stackrel[\Ga]{}{\boxtimes}L^2(\B) \stackrel[\Gb]{}{\boxtimes}L^2(\tilde \B): \eta^x\mapsto X^x_{12}Z_{13}^{\varphi(x)}(\eta^x\otimes\Lambda(1_\B)\otimes \tilde \Lambda(1_{\tilde\B})).$$ Denoting by $X$ and $Z$ resp. $\oplus_{x\in \Irred(\mathbb G_1)}X^x$ and $\oplus_{x\in  \Irred(\mathbb G_1)}Z^{\varphi(x)}$ (where we take the direct sum over the decomposition $\Hi=\oplus_{x\in \Irred(\mathbb G_1)}\Hi_x$), we then have a unitary $$\theta=\oplus_{x\in  \Irred(\mathbb G_1)} \theta_x:\Hi\to \Hi \bokstimes[\Ga] L^2(\B)\bokstimes[\Gb] L^2(\tilde\B):\xi\to X_{12}Z_{13}(\xi\otimes \Lambda(1_\B)\otimes \tilde \Lambda(1_{\tilde\B}))$$ and hence
$$(\id\otimes\pi)(\alpha_U(a))\theta(\xi)=(\id\otimes\pi)(U(a\otimes 1_{\Ga})U^*)X_{12}Z_{13}(\xi\otimes \Lambda(1_{\B}) \otimes \tilde\Lambda(1_{\tilde\B}))$$
\begin{equation}
=X_{12}Z_{13}(a\otimes 1_{\B}\otimes 1_{\tilde\B})Z_{13}^*X_{12}^*X_{12}Z_{13}(\xi\otimes \Lambda(1_{\B}) \otimes \tilde\Lambda(1_{\tilde\B}))=X_{12}Z_{13}(a\xi\otimes \Lambda(1_{\B}) \otimes \tilde\Lambda(1_{\tilde\B}))=\theta(a\xi)
\end{equation}
proving that $\theta(a\xi)=(\id\otimes\pi)(\alpha_U(a))(\theta(\xi))$. This concludes the proof.
\end{proof}

%\chapter{Main theorems}
%In this second chapter, we formulate two important theorems concerning our new method of deformation. First we prove that it is a generalization of 2-cocycle deformations of spectral triples, introduced by Goswami and Joardar in \cite{Goswami2014}. In the last section we prove that there is more: we construct an example which is a deformation along the lines we just proposed, but is not a 2-cocycle deformation.
%
\section{Cocycle deformation of spectral triples}
\label{cha-cocycle}

In this section we will fix a spectral triple $(\A,\Hi,D)$, a quantum group $\mathbb{G}$ acting algebraically on it by orientation-preserving isometries and a unitary fiber functor $\psi$ on $\mathbb{G}$ which satisfies $\dim(\Hi_x)=\dim(\Hi_{\psi(x)})$ for every $x\in \IrredG$. Unitary fiber functors which satisfy this condition will be called to be dimension-preserving and monoidal deformation via a dimension-preserving unitary fiber functor, a dimension-preserving monoidal deformation. Bichon et al. proved in \cite{Bichon2005} that dimension-preserving unitary fiber functors are in one-to-one correspondence with 2-cocycles on the dual quantum group. Using this, we will prove that dimension-preserving monoidal deformation is equivalent to the cocycle deformation introduced in \cite{Goswami2014}. 
%To start this chapter, we will first investigate some of the theory of cocycles and how to define a monoidal equivalence using a cocycle. Then we will prove that each monoidal equivalence which preserves the dimensions of the Hilbert spaces of the irreducible representations, comes indeed from such a cocycle.\\ \\
In this section we will frequently use slight adaptations of the work of Bichon et al. \cite{Bichon2005}.

\subsection{Cocycles on the dual of a compact quantum group}
Let $\mathbb{G}$ be a compact quantum group.
\begin{definition}\footnote{In \cite{Bichon2005}, the authors use another convention for cocycle. In fact, if $\Omega$ is a cocycle in our sense, $\Omega^*$ is one in the sense of Bichon and coauthors.}
Let $\mathbb{G}$ be a compact quantum group and $(c_0(\hat{\mathbb{G}}),\hat \Delta)$ its dual. We say a unitary element $\Omega \in \mathcal{M}(c_0(\Gdual)\otimes c_0(\Gdual))$ is a 2-cocycle on $\Gdual$ if it satisfies
\begin{equation}\label{cocycleprop}
(\Omega\otimes 1)(\hat \Delta\otimes\id)(\Omega)=(1\otimes \Omega)(\id\otimes \hat \Delta)(\Omega).
\end{equation}
\end{definition}
Denoting for $x\in \IrredG, p_x$ to be the projection $c_0(\Gdual)\to B(\Hi_x)$, we will say a cocycle is normalized if $(p_\varepsilon\otimes \id)\Omega=p_\varepsilon\otimes\id$ and $(\id\otimes p_\varepsilon)\Omega=\id\otimes p_\varepsilon$. From now on we will always assume 2-cocycles to be normalized.\\

\begin{proposition}[\cite{Bichon2005}]\label{cocycle-unfibfunct}
Let $\Omega$ be a normalized unitary 2-cocycle on $\Gdual$ and denote $$\Omega_{(2)}=(\Omega\otimes 1)(\hat \Delta\otimes\id)(\Omega)=(1\otimes \Omega)(\id\otimes \hat \Delta)(\Omega).$$
Then there exists a unique unitary fiber functor $\psi_\Omega$ on $\mathbb G$ such that $$\Hi_{\psi_\Omega(x)}=\Hi_x,\qquad \psi_\Omega(S)=\Omega S,\qquad \psi_\Omega(T)=\Omega_{(2)}T$$ for all $S\in \Mor(y\otimes z,x)$ and $T\in \Mor(x\otimes y\otimes z,a)$ and $x,y,z\in \IrredG$. Moreover it is dimension-preserving.
\end{proposition}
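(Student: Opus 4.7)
My plan is to verify that the prescribed formulas satisfy the axioms of a unitary fiber functor as characterized in Remark \ref{unitfibfunctinducesmoneq}, from which the proposition immediately follows. First, the set-up: for irreducibles $y,z \in \IrredG$ let $\Omega_{y,z} := (p_y\otimes p_z)\Omega \in B(\Hi_y\otimes \Hi_z)$, a unitary; for $S\in \Mor(y\otimes z,x)$, regarded as an intertwiner $\Hi_x\to\Hi_y\otimes\Hi_z$, set $\psi_\Omega(S) := \Omega_{y,z}\cdot S$; and for $T\in\Mor(x\otimes y\otimes z,a)$ set $\psi_\Omega(T) := (p_x\otimes p_y\otimes p_z)\Omega_{(2)}\cdot T$. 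The assignment extends block-diagonally along irreducible decompositions to morphisms between reducible tensor products.

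The easier axioms are straightforward. The unit condition $\psi_\Omega(1)=1$ is the normalization of $\Omega$. For $S, T \in \Mor(y\otimes z, \cdot)$ the composite $S^*T$ is a morphism between irreducibles, on which $\psi_\Omega$ must act as the identity by Schur's lemma, and unitarity of $\Omega_{y,z}$ yields
\[\psi_\Omega(S)^*\psi_\Omega(T) = S^*\Omega_{y,z}^*\Omega_{y,z}T = S^*T = \psi_\Omega(S^*T).\]
Non-degeneracy follows similarly, since left-multiplication by the unitary $\Omega_{b,c}$ preserves the spanning property $[\Mor(b\otimes c, a)\cdot\Hi_a\mid a\in\IrredG]=\Hi_b\otimes\Hi_c$ that already holds before deformation.

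The heart of the proof is the associativity axiom, and this is where the 2-cocycle identity enters. For $S\in\Mor(x\otimes y,a)$ and $T\in\Mor(a\otimes z,b)$, the key observation is that $\mathbb{G}$-equivariance of $S$ amounts to intertwining the pointwise $c_0(\Gdual)$-action on $\Hi_a$ with its $\hat\Delta$-action on $\Hi_x\otimes\Hi_y$; applied to both legs of $\Omega$, this gives $(S\otimes 1)\Omega_{a,z} = ((\hat\Delta\otimes\id)\Omega)_{x,y,z}(S\otimes 1)$. Hence
\begin{align*}
(\psi_\Omega(S)\otimes \id)\,\psi_\Omega(T) &= (\Omega_{x,y}\otimes 1)(S\otimes 1)\,\Omega_{a,z}\,T \\
&= \bigl((\Omega\otimes 1)(\hat\Delta\otimes\id)\Omega\bigr)_{x,y,z}(S\otimes 1)T \\
&= (\Omega_{(2)})_{x,y,z}(S\otimes 1)T = \psi_\Omega((S\otimes \id)T).
\end{align*}
The symmetric associativity axiom, involving an $(\id\otimes S)$-type composition, is verified identically, this time using the alternative expression $\Omega_{(2)} = (1\otimes\Omega)(\id\otimes\hat\Delta)\Omega$; the equality of the two expressions for $\Omega_{(2)}$ is exactly the 2-cocycle condition \eqref{cocycleprop}.

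Uniqueness is immediate because the stated values determine $\psi_\Omega$ on all generating morphism spaces, and the dimension-preserving property is built into the choice $\Hi_{\psi_\Omega(x)}=\Hi_x$. The principal technical obstacle is the computation in the third paragraph: correctly identifying the product $(\Omega_{x,y}\otimes 1)(S\otimes 1)\Omega_{a,z}$ as the restriction of $\Omega_{(2)}$ to $\Hi_x\otimes\Hi_y\otimes\Hi_z$ by pushing $\Omega_{a,z}$ across $(S\otimes 1)$ via equivariance. Once that bookkeeping is in place, the 2-cocycle condition delivers exactly the consistency required for both associativity axioms to hold simultaneously.
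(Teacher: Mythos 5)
Your proposal is correct and follows the same route as the paper: the paper's proof simply asserts that $\psi_\Omega$ satisfies the conditions of Remark \ref{unitfibfunctinducesmoneq} (deferring the verification to \cite{Bichon2005}), and you carry out exactly that verification, with the cocycle identity \eqref{cocycleprop} entering precisely where it should, in the two associativity axioms. The key bookkeeping step $(S\otimes 1)\Omega_{a,z}=((\hat\Delta\otimes\id)\Omega)_{x,y,z}(S\otimes 1)$, which rests on the defining property $\hat\Delta(m)S=Sm$ of the dual comultiplication on intertwiners, is handled correctly.
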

\begin{proof}
The proof follows directly as our $\psi$ satisfies the conditions of remark \ref{unitfibfunctinducesmoneq}. That it is dimension-preserving, follows directly by construction.
\end{proof}

Using this unitary fiber functor, one can make a new compact quantum group $\mathbb G_\Omega=(C(\mathbb G_\Omega),\Delta_\Omega)$ \cite{Bichon2005} and a monoidal equivalence $\varphi: \mathbb G \to \mathbb G_\Omega$ along the lines of proposition \ref{unfibfunctismoneq}. Note that the dual quantum group will be $(c_0(\Gdual_\Omega),\hat \Delta_\Omega)$ where $$c_0(\Gdual_\Omega)=\bigoplus_{x\in \IrredG} B(\Hi_x)=c_0(\Gdual)$$ and $$\hat \Delta_\Omega(a)\psi_\Omega(S)=\psi_\Omega(S)a$$ where $\hat \Delta_\Omega(a)=\Omega \Delta(a)\Omega^*$.

\begin{proposition}[\cite{Bichon2005}]\label{dimpresunfibfunctinducescocycle}
For every dimension-preserving unitary fiber functor $\psi$ on a quantum group $\mathbb G$, there exists a normalized unitary 2-cocycle $\Omega$ on $\Gdual$ such that $\psi\cong\psi_\Omega$.
\end{proposition}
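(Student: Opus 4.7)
The plan is to build $\Omega$ directly from the action of $\psi$ on two-fold morphism spaces $\Mor(x,y\otimes z)$, in the spirit of the converse direction in \cite{Bichon2005}. First, using the dimension-preserving hypothesis, I would choose for each $x\in\IrredG$ a unitary $u_x:\Hi_x\to\Hi_{\psi(x)}$ (with $u_\varepsilon=\id$) and replace $\psi$ by the isomorphic fiber functor $\psi'$ defined by
\[\psi'(S)=(u_{x_1}^*\otimes\cdots\otimes u_{x_k}^*)\,\psi(S)\,u_y\]
for $S\in\Mor(x_1\otimes\cdots\otimes x_k,y)$. Then $\Hi_{\psi'(x)}=\Hi_x$ for every $x$, and it suffices to produce a normalized unitary $2$-cocycle $\Omega$ with $\psi'=\psi_\Omega$.

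Next, I would define $\Omega$ componentwise. For each pair $(y,z)\in\IrredG^2$, semisimplicity of the representation category gives a unitary isomorphism
\[\bigoplus_{x\in\IrredG}\Mor(x,y\otimes z)\odot\Hi_x\;\longrightarrow\;\Hi_y\otimes\Hi_z,\qquad S\otimes\xi\mapsto S\xi,\]
which lets me define a unique $\Omega_{y,z}\in B(\Hi_y\otimes\Hi_z)$ by requiring $\Omega_{y,z}\,S=\psi'(S)$ for every $S\in\Mor(x,y\otimes z)$ and every $x\in\IrredG$. Unitarity of $\Omega_{y,z}$ would follow from $\psi'(S^*)=\psi'(S)^*$ by expanding $\id_{\Hi_y\otimes\Hi_z}=\sum_{x,i}S^x_i(S^x_i)^*$ in orthonormal bases of the morphism spaces and using the multiplicativity axiom $\psi'(ST)=\psi'(S)\psi'(T)$. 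Normalization $\Omega_{\varepsilon,z}=\id=\Omega_{y,\varepsilon}$ is immediate from $\psi'(\id_z)=\id_{\Hi_z}$ and the canonical identifications $\varepsilon\otimes z\cong z\cong z\otimes\varepsilon$. Assembling the $\Omega_{y,z}$ yields $\Omega\in\mathcal{M}(c_0(\Gdual)\otimes c_0(\Gdual))$.

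The central step is the cocycle equation. Unpacking the definition of $\hat\Delta$, its $(y,z,w)$-block reads
\[(\Omega_{y,z}\otimes\id_{\Hi_w})\cdot((\hat\Delta\otimes\id)(\Omega))_{y,z,w}=(\id_{\Hi_y}\otimes\Omega_{z,w})\cdot((\id\otimes\hat\Delta)(\Omega))_{y,z,w},\]
which I would verify by evaluating both sides against an arbitrary $R\in\Mor(x,y\otimes z\otimes w)$, $x\in\IrredG$. Writing $R=(S\otimes\id_w)T$ with $S\in\Mor(p,y\otimes z)$ and $T\in\Mor(x,p\otimes w)$, the intertwining property of $\hat\Delta$ along $S\otimes\id_w$ turns the left-hand side applied to $R$ into $(\psi'(S)\otimes\id_w)\psi'(T)$, which equals $\psi'(R)$ by the tensor axiom of remark \ref{unitfibfunctinducesmoneq}. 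Writing $R=(\id_y\otimes S')T'$ symmetrically shows the right-hand side also reduces to $\psi'(R)$. Since the decomposition $\bigoplus_x\Mor(x,y\otimes z\otimes w)\odot\Hi_x\cong\Hi_y\otimes\Hi_z\otimes\Hi_w$ spans the full space, the two operators agree.

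The main obstacle I anticipate is precisely this bookkeeping: translating the abstract cocycle relation in $\mathcal{M}(c_0(\Gdual)^{\otimes 3})$ into a concrete operator identity on $\Hi_y\otimes\Hi_z\otimes\Hi_w$ and identifying the action of $(\hat\Delta\otimes\id)(\Omega)$ along morphisms of the form $S\otimes\id$ in a consistent way. Once the cocycle equation is secured, the identification $\psi'=\psi_\Omega$ is immediate, since both functors satisfy $\psi'(S)=\Omega S=\psi_\Omega(S)$ on two-fold morphism spaces and by remark \ref{unitfibfunctinducesmoneq} this data (together with $\psi'(\id_x)=\id_{\Hi_x}$) determines the unitary fiber functor uniquely. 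Hence $\psi\cong\psi'=\psi_\Omega$, proving the proposition.
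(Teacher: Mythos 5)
Your construction is correct and is essentially the argument the paper itself invokes: its proof of this proposition is a one-line deferral to Proposition 4.5 of \cite{Bichon2005}, whose proof proceeds exactly as you do --- conjugate by unitaries $u_x$ so that $\Hi_{\psi'(x)}=\Hi_x$, define $\Omega_{y,z}$ blockwise via $\Omega_{y,z}S=\psi'(S)$ for $S\in\Mor(y\otimes z,x)$ using the semisimple decomposition of $\Hi_y\otimes\Hi_z$, and verify unitarity, normalization and the cocycle identity by evaluating against isometries of the form $(S\otimes\id)T$. The only detail worth making explicit is that $\Omega_{y,z}\Omega_{y,z}^{*}=\id$ requires the ranges $\psi'(S^x_i)\Hi_x$ to exhaust $\Hi_y\otimes\Hi_z$, which follows either from the non-degeneracy condition in remark \ref{unitfibfunctinducesmoneq} or from a dimension count using $\dim\Hi_{\psi(x)}=\dim\Hi_x$.
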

\begin{proof}
The proof is a slightly adapted version of the proof of proposition 4.5 in \cite{Bichon2005}.
\end{proof}
This theorems tells us that every dimension-preserving monoidal equivalence comes from a cocycle. The next step to prove that a dimension-preserving monoidal deformation of a spectral triple is a cocycle deformation is to introduce the algebraic notion of a 2-cocycle. We will proof that every 2-cocycle on the dual of a compact quantum group induces an algebraic 2-cocycle on the compact quantum group and that the monoidal deformation is equivalent to a cocycle deformation of the spectral triple as was introduced by Goswami in \cite{Goswami2014}.

\subsection{Algebraic 2-cocycle deformation of a spectral triple}
%\subsection{Algebraic 2-cocycles}
We will start with defining the algebraic counterpart of a 2-cocycle on the dual of a compact quantum group. In algebraic literature (for example Schauenburg \cite{Schauenburg2004}), the definition and theorems are stated for Hopf algebras. We make slight adaptations to Hopf $^*$-algebras.
\begin{definition}\label{defdualcocycle}
Let $H$ be a Hopf algebra. 
\begin{enumerate}
\item An (algebraic) dual 2-cocycle on $H$ is a linear map $\sigma: H\odot H \to \mathbb{C}$ such that $$\sigma(a_{(1)},b_{(1)})\sigma(a_{(2)}b_{(2)},c)=\sigma(b_{(1)},c_{(1)})\sigma(a,b_{(2)}c_{(2)})$$ for all $a,b,c\in H$. Moreover, a dual 2-cocycle is called normalized  if $\sigma(1,h)=\sigma(h,1)=\varepsilon(h)$ for all $h\in H$.
\item A dual 2-cocycle is called invertible if there exists a linear map $\sigma':H\odot H\to \mathbb C$ such that $$\sigma(a_{(1)},b_{(1)})\sigma'(a_{(2)},b_{(2)})=\varepsilon(a)\varepsilon(b)=\sigma'(a_{(1)},b_{(1)})\sigma(a_{(2)},b_{(2)}).$$ In this case, $\sigma'$ is called the inverse dual cocycle and written $\sigma^{-1}$. Moreover $\sigma^{-1}$ satisfies
$$\sigma^{-1}(a_{(1)}b_{(1)},c)\sigma^{-1}(a_{(2)},b_{(2)})=\sigma^{-1}(a,b_{(1)}c_{(1)})\sigma^{-1}(b_{(2)},c_{(2)}).$$
\item If $H$ is a Hopf $^*$-algebra, a dual 2-cocycle $\sigma$ is called unitary if it satisfies
$$\overline{\sigma (a,b)}=\sigma^{-1}(S(a)^*,S(b)^*).$$ In that case, we also have
$$\overline{\sigma^{-1} (a,b)}=\sigma(S(a)^*,S(b)^*).$$
%\item If $H$ is a Hopf $^*$-algebra, a dual 2-cocycle $\sigma$ is called real if it satisfies
%$$\overline{\sigma (a,b)}=\sigma(S^2(b)^*,S^2(a)^*).$$ In that case, we also have
%$$\overline{\sigma^{-1} (a,b)}=\sigma^{-1}(S^2(b)^*,S^2(a)^*).$$
\end{enumerate}
\end{definition}
In the rest of the section, when we use 2-cocycles on Hopf $^*$-algebras, we will always assume them to be unitary.

Using such a dual 2-cocycle, we can make a new $^*$-algebra and several new $H$-comodule $^*$-algebras. We will use the following linear maps:
\begin{itemize}
\item $U: H\to \mathbb C: h\mapsto \sigma(h_{(1)},S(h_{(2)}))$,
\item $V: H\to \mathbb C: h\mapsto U(S^{-1}(h))$.
%\item $W:H\to \mathbb C: h \mapsto U(S^{-2}(h))$,
%\item $V: H\to \mathbb C: h \mapsto W(h_{(1)}) W^{-1}(S(h_{(2)}))$.
\end{itemize}
One can prove that for $U^{-1}(h)=\sigma^{-1}(S(h_{(1)}),h_{(2)})$ and $V^{-1}(h)=U^{-1}(S^{-1}(h))$ one has $U(h_{(1)})U^{-1}(h_{(2)})=\varepsilon(h)=U^{-1}(h_{(1)})U(h_{(2)})$ and $V(h_{(1)})V^{-1}(h_{(2)})=\varepsilon(h)=V^{-1}(h_{(1)})V(h_{(2)})$. 

\begin{definition}\label{defHopfdef}
Given an invertible dual 2-cocycle $\sigma$ on a Hopf $^*$-algebra $(H,\Delta,\varepsilon,S, ^*)$, we define $(H^\sigma,\Delta_\sigma,\varepsilon_\sigma,S_\sigma, ^{*_\sigma})$ to be the twisted Hopf $^*$-algebra which 
\begin{itemize}
\item is isomorphic to $H$ as a co-algebra,
\item has multiplication defined by $g\cdot_\sigma h =\sigma(g_{(1)},h_{(1)})g_{(2)}h_{(2)}\sigma^{-1}(g_{(3)},h_{(3)})$,
\item has antipode $S_\sigma(h)=U(h_{(1)})S(h_{(2)})U^{-1}(h_{(3)})$,
\item has counit $\varepsilon_\sigma=\varepsilon$
\item and has involution %\footnote{We claim that the formulas (2.26) in \cite{Majid} are not correct. This is easy to see: the maps $U$ and $\theta$ there are linear, which gives a problem in the definition of the deformed involution. The mistake is being made in dualizing proposition 2.3.7, as the proof of that proposition is correctly given. One can check that the new defined involution $h^{*_{\sigma}}=\overline{V^{-1}(h_{(1)})}h_{(2)}^*\overline{V(h_{(3)})}$ is a well defined involution on $H^{\sigma}$ and that it is the dual of the deformed involution $^{*_{\chi}}=(S^{-1}U)((\cdot))^*(S^{-1}U^{-1})$, defined in proposition 2.3.7 of \cite{Majid}.} 
$h^{*_\sigma}=V^{-1}(h_{(1)}^*)h^*_{(2)}V(h_{(3)}^*)$.
\end{itemize}
\end{definition}

\begin{definition}\label{defcomoddef}
We define 
\begin{enumerate}
\item $\mathbb C \#_\sigma H$ to be a $H^\sigma-H$-bicomodule $^*$-algebra which
\begin{itemize}
\item is isomorphic to $H$ as right $H$-comodule,
\item has twisted multiplication $(1\# g)(1\# h)=\sigma(g_{(1)},h_{(1)})\# g_{(2)}h_{(2)}$,
\item has a coaction $\beta_1:\mathbb C \#_\sigma H\to H^\sigma \odot (\mathbb C \#_\sigma H):(1 \# h)\mapsto h_{(1)}\otimes (1 \# h_{(2)})$,
\item and has involution $(1\#h)^{*_{\mathbb C \#_\sigma H}}=1\#V^{-1}(h^*_{(1)})h^*_{(2)}$.
\end{itemize}

 and \item $H \tensor[_{\sigma^{-1}}]{\#}{} \mathbb C$ to be a $H-H^\sigma$-bicomodule algebra which
 \begin{itemize}
\item is isomorphic to $H$ as left $H$-comodule,
\item has twisted multiplication $(g\# 1)(h\# 1)=g_{(1)}h_{(1)}\# \sigma^{-1}(g_{(2)},h_{(2)})$,
\item has a coaction $\beta_2: H \tensor[_{\sigma^{-1}}]{\#}{} \mathbb C\to  (H \tensor[_{\sigma^{-1}}]{\#}{} \mathbb C) \odot H^\sigma:(h \# 1)\mapsto (h_{(1)} \# 1) \otimes h_{(2)}$,
\item and has involution $(h\#1)^{*_{H \tensor[_{\sigma^{-1}}]{\#}{} \mathbb C}}=h^*_{(1)}V(h^*_{(2)})\#1$.
\end{itemize} 
%\item[3] If $A$ is a right (resp. left) $H$-comodule-algebra, $A$
\end{enumerate}
\end{definition}

%\begin{proposition}
%Let $H$ be a Hopf $^*$-algebra and $\sigma$ an invertible dual 2-cocycle on $H$. Then $H \tensor[_{\sigma^{-1}}]{\#}{} \mathbb C$ is a $H-H^\sigma$-bi-Galois-object with inverse $\mathbb C \#_\sigma H$.
%\end{proposition}

%\begin{proof}
%
%\end{proof}

\begin{definition}
Let $H$ be a Hopf $^*$-algebra and $\sigma$ an invertible dual 2-cocycle on $H$. Let $A$ be a right $H$-comodule $^*$-algebra with coaction $\alpha:A\to A\odot H$. We define $A \tensor[_{\sigma^{-1}}]{\#}{} \mathbb C$ to be a right $H^\sigma$-comodule $^*$-algebra which
\begin{itemize}
\item is isomorphic to $A$ as vector space,
\item has multiplication $(a\# 1)(a'\# 1)=a_{(0)}a'_{(0)}\#\sigma^{-1}(a_{(1)},b_{(1)})$,
\item has a coaction $\tilde \alpha:A \tensor[_{\sigma^{-1}}]{\#}{} \mathbb C \to (A \tensor[_{\sigma^{-1}}]{\#}{} \mathbb C) \odot H^\sigma :(a\#1)\mapsto (a_{(0)} \#1)\otimes a_{(1)}$, 
\item and has involution $(a\#1)^{*_{A \tensor[_{\sigma^{-1}}]{\#}{} \mathbb C}}=a_{(0)}V(a^*_{(1)})\#1$.
\end{itemize}
\end{definition}

\begin{theorem}\label{BHalg}
Let $H$ be a 
%$k$-flat 
Hopf $^*$-algebra
%. Then the following are equivalent:
%\begin{itemize}
%\item $B \cong H$ as left $H$-comodules
%%\item $B$ is cleft, i.e. there is a left  $H$-colinear and convolution invertible map $H\to B$
%\item $B\cong H_\sigma^{-1}\# \mathbb C$ for a convolution invertible 2-cocycle $\sigma : H \odot H\to k$.
%\end{itemize} If $H$ and $B$ satisfy these equivalent conditions, we have
%\begin{enumerate}
%
%\item  If $B$ is cleft (such that $B\cong  H _\sigma^{-1}\# k$), then $$H^\sigma\cong \tilde{H}$$ as defined in chapter \ref{cha-algdef} and $B \cong k\#_\sigma H^\sigma$ as a right $H_\sigma$-comodule algebra.
%\item If 
and $A$ a right $H$-comodule $^*$-algebra with coaction $\alpha:A\to A\odot H$. Denote $B= H \tensor[_{\sigma^{-1}}]{\#}{} \mathbb C$. Then $$A\boksdot B \cong A \tensor[_{\sigma^{-1}}]{\#}{} \mathbb C.$$ 
%\end{enumerate}
\end{theorem}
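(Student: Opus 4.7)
The plan is to build an explicit linear isomorphism using the coaction of $H$ on $A$ and the counit of $H$, and then to verify it respects all the structures in sight.  Define
\[
\Phi : A \tensor[_{\sigma^{-1}}]{\#}{} \mathbb{C} \;\longrightarrow\; A \boksdot B, \qquad \Phi(a \# 1) := a_{(0)} \otimes (a_{(1)} \# 1),
\]
using the right $H$-coaction $\alpha(a) = a_{(0)} \otimes a_{(1)}$ on $A$.  The element $\Phi(a\#1)$ lies in the cotensor product because coassociativity $(\alpha \otimes \id)\alpha = (\id \otimes \Delta)\alpha$ is exactly the cotensor condition, once one recalls that the left $H$-coaction on $B$ is, under the vector-space identification with $H$, the comultiplication $\Delta$.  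The candidate inverse is $\Psi(z) := (\id_A \otimes \varepsilon_H)(z) \# 1$.  That $\Psi\Phi = \id$ is the counit identity $a_{(0)}\varepsilon(a_{(1)}) = a$, and $\Phi\Psi = \id$ follows from applying $(\id \otimes \id \otimes \varepsilon)$ to the defining equation $(\alpha \otimes \id)(z) = (\id \otimes \Delta)(z)$ of elements $z \in A \boksdot B$.  So $\Phi$ and $\Psi$ are mutually inverse linear bijections.

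Next I would verify that $\Phi$ is multiplicative.  On the twisted side,
\[
(a \# 1)(a' \# 1) = \sigma^{-1}\bigl(a_{(1)}, a'_{(1)}\bigr)\cdot \bigl(a_{(0)} a'_{(0)} \# 1\bigr),
\]
so applying $\Phi$ and using coassociativity to split $\alpha(a_{(0)}) = a_{(0)} \otimes a_{(1)(1)}$ (with $a_{(1)(2)}$ the outer tag), one obtains
\[
\Phi\bigl((a\#1)(a'\#1)\bigr) = \sigma^{-1}\bigl(a_{(1)(2)}, a'_{(1)(2)}\bigr)\, a_{(0)} a'_{(0)} \otimes \bigl(a_{(1)(1)} a'_{(1)(1)} \# 1\bigr).
\]
On the other hand, $\Phi(a\#1)\cdot\Phi(a'\#1) = a_{(0)} a'_{(0)} \otimes (a_{(1)} \# 1)(a'_{(1)} \# 1)$, and the twisted product inside $B$, namely $(g\#1)(h\#1) = g_{(1)} h_{(1)} \# \sigma^{-1}(g_{(2)}, h_{(2)})$, produces the same cocycle scalar $\sigma^{-1}(a_{(1)(2)}, a'_{(1)(2)})$ in exactly the same place.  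Multiplicativity follows term by term, and preservation of units is immediate from $\alpha(1) = 1 \otimes 1$.

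Finally, I would check $^*$-compatibility and equivariance.  Both $A \tensor[_{\sigma^{-1}}]{\#}{} \mathbb{C}$ and $B$ carry $V$-twisted involutions, so checking $\Phi(x^{*}) = \Phi(x)^*$ amounts to tracking a single application of $V$ composed with the antipode and comparing the scalars produced on each side.  This reduces cleanly to the unitarity of $\sigma$, namely $\overline{\sigma(a,b)} = \sigma^{-1}(S(a)^*, S(b)^*)$ of Definition \ref{defdualcocycle}(3), together with the interaction of $\Delta$ and $S$ with $^*$.  I expect this step to be the main bookkeeping obstacle, as sign conventions and the precise location of $V$-factors have to align.  Equivariance of $\Phi$ with respect to the right $H^\sigma$-coactions $\tilde\alpha$ on the twisted algebra and $\id \otimes \beta_2$ on $A \boksdot B$ is then immediate: under the identification of $B$ with $H$ as a vector space, $\beta_2$ is $\Delta$ viewed as a right $H^\sigma$-coaction, and the diagram chase against $\tilde\alpha(a\#1) = (a_{(0)}\#1) \otimes a_{(1)}$ reduces again to coassociativity.
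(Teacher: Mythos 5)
Your proposal is correct and follows essentially the same route as the paper: the paper defines exactly the same map $\lambda(a\#1)=a_{(0)}\otimes(a_{(1)}\#1)$ and leaves the verifications as ``easy to check,'' while you supply the inverse via the counit and the multiplicativity, involution, and equivariance checks explicitly. All of these reduce to coassociativity and the definitions of the twisted products exactly as you describe, so no further comment is needed.
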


\begin{proof}
We have the natural $^*$-algebraic isomorphisms
$$A\stackrel{\alpha}{\to} A\boksdot[H] H\stackrel{\id\otimes \varepsilon}{\to}A.$$
Using it as vector space isomorphisms, deforming the multiplications and using that $B$ and $H$ are isomorphic as left $H$-comodules, it is easy to check that we have a well defined $^*$-algebra isomorphism $$ \lambda: A \tensor[_{\sigma^{-1}}]{\#}{} \mathbb C\to A\boksdot B: (a\#1)\to a_{(0)}\otimes (a_{(1)}\#1).$$
\end{proof}

%\subsection{Algebraic 2-cocycle deformation as defined by Goswami - Joardar}
In this paragraph we give a slightly adapted version of a result of Goswami and Joardar in \cite{Goswami2014}.
\begin{theorem}[\cite{Goswami2014}\footnote{We want to note that Goswami erroneously referred to \cite{Majid1995} to explain the deformation of the Hopf $^*$-algebra. Indeed, Majid uses a reality condition and Goswami a unitarity condition, which makes the theory of Majid not applicable here. We developed a new deformation of the star structure using a unitary cocycle which results in definitions \ref{defHopfdef} and \ref{defcomoddef}.}]
Let $(\A,\Hi,D)$ be a spectral triple and $\mathbb G$ a compact quantum group acting on it algebraically and by orientation-preserving isometries with the representation $U$. Let $\sigma$ be an (algebraic) unitary dual 2-cocycle on $\OG$. Then
\begin{enumerate}[(a)]
\item there exists a representation $\pi_\sigma:\A \tensor[_{\sigma^{-1}}]{\#}{} \mathbb C\to B(\Hi)$
\item $(\A \tensor[_{\sigma^{-1}}]{\#}{} \mathbb C,\Hi,D)$ is a spectral triple.
\end{enumerate}
\end{theorem}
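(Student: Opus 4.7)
The strategy is to realise the algebraic cocycle deformation as a special case of the monoidal deformation from Section \ref{sec-mondef}, and then read off both claims from Theorems \ref{mainthm} and \ref{BHalg}.

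First, the algebraic unitary dual $2$-cocycle $\sigma$ on $\OG$ corresponds to a unitary $2$-cocycle $\Omega\in\mathcal M(c_0(\Gdual)\otimes c_0(\Gdual))$, and by Proposition \ref{cocycle-unfibfunct} it induces a dimension-preserving unitary fiber functor $\psi_\Omega$ on $\mathbb G$. Applying Theorem \ref{mainthm} to $(\A,\Hi,D)$, $\mathbb G_1:=\mathbb G$ and $\psi:=\psi_\Omega$, one obtains a spectral triple $(\tilde\A,\tilde\Hi,\tilde D)$, a monoidally equivalent compact quantum group $\mathbb G_2$ whose underlying Hopf $^*$-algebra is $\OG^\sigma$ (Definition \ref{defHopfdef}), and a $\mathbb G$-$\mathbb G_2$-bi-Galois object $\B$ that is identified with $\OG\tensor[_{\sigma^{-1}}]{\#}{}\mathbb C$ of Definition \ref{defcomoddef}. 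Explicitly $\tilde\A=\A\boksdot[\OG]\B$, $\tilde\Hi=\Hi\bokstimes[\Ga]L^2(\B)$ and $\tilde D=(D\otimes\id)|_{\tilde\Hi}$.

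For claim (a), Theorem \ref{BHalg} provides a $^*$-algebra isomorphism $\lambda:\A\tensor[_{\sigma^{-1}}]{\#}{}\mathbb C\to\tilde\A$ sending $a\#1$ to $a_{(0)}\otimes(a_{(1)}\#1)$. Composing with the bounded representation $\tilde L:\tilde\A\to B(\tilde\Hi)$ of Proposition \ref{defalgebra} already yields a $^*$-representation of $\A\tensor[_{\sigma^{-1}}]{\#}{}\mathbb C$ on $\tilde\Hi$; pushing it through the unitary identification built below supplies the desired $\pi_\sigma$ on $\Hi$. For claim (b), the key observation is that $\psi_\Omega$ is dimension-preserving, so Proposition \ref{propdefhilbertspace}(1) gives $\dim(\Hi_x\bokstimes[\Ga]L^2(\B))=\dim\Hi_{\varphi(x)}=\dim\Hi_x$ for every $x\in\IrredG$. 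Decomposing each $D$-eigenspace $V_\lambda$ into irreducible $\mathbb G_1$-subrepresentations and applying this dimension equality summand by summand, one constructs a unitary $\Phi:\tilde\Hi\to\Hi$ that sends the $\lambda$-eigenspace of $\tilde D$ onto $V_\lambda$. Because $\tilde D$ and $D$ both act as multiplication by $\lambda$ on their respective $\lambda$-eigenspaces, $\Phi$ automatically intertwines $\tilde D$ with $D$. Setting $\pi_\sigma(z):=\Phi\,\tilde L(\lambda(z))\,\Phi^*$ then exhibits $(\A\tensor[_{\sigma^{-1}}]{\#}{}\mathbb C,\Hi,D)$ as isomorphic, in the sense of Definition \ref{isospectrtrpl}, to the spectral triple $(\tilde\A,\tilde\Hi,\tilde D)$ coming from the main theorem, and the claim follows.

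The main obstacle I expect is verifying the identification of the abstract bi-Galois object $\B$ produced by Theorem \ref{monoidalGalois} with the concrete Hopf-theoretic cocycle twist $\OG\tensor[_{\sigma^{-1}}]{\#}{}\mathbb C$ of Definition \ref{defcomoddef}. This is the dictionary between the categorical monoidal-equivalence picture and the explicit Sweedler-style cocycle calculus; although morally standard, it requires carefully matching the coactions, the twisted multiplication and the involution $(\,\cdot\,)^{*_\B}$ defined via $V$, and is the real content beyond a formal appeal to Theorems \ref{mainthm} and \ref{BHalg}.
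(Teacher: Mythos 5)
Your proposal is correct in substance but takes a genuinely different route from the paper. The paper treats this theorem as an imported result: part (a) is proved by writing down the explicit Sweedler-style formula $\pi_\sigma(a)\xi=a_{(0)}\xi_{(0)}\sigma^{-1}(a_{(1)},\xi_{(1)})$ on a dense $U$-invariant subspace and citing Section 4.3 of \cite{Goswami2014} for boundedness and the $^*$-morphism property, and part (b) is cited as Theorem 4.10(4) of \cite{Goswami2014}. You instead derive both claims internally, by running the monoidal deformation of Theorem \ref{mainthm} for the dimension-preserving fiber functor $\psi_\Omega$, identifying $\B$ with $\OG\tensor[_{\sigma^{-1}}]{\#}{}\mathbb C$ (Theorem \ref{multkappa}) and $\tilde\A$ with $\A\tensor[_{\sigma^{-1}}]{\#}{}\mathbb C$ (Theorem \ref{BHalg}), and transporting the structure back to $\Hi$ through a unitary $\Phi$ built from the dimension equality $\dim\Hi_{\varphi(x)}=\dim\Hi_x$. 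This is logically sound and not circular: Theorems \ref{Omegainducessigma}, \ref{multkappa}, \ref{BHalg} and \ref{mainthm} do not depend on the present statement, so your route is a legitimate self-contained alternative, and it buys independence from the external reference. What it costs is canonicity: an arbitrary eigenspace-by-eigenspace dimension-matching unitary $\Phi$ produces \emph{some} representation $\pi_\sigma$ with the required properties, but not a priori the Goswami--Joardar representation given by the explicit cocycle formula, which is the object the later comparison Theorem \ref{maincocycle} is about. To recover that specific $\pi_\sigma$ you must take $\Phi$ to be the concrete unitary $\xi\mapsto Y(\xi\otimes 1)$ with $Y=(\id\otimes\chi)(U)$ and carry out the matrix-coefficient computation of the final proposition of Section \ref{cha-cocycle}; if you do, your argument essentially absorbs the proof of Theorem \ref{maincocycle}, and that theorem then becomes a restatement rather than an independent comparison. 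So either keep the citation for the canonical $\pi_\sigma$ (as the paper does) or make explicit that your $\Phi$ is the specific $Y$-induced unitary and verify the resulting formula agrees with $a_{(0)}\xi_{(0)}\sigma^{-1}(a_{(1)},\xi_{(1)})$.
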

\begin{proof}
\begin{enumerate}[(a)]
\item Denote the the coaction $\alpha=\ad_{U}$ of $\OG$ on $ \A \tensor[_{\sigma^{-1}}]{\#}{} \mathbb C$ by $\alpha(a)=a_{(0)}\otimes a_{(1)}$. Let $\mathcal N$ be a dense subspace of $\Hi$ such that $U(\mathcal N)\subset \mathcal N \odot \OG$ and on that subspace, let $U(\xi)=\xi_{(0)}\otimes \xi_{(1)}$. Then we can define, for $a\in \A \tensor[_{\sigma^{-1}}]{\#}{} \mathbb C $:
$$\pi_\sigma(a):   \Hi\to \Hi:  \xi\mapsto a_{(0)}\xi_{(0)}\sigma^{-1}(a_{(1)},\xi_{(1)}).$$ In section 4.3 of \cite{Goswami2014} in it is proved that $\pi_\sigma(a)$ is bounded for all $a\in \A \tensor[_{\sigma^{-1}}]{\#}{} \mathbb C $ and that $\pi_\sigma$  is a well defined $^*$-morphism.
\item This is theorem 4.10(4) in \cite{Goswami2014}.
\end{enumerate}

\end{proof}

%\begin{proposition}
%Let $H$ be a Hopf algebra with right Galois object $B$. Suppose furthermore that 
%\end{proposition}

\subsection{Linking dimension-preserving monoidal equivalences with algebraic cocycles}
In proposition \ref{dimpresunfibfunctinducescocycle}, we proved that there is an equivalence between dimension-preserving unitary fiber functors on  a compact quantum group $\mathbb G$ and cocycles on the dual $\hat{\mathbb G}$. In the following theorem \ref{Omegainducessigma}, we will prove that there is also an equivalence between cocycles on $\hat{\mathbb G}$ and (algebraic) dual cocycles on $\OG$. Moreover, we will show in theorem \ref{multkappa} that the bi-Galois object $\B$ associated with the monoidal equivalence induced by the fiber functor, will be of the form $\B=\OG\ _{\sigma^{-1}}\#\mathbb C$.\begin{theorem}\label{Omegainducessigma}
Let $\mathbb{G}$ be a compact quantum group. If $\Omega$ is a unitary 2-cocycle on the dual $\Gdual$, the formula
\begin{equation}\label{cocycleeq}
\sigma(u_{ij}^x\otimes u^y_{kl})=\langle \xi^x_{i}\otimes \xi_k^y,\Omega(\xi^x_j\otimes \xi_l^y)\rangle, x,y \in \IrredG
\end{equation} 
defines a unique (algebraic) unitary dual 2-cocycle $\sigma$ on $\OG$. On the other hand, if $\sigma$ is an (algebraic) unitary dual 2-cocycle on $\OG$, formula \ref{cocycleeq} uniquely defines a unitary 2-cocycle $\Omega$ on $\hat{\mathbb G}$. 
\end{theorem}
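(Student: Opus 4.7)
The plan is to proceed in three steps, exploiting the duality pairing between $c_0(\Gdual)$ and $\OG$. Recall that matrix coefficients $\{u^x_{ij}\}_{x\in\IrredG,\,i,j}$ form a linear basis of $\OG$, which identifies $\mathcal{M}(c_0(\Gdual)\otimes c_0(\Gdual))=\prod_{x,y}B(\Hi_x\otimes\Hi_y)$ with the algebraic dual $(\OG\odot\OG)^*$ via the canonical pairing $\langle T,u^x_{ij}\rangle=\delta_{x,y}T_{ij}$ for $T\in B(\Hi_x)$. Formula \eqref{cocycleeq} is exactly $\sigma(a\otimes b)=\langle\Omega,a\otimes b\rangle$; since the matrix coefficients form a basis, this yields a linear bijection between elements $\Omega\in\mathcal{M}(c_0(\Gdual)\otimes c_0(\Gdual))$ and linear functionals $\sigma:\OG\odot\OG\to\mathbb{C}$. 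Normalization transfers immediately, as $(p_\varepsilon\otimes\id)\Omega=p_\varepsilon\otimes\id$ translates to $\sigma(1,h)=\varepsilon(h)$ and symmetrically.

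Second, one translates the cocycle identity. The defining relation $(\hat\Delta\otimes\id)(\mathbb V)=\mathbb V_{13}\mathbb V_{23}$ gives, dually, the compatibilities
\[
\langle TT',a\rangle=\langle T,a_{(1)}\rangle\langle T',a_{(2)}\rangle,\qquad \langle\hat\Delta(T),a\otimes b\rangle=\langle T,ab\rangle,
\]
for $T,T'\in c_0(\Gdual)$ and $a,b\in\OG$. Extending these pairings multilinearly and using $\varepsilon(c_{(1)})c_{(2)}=c$, a Sweedler computation yields
\begin{align*}
\langle(\Omega\otimes 1)(\hat\Delta\otimes\id)(\Omega),\,a\otimes b\otimes c\rangle &=\sigma(a_{(1)},b_{(1)})\,\sigma(a_{(2)}b_{(2)},c),\\
\langle(1\otimes\Omega)(\id\otimes\hat\Delta)(\Omega),\,a\otimes b\otimes c\rangle &=\sigma(b_{(1)},c_{(1)})\,\sigma(a,b_{(2)}c_{(2)}).
\end{align*}
Since the pairing is nondegenerate on $\OG\odot\OG\odot\OG$, equation \eqref{cocycleprop} for $\Omega$ is equivalent to the algebraic dual 2-cocycle identity for $\sigma$.

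Third, the same pairing computation shows $\Omega$ is convolution-invertible iff $\sigma$ is, with $\sigma^{-1}(a\otimes b)=\langle\Omega^{-1},a\otimes b\rangle$. The involution transfers via $\langle T^*,u^x_{ij}\rangle=\overline{T_{ji}}=\overline{\langle T,S(u^x_{ij})^*\rangle}$, using the standard Hopf identity $S(u^x_{ij})=(u^x_{ji})^*$; extended multiplicatively, this gives $\langle\Omega^*,a\otimes b\rangle=\overline{\sigma(S(a)^*,S(b)^*)}$. Hence $\Omega^*=\Omega^{-1}$ is equivalent to $\sigma^{-1}(a,b)=\overline{\sigma(S(a)^*,S(b)^*)}$, which, after the substitution $a\leftrightarrow S(a)^*$ and the $^*$-Hopf identity $(S\circ*)^2=\id$, is exactly the unitarity condition in Definition \ref{defdualcocycle}(3).

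The only real obstacle is the bookkeeping in the second step: one must carefully verify how the product of two elements of $\mathcal{M}(c_0(\Gdual)^{\otimes 3})$, paired against $a\otimes b\otimes c$, unfolds the $\hat\Delta$ appearing inside $(\hat\Delta\otimes\id)(\Omega)$ into the product $a_{(2)}b_{(2)}$ in the algebraic cocycle identity, while keeping track of which Sweedler leg the coproduct of $\OG\odot\OG\odot\OG$ sends to which factor of $c_0(\Gdual)^{\otimes 3}$.
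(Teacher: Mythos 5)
Your proof is correct and follows the same route as the paper, which merely asserts that the basis property of the $u^x_{ij}$ gives well-definedness and that the cocycle, normalization and unitarity conditions transfer "as one can check"; your duality-pairing computation is exactly the content behind that assertion, spelled out. One minor point of care: $\mathcal{M}(c_0(\Gdual)\otimes c_0(\Gdual))$ is the \emph{bounded} product $\prod_{x,y}B(\Hi_x\otimes\Hi_y)$ rather than the full algebraic one, so in the converse direction you should note that boundedness of the block family $(\Omega_{x,y})$ follows only once you have shown each block is unitary via the unitarity of $\sigma$ --- which your third step does supply.
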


\begin{proof}
Under the first assumption, as the $u^x_{ij}$ constitute a basis of $\OG$, the linear map $\sigma$ is well defined. Using the cocycle property \eqref{cocycleprop} of $\Omega$, one can check that $\sigma$ satisfies the dual cocycle condition in definition \ref{defdualcocycle}(1). It is normalized and unitary as $\Omega$ is normalized and unitary. Under the second assumption, $\Omega$ is uniquely and well defined as element of $\mathcal M(c_0(\hat{\mathbb G}) \otimes c_0(\hat{\mathbb G}))$. The dual cocycle condition in definition \ref{defdualcocycle}(1) will imply the cocycle condition \eqref{cocycleprop} of $\Omega$. Again, $\Omega$ is normalized and unitary as $\sigma$ is.
\end{proof}

Remark that, as $\Omega^*$ is the inverse of $\Omega$, we see that $\sigma'$ associated with $\Omega^*$ is the convolution inverse of $\sigma$. We will denote it with $\sigma^{-1}$ and we have $$
\sigma^{-1}(u_{ij}^x\otimes u^y_{kl})=\langle \xi^x_{i}\otimes \xi_k^y,\Omega^*(\xi^x_j\otimes \xi_l^y)\rangle.$$

\begin{theorem}\label{multkappa}
Let $\mathbb{G}$ be a compact quantum group with a dimension-preserving unitary fiber functor $\psi$. Let $\B$ be the bi-Galois object associated to $\psi$ with coaction $\beta_1:\B\to \OGa\odot \B$, let $\Omega$ be the unitary 2-cocycle on the dual $\Gdual$ associated to $\psi\cong\psi_\Omega$ and $\sigma$ the algebraic dual 2-cocycle equivalent with $\Omega$ (proposition \ref{Omegainducessigma}). Then there exists a $^*$-algebra isomorphism
%the vector spaces $\OG$ and $\B$ are isomorphic. Moreover, if we denote $\chi: \OG\to \B$ the isomorphism, then $$\chi(u^x_{ij}u^y_{st})=\sum_{k,l}\chi(u^x_{ik})\chi(u^y_{sl})\sigma(u^x_{kj},u^y_{lt})$$ and
% $$\chi(u^x_{ij})\chi(u^y_{st})=\sum_{k,l}\chi(u^x_{ik}u^y_{sl})\sigma^{-1}(u^x_{kj},u^y_{lt})$$ for $x,y\in \IrredG$ which makes 
$$\chi:\B\to \OG\ _{\sigma^{-1}}{\#} \mathbb{C}$$ such that $(\id\otimes \chi)\beta_1=\Delta\circ \chi$.
\end{theorem}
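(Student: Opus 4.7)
The plan is to define $\chi$ on matrix coefficients by $\chi(X^x_{ij}) = u^x_{ij} \# 1$ and extend linearly. Since $\psi$ is dimension-preserving I identify $\Hi_{\varphi(x)} = \Hi_{\psi_\Omega(x)} = \Hi_x$, so $X^x \in B(\Hi_x) \odot \B$ has the same indexing as $U^x$. This makes $\chi$ a linear bijection, because the $\{X^x_{ij}\}$ form a basis of $\B$ by Theorem~\ref{monoidalGalois}(1)(b) while the $\{u^x_{ij}\#1\}$ form a basis of $\OG\ _{\sigma^{-1}}{\#}\mathbb{C}$, which is $\OG$ as a vector space. The comodule compatibility $(\id\otimes\chi)\beta_1 = \Delta\circ\chi$ is immediate: expanding $(\id\otimes\beta_1)(X^x) = U^x_{12}X^x_{13}$ gives $\beta_1(X^x_{ij}) = \sum_k u^x_{ik}\otimes X^x_{kj}$, and since $\OG\ _{\sigma^{-1}}{\#}\mathbb{C}$ is isomorphic to $\OG$ as a left $\OG$-comodule one has $\Delta(u^x_{ij}\#1) = \sum_k u^x_{ik}\otimes(u^x_{kj}\#1)$, so the two sides agree on each matrix coefficient.

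The core of the proof is multiplicativity. Setting $V^x := (\id\otimes\chi)(X^x) = \sum_{i,j}e^x_{ij}\otimes(u^x_{ij}\#1)$, I would compute $V^y_{13}V^z_{23}$ in $B(\Hi_y\otimes\Hi_z)\odot(\OG\ _{\sigma^{-1}}{\#}\mathbb{C})$ using the twisted product $(u^y_{ij}\#1)(u^z_{kl}\#1) = \sum_{a,b}u^y_{ia}u^z_{kb}\#\sigma^{-1}(u^y_{aj},u^z_{bl})$ together with the identification $\sigma^{-1}(u^y_{aj},u^z_{bl}) = \langle\xi^y_a\otimes\xi^z_b,\Omega^*(\xi^y_j\otimes\xi^z_l)\rangle$ of Theorem~\ref{Omegainducessigma}. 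A direct rearrangement then yields
\[
V^y_{13}\,V^z_{23} \;=\; \bigl(U^y_{13}U^z_{23}\,(\Omega^*\otimes 1)\bigr)^{\#1},
\]
where $(\cdot)^{\#1}$ indicates applying the vector-space map $\OG\to\OG\ _{\sigma^{-1}}{\#}\mathbb{C}$ on the last factor. Right-multiplying by $(\Omega S\otimes 1)$ for $S\in\Mor(y\otimes z,x)$, and using $\Omega^*\Omega = 1$ together with the intertwiner identity $U^y_{13}U^z_{23}(S\otimes 1) = (S\otimes 1)U^x$ in $\OG$, gives
\[
V^y_{13}\,V^z_{23}\,(\Omega S\otimes 1) \;=\; (S\otimes 1)V^x.
\]
But this is exactly the image under $\id\otimes\chi$ of the bi-Galois relation $X^y_{13}X^z_{23}(\varphi(S)\otimes 1) = (S\otimes 1)X^x$ with $\varphi(S)=\psi_\Omega(S)=\Omega S$. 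Since these relations, as $S$ runs over isometric intertwiners spanning each $\Mor(y\otimes z,x)$, determine the products $X^y_{ij}X^z_{kl}$ in $\B$ through the isotypic decomposition of $y\otimes z$, it follows that $\chi$ is multiplicative on matrix coefficients and hence everywhere.

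The $^*$-compatibility is the step I expect to be the main obstacle, and the cleanest route is to invoke the uniqueness statement in Theorem~\ref{monoidalGalois}(1): it suffices to show that the triple $\bigl(\OG\ _{\sigma^{-1}}{\#}\mathbb{C},\,h,\,(V^x)\bigr)$ satisfies the three characterizing axioms with respect to $\psi_\Omega$. Axioms (a) (the intertwiner relation) and (b) (matrix coefficients as basis) were verified above, and axiom (c), $(\id\otimes h)(V^x)=0$ for $x\neq\varepsilon$, follows from the corresponding property of $U^x$ in $\OG$ since $h$ descends to the underlying vector space of $\OG\ _{\sigma^{-1}}{\#}\mathbb{C}$. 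The hard remaining point is to check that each $V^x$ is unitary inside $B(\Hi_x)\otimes(\OG\ _{\sigma^{-1}}{\#}\mathbb{C})$ and that $h$ is a faithful state there; both hinge on the explicit involution $(h\#1)^{*}=h^*_{(1)}V(h^*_{(2)})\#1$ of Definition~\ref{defcomoddef}(2) and ultimately on the unitarity axiom $\overline{\sigma(a,b)}=\sigma^{-1}(S(a)^*,S(b)^*)$ of Definition~\ref{defdualcocycle}(3). Once these are in place, Theorem~\ref{monoidalGalois}(1) delivers a unique $^*$-algebra isomorphism $\B\to\OG\ _{\sigma^{-1}}{\#}\mathbb{C}$ sending $X^x\mapsto V^x$, which must coincide with $\chi$, and part~(2) of the same theorem gives the required compatibility $(\id\otimes\chi)\beta_1=\Delta\circ\chi$ automatically.
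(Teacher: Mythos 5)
Your set-up of $\chi$ on matrix coefficients, the comodule compatibility, and the multiplicativity argument are essentially correct and amount to the paper's own proof in a different packaging: where you exploit the defining relations $X^y_{13}X^z_{23}(\varphi(S)\otimes 1)=(S\otimes 1)X^x$ together with a resolution of the identity by isometric intertwiners in $y\otimes z$, the paper encodes the same information in the single identity $(\hat \Delta\otimes\id)(Y')=Y'_{13}Y'_{23}(\Omega\otimes 1_\B)$ for $Y'=\oplus_x X^x(u_x\otimes 1)$, and both routes yield the twisted product law $\chi(a)\chi(b)=\chi(a_{(0)}b_{(0)})\sigma^{-1}(a_{(1)},b_{(1)})$ (the paper writes the map in the direction $\OG\to\B$, but this is only a matter of which of $\chi,\chi^{-1}$ one names). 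One small omission: since the hypothesis is only $\psi\cong\psi_\Omega$, you may not literally identify $\Hi_{\varphi(x)}$ with $\Hi_x$; the paper inserts unitaries $u_x$ and works with $Y^x=X^x(u_x\otimes 1)$, and you should say a word to this effect (the bi-Galois object only depends on the isomorphism class of the fiber functor, so this is harmless but not automatic notation).

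The genuine gap is the $^*$-compatibility, which you explicitly defer. The paper closes it by a direct computation: unitarity of $U^x$ and of $Y^x$ gives $\chi(u^x_{ij})^*=\sum_{l,q}\chi\big((u^x_{iq})^*\big)\sigma\big(u^x_{jl},(u^x_{ql})^*\big)$, i.e.\ $\chi(a)^*=\chi(a^*_{(1)})V(a^*_{(2)})$, which is exactly the involution of Definition \ref{defcomoddef}(2); this is where the unitarity axiom of Definition \ref{defdualcocycle}(3) is consumed. Your alternative route through the uniqueness clause of Theorem \ref{monoidalGalois}(1) is viable in principle but shifts rather than removes the work: you must still verify that each $V^x$ is unitary for the twisted product \emph{and} twisted involution of $\OG\ _{\sigma^{-1}}\#\mathbb{C}$, and — a point your sketch underestimates — that $h$ is a faithful \emph{state}, i.e.\ positive, on $\OG\ _{\sigma^{-1}}\#\mathbb{C}$; positivity with respect to the deformed involution is not inherited from $\OG$ and is not a formal consequence of the axioms you checked. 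Neither verification is carried out, so as written the claim that $\chi$ is a $^*$-isomorphism is unproven. I would recommend the paper's direct computation, since comparing against $\B$ (already known to be a $^*$-algebra with faithful state $\omega$) sidesteps the positivity question entirely.
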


\begin{proof}
Denoting $\varphi: \mathbb G\to \mathbb G_\Omega$ to be the monoidal equivalence associated to $\psi$, we can find unitaries $u_x=\Hi_{x}\to \Hi_{\varphi(x)}$, as $\dim(\varphi(x))=\dim(x)$ for all $x\in \IrredG$%\zeta_i\to \xi_i$
. Fixing a $x\in \IrredG$, we can define $Y^x=X^x(u_x\otimes 1)\in B(\Hi_x)\odot \B$ and $$Y'=\oplus_{x\in \Irred(\mathbb G)}Y^x\in \mathcal{M}(c_0(\Gdual)\otimes B_r)$$ (where we take the direct sum over all classes, all of them with multiplicity one). Note that the matrix coefficients of the $X^x$ constitute a basis of $\B$ by theorem \ref{monoidalGalois}. As the $u_x$ are unitaries, also the matrix coefficients of the $Y^x$ (let's call them $b_{ij}^x$) and hence of $Y'$ form a basis of $\B$. As both the $(u^x_{ij})_{ij,x}$ and $(b^x_{ij})_{ij,x}$ are bases of $\OG$ resp. $\B$, we have a vector space isomorphism
$$\chi: \OG\to \B: u^x_{ij}\mapsto b^x_{ij}$$ which is compatible with the coactions (i.e. $(\id\otimes \chi) \Delta=\beta_1\circ \chi$). Moreover, one can prove that, analogously as in the proof proposition 4.5 of \cite{Bichon2005}, $Y'$ satisfies the equation
\begin{equation}\label{YOmegarep}
(\hat \Delta\otimes\id)(Y')=Y'_{13}Y'_{23}(\Omega\otimes 1_\B).
\end{equation} 
As
$(\hat \Delta\otimes \id)(Y')=(\hat \Delta\otimes \chi)(\mathbb{V})$ by construction and $(\hat \Delta\otimes \id)(\mathbb{V})=\mathbb{V}_{13}\mathbb{V}_{23}$ by definition of $\mathbb V$, one can prove that
\begin{eqnarray*}
\chi(u^x_{ij}u^y_{st})&=&\langle \xi^{x}_i\otimes  \xi^y_{s}\otimes \id,(\hat \Delta\otimes \id)(Y')(\xi^x_j\otimes \xi^y_t\otimes \id)\rangle\\
&=&\langle \xi^{x}_i\otimes  \xi^y_{s}\otimes \id,(Y'_{13}Y'_{23}(\Omega\otimes \id))(\xi^x_j\otimes \xi^y_t\otimes \id)\rangle\\
&=&\sum_{p,q}\chi(u^x_{ip})\chi(u^y_{sq})\sigma(u^x_{pj},u^y_{qt})
\end{eqnarray*} where we used theorem \ref{Omegainducessigma}.
Hence, also $\chi(u^x_{ij})\chi(u^y_{st})=\sum_{k,l}\chi(u^x_{ik}u^y_{sl})\sigma^{-1}(u^x_{kj},u^y_{lt}),$ which means 
\begin{equation}\label{eqchi}
\chi(a)\chi(b)=\chi(a_{(0)}b_{(0)})\sigma^{-1}(a_{(1)},b_{(1)}).
\end{equation} 
Finally, to check that $\chi$ is a $^*$-algebra isomorphism, note that by the previous equation, we also have
$$\chi(ab^*)=\chi(a_{(0)})\chi(b_{(0)}^*)\sigma(a_{(1)},b_{(1)}^*)$$ and hence
$$\chi(u^x_{ij})^*=\sum_{k,l}\chi(u^x_{kj})^*\chi\big(u^x_{kl}(u^x_{il})^*\big)=\sum_{k,l,p,q}\chi(u^x_{kj})^*\chi(u^x_{kp})\chi\big((u^x_{iq})^*\big)\sigma\big(u^x_{pl},(u^x_{ql})^*\big)$$ $$=\sum_{l,q}\chi\big((u^x_{iq})^*\big)\sigma\big(u^x_{jl},(u^x_{ql})^*\big)$$
by unitarity of the $U^x$ and the $Y^x$, which implies 
\begin{equation}
\chi(a)^*=\chi(a_{(1)}^*)\sigma(S(a_{(3)})^*,a_{(2)}^*)=\chi(a_{(1)}^*)V(a_{(2)}^*)
\end{equation}
where $V(a)=\sigma(S^{-1}(a_{(2)}),a_{(1)})$ as before. This proves the last statement.
\end{proof}

\subsection{Dimension-preserving monoidal deformation is isomorphic to algebraic 2-cocycle deformation}
In this last paragraph of section \ref{cha-cocycle}, we state and prove the main result of this section: the Goswami-Joardar cocycle deformation amounts to our deformation with a dimension-preserving monoidal equivalence. 
\begin{theorem}\label{maincocycle}
Let $(\A,\Hi,D)$ be a spectral triple, $\mathbb G$ a compact quantum group acting on it algebraically and by orientation-preserving isometries via a unitary representation $U$ and let $\psi$ be a dimension-preserving unitary fiber functor on $\mathbb G$. Denoting by $\B$ the corresponding bi-Galois object, there exists an (algebraic) unitary dual 2-cocycle $\sigma$ such that $(\A\boksdot[\OG] \B,\Hi\bokstimes[\G] L^2(\B),\tilde D)$ defined in section \ref{sec-mondef} and $( \A \tensor[_{\sigma^{-1}}]{\#}{} \mathbb C ,\Hi,D)$ are isomorphic as spectral triples.
\end{theorem}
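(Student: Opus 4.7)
The plan is to assemble the isomorphisms provided in the preceding subsections into a single isomorphism of spectral triples. Since $\psi$ is dimension-preserving, Proposition \ref{dimpresunfibfunctinducescocycle} yields a normalized unitary $2$-cocycle $\Omega$ on $\hat{\mathbb G}$ with $\psi \cong \psi_\Omega$; Theorem \ref{Omegainducessigma} then converts $\Omega$ into the algebraic unitary dual $2$-cocycle $\sigma$ on $\OG$ which we use for the Goswami--Joardar cocycle deformation.

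For the algebra part of the isomorphism, Theorem \ref{multkappa} supplies a $^*$-algebra isomorphism $\chi:\B\to\OG\ _{\sigma^{-1}}\#\,\mathbb{C}$ compatible with the left $\OG$-coactions, and Theorem \ref{BHalg} (applied with $H=\OG$ and $A=\A$) identifies $\A\boksdot[\OG](\OG\ _{\sigma^{-1}}\#\,\mathbb{C})\cong\A\tensor[_{\sigma^{-1}}]{\#}{}\mathbb{C}$. Composing these gives a $^*$-isomorphism
\[
\Phi:\A\boksdot[\OG]\B\;\xrightarrow{\id_\A\otimes\chi}\;\A\boksdot[\OG](\OG\ _{\sigma^{-1}}\#\,\mathbb{C})\;\xrightarrow{\lambda^{-1}}\;\A\tensor[_{\sigma^{-1}}]{\#}{}\mathbb{C}.
\]

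For the Hilbert space part, I decompose the representation $U$ into irreducibles (each eigenspace $V_\lambda$ of $D$ being $U$-invariant); the dimension-preserving assumption supplies unitaries $u_x:\Hi_x\to\Hi_{\varphi(x)}$ for every irreducible component. Setting $Y^x:=X^x(u_x\otimes 1)$ as in the proof of Theorem \ref{multkappa} and letting $Y$ denote the direct sum of these $Y^x$ over the decomposition of $\Hi$, define
\[
\theta:\Hi\to\Hi\bokstimes[\Ga]L^2(\B),\qquad \xi\mapsto Y(\xi\otimes\Lambda(1_\B)).
\]
Proposition \ref{propdefhilbertspace}(1), applied componentwise and combined with the unitarity of each $u_x$, shows that $\theta$ is a unitary onto $\Hi\bokstimes[\Ga]L^2(\B)$. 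Since $D$ acts as multiplication by $\lambda$ on $V_\lambda$ and $\tilde D$ acts as multiplication by $\lambda$ on $V_\lambda\bokstimes[\Ga]L^2(\B)$, and $\theta$ maps the former onto the latter, the intertwining $\tilde D\theta=\theta D$ is immediate.

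The main obstacle is verifying the compatibility of algebra actions under $\theta$ and $\Phi$, namely
\[
\theta\circ\pi_\sigma(z)=\tilde L_{\Phi^{-1}(z)}\circ\theta,\qquad z\in\A\tensor[_{\sigma^{-1}}]{\#}{}\mathbb{C}.
\]
Unpacking both sides on a vector $\xi\in\Hi$ reduces the identity to a comparison between the twisted action $\pi_\sigma(a\#1)\xi=a_{(0)}\xi_{(0)}\sigma^{-1}(a_{(1)},\xi_{(1)})$ and the multiplication $\bigl(a_{(0)}\otimes\chi^{-1}(a_{(1)}\#1)\bigr)\cdot Y(\xi\otimes\Lambda(1_\B))$ inside $\Hi\otimes L^2(\B)$. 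The key tool is the intertwining relation $(\hat\Delta\otimes\id)(Y)=Y_{13}Y_{23}(\Omega\otimes 1_\B)$ established in the proof of Theorem \ref{multkappa}, together with the dictionary \eqref{cocycleeq} between $\Omega$ and $\sigma$: applying this relation to $Y(\xi\otimes\Lambda(1_\B))$ after left multiplication by $a_{(0)}\otimes\chi^{-1}(a_{(1)}\#1)$ produces exactly the factor $\sigma^{-1}(a_{(1)},\xi_{(1)})$ that appears in $\pi_\sigma$. In other words, the cocycle governing the twisted multiplication in $\OG\ _{\sigma^{-1}}\#\,\mathbb{C}$ (and hence in $\A\tensor[_{\sigma^{-1}}]{\#}{}\mathbb{C}$ via Theorem \ref{BHalg}) is precisely the one that $Y$ pulls out when moved past elements of $\A\odot\B$, which closes the argument and proves that $(\theta,\Phi)$ is an isomorphism of spectral triples in the sense of Definition \ref{isospectrtrpl}.
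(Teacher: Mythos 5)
Your proposal is correct and follows essentially the same route as the paper: the same unitary $Y=\oplus_x X^x(u_x\otimes 1)$ giving the Hilbert space isomorphism $\xi\mapsto Y(\xi\otimes\Lambda(1_\B))$, the same identification $\A\boksdot[\OG]\B\cong\A\tensor[_{\sigma^{-1}}]{\#}{}\mathbb{C}$ via Theorem \ref{multkappa} and Theorem \ref{BHalg}, and the same mechanism (the relation $(\hat\Delta\otimes\id)(Y)=Y_{13}Y_{23}(\Omega\otimes 1_\B)$, equivalently equation \eqref{eqchi}) to match the twisted action $\pi_\sigma$ with multiplication by $(\id\otimes\chi)\alpha(a)$. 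The paper merely carries out the final compatibility check as an explicit matrix-coefficient computation where you summarize it, but the argument is the same.
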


Remember that $\B$ is the bi-Galois object associated to the fiber functor $\psi$, $L^2(\B)$ the GNS-space with respect tot the invariant state $\omega=(h\otimes \id)\beta_1$ and the deformed Dirac operator $\tilde D$ from section \ref{sec-mondef}.
We give the proof via some propositions.

\begin{proposition}
\begin{enumerate}
\item %Making the decomposition $\Hi=\sum_{x\in \Irred{\mathbb{G}_2}}\Hi_{\varphi(x)}$, we have 
There exists a unitary $Y\in \mathcal M(\mathcal K(\Hi)\otimes B_r)$ such that $\phi:\Hi\to\Hi\bokstimes[\G] L^2(\B):\xi\to Y(\xi\otimes 1)$ is an isomorphism of Hilbert spaces.
\item Under this isomorphism, $\phi D=\tilde D  \phi$.
\item $\A\stackrel[\OG]{}{\boxdot} \B\cong \A\ _{\sigma^{-1}}\#\mathbb C$ with $\sigma$ the algebraic dual 2-cocycle associated to the dimension-preserving unitary fiber functor $\psi$.
\end{enumerate}
\end{proposition}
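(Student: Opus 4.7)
The plan is to build the three pieces in order, using the dimension-preserving hypothesis at the very beginning to produce enough unitaries to compare everything explicitly.

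For part (1), I fix unitaries $u_x:\Hi_x\to \Hi_{\varphi(x)}$ for every $x\in \IrredG$ using $\dim(\Hi_x)=\dim(\Hi_{\varphi(x)})$. Then I decompose the representation $U$ on $\Hi$ as an orthogonal direct sum of irreducibles $\Hi=\bigoplus_{i\in I}\Hi_{x_i}$, and define $Y_i=X^{x_i}(u_{x_i}\otimes 1)\in B(\Hi_{x_i})\odot \B$ in analogy with the construction of $Y'$ in the proof of Theorem \ref{multkappa}. The direct sum $Y=\bigoplus_i Y_i$ lives in $\mathcal M(\mathcal K(\Hi)\otimes B_r)$ because each $\Hi_{x_i}$ is finite dimensional. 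To see that the map $\phi$ defined by $Y$ is a unitary onto $\Hi\bokstimes[\G]L^2(\B)$, I use Proposition \ref{propdefhilbertspace}(1), which gives a unitary $f_x:\Hi_{\varphi(x)}\to \Hi_x\bokstimes[\G]L^2(\B)$ by $\eta\mapsto X^x(\eta\otimes \Lambda(1_\B))$. On the $i$-th summand, $\phi\big|_{\Hi_{x_i}}=f_{x_i}\circ u_{x_i}$, so $\phi$ is a unitary from $\Hi=\bigoplus_i\Hi_{x_i}$ onto $\bigoplus_i\Hi_{x_i}\bokstimes[\G]L^2(\B)=\Hi\bokstimes[\G]L^2(\B)$, where the last equality follows from Proposition \ref{propdefhilbertspace}(2).

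For part (2), I exploit that $U$ commutes with $D$, so the eigenspaces $V_\lambda$ are $U$-invariant and the irreducible decomposition of $U$ can be refined so that each $\Hi_{x_i}$ lies entirely in a single $V_\lambda$. Then $\phi$ sends $V_\lambda$ into $V_\lambda\bokstimes[\G]L^2(\B)$, and by the explicit description $\tilde D=\sum_\lambda\lambda(P_\lambda\otimes\id)$ in the proof of Proposition \ref{defdiracop}, $\tilde D$ acts as multiplication by $\lambda$ on that subspace. Hence $\tilde D\phi\xi=\lambda\phi\xi=\phi D\xi$ for every $\xi$ in the algebraic direct sum of eigenspaces, which is a core of $D$, and the intertwining extends to $\dom(D)$ by continuity of both sides of $\tilde D\phi=\phi D$ on this core.

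For part (3), I feed the results of Theorems \ref{multkappa} and \ref{BHalg} into each other. Theorem \ref{multkappa} gives a $^*$-algebra isomorphism $\chi:\B\to \OG\ _{\sigma^{-1}}\#\,\mathbb C$ intertwining $\beta_1$ with the natural left $\OG$-comodule structure on $\OG\ _{\sigma^{-1}}\#\,\mathbb C$ (i.e.\ the restriction of $\Delta$). Consequently $\id_{\A}\otimes\chi$ restricts to a $^*$-isomorphism of cotensor products
\[\A\boksdot[\OG]\B\;\xrightarrow{\;\id\otimes\chi\;}\;\A\boksdot[\OG](\OG\ _{\sigma^{-1}}\#\,\mathbb C),\]
and Theorem \ref{BHalg} identifies the right-hand side with $\A\ _{\sigma^{-1}}\#\,\mathbb C$ via $\lambda$.

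What remains, in order to finish the isomorphism of spectral triples promised by Theorem \ref{maincocycle}, is to check that under $\phi$ the action of $\A\boksdot[\OG]\B$ on $\tilde\Hi$ from Proposition \ref{defalgebra} transports to the Goswami--Joardar representation $\pi_\sigma$ of $\A\ _{\sigma^{-1}}\#\,\mathbb C$ on $\Hi$. I expect this to be the main obstacle: it requires a matrix-coefficient computation, writing $\alpha_U(a)$ in terms of the $u^x_{ij}$, applying $\id\otimes\chi^{-1}\circ\lambda$, and then conjugating by $Y$. The identity \eqref{cocycleeq} together with the multiplication rule \eqref{eqchi} should convert the factor $\chi^{-1}(a_{(1)}\#1)$ inside $Y^*(\cdot)Y$ precisely into the scalar $\sigma^{-1}(a_{(1)},\xi_{(1)})$ appearing in the defining formula $\pi_\sigma(a)\xi=a_{(0)}\xi_{(0)}\sigma^{-1}(a_{(1)},\xi_{(1)})$. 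Once this bookkeeping is verified, the three parts together produce a Hilbert-space unitary intertwining the two Dirac operators and conjugating the two algebra representations, i.e.\ an isomorphism of spectral triples in the sense of Definition \ref{isospectrtrpl}.
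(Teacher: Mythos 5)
Your proposal is correct and follows essentially the same route as the paper: part (1) builds $Y=\oplus_x X^x(u_x\otimes 1)$ from the dimension-preserving unitaries $u_x$ and identifies $\phi$ summand-wise with $f_x\circ u_x$ from Proposition \ref{propdefhilbertspace}; part (2) uses the commutation of $U$ with $D$ to intertwine the eigenspace projections (the paper phrases your closedness/continuity step as an explicit check that $\sum_n|\lambda_n|^2\|\tilde P_{\lambda_n}\phi(\xi)\|^2<\infty$, so $\phi(\dom D)\subset\dom\tilde D$); and part (3) composes Theorems \ref{multkappa} and \ref{BHalg} exactly as the paper does. The remaining matrix-coefficient verification you flag is indeed deferred by the paper to the subsequent proposition, so it is not part of this statement.
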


\begin{proof}
\begin{enumerate}
\item Recall the unitaries $u_x:\Hi_{x}\to\Hi_{\varphi(x)}$ from the proof of theorem \ref{multkappa} and the mutually inverse unitaries $$f_x:\Hi_{\varphi(x)}\to \Hi_{x} \stackrel[\G]{}{\boxtimes}L^2(\B):\xi\mapsto X^{x}(\xi\otimes\Lambda(1_\B))$$ and $$g_x:\Hi_{x} \stackrel[\G]{}{\boxtimes}L^2(\B)\to \Hi_{\varphi(x)}:z\mapsto (\id_{\Hi_{\varphi(x)}}\otimes \omega_1')({X^{x}}^*z)$$  from the proof of proposition \ref{propdefhilbertspace} point 1. Therefore, defining $\phi_x=f_x\circ u_x=:\Hi_{x}\to \Hi_{x} \stackrel[\G]{}{\boxtimes}L^2(\B)$, $\phi'_x=u_x^*\circ g_x:\Hi_{x} \stackrel[\G]{}{\boxtimes}L^2(\B)\to \Hi_{x}$ and $Y= \oplus_{x\in \IrredG} Y^x$ we can make $\phi=\sum_{x\in \Irred(\mathbb{G})}\phi_x$ (where in both cases we take the sum over the irreducible representations appearing in the decomposition of $U$) such that $\phi(\xi)=Y(\xi\otimes 1)$ for $\xi\in \Hi$. $Y$ is unitary and hence $\phi$ is the desired isomorphism of Hilbert spaces.
\item We have to prove that $\phi(D\xi)=\tilde D(\phi(\xi))$ for $\xi \in \dom (D)$. Denoting $P_{\lambda_n}$ resp. $\tilde P_{\lambda_n}$ to be the projection onto $V_{\lambda_n}$ resp. $V_{\lambda_n} \bokstimes[\G]L^2(\B)$, note that, as $Y=(\id\otimes \chi)(U)$ and $U$ commutes with $D$, $\phi(P_{\lambda_n}\xi)=\tilde P_{\lambda_n}(\phi(\xi))$. Then 
$$\sum_n|\lambda_n|^2\|\tilde P_{\lambda_n}(\phi(\xi))\|^2=\sum_n|\lambda_n|^2\|\phi(P_{\lambda_n}(\xi))\|^2=\sum_n|\lambda_n|^2\| P_{\lambda_n}(\xi)\|^2<\infty$$ as $\xi\in \dom(D)$ and hence $\phi$ maps the domain of $D$ into the domain of $\tilde D$. Also, by the previous remark, trivially, $\tilde D_n=\tilde D_{|_{V_{\lambda_n} \bokstimes[\G]L^2(\B)}}$ commutes with $\phi$ for all $n$. Taking the direct sum,
we can conclude that also $\tilde D$ commutes with $\phi$.

\item The proof follows from theorem \ref{BHalg} and theorem \ref{multkappa}.
\end{enumerate}
\end{proof}
Finally, it suffices to prove that the actions of the algebras on the Hilbert spaces are isomorphic.
\begin{proposition}
The action of $\A\ _{\sigma^{-1}}\#\mathbb C$ on $\Hi$ is isomorphic to the action of  $\A\stackrel[\OG]{}{\boxdot} \B$ on $\Hi\bokstimes[\G] L^2(\B)$ i.e. if $\phi:\Hi\to \Hi \stackrel[\G]{}{\boxtimes} L^2(\B)$ and $(\id\otimes \chi)\alpha: \A\ _{\sigma^{-1}}\#\mathbb C\to\A\stackrel[\OG]{}{\boxdot} \B$ are the isomorphisms of the previous proposition, we have:
$$\phi(a\cdot_\sigma \xi)=(\id\otimes \chi)\alpha(a)\phi(\xi).$$
\end{proposition}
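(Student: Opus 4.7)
The plan is to verify the intertwining relation $\phi(a \cdot_\sigma \xi) = (\id \otimes \chi)\alpha(a)\phi(\xi)$ as a direct algebraic computation in Sweedler notation, with the twisted multiplicativity identity \eqref{eqchi} for $\chi$ doing the decisive work. Working on a spectral-subspace vector $\xi$ makes all expressions algebraic; the identity then extends to all of $\Hi$ by boundedness and density.

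First I would unpack $\phi$ algebraically. Since $Y = (\id \otimes \chi)(U) = \bigoplus_{x \in \IrredG}(\id \otimes \chi)(U^x)$ and $U$ acts algebraically on the dense subspace of spectral vectors, one finds $\phi(\xi) = Y(\xi \otimes \Lambda(1_\B)) = \xi^{(0)} \otimes \Lambda(\chi(\xi^{(1)}))$, where I write $u(\xi) = \xi^{(0)} \otimes \xi^{(1)}$ for the corepresentation on $\Hi$ associated with $U$. Denoting $\alpha(a) = a^{[0]} \otimes a^{[1]}$ for the coaction on $\A$, the right-hand side unfolds, using that $\B$ acts on $L^2(\B)$ by left multiplication, as
\[(\id \otimes \chi)\alpha(a)\,\phi(\xi) \;=\; a^{[0]}\xi^{(0)} \otimes \Lambda\!\bigl(\chi(a^{[1]})\chi(\xi^{(1)})\bigr).\]

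For the left-hand side, starting from $\pi_\sigma(a)\xi = a^{[0]}\xi^{(0)}\,\sigma^{-1}(a^{[1]},\xi^{(1)})$ and applying $\phi$, I need the image of $a^{[0]}\xi^{(0)}$ under $u$. The covariance $u(b\eta)=\alpha(b)u(\eta)$ together with the coassociativity identities $(\alpha \otimes \id)\alpha = (\id \otimes \Delta)\alpha$ and $(u \otimes \id)u = (\id \otimes \Delta)u$ repackages the nested legs as
\[a^{[0][0]}\xi^{(0)(0)} \otimes a^{[0][1]}\xi^{(0)(1)}\,\sigma^{-1}(a^{[1]},\xi^{(1)}) \;=\; a^{[0]}\xi^{(0)} \otimes a^{[1]}_{(1)}\xi^{(1)}_{(1)}\,\sigma^{-1}(a^{[1]}_{(2)},\xi^{(1)}_{(2)}),\]
so that $\phi(\pi_\sigma(a)\xi) = a^{[0]}\xi^{(0)} \otimes \Lambda\!\bigl(\chi(a^{[1]}_{(1)}\xi^{(1)}_{(1)})\bigr)\sigma^{-1}(a^{[1]}_{(2)},\xi^{(1)}_{(2)})$. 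Applying equation \eqref{eqchi} with arguments $a^{[1]},\xi^{(1)} \in \OG$ then gives $\chi(a^{[1]}_{(1)}\xi^{(1)}_{(1)})\sigma^{-1}(a^{[1]}_{(2)},\xi^{(1)}_{(2)}) = \chi(a^{[1]})\chi(\xi^{(1)})$, matching both sides.

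The main obstacle is purely bookkeeping: one must track three distinct Sweedler flavors, the $\OG$-coaction $\alpha$ on $\A$, the corepresentation $u$ on $\Hi$, and the comultiplication $\Delta$ on $\OG$ itself, and apply the right coassociativity identity to line them up for the two-variable twist formula \eqref{eqchi}. No new analytic input beyond Theorem \ref{multkappa} is required, and once the identification is checked on the algebraic (spectral) core the general statement follows from boundedness of $\pi_\sigma(a)$ and of the operators on the deformed side.
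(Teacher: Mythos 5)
Your proof is correct and follows essentially the same route as the paper: both reduce the intertwining relation to the twisted multiplicativity identity \eqref{eqchi} for $\chi$, checked on the algebraic core and extended by boundedness. The only difference is presentational — you work in Sweedler notation and apply $\phi$ directly to $\pi_\sigma(a)\xi$, whereas the paper expands everything in matrix coefficients $u^{x,k}_{ij}$ and compares $a\cdot_\sigma\xi$ with $Y^*(\id\otimes\chi)\alpha_U(a)Y(\xi\otimes 1)$, which forces an extra unitarity manipulation that your formulation avoids.
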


\begin{proof}
Let $a\in \A$ and let $\xi^{z,m}_n$ be the $n$-th basisvector in the $m$-th summand of $\Hi_z$ in the decomposition of $\Hi$. Using the Hilbert space isomorphism $\phi : \Hi\to \Hi \bokstimes[\G] L^2(\B)$ and the notation $a\cdot_\sigma \xi^{z,m}_n$ for the deformed action of $a\#1\in \A\ _{\sigma^{-1}}\#\mathbb C$ on $\xi$, we will prove that $$\phi(a\cdot_\sigma \xi)=(\id\otimes \chi)\alpha(a)\phi(\xi)$$ by proving $$a\cdot_\sigma \xi=Y^*(\id\otimes \chi)\alpha(a)(Y(\xi\otimes 1)).$$ 
First we compute $a\cdot_\sigma \xi^{z,m}_n$.
Writing 
\begin{equation}\label{alphaa}
U(\xi_j^{x,k}\otimes \id)=\sum_{i}\xi^{x,k}_i\otimes u^{x,k}_{ij},
%$$\langle \xi^{x,k}_s|a|\xi^{y,l}_{t}=a^{x,k,s}_{y,l,t}$$ and 
\end{equation} and noting that $\alpha_U(a)=U(a\otimes 1)U^*$ and that $a\cdot_\sigma \xi= a_{(0)}\xi_{(0)}\sigma^{-1}(a_{(1)},\xi_{(1)})$ where $U(\xi\otimes 1)=\xi_{(0)}\otimes \xi_{(1)}$ it is only a calculation to check that  
\begin{equation}
a\cdot_\sigma \xi_n^{z,m} =\sum_{x,k,i,j,q}\xi^{x,k}_i\langle\xi_j^{x,k},a\xi^{z,m}_q\rangle\sum_{s}\sigma^{-1}( u^{x,k}_{ij}(u^{z,m}_{sq})^*,u^{z,m}_{sn}) 
\end{equation} which is a finite sum as $(\alpha_U)_{|_{\A}}$ is an algebraic coaction.
Using, moreover, the cocycle relations, we get
\begin{equation}\label{calcgosw}
a\cdot_\sigma \xi_n^{z,m}=\sum_{x,k,i,j,q}\xi^{x,k}_i\langle\xi_j^{x,k},a\xi^{z,m}_q\rangle\sum_{t,r}\sigma^{-1}((u^{z,m}_{tr})^*,u^{z,m}_{tn})\sigma(u^{x,k}_{ij},(u^{z,m}_{rq})^*).
\end{equation}
Next, we will compute $Y^*(\id\otimes \chi)\alpha_U(a)Y(\xi^{z,m}_n\otimes 1)$.
Writing $Y(\xi_j^{x,k}\otimes \id)=\sum_{i}\xi^{x,k}_i\otimes \chi(u^{x,k}_{ij})$,
we have
\begin{equation}\label{calcanal}
Y^*(\id\otimes \chi)\alpha_U(a)Y(\xi^{z,m}_n\otimes 1)=\sum_{x,k,i,j,q}\xi^{x,k}_i\langle\xi_j^{x,k},a\xi^{z,m}_q\rangle\otimes \sum_{s,t}(\chi(u^{x,k}_{si}))^*\chi(u^{x,k}_{sj}(u^{z,m}_{tq})^*)\chi(u^{z,m}_{tn}).
\end{equation} 
Note now that by equation \eqref{eqchi}, 

$$\chi(u^{x,k}_{sj}(u^{z,m}_{tq})^*)=\chi(u^{x,k}_{sv})\chi((u^{z,m}_{tr})^*)\sigma(u^{x}_{vj},(u^{z}_{rq})^*)$$ and by unitarity of the $u^x_{ij}$'s and the $\chi(u^x_{ij})$'s and theorem \ref{multkappa}, we get 
\begin{equation}
\sum_{s,t}(\chi(u^{x,k}_{si}))^*\chi(u^{x,k}_{sj}(u^{z,m}_{tq})^*)\chi(u^{z,m}_{tn})=\sum_{t,r}\sigma^{-1}((u^{z,m}_{tr})^*,u^{z,m}_{tn})\sigma(u^{x,k}_{ij},(u^{z,m}_{rq})^*)
\end{equation} which implies
$$Y^*(\id\otimes \chi)\alpha_U(a)Y(\xi^{z,m}_n\otimes 1)= \sum_{x,k,i,j,q}\xi^{x,k}_i\langle\xi_j^{x,k},a\xi^{z,m}_q\rangle\sum_{t,r}\sigma^{-1}((u^{z,m}_{tr})^*,u^{z,m}_{tn})\sigma(u^{x,k}_{ij},(u^{z,m}_{rq})^*).$$ We can conclude that
$$\phi(a\cdot_\sigma \xi)=(\id\otimes \chi)\alpha(a)\phi(\xi)$$ and with this, the proof of theorem \ref{maincocycle} is completed.
\end{proof}

\section{Constructing a non-dimension-preserving example}
\label{cha-inducingmoneq}

In this section, we will construct an example of a monoidal deformation coming from a non-dimension-preserving monoidal equivalence. We will use the spectral triple on the Podle\'s spheres (\cite{Podles1987}) defined in \cite{Dabrowski2006} and $SU_q(2)$, which acts on it in the appropriate way. 
%To have monoidal equivalences on $SO_q(3)$, we will use that $C(SO_q(3))$ is a Woronowicz $C^*$-subalgebra of $SU_q(2)$: we first describe a method to induce a monoidal equivalence between two Woronowicz $C^*$-subalgebras of two monoidal equivalent compact quantum groups and after that, we investigate monoidal equivalences on $SU_q(2)$.

\subsection{Monoidal equivalences on $SU_q(2)$}\label{subsec-monoidalSUQ2}
We look at orthogonal quantum groups and $SU_q(2)$ in particular.
\begin{definition}[\cite{VanDaele1996b}]\label{defAoF}
Let  $n\in \mathbb{N}$ and $F \in \GL(n,\mathbb{C})$ with $F\overline{F}=cI_n \in \mathbb{R}I_n$. Then $A_o(F)$ is defined as the universal quantum group generated by the coefficients of the matrix $U\in M_n(A_o(F))$ with relations

\begin{itemize}
\item $U$ is a unitary and
\item $U = F\overline{U}F^{-1}$.
\end{itemize}
Moreover, $A_o(F)=(C(A_o(F)),U)$ is a compact matrix quantum group (as defined in \cite{Woronowicz1987}). They are called universal orthogonal quantum groups.
\end{definition}
As the matrices $F$ are not in one to one correspondence with the universal quantum groups (i.e. different $F$'s can define the same universal quantum groups), it is necessary (but not so hard) to classify the quantum groups $A_o(F)$. This has been done in \cite{Bichon2005}.

\begin{proposition}
For $F_1,F_2$ matrices in $\GL(n,\mathbb{C})$ with $F_i\overline{F}_i=\pm 1$, we say 
$$F_1\sim F_2  \text{ if there exists a unitary }  v\in U(n) \text{ such that } F_1=vF_2v^T.$$ Then 
$$A_o(F_1)\cong A_o(F_2) \text{ if and only if } F_1\sim F_2.$$
\end{proposition}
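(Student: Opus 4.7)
The plan is to handle the two implications separately. The direction $F_1 \sim F_2 \Rightarrow A_o(F_1) \cong A_o(F_2)$ is a direct application of the universal property of $A_o(F)$, whereas the converse relies on the representation theory of $A_o(F)$, specifically the irreducibility of the fundamental representation and the one-dimensionality of its intertwiner space with its conjugate.

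For the forward direction, given a unitary $v \in U(n)$ with $F_1 = v F_2 v^T$, I would try the candidate map $\pi \colon A_o(F_2) \to A_o(F_1)$ defined on generators by $\pi(U_2) = v^* U_1 v$ (entrywise). To invoke the universal property of $A_o(F_2)$, one verifies that $v^* U_1 v$ is unitary (immediate from $v \in U(n)$) and satisfies the orthogonality relation with $F_2$: using $\overline{v^* U_1 v} = v^T \overline{U_1}\, \overline{v}$, the identity $v^T \overline{v} = I$, and $F_2 = v^* F_1 \overline{v}$ (which follows from $F_1 = v F_2 v^T$), the expression $F_2 \overline{v^* U_1 v} F_2^{-1}$ collapses to $v^* F_1 \overline{U_1} F_1^{-1} v = v^* U_1 v$, as required. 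The symmetric construction with $v^*$ in place of $v$ produces an inverse, and a routine verification using $\Delta(u_{ij}) = \sum_k u_{ik} \otimes u_{kj}$ together with $v v^* = I$ shows that $\pi$ intertwines the coproducts.

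For the converse, assume $\phi \colon A_o(F_1) \to A_o(F_2)$ is an isomorphism of compact quantum groups. The crucial input is the representation-theoretic fact (due to Banica, and extended to general $F$ in the orthogonal-quantum-group literature) that, for $F \overline{F} = \pm I_n$ with $n \geq 2$, the fundamental representation $U$ of $A_o(F)$ is irreducible, the space $\Mor(\overline{U}, U)$ is one-dimensional (spanned by $F$), and $U$ is the unique $n$-dimensional irreducible representation in the fusion ring. Since $\phi$ induces a dimension-preserving bijection of irreducible classes, $(\id \otimes \phi)(U_1)$ must be unitarily equivalent to $U_2$, so there exists a unitary $v \in U(n)$ with $(\id \otimes \phi)(U_1) = v^* U_2 v$. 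Applying $(\id \otimes \phi)$ to $U_1 = F_1 \overline{U_1} F_1^{-1}$ and rearranging gives $U_2 = (v F_1 v^T)\, \overline{U_2}\, (v F_1 v^T)^{-1}$, so $v F_1 v^T \in \Mor(\overline{U_2}, U_2) = \mathbb{C} F_2$. Hence $v F_1 v^T = \lambda F_2$ for some $\lambda \in \mathbb{C}^\times$.

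To finish, taking $F \overline{F}$ on both sides of $v F_1 v^T = \lambda F_2$ yields $F_1 \overline{F_1} = |\lambda|^2 F_2 \overline{F_2}$, which simultaneously forces the signs of $F_1 \overline{F_1}$ and $F_2 \overline{F_2}$ to agree (so that $A_o(F_1) \not\cong A_o(F_2)$ across different signs) and gives $|\lambda| = 1$. Choosing a square root $\mu$ with $\mu^2 = \lambda^{-1}$ and setting $w = \mu v$, one obtains a unitary with $w F_1 w^T = F_2$; inverting this equation produces a unitary $v'$ with $F_1 = v' F_2 (v')^T$, witnessing $F_1 \sim F_2$. I expect the main obstacle to be the structural input on $A_o(F)$: the one-dimensionality of $\Mor(\overline{U}, U)$ and the identification of $U$ as the intrinsically distinguished $n$-dimensional irreducible rely on the full computation of the fusion rules of $A_o(F)$ à la Banica, while the rest of the argument is essentially a manipulation of defining relations.
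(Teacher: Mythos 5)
Your proof is correct. The paper itself gives no argument for this proposition --- it is quoted from Bichon--De Rijdt--Vaes \cite{Bichon2005} --- and your proof is essentially the standard one found there: the implication $F_1\sim F_2\Rightarrow A_o(F_1)\cong A_o(F_2)$ by the universal property applied to $v^*U_1v$, and the converse by using that the fundamental representation is the unique irreducible of dimension $n$ (so any CQG isomorphism matches $[U_1]$ with $[U_2]$) together with $\dim\Mor(\overline{U_2},U_2)=1$ to get $vF_1v^T=\lambda F_2$, after which the normalization $F_i\overline{F_i}=\pm 1$ forces $|\lambda|=1$ and a square root of $\lambda^{-1}$ absorbs the phase. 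You correctly isolate the only nontrivial external input, namely Banica's computation of $\Irred(A_o(F))$ and its fusion rules, which the paper also relies on elsewhere (remark on dimensions of the $r_k$).
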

Therefore, we will describe a fundamental domain for $\sim$ as is done in \cite{Bichon2005}.
\begin{proposition}
A fundamental domain of $\sim$ is given by the following classes of matrices:
\begin{itemize}
\item $\left(\begin{array}{ccc}0 & D(\lambda_1,\ldots,\lambda_k) & 0 \\D(\lambda_1,\ldots,\lambda_k)^{-1} & 0 & 0 \\0 & 0 & 1_{n-2k}\end{array}\right)$ with $k,n\in \mathbb N, 2k\leq n, 0< \lambda_1\leq \ldots\leq \lambda_k<1$
\item $\left(\begin{array}{cc}0 & D(\lambda_1,\ldots,\lambda_{n/2}) \\-D(\lambda_1,\ldots,\lambda_{n/2})^{-1} & 0\end{array}\right)$ with $0< \lambda_1\leq \ldots\leq \lambda_{n/2}\leq1,n\in \mathbb N$ even.
\end{itemize}

\end{proposition}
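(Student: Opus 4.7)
The plan is to identify the $\sim$-invariants of $F$ and normalize $F$ block by block via the antilinear operator $J \colon \mathbb{C}^n \to \mathbb{C}^n$, $J(v) := F\bar v$, which satisfies $J^2 = \epsilon I$.

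First I would observe that, since $v^T \bar v = I$ for unitary $v$, the equivalence $F \mapsto vFv^T$ preserves the sign $\epsilon$ in $F\bar F = \epsilon I$ and acts on $FF^*$ by unitary similarity. Hence $\epsilon$ and the spectrum (with multiplicities) of the positive Hermitian matrix $FF^*$ are $\sim$-invariants. Combining $F^{-1} = \epsilon \bar F$ with its transpose $F^* F^T = \epsilon I$, a short calculation shows that $FF^* v = \mu v$ implies $F^*F \bar v = \mu^{-1} \bar v$; since $F^*F$ and $FF^*$ are unitarily similar via the polar decomposition, the spectrum of $FF^*$ must come in pairs $\{\mu, \mu^{-1}\}$ with equal multiplicities. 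The antilinear operator $J$ restricts to antilinear bijections $V_\mu \to V_{\mu^{-1}}$ between the $FF^*$-eigenspaces; in particular $V_1$ is $J$-invariant.

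Next I diagonalize $FF^*$ by a unitary change of basis, producing the orthogonal decomposition $\mathbb{C}^n = V_1 \oplus \bigoplus_{\mu > 1}(V_{\mu^{-1}} \oplus V_\mu)$. Each coordinate vector $e_i$ now lies in some eigenspace, is real ($\bar e_i = e_i$), and satisfies $Fe_i = J(e_i) \in V_{\mu^{-1}}$ whenever $e_i \in V_\mu$; hence $F$ is block-diagonal, and on each pair-block it is anti-block-diagonal of the form $\begin{pmatrix} 0 & B \\ \epsilon \bar B^{-1} & 0 \end{pmatrix}$ with $BB^* = \lambda^2 I$, where $\lambda = \mu^{-1/2} \in (0,1)$. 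Writing $B = \lambda U_1$ with $U_1$ unitary and applying the block-diagonal unitary $\mathrm{diag}(U_1^{-1}, I)$ under $F \mapsto vFv^T$ brings the pair-block to $\begin{pmatrix} 0 & \lambda I \\ \epsilon \lambda^{-1} I & 0 \end{pmatrix}$, exactly the pair-block shape appearing in the claimed fundamental domain.

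For the $V_1$-block, the identity $F^*F\bar v = \bar v$ for $v \in V_1$ implies that $J|_{V_1}$ is antiunitary with respect to the standard inner product. In the case $\epsilon = +1$, $J|_{V_1}$ is a real structure, and averaging $v \mapsto v + Jv$ (or $v \mapsto iv$ when $v + Jv = 0$) together with iteration on $J$-invariant orthogonal complements produces an orthonormal $J$-fixed basis in which $F|_{V_1} = I$, yielding the $1_{n-2k}$ block. In the case $\epsilon = -1$, $J|_{V_1}$ is a quaternionic structure: antiunitarity combined with $J^2 = -I$ forces $\langle v, Jv\rangle = 0$, and $\{v, Jv\}^\perp$ is again $J$-invariant, so iterating yields an orthonormal quaternionic basis $(e_1,\ldots,e_m, Je_1,\ldots,Je_m)$ which makes $\dim V_1$ even and brings $F|_{V_1}$ into the standard form $\begin{pmatrix} 0 & I_m \\ -I_m & 0 \end{pmatrix}$; together with the pair-blocks this assembles the two claimed shapes. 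Uniqueness follows at once from the invariance of $\epsilon$ and of the spectrum of $FF^*$: the multiplicity of the eigenvalue $1$ determines $n-2k$ (respectively the number of indices with $\lambda_i=1$), and the remaining eigenvalues in $(0,1)$ are precisely $\lambda_1^2 \leq \ldots \leq \lambda_k^2$. The step I expect to be most delicate is the $V_1$-analysis in the quaternionic case, which rests on the classical construction of an orthonormal quaternionic basis on a complex Hermitian space equipped with an antiunitary operator squaring to $-I$; I would import this as a standard result.
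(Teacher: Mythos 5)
Your argument is correct, but note that the paper does not actually prove this proposition: it is imported verbatim from Bichon--De Rijdt--Vaes \cite{Bichon2005}, so there is no in-paper proof to compare against. Your self-contained linear-algebra derivation is the standard one and all the key steps check out: the sign $\epsilon$ in $F\bar F=\epsilon I$ and the spectrum of $FF^*$ are indeed invariant under $F\mapsto vFv^T$ (using $v^T\bar v=I$); the identity $\bar F F^T=(F^*F)^{-1}$ correctly yields the pairing $\mu\leftrightarrow\mu^{-1}$ of eigenvalues and the antilinear bijections $J\colon V_\mu\to V_{\mu^{-1}}$; after diagonalizing $FF^*$ the matrix $F$ becomes block-anti-diagonal with pair-blocks $\bigl(\begin{smallmatrix}0& B\\ \epsilon\bar B^{-1}&0\end{smallmatrix}\bigr)$, $BB^*=\lambda^2 I$, normalizable to $\bigl(\begin{smallmatrix}0&\lambda I\\ \epsilon\lambda^{-1}I&0\end{smallmatrix}\bigr)$ by $\mathrm{diag}(U_1^{-1},I)$; and the $V_1$-block carries a real ($\epsilon=+1$) or quaternionic ($\epsilon=-1$) structure, producing the $1_{n-2k}$ block or the symplectic block (and forcing $n$ even when $\epsilon=-1$). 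Two points deserve an explicit sentence in a final write-up: first, that the normalizing unitaries are block-diagonal for the eigenspace decomposition, so they can be assembled into a single $v$ acting by $F\mapsto vFv^T$ without disturbing the blocks already treated; second, the completeness-plus-uniqueness bookkeeping at the end, where for $\epsilon=-1$ the multiplicity of the eigenvalue $1$ of $FF^*$ equals twice the number of indices with $\lambda_i=1$, which is consistent with the weak inequality $\lambda_i\leq 1$ in the second family versus the strict $\lambda_i<1$ in the first. With those remarks your proof is complete and, modulo presentation, is essentially the classical classification argument underlying the cited result.
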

\begin{remark}\label{remFq}
Note that for $F\in \GL(2,\mathbb{C})$, up to equivalence, there only exists matrices of the form 
$$ F_q=\left(\begin{array}{cc}0 & |q|^{1/2} \\-\sgn(q)|q|^{-1/2} & 0\end{array}\right)$$ for $q\in [-1,1]\setminus \{0\}$.
\end{remark}

\begin{definition}[\cite{Woronowicz1987,Woronowicz1987a}]
Let $q\in [-1,1], q\neq 0$. Let $A$ be the universal unital $C^*$-algebra generated by two elements $\alpha,\gamma$ satisfying relations such that  $U=\left(\begin{array}{cc}\alpha & -q\gamma^* \\\gamma & \alpha^*\end{array}\right)\in M_2(A)$ is a unitary matrix. With coproduct $\Delta(U_{ij})=\sum_k U_{ik}\otimes U_{kj}$, $SU_q(2)=(A,\Delta)$ is a compact quantum group. 
%On the generators $\varepsilon(U_{ij})=\delta_{ij}$ and antipode $S(U_{ij})=(U_{ji})^*$, $\mathcal C(SU_q(2))$is a Hopf $^*$-algebra.
\end{definition}
%Moreover, we have a complete description of basis elements of $SU_q(2)$.
%
%\begin{proposition}\label{basisSUq}
%The elements of the form $\alpha^i\gamma^j(\gamma^k)^*$ and $(\alpha^*)^l\gamma^m(\gamma^*)^n$, $i,j,k,m,n\in \mathbb{N},l\in\mathbb{N}\setminus \{0\}$ form a basis of $\SUq$
%\end{proposition}
%
\begin{proposition}
With $F_q$ defined in remark \ref{remFq}, we have $A_o(F_q)\cong SU_q(2)$.
\end{proposition}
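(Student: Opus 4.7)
The strategy is a standard two-way universality argument: show that the fundamental corepresentations on both sides satisfy each other's defining relations, so the universal properties give mutually inverse morphisms of compact matrix quantum groups.

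First, I would check by direct $2\times 2$ matrix computation that the fundamental unitary
\[
U=\begin{pmatrix}\alpha & -q\gamma^*\\ \gamma & \alpha^*\end{pmatrix}\in M_2(C(SU_q(2)))
\]
satisfies the orthogonality relation $U=F_q\,\overline{U}\,F_q^{-1}$. One computes $F_q^{-1}=\begin{pmatrix}0&-\sgn(q)|q|^{1/2}\\ |q|^{-1/2}&0\end{pmatrix}$ and uses $\sgn(q)|q|=q$ to see that conjugation of $\overline{U}$ by $F_q$ permutes the diagonal entries ($\alpha^*\leftrightarrow\alpha$) and turns the off-diagonal entries $\gamma^*,-q\gamma$ into $\gamma,-q\gamma^*$ with exactly the right scalar factors. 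Since $U$ is already unitary by definition of $SU_q(2)$, both defining conditions of $A_o(F_q)$ are met, so the universal property of $A_o(F_q)$ yields a surjective unital $^*$-morphism $\pi:C(A_o(F_q))\to C(SU_q(2))$ sending the generating matrix $V$ to $U$, and this $\pi$ automatically intertwines the coproducts because both are given on matrix entries by the same formula $\Delta(X_{ij})=\sum_k X_{ik}\otimes X_{kj}$.

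For the reverse morphism, I would analyze what $V=F_q\overline{V}F_q^{-1}$ forces on the entries of the generating matrix $V=(v_{ij})$ of $A_o(F_q)$. The same computation read backwards shows that writing $a=v_{11}$, $c=v_{21}$ one must have $v_{22}=a^*$ and $v_{12}=-qc^*$, so
\[
V=\begin{pmatrix}a & -qc^*\\ c & a^*\end{pmatrix}.
\]
Then the unitarity of $V$ (which is one of the defining relations of $A_o(F_q)$) is exactly the set of relations between $a$ and $c$ cut out by requiring $V\in M_2(C(A_o(F_q)))$ to be unitary, i.e.\ the defining relations of $SU_q(2)$. Therefore the universal property of $SU_q(2)$ produces a unital $^*$-morphism $\rho:C(SU_q(2))\to C(A_o(F_q))$ sending $\alpha\mapsto a$, $\gamma\mapsto c$, and by construction it sends the $SU_q(2)$-fundamental matrix to $V$; again, coproducts are intertwined for the same reason.

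Finally I would note that $\pi\circ\rho$ and $\rho\circ\pi$ act as the identity on the generating matrix entries on each side, hence are the identity on the dense $^*$-subalgebras, and therefore (being unital $^*$-morphisms of $C^*$-algebras) on the whole universal $C^*$-completions. This gives the desired isomorphism of compact matrix quantum groups $A_o(F_q)\cong SU_q(2)$. There is no real obstacle here beyond bookkeeping: the only point where one has to be careful is tracking the sign $\sgn(q)$ and the scalar $|q|^{\pm 1/2}$ through the conjugation by $F_q$, which is precisely what makes the off-diagonal entry come out as $-q\gamma^*$ rather than, say, $\pm|q|\gamma^*$.
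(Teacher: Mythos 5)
Your proof is correct: the paper itself states this proposition without proof, treating it as a known fact from the literature on universal orthogonal quantum groups (Van Daele--Wang, Banica), and your two-way universality argument is exactly the standard way to establish it. The key computation checks out: with $F_q^{-1}=\bigl(\begin{smallmatrix}0&-\sgn(q)|q|^{1/2}\\ |q|^{-1/2}&0\end{smallmatrix}\bigr)$ one gets $F_q\overline{U}F_q^{-1}=U$ for the $SU_q(2)$ fundamental matrix, and conversely the relation $V=F_q\overline{V}F_q^{-1}$ forces $v_{22}=v_{11}^*$ and $v_{12}=-q\,v_{21}^*$, so unitarity of $V$ reproduces precisely the defining relations of $SU_q(2)$; the two induced morphisms are mutually inverse on generators and intertwine the matrix coproducts automatically.
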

Note that this last statement indeed implies that the only orthogonal quantum groups coming from matrices of dimension $2$, are the quantized versions of $SU(2)$. 

%\subsection{$SO_q(3)$ as quantum quotient group of $SU_q(2)$.}
%\section{Monoidal equivalences on $SO_q(3)$}
%\subsection{Monoidal equivalences between $A_o(F)$'s}
We state some results obtained by de Rijdt et al. in \cite{Bichon2005} (Corollary 5.4 and Theorem 5.5).

\begin{theorem}\label{moneqAoF}
Let $F_1\in GL(n_1,\mathbb{C})$ with $F_1\overline{F}_1=c_11$, $c_1\in \mathbb{R}$. Then

\begin{itemize}
\item a compact quantum group $\mathbb{G}$ is monoidally equivalent with $A_o(F_1)$ if and only if there exist a $F_2\in GL(n_2,\mathbb{C})$ with $F_2\overline{F}_2=c_21, c_2\in \mathbb{R}$ and $\frac{c_1}{\Tr(F_1^*F_1)}=\frac{c_2}{\Tr(F_2^*F_2)}$ such that $\mathbb{G}\cong A_o(F_2)$.
\item in this case, denote by $\mathcal O(A_o(F_1,F_2))$ the $^*$-algebra generated by the coefficients of $Y\in M_{n_2,n_1}(\mathbb{C})\otimes \mathcal O(A_o(F_1,F_2))$ with relations $$ Y \text{ is unitary } \quad \text{ and }\quad Y=(F_2\otimes 1)\overline{Y}(F_1^{-1}\otimes 1),$$ then $\mathcal O(A_o(F_1,F_2))\neq 0$ is the ($A_o(F_1)$-$A_o(F_2)$)-bi-Galois object with coactions $\beta_1$ of $\mathcal O(A_o(F_1))$ and $\beta_2$ of $\mathcal O(A_o(F_2))$ such that 
$$(\id\otimes \beta_1)(Y)=Y_{12}(U_1)_{13} \quad \text{ and } \quad (\id\otimes \beta_2)(Y)=(U_2)_{12} Y_{13}$$ where the $U_i$ are the unitary representations of $A_o(F_i)$, which matrix coefficients generate the quantum groups.
\item the monoidal equivalence preserves the dimensions if and only if $n_2=n_1$. In this case, we denote the unitary 2-cocycle by $\Omega(F_2)$. The $\Omega(F_2)$ describe up to equivalence all unitary 2-cocycles on the dual of $A_o(F_1)$.
\end{itemize}
\end{theorem}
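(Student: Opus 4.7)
The rigid $C^*$-tensor category of corepresentations of $A_o(F)$ is generated by its fundamental unitary corepresentation $U$ together with a single intertwiner $t\in\Mor(\varepsilon,U\otimes U)$ defined by $t(1)=\sum_i e_i\otimes F e_i$ for any orthonormal basis; the conditions $F\bar F=cI$ and unitarity of $U$ translate into the pair of Temperley-Lieb identities
\[
t^*t=\Tr(F^*F),\qquad (t^*\otimes\id_{\Hi})(\id_{\Hi}\otimes t)=c\,\id_{\Hi}.
\]
These two scalars are categorical invariants of the pair $(U,t)$, and the ratio $c/\Tr(F^*F)$ is therefore preserved under any monoidal equivalence.

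For the forward direction of the first item, assume $\varphi\colon A_o(F_1)\to\mathbb{G}$ is a monoidal equivalence. Set $V=\varphi(U_1)$ on a Hilbert space $\Hi_V$ of dimension $n_2$ and $s=\varphi(t_1)\in\Mor(\varepsilon,V\otimes V)$; fixing an orthonormal basis of $\Hi_V$, write $s(1)=\sum_i f_i\otimes F_2 f_i$ for a uniquely determined $F_2\in\GL(n_2,\mathbb{C})$. By the previous paragraph $F_2\bar F_2=c_2 I$ and $c_1/\Tr(F_1^*F_1)=c_2/\Tr(F_2^*F_2)$. Because $(U_1,t_1)$ tensor-generates the Hopf $^*$-algebra $\mathcal O(A_o(F_1))$, the matrix coefficients of $V$ generate $\mathcal O(\mathbb{G})$, so the universal property of $A_o(F_2)$ yields a surjection $A_o(F_2)\twoheadrightarrow\mathbb{G}$. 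Since it induces an equivalence of representation categories, it is an isomorphism.

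For the backward direction together with the second item, the plan is to define $\B=\mathcal O(A_o(F_1,F_2))$ by the stated universal property and read off the two commuting coactions $\beta_1$ and $\beta_2$ from the equations $(\id\otimes\beta_1)(Y)=Y_{12}(U_1)_{13}$ and $(\id\otimes\beta_2)(Y)=(U_2)_{12}Y_{13}$. To conclude that $\B$ really is the bi-Galois object of Theorem \ref{monoidalGalois} (so in particular that $\mathbb{G}\cong A_o(F_2)$ is monoidally equivalent to $A_o(F_1)$), one must show $\B\neq 0$. This is the crux of the whole argument, and it is precisely where the compatibility condition $c_1/\Tr(F_1^*F_1)=c_2/\Tr(F_2^*F_2)$ is needed. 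The approach I would follow is to construct, directly out of $F_2$, a candidate unitary fiber functor $\psi$ on $A_o(F_1)$ by setting $\Hi_{\psi(U_1)}=\Hi_V$ and $\psi(t_1)=s$, extend diagrammatically to all Temperley-Lieb morphisms, verify the four compatibility conditions of Remark \ref{unitfibfunctinducesmoneq} by a direct computation in which the ratio condition is exactly what makes $\psi(t_1^*t_1)=s^*s$ and the second Temperley-Lieb identity transport correctly, and then appeal to Proposition \ref{unfibfunctismoneq} and Theorem \ref{monoidalGalois} to produce a nonzero bi-Galois object carrying unitaries whose generating matrix coefficients match the defining $Y$ of $\mathcal O(A_o(F_1,F_2))$ by uniqueness.

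For the third item, since the fundamental corepresentation generates the tensor category, a monoidal equivalence is dimension-preserving on the whole category if and only if it preserves the dimension of $U_1$, i.e.\ $n_2=n_1$. In that case Proposition \ref{dimpresunfibfunctinducescocycle} produces a normalized unitary $2$-cocycle $\Omega(F_2)$, and the exhaustiveness statement follows because the first two items show that every unitary $2$-cocycle on $\widehat{A_o(F_1)}$ yields, via its associated dimension-preserving fiber functor, some $F_2$ of equal dimension and matching ratio, producing the same $\Omega(F_2)$ up to the equivalence $\sim$.
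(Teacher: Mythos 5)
The paper itself offers no proof of this theorem: it is quoted verbatim from Bichon--De Rijdt--Vaes \cite{Bichon2005} (Corollary 5.4 and Theorem 5.5), so there is no in-paper argument to compare against. Your outline is essentially a correct reconstruction of the strategy of that source: describe $\mathrm{Rep}(A_o(F))$ as the Temperley--Lieb category generated by $(U,t)$ with $t^*t=\Tr(F^*F)$ and $(t^*\otimes\id)(\id\otimes t)=c\,\id$, observe that the ratio $c/\Tr(F^*F)$ is invariant under rescaling of $t$ and hence a categorical invariant, and build the fiber functor and the bi-Galois object $\mathcal O(A_o(F_1,F_2))$ from the transported duality morphism $s$.

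Two points remain genuinely unfinished. First, in the forward direction the sentence ``Since it induces an equivalence of representation categories, it is an isomorphism'' hides the actual argument: the surjection $A_o(F_2)\twoheadrightarrow\mathbb{G}$ is injective because the dimensions of the spaces $\Mor(\varepsilon,V^{\otimes k})$ in $\mathbb{G}$ equal (via the monoidal equivalence with $A_o(F_1)$) the Temperley--Lieb dimensions, which are exactly those of $A_o(F_2)$ at the same value of the invariant; without this count you only get a quotient. Second, the step you yourself identify as the crux --- well-definedness and faithfulness of the diagrammatic assignment $t_1\mapsto s$ on all Temperley--Lieb morphisms, equivalently $\mathcal O(A_o(F_1,F_2))\neq 0$ --- is only announced as a plan. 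This is where the entire difficulty of \cite{Bichon2005} lies (one must check that linear relations among non-crossing pairings, governed by the loop parameter, are preserved, which is precisely the role of the ratio condition), so as written the proposal is a faithful roadmap rather than a proof. A smaller remark: in the third item, ``the fundamental corepresentation generates'' does not by itself give dimension-preservation from $n_1=n_2$; you need the fusion recursion $\dim r_{k+1}=n\dim r_k-\dim r_{k-1}$, preserved by the equivalence, to propagate equality of dimensions to all irreducibles.
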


\begin{remark}\label{nondimpreserving}
In \cite{Banica1996} Banica shows that the irreducible representations of $A_o(F)$ can be labeled by $\mathbb N$ (say $r_k$, $k\in \mathbb N$). Moreover, for $\dim(F)=n$, he states that $\dim(r_k)=(x^{k+1}-y^{k+1})/(x-y)$ where $x$ and $y$ are solutions of $X^2-nX+1=0$ for $n\geq 3$ and $\dim(r_k)=k+1$ for $n=2$. Hence, it is easy to show by induction that if $\varphi$ is a monoidal equivalence between $SU_q(2)$ and $A_o(F)$ with $\dim(F)\geq 4$, then $\dim(\varphi(r_k))>\dim(r_k)=k+1$ for every irreducible representation $r_k$ with $k\geq 1$.
\end{remark}

Moreover, looking at the concrete orthogonal quantum group $SU_q(2)$,  it is possible to classify all compact quantum groups which are monoidally equivalent with $SU_q(2)$: indeed applying the result of the last paragraph to the specific situation of $F=F_q$, we know exactly what the quantum groups are which are monoidal equivalent with $SU_q(2)$. 

\begin{proposition}[\cite{Bichon2005}]\label{clasmoneqSUq}
Let $0<q\leq 1$. For every even natural number $n$ with $2\leq n\leq q+1/q$, there exists a monoidal equivalence on $SU_q(2)$ such that the multiplicity of the fundamental representation is $n$. Concretely, $SU_q(2)\sim_{mon} A_o(F)$ with 
$F=\left(\begin{array}{cc}0 & D(\lambda_1,\ldots,\lambda_{n/2}) \\-D(\lambda_1,\ldots,\lambda_{n/2})^{-1} & 0\end{array}\right)$ where $0< \lambda_1\leq \ldots\leq \lambda_{n/2}\leq1$ and $\sum_{i=1}^{n/2}\frac{1}{\lambda_i^2}+\lambda_i^2=q+1/q$.\\Ê\\
Let $0>q\geq -1$. Then for every natural number $n$ with $2\leq n\leq |q+1/q|$, there exists a monoidal equivalence on $SU_q(2)$ such that the multiplicity of the fundamental representation is $n$. Concretely, $SU_q(2)\sim_{mon} A_o(F)$ with 
$F=\left(\begin{array}{ccc}0 & D(\lambda_1,\ldots,\lambda_{k})&0 \\D(\lambda_1,\ldots,\lambda_{k})^{-1} & 0&0\\0&0&1_{n-2k}\end{array}\right)$ where $k\in \mathbb N, 2k\leq n,0< \lambda_1\leq \ldots\leq \lambda_{k}<1$ and $\sum_{i=1}^{k}\frac{1}{\lambda_i^2}+\lambda_i^2+n-2k=|q+1/q|$.\\Ê\\
\end{proposition}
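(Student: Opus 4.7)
The plan is to reduce the statement entirely to Theorem~\ref{moneqAoF} combined with the identification $SU_q(2)\cong A_o(F_q)$ stated just above and the fundamental domain described earlier in this subsection. The main point is that Theorem~\ref{moneqAoF} characterises the quantum groups monoidally equivalent to $A_o(F_1)$ as being of the form $A_o(F_2)$ for any $F_2\in\GL(n_2,\mathbb{C})$ with $F_2\overline{F_2}=c_2 I$ satisfying
\[
\frac{c_1}{\Tr(F_1^*F_1)}=\frac{c_2}{\Tr(F_2^*F_2)}.
\]
So after choosing the canonical form for $F_2$ in the fundamental domain, the whole problem boils down to a counting/parameter-matching exercise for the $\lambda_i$.

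First I would make the input side explicit: for $F_q$ as in Remark~\ref{remFq}, a direct computation gives $F_q\overline{F_q}=-\sgn(q)I_2$ and $\Tr(F_q^*F_q)=|q|+|q|^{-1}$, so $c_1/\Tr(F_1^*F_1)=-\sgn(q)/|q+1/q|$. Consulting the fundamental domain, the matrices $F$ with $F\overline{F}=-I$ are exactly those of the second family (even size $n$, block form $\bigl(\begin{smallmatrix}0 & D\\ -D^{-1} & 0\end{smallmatrix}\bigr)$), and the matrices with $F\overline{F}=+I$ are exactly those of the first family. This immediately explains the dichotomy in the statement: for $q>0$ only even $n$ arises (second family), for $q<0$ there is no parity constraint (first family). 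A direct block calculation shows $\Tr(F^*F)=\sum_i(\lambda_i^{-2}+\lambda_i^2)$ in the second family, and $\Tr(F^*F)=\sum_i(\lambda_i^{-2}+\lambda_i^2)+(n-2k)$ in the first. Equating the ratios $c/\Tr(F^*F)$ in each case yields precisely the two trace equations in the statement.

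The remaining task is purely combinatorial: given the trace equation, show the $\lambda_i$ exist in the prescribed range, and that the admissible range of $n$ is exactly $2\le n\le q+1/q$ (resp.\ $2\le n\le |q+1/q|$). For $q>0$ the function $\lambda\mapsto \lambda^2+\lambda^{-2}$ on $(0,1]$ attains $2$ at $\lambda=1$ and tends to $+\infty$ as $\lambda\to 0^+$, so each term contributes at least $2$ to the sum, giving $n\le q+1/q$; conversely, for any even $n$ in this range one can first fix $\lambda_2=\ldots=\lambda_{n/2}=1$ and solve $\lambda_1^2+\lambda_1^{-2}=q+1/q-(n-2)$ for $\lambda_1\in(0,1]$, which has a solution since the right-hand side is $\ge 2$. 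For $q<0$ the first family adds the freedom to choose $k$: when $n=|q+1/q|$ take $k=0$ (so $F=I_n$), and otherwise $k\ge 1$ and the same interpolation argument inside $(0,1)$ produces the required $\lambda_i$.

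The main obstacle I anticipate is not conceptual but bookkeeping: one must track sign conventions carefully (the sign of $c_i$, the sign of $\sgn(q)$ inside $F_q$, and the absolute values appearing once $q$ becomes negative) and match the cases of the fundamental domain with the two trace formulas. Once that is done, the identification of monoidally equivalent quantum groups with parametrised solutions of the trace equation is straightforward, and the admissible range of $n$ follows from the elementary analysis of $\lambda^2+\lambda^{-2}$ on $(0,1]$.
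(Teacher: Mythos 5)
Your proposal is correct and follows exactly the route the paper itself indicates: the paper gives no written proof but states that the classification follows from Theorem~\ref{moneqAoF} applied to $F_1=F_q$ together with the fundamental domain for $\sim$, which is precisely your reduction (the sign of $F_q\overline{F_q}=-\sgn(q)I_2$ selecting the block family, the trace condition $\Tr(F^*F)=|q+1/q|$, and the elementary range analysis of $\lambda^2+\lambda^{-2}$ on $(0,1]$). Your computations of $c_1$, $\Tr(F_q^*F_q)$ and the two trace formulas check out, so there is nothing to add.
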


%\begin{remark}
%Applying theorem \ref{moneqAoF} to the case of $SU_q(2)$ we see also that all unitary 2-cocycles on the dual of $SU_q(2)$ are coboundaries. 
%\end{remark}

\subsection{Monoidal deformation of the Podle\'s sphere}\label{subsec-mondefpodles}
In section \ref{cha-cocycle}, we proved that our monoidal deformation of spectral triples is a generalization of the cocycle deformation, developed in \cite{Goswami2014}. In this subsection, we will give a concrete example to prove that our construction is a genuine generalization: we will construct a monoidal deformation of the Podle\'s sphere (with spectral triple of Dabrowski, Landi, Wagner and D'Andrea \cite{Dabrowski2006}) which is not a 2-cocycle deformation. First we recapitulate the definition of the Podles sphere $S^{2}_{q,c}$ and the spectral triple on it. Then we will use the results of subsection \ref{subsec-monoidalSUQ2} to apply the construction of section \ref{sec-mondef}.

\subsubsection{The Podle\'s sphere, its spectral triple and its quantum isometry group}
The Podles sphere was initially constructed by Podle\'s in \cite{Podles1995} as follows. Let $q\in (0,1)$ and $t\in (0,1)$, hence $c=t^{-1}-t>0$. We define $\mathcal O(S^2_{q,c})$ to be the $^*$-algebra generated by elements $A,B$ which satisfy the relations

\begin{eqnarray*}
A^*=A, &\qquad&AB=q^{-2}BA,\\
B^*B=A-A^2+c1,&\qquad& BB^*=q^2A-q^4A^2+c1.\\
\end{eqnarray*}
You can see that for $q=1$, we have $A^*=A, AB=BA, B^*B=BB^*=A-A^2+c1$ and this is the classical sphere: putting $A=z+1/2, B=x+iy,r^2=c+1/4$, we indeed have
$$ x^2+y^2+z^2=B^*B+A^2-A+1/4=c+1/4=r^2.$$
The associated quantum space is called the Podles sphere $S^2_{q,c}$.

Note first that for $q\in(0,1)$, setting 
$$x_0=t(1-(1+q^2)A), x_{-1}=\frac{t(1+q^2)^{\frac{1}{2}}}{q}B, x_1=-t(1+q^2)^{\frac{1}{2}}B^*,$$ we see that the definition in \cite{Dabrowski2006} with $\{x_0,x_{-1},x_{1}\}$ is equivalent to the original definition of Podle\'s given above.
Moreover, defining 
$$\tilde A= \frac{1+t^{-1}q\gamma^*\alpha-t^{-1}\rho(1-(1+q^2)\gamma^*\gamma)+t^{-1}\gamma\alpha^*}{1+q^2}$$
$$\tilde B=\frac{q\alpha^2+\rho(1+q^2)\alpha\gamma-q^2\gamma^2}{t(1+q^2)},$$ where $\rho^2=\frac{q^2t^2}{(q^2+1)^2(1-t)}$,
one can prove that the unital $^*$-subalgebra of $C(SU_q(2))$ generated by $\tilde A$ and $\tilde B$ is isomorphic to $\mathcal O(S^2_{q,c})$ where $c=t^{-1}-t$, sending $A$ to $\tilde A$ and $B$ to $\tilde B$.\\
Doing as above, we have 3 equivalent descriptions of the Podles sphere.\\ \\
The spectral triple on $S^2_{q,c}$ we will use, is the spectral triple developed by Dabrowski, D'Andrea, Landi and Wagner in \cite{Dabrowski2006}.
The spectral triple uses the representation theory of $SU_q(2)$ described by Banica in \cite{Banica1996}. To be compatible with \cite{Dabrowski2006}, we use their notation. For each $n$ in $\{0,1/2,1,\ldots\}$, there exists a unique irreducible representation $D^n$ ($r_{2n}$ in Banica's notation) of dimension $2n+1$. For example , we have 
$$ D^{1/2}= \left(\begin{array}{cc}\alpha & -q\gamma^* \\\gamma & \alpha^*\end{array}\right)$$ and
$$D^1=\left(\begin{array}{ccc}{\alpha^*}^2 & -(q^2+1)\alpha^*\gamma & -q\gamma^2 \\\gamma^*\alpha^* & 1-(q^2+1)\gamma^*\gamma & \alpha\gamma \\-q{\gamma^*}^2 & -(q^2+1)\gamma^*\alpha & \alpha^2\end{array}\right).$$
Denoting $d^n_{k,l}$ to be the $k,l$-matrix coefficient of $D^n$, one can prove that $$\{d^n_{k,l}\mid n=0,\frac{1}{2},1,\ldots;k,l=-n,-n+1,\ldots,n-1,n\}$$ form an orthogonal basis of $\mathcal K=L^2(SU_q(2),h)$, the GNS-space corresponding to the Haar state $h$ of $SU_q(2)$. Moreover we will denote $e^n_{k,l}$ the multiples of $d^n_{k,l}$ such that the $\{e^n_{k,l}\}$ form an orthonormal basis of $\mathcal K$.

Furthermore, defining the new Hilbert space 
$$\mathcal{H}:=\big[e^n_{\pm\frac{1}{2},l}\mid n=\frac{1}{2},\frac{3}{2},\ldots;l=-n,-n+1,\ldots,n-1,n \big ],$$ one can prove that $\tilde A$ and $\tilde B$, as defined above, leave $\mathcal H$ invariant and we have a faithful $^*$-morphism $\pi:\mathcal O(S^2_{q,c})\to B(\mathcal H):A\mapsto \tilde A_{|_{\mathcal H}}, B\mapsto \tilde B_{|_{\mathcal H}}$, which makes it possible to identify $\mathcal O(S^2_{q,c})$ with its image.\\

Finally, we can define an appropriate Dirac operator by setting 
$$D(e^n_{\pm\frac{1}{2},l})=(c_1n+c_2)e^n_{\mp\frac{1}{2},l}$$ where $c_1,c_2\in \mathbb R, c_1\neq 0$ are arbitrary constants. \\
In \cite{Dabrowski2006} the authors prove that $(\mathcal O(S^2_{q,c}),\Hi, D)$ constitutes a well defined spectral triple.  As $$\Delta_{SU_q(2)}(e^n_{\pm\frac{1}{2},l})=\sum_{k=-n,-n+1,\ldots,n}e^n_{\pm\frac{1}{2},k}\otimes e^n_{k,l}$$ it is easy to see that $\Delta_{SU_q(2)}$ induces a unitary representation $U$ of $SU_q(2)$ on $\Hi$. By \cite{Dabrowski2006} the spectral triple is equivariant with respect to  this representation and hence, $SU_q(2)$ acts algebraically and by orientation-preserving isometries on  $(\mathcal O(S^2_{q,c}),\Hi, D)$. We will use this representation and the monoidal equivalences of subsection \ref{subsec-monoidalSUQ2} to deform  this spectral triple.

\subsubsection{Monoidal deformation of the Podle\'s sphere}
To conclude this section, we construct a non-dimension-preserving example. Now we know that there is a well defined spectral triple $(\mathcal O(S^2_{q,c}),\Hi,D)$ on which $SU_q(2)$ acts algebraically and by orientation-preserving isometries. Furthermore, we know from proposition \ref{clasmoneqSUq} what the monoidal equivalences of $SU_q(2)$ are and %by theorem \ref{indmoneq}, what those are of $SO_q(3)$. Finally, 
we know that those monoidal equivalences are non-dimension-preserving by remark \ref{nondimpreserving}. Putting all this together, we can apply the construction described in section \ref{sec-mondef} to get the following theorem.
\begin{theorem}\label{thm-defpodles}
Let $q\in [-1,1] \setminus\{0\}$ and $n$ a natural number with $3\leq n\leq |q+1/q|$.\\
If $q> 0$ and $n$ is even, let $\lambda_1,\ldots,\lambda_{n/2}$ be strict positive real numbers not bigger than 1 such that $\lambda_1^2+\ldots+\lambda_{n/2}^2+1/\lambda_1^2+\ldots+1/\lambda_{n/2}^2=q+1/q$ and define $F$ to be the $n$ by $n$ matrix $$F=\left(\begin{array}{cc}0 & D(\lambda_1,\ldots,\lambda_{n/2}) \\-D(\lambda_1,\ldots,\lambda_{n/2})^{-1} & 0\end{array}\right).$$ 
If $0>q$, let $k$ be a natural number $k\leq n/2$ and $\lambda_1,\ldots,\lambda_{k}$ be strict positive real numbers such that $0< \lambda_1\leq \ldots\leq \lambda_{k}<1$ and $\sum_{i=1}^{k}\frac{1}{\lambda_i^2}+\lambda_i^2+n-2k=|q+1/q|$ and define $F$ to be the $n$ by $n$ matrix
$$F=\left(\begin{array}{ccc}0 & D(\lambda_1,\ldots,\lambda_{k})&0 \\D(\lambda_1,\ldots,\lambda_{k})^{-1} & 0&0\\0&0&1_{n-2k}\end{array}\right).$$

With $F$ defined as above, there exists a non-dimension-preserving monoidal equivalence $\varphi$ from $SU_q(2)$ to $A_o(F)$ (introduced in definition \ref{defAoF}). Denoting by $\mathcal O(A_o(F_q,F))$ the algebra constructed in \ref{moneqAoF}, $\mathcal O(A_o(F_q,F))$ is the associated bi-Galois object and the following triplet is a spectral triple:
$$\big(\mathcal{O}(S_{q,c}^2)\boksdot[\mathcal{O}(SU_q(2))] \mathcal O(A_o(F_q,F)),\quad\Hi\bokstimes[C(SU_q(2))] L^2 \big(\mathcal O(A_o(F_q,F))\big), \quad\tilde D\big).$$ Moreover $A_o(F)$ acts algebraically and by orientation-preserving isometries on the new spectral triple. As $\varphi$ is non-dimension-preserving, it is not a 2-cocycle deformation \`a la Goswami-Joardar \cite{Goswami2014}.
\end{theorem}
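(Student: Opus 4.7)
The plan is to assemble the Main Theorem \ref{mainthm} with the classification results of Subsection \ref{subsec-monoidalSUQ2}, and then rule out the 2-cocycle case using a dimension count. First, given the data $q$, $n$, $(\lambda_i)$ satisfying the stated constraints, I would invoke Proposition \ref{clasmoneqSUq} to obtain a monoidal equivalence $\varphi:SU_q(2)\to A_o(F)$; Theorem \ref{moneqAoF} then immediately identifies the associated bi-Galois object as $\mathcal O(A_o(F_q,F))$, equipped with its canonical left/right coactions implemented by the universal unitary $Y$.

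The second step is to verify that the hypotheses of the Main Theorem are met and apply it with $\mathbb G_1=SU_q(2)$, $\mathbb G_2=A_o(F)$, $\B=\mathcal O(A_o(F_q,F))$. As recalled in Subsection \ref{subsec-mondefpodles}, the triple $(\mathcal O(S^2_{q,c}),\Hi,D)$ is the Dabrowski--D'Andrea--Landi--Wagner spectral triple, and the restriction of $\Delta_{SU_q(2)}$ to the spin subspace provides an equivariant unitary representation $U$ of $SU_q(2)$ commuting with $D$ and inducing an algebraic orientation-preserving isometric action. The Main Theorem then directly produces the deformed spectral triple via the formulas \eqref{mainequation} and the algebraic orientation-preserving isometric action of $A_o(F)$.

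Finally, to see the deformation is genuinely new, I would argue contrapositively: if $(\tilde\A,\tilde\Hi,\tilde D)$ were a Goswami--Joardar 2-cocycle deformation of $(\mathcal O(S^2_{q,c}),\Hi,D)$, then by Theorem \ref{maincocycle} together with the cocycle/fiber-functor dictionary established in Propositions \ref{cocycle-unfibfunct}, \ref{dimpresunfibfunctinducescocycle} and \ref{Omegainducessigma}, the equivalence $\varphi$ would have to be isomorphic to a dimension-preserving one. However, by Banica's formula recalled in Remark \ref{nondimpreserving}, the fundamental representation $r_1$ of $SU_q(2)$ has dimension $2$, while $\dim(\varphi(r_1))$ equals the dimension of the fundamental representation of $A_o(F)$, which is $n\geq 3$. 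This contradiction rules out the 2-cocycle case. The main obstacle, in my view, is this last step: Theorem \ref{maincocycle} only states one direction of the dictionary explicitly, so some care is needed to extract the converse from the cocycle/fiber-functor correspondence of Section \ref{cha-cocycle}; once this is granted, everything else reduces to bookkeeping atop Proposition \ref{clasmoneqSUq} and the Main Theorem.
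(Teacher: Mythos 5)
Your proposal is correct and follows essentially the same route as the paper, which likewise assembles Proposition \ref{clasmoneqSUq} and Theorem \ref{moneqAoF} for the monoidal equivalence and bi-Galois object, applies the Main Theorem \ref{mainthm} to the Dabrowski--D'Andrea--Landi--Wagner triple, and rules out the 2-cocycle case via non-dimension-preservation and the correspondence of Section \ref{cha-cocycle}. Your direct count on the fundamental representation ($\dim\varphi(r_1)=n\geq 3>2$) is in fact a slight improvement, since Remark \ref{nondimpreserving} as literally stated only covers $\dim(F)\geq 4$ while the theorem allows $n=3$.
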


\section{Deformation of the quantum isometry group}
The goal of this last section is to prove that the deformation (in the sense of theorem \ref{mainthm}) of the quantum isometry group of a spectral triple (defined by Bhowamick and Goswami) is the quantum isometry group of the deformed spectral triple. We start by recalling some concepts and results of \cite{Bhowmick2009}.

\begin{definition}[Definition 2.7 in \cite{Bhowmick2009}]\label{defRtwistedST}
An $R$-twisted spectral triple (of compact type) is given by a triple $(\A,\Hi,D)$ and an operator $R$ on $\Hi$ where
\begin{enumerate}
\item $(\A,\Hi,D)$ is a compact spectral triple,
\item $R$ is a positive (possibly unbounded) invertible operator such that $R$ commutes with $D$
%\item For all $s\in \mathbb R$, the map $a\mapsto \sigma_s(a)=R^{-s}aR^s$ gives an automorphism of $\A$ (not necessarily $*$-preserving) satisfying $\sum_{s\in [-n,n]}\|\sigma_s(a)\|<\infty$ for all positive integers $n$.
\end{enumerate}
\end{definition}
\begin{remark}
We note that in Definition 2.7 in \cite{Bhowmick2009}, there is a third condition in the definition of $R$-twisted spectral triple. However in remark 2.11 of \cite{Bhowmick2009}, the authors state that this third condition is not necessary. Therefore, we gave the definition above.
\end{remark}
Such an operator $R$ is linked with the preservation of a non-commutative analogue of a volume form.

\begin{definition}[\cite{Bhowmick2009}]
Let $R$ be a positive invertible operator and $(\A,\Hi,D)$ an $R$-twisted spectral triple. Then a compact quantum group $\mathbb G$ acting on $(\A,\Hi,D)$ by orientation-preserving isometries is said to preserve the $R$-twisted volume if one has
$$(\tau_R\otimes \id)(\alpha_U(x))=\tau_R(x)1_{\G}$$ 
for all $x\in \mathcal E_D$, where $\tau_R(x)=\Tr(Rx)$ and where $\mathcal E_D$ is the $^*$-subalgebra of $B(\Hi)$ generated by the rank-one operators of the form $\eta\xi^*$, $\eta,\xi$ eigenvectors of $D$.
\end{definition}

In what follows we will denote by $\mathcal Q_R(\A,\Hi,D)$ (or just $\mathcal Q_R$) the category of all compact quantum groups acting by $R$-twisted volume- and orientation-preserving isometries with as morphisms the morphisms of quantum groups which are compatible with the representations on $\Hi$.\\ \\
Moreover, one can prove (as is done in \cite{Goswami2004}) that for every compact quantum group acting by orientation-presering isometries, there exists an operator $R$ such that the quantum group is an elements of $\Q_R$. \\Ê\\
Now Goswami and Bhowmick proved in \cite{Bhowmick2009} that there exists a universal object in $\Q(\A,\Hi,D)$.

\begin{theorem}[Theorem 2.14 in \cite{Bhowmick2009}]\label{QISO0existence}
For any $R$-twisted spectral triple $(\A,\Hi,D)$ there exists a universal (initial) object $(\QISORnul,U_0)$ in the category $\Q_R$. The representation is faithful.
\end{theorem}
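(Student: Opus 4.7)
The plan is to follow the standard construction of a universal object by assembling it out of the finite-dimensional pieces provided by the spectrum of $D$: first solve the problem on each eigenspace of $D$ via Wang's universal quantum group, then take a CQG free product over all eigenspaces, and finally cut down by the relations expressing compatibility with the $\A$-action.

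First I would use the compactness of $(1+D^2)^{-1/2}$ to write $\Hi=\bigoplus_{\lambda\in\sigma(D)}V_\lambda$ with each $V_\lambda$ finite dimensional. For any $(\mathbb G,U)\in\Q_R$, the condition $U_\phi D=DU_\phi$ for every state $\phi$ forces $U$ to split as $\bigoplus_\lambda U^{(\lambda)}$ with each $U^{(\lambda)}\in B(V_\lambda)\otimes C(\mathbb G)$ a finite-dimensional unitary representation. Since $R$ commutes with $D$, it restricts to a positive invertible matrix $R_\lambda$ on $V_\lambda$, and a short calculation on the rank-one operators $\eta\xi^*$ spanning $\mathcal E_D$ shows that $R$-twisted volume preservation is equivalent to each $U^{(\lambda)}$ preserving the functional $b\mapsto\Tr(R_\lambda b)$ on $B(V_\lambda)$ for every $\lambda\in\sigma(D)$.

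Next I would invoke, for each $\lambda\in\sigma(D)$, Wang's universal compact quantum group $\mathbb H_\lambda$ carrying a unitary representation $W^{(\lambda)}$ on $V_\lambda$ that preserves $\Tr(R_\lambda\,\cdot\,)$; by construction every such representation factors uniquely through $\mathbb H_\lambda$. Taking the CQG free product $\widetilde{\mathbb H}=*_{\lambda\in\sigma(D)}\mathbb H_\lambda$ together with $\widetilde W=\bigoplus_\lambda W^{(\lambda)}$ then yields a universal object for unitary representations on $\Hi$ commuting with $D$ and preserving the $R$-twisted volume. To incorporate the remaining condition, namely $(\id\otimes\phi)\alpha_U(a)\in\A''$ for every $a\in\A$ and every state $\phi$, I would let $\mathcal J\subset C(\widetilde{\mathbb H})$ be the intersection of the kernels of all CQG morphisms $C(\widetilde{\mathbb H})\to C(\mathbb G)$ arising from objects $(\mathbb G,U)\in\Q_R$, and set $C(\QISORnul):=C(\widetilde{\mathbb H})/\mathcal J$. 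The induced representation $U_0$ is then faithful by construction, because the matrix coefficients of $\widetilde W$ already generate $C(\widetilde{\mathbb H})$ and the quotient inherits this generating property.

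The main obstacle will be showing that $\mathcal J$ is a Woronowicz $C^*$-ideal, so that the coproduct of $\widetilde{\mathbb H}$ descends to the quotient and $\QISORnul$ is genuinely a compact quantum group. One route is a direct verification that the comultiplication sends a suitable generating set of $\mathcal J$ into $\mathcal J\otimes C(\widetilde{\mathbb H})+C(\widetilde{\mathbb H})\otimes\mathcal J$, using coassociativity of $\alpha_{\widetilde W}$ together with the stability of the $\A$-invariance condition under ``composition'' of two valid actions; a more categorical alternative is to observe that $\Q_R$ is closed under the relevant products of morphisms, so the intersection of the corresponding Woronowicz ideals is automatically Woronowicz. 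Universality of $(\QISORnul,U_0)$ is then immediate from the construction.
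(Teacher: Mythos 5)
The paper does not prove this statement: it is quoted as Theorem 2.14 of Bhowmick--Goswami \cite{Bhowmick2009}, so there is no internal proof to compare against. Your outline is essentially the proof given in that reference --- eigenspace decomposition along $\sigma(D)$, Wang's universal quantum groups attached to the positive matrices $R_\lambda$ on each $V_\lambda$, a countable free product, and a quotient by the intersection of the Woronowicz $C^*$-ideals coming from the objects of $\Q_R$ --- and you correctly isolate the only delicate point, namely that this intersection is again a Woronowicz $C^*$-ideal, which is exactly the lemma Bhowmick and Goswami prove before deducing the theorem.
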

For notational convenience, we will write $\QISORnulb$ if there is no confusion possible about the spectral triple.
However, in general $\alpha_{U_0}$ may not be faithful even if  $U_0$ is so. Therefore one has the following definition.
\begin{definition}[Definition 2.16 in \cite{Bhowmick2009}]\label{QISO}
Let $\C=C^*(\{(f\otimes \id)\alpha_{U_0}(a)\mid a\in \A,f\in \A^*\})$ be the $C^*$-subalgebra of $C(\QISORnulb)$ generated by elements of the form $(f\otimes \id)\alpha_{U_0}(a), a\in \A$.

Then $\C$ is a Woronowicz $C^*$-subalgebra of $\QISORnulb$ and the compact quantum group $$\QISOR=(\C,{\Delta_{\QISORnulb}}_{|_{\C}})$$ is called the quantum group of $R$-twisted volume- and orientation-preserving isometries or simply quantum isometry group.
\end{definition}
In subsection \ref{subse-defQISO}, we will prove that if $(\A,\Hi,D)$ is an $R$-twisted spectral triple and $$\varphi:\QISOR\to\mathbb G_2$$ is a monoidal equivalence, then there exists an operator $\tilde R$ such that $(\tilde \A,\tilde \Hi,\tilde D)$ is an $\tilde R$-spectral triple and $\mathbb G_2=\QISORtilde$. %In subsection \ref{subsec-defQISO} we extend this result: if $(\A,\Hi,D)$ is an $R$-twisted spectral triple and if we have a monoidal equivalence $\varphi:\QISORnul\to\mathbb G_2$ then we also have monoidal equivalence $\varphi':\QISOR\to\mathbb H_2$ such that $\mathbb H_2=\QISORtilde$. 
But before we do that, we describe, given a monoidal equivalence $\varphi: \mathbb G_1\to \mathbb G_2$, how to construct a monoidal equivalence between certain Woronowicz-$C^*$-subalgebras (subsection \ref{subsec-inducing}) resp.  quantum supergroups (subsection \ref{subsec-inducingonsupergroups}) of $\mathbb G_1$ and $\mathbb G_2$.

\subsection{Inducing monoidal equivalences on Woronowicz-$C^*$-subalgebras}\label{subsec-inducing}

\begin{definition}[\cite{Baaj1993}]
Let $\mathbb{G}=(\G,\Delta)$ be a compact quantum group and $A$ a $C^*$-subalgebra of $\G$ such that $\Delta(A)\subset A\otimes A$ and $[\Delta_{|_A}(A)(A\otimes 1)]=A\otimes A=[\Delta_{|_A}(A)(1\otimes A)]$.
%\begin{itemize}
%\item 
Then $A$ is called a Woronowicz $C^*$-subalgebra. We will write $\mathbb A =(A,\Delta_{|_A})$ for the quantum group. 
%\item If moreover, $A=\GN$ for some quantum subgroup $\mathbb H$ and if one of the conditions in proposition \ref{propnsubg} is satisfied, we call $\mathbb{H}$ a normal quantum subgroup of $\mathbb{G}$ and $\mathbb{G}/\mathbb{H}$ a quantum quotient group.
%w\end{itemize}
\end{definition}

It is good to remark that the notion of compact quantum quotient group introduced in \cite{Wang1995} is a special case of a Woronowicz $C^*$-subalgebra. However it is still unknown whether all Woronowicz $C^*$-subalgebras are compact quantum quotient groups. 

%\subsection{representations of quotients of compact quantum groups}
In this section let $\mathbb G=(\G,\Delta)$ be a CQG and $A$ a Woronowicz $C^*$-subalgebra of $\mathbb G$.
In order to define a unitary fiber functor on $\mathbb A$, it is good to examine its representations. It is easy to see that every representation $U$ of $\mathbb A$ on a Hilbert space $\Hi$ is a representation of $\mathbb G$ and that every representation $V$ of $\mathbb G$ is a representation of $A$ if and only if $V\in \mathcal M(\mathcal K(\Hi)\otimes A)$. To distinguish, we will write $U_{\mathbb G}$ for a representation $U$ of $\mathbb A$ seen as representation of $\mathbb G$.
Moreover, we have the following proposition
\begin{proposition}
Let $U$ be a unitary representation of $\mathbb A$. Then $U$ is irreducible if and only if $U_{\mathbb G}$ is irreducible.
\end{proposition}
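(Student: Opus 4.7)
The statement reduces to showing that
\[
\Mor_{\mathbb A}(U,U) \;=\; \Mor_{\mathbb G}(U_{\mathbb G},U_{\mathbb G}),
\]
as subsets of $B(\Hi)$, from which irreducibility of $U$ and of $U_{\mathbb G}$ becomes the same condition. The plan is to observe that the unital inclusion $A\hookrightarrow \G$ induces, after tensoring with $\id_{\mathcal K(\Hi)}$, a unital (hence non-degenerate) inclusion $\mathcal K(\Hi)\otimes A\hookrightarrow \mathcal K(\Hi)\otimes \G$ of $C^*$-algebras, which extends canonically to an injective $^*$-homomorphism
\[
\iota:\mathcal M(\mathcal K(\Hi)\otimes A)\hookrightarrow \mathcal M(\mathcal K(\Hi)\otimes \G).
\]
Under $\iota$, the element $U$ is sent to $U_{\mathbb G}$ by the very definition of the latter, and an operator of the form $S\otimes 1_A$ with $S\in B(\Hi)$ is sent to $S\otimes 1_{\G}$ because $1_A=1_{\G}$.

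With this identification, checking the forward implication is immediate: if $(S\otimes 1_A)U=U(S\otimes 1_A)$ in $\mathcal M(\mathcal K(\Hi)\otimes A)$, applying $\iota$ gives $(S\otimes 1_{\G})U_{\mathbb G}=U_{\mathbb G}(S\otimes 1_{\G})$ in $\mathcal M(\mathcal K(\Hi)\otimes \G)$, so $S\in \Mor_{\mathbb G}(U_{\mathbb G},U_{\mathbb G})$. For the converse, I would use that $\iota$ is injective: if the intertwining identity for $S$ holds in $\mathcal M(\mathcal K(\Hi)\otimes \G)$, both sides of the corresponding equation in $\mathcal M(\mathcal K(\Hi)\otimes A)$ lie in $\iota\bigl(\mathcal M(\mathcal K(\Hi)\otimes A)\bigr)$ (since $U$ and $S\otimes 1_A$ do), and injectivity of $\iota$ forces the equation already at the level of $\mathcal M(\mathcal K(\Hi)\otimes A)$. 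Equivalently, using the Hilbert module picture described in Subsection~\ref{subsec-cqg}, $\Hi\otimes A$ is a closed $A$-submodule of $\Hi\otimes \G$ preserved by both $U_{\mathbb G}$ and $S\otimes 1_{\G}$ (the latter because $S\otimes 1_{\G}$ is just the ampliation of $S$), and restricting the intertwining identity to $\Hi\otimes A$ gives the required identity for $U$.

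Combining the two inclusions yields $\Mor_{\mathbb A}(U,U)=\Mor_{\mathbb G}(U_{\mathbb G},U_{\mathbb G})$, and the equivalence of irreducibility follows. The main (mild) obstacle is purely a bookkeeping one: making sure the passage from $\mathcal M(\mathcal K(\Hi)\otimes A)$ to $\mathcal M(\mathcal K(\Hi)\otimes \G)$ is handled cleanly, for which the unitality of the inclusion $A\hookrightarrow \G$ (so that $1_A=1_{\G}$) and the injectivity of the minimal tensor product with $\mathcal K(\Hi)$ are the essential ingredients; no deep input from the Woronowicz-subalgebra structure is needed for this proposition.
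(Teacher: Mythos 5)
Your proof is correct and follows exactly the paper's route: both reduce the statement to the equality $\Mor(U,U)=\Mor(U_{\mathbb G},U_{\mathbb G})$, which the paper simply declares ``directly clear'' and which you justify carefully via the injective extension of the unital inclusion $\mathcal K(\Hi)\otimes A\hookrightarrow\mathcal K(\Hi)\otimes \G$ to multiplier algebras. The extra detail is sound (injectivity of the extension follows since $\bar\iota(m)=0$ forces $mb=0$ for all $b\in\mathcal K(\Hi)\otimes A$), but it is elaboration rather than a different argument.
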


\begin{proof}
We know that $U$ resp. $U_{\mathbb G}$ is irreducible if and only if $\Mor(U,U)=\{T\in B(\Hi)|(T\otimes \id)U=U(T\otimes \id)\}$ resp. $\Mor(U_{\mathbb G},U_{\mathbb G})$ equals $\mathbb C 1_{B(\Hi)}$. As it is directly clear that $\Mor(U,U)=\Mor(U_{\mathbb G},U_{\mathbb G})$, the proposition is proved.
\end{proof}
Analogously as before, we will write $x_{\mathbb G}$ if we look at the equivalence class $x\in \Irred(\mathbb A)$ seen as equivalence class in $\Irred(\mathbb G)$. Using this proposition, the unitary fiber functor is easily made: let $\mathbb{G}_1$ be a compact quantum group and $\varphi: \mathbb{G}_1\to \mathbb{G}_2$ a monoidal equivalence between them. Suppose moreover that $A_1$ is a Woronowicz subalgebra of $\mathbb{G}_1$. Then we can construct a unitary fiber functor on $\mathbb A_1=(A_1,\Delta_{|_{A_1}})$ by restricting $\varphi$ to the representations of $\mathbb A$ and proof it is a monoidal equivalence between $\mathbb A_1$ and a compact quantum group $\mathbb A_2$ such that $C(\mathbb A_2)$ is a Woronowciz $C^*$-algebra of $\mathbb G_2$.

\begin{proposition}\label{indunifibfunct}
Let $\mathbb{G}_1$ be a compact quantum group, $ A_1$ a Woronowicz $C^*$-subalgebra of $\mathbb{G}_1$ and $\psi$ a unitary fiber functor on $\mathbb G_1$. Then there exists a unitary fiber functor $\psi'$ on $\mathbb A_1=(A_1,{\Delta_1}_{|_{A_1}})$  such that $\psi'(x)=\psi(x_{\mathbb G_1})$ for all $x\in \Irred(\mathbb A_1)$.
\end{proposition}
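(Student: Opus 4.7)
The plan is to realize the representation category of $\mathbb{A}_1$ as a full monoidal subcategory of that of $\mathbb{G}_1$ and then simply restrict $\psi$ to it. First I would invoke the preceding proposition to obtain a well-defined injection $\Irred(\mathbb{A}_1) \hookrightarrow \Irred(\mathbb{G}_1) : x \mapsto x_{\mathbb{G}_1}$, and set $\Hi_{\psi'(x)} := \Hi_{\psi(x_{\mathbb{G}_1})}$, so that in particular $\psi'$ sends the trivial class of $\mathbb{A}_1$ to $\mathbb{C}$.

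Next I would check that for any two unitary representations $U, V$ of $\mathbb{A}_1$ one has $\Mor(U, V) = \Mor(U_{\mathbb{G}_1}, V_{\mathbb{G}_1})$. This is immediate from the definition: the intertwining equation $(S \otimes 1)V = U(S \otimes 1)$ is an equality in $B(\Hi_V, \Hi_U) \otimes \mathcal{M}(A_1) \subseteq B(\Hi_V, \Hi_U) \otimes \mathcal{M}(\G_1)$, with $U$ and $U_{\mathbb{G}_1}$ literally the same element, and similarly for $V$. Likewise the tensor product $U \otimes V = U_{13}V_{23}$ is computed identically in $\mathcal{M}(\mathcal{K}(\Hi_U \otimes \Hi_V) \otimes A_1)$ and in $\mathcal{M}(\mathcal{K}(\Hi_U \otimes \Hi_V) \otimes \G_1)$. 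Using these identifications, I can define $\psi'$ on morphisms as the restriction of $\psi$, and each of the four identities listed in Remark \ref{unitfibfunctinducesmoneq} then holds for $\psi'$ automatically because it holds for $\psi$ applied to exactly the same morphisms.

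The one point that deserves dedicated attention is the non-degeneracy condition. For $b, c \in \Irred(\mathbb{A}_1)$ I need
\[\big[\psi'(S)\xi \,\big|\, a \in \Irred(\mathbb{A}_1),\ S \in \Mor(b \otimes c, a),\ \xi \in \Hi_{\psi'(a)}\big] = \Hi_{\psi'(b)} \otimes \Hi_{\psi'(c)}.\]
This reduces to the corresponding non-degeneracy for $\psi$ once I observe that the decomposition of $b_{\mathbb{G}_1} \otimes c_{\mathbb{G}_1}$ into $\mathbb{G}_1$-irreducibles is implemented by projections in $\Mor(b \otimes c, b \otimes c)$, which by the above equals $\Mor(b_{\mathbb{G}_1} \otimes c_{\mathbb{G}_1}, b_{\mathbb{G}_1} \otimes c_{\mathbb{G}_1})$. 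Hence each irreducible $\mathbb{G}_1$-summand of $b_{\mathbb{G}_1} \otimes c_{\mathbb{G}_1}$ is automatically a subrepresentation of $\mathbb{A}_1$ and is of the form $a_{\mathbb{G}_1}$ for some $a \in \Irred(\mathbb{A}_1)$. So the index set of irreducibles and the morphism spaces on the left match between the two non-degeneracy statements, and the $\psi'$-version follows from the $\psi$-version applied to the pair $(b_{\mathbb{G}_1}, c_{\mathbb{G}_1})$.

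I do not foresee any real obstacle. The entire content is the observation that $\mathrm{Rep}(\mathbb{A}_1)$ embeds as a full, semisimple, monoidal subcategory of $\mathrm{Rep}(\mathbb{G}_1)$ which is closed under subobjects, after which one simply pulls $\psi$ back along this embedding; the only care needed is bookkeeping between classes in $\Irred(\mathbb{A}_1)$ and their images in $\Irred(\mathbb{G}_1)$.
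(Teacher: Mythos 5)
Your proposal is correct and follows essentially the same route as the paper's (very terse) proof: define $\Hi_{\psi'(x)}=\Hi_{\psi(x_{\mathbb G_1})}$ and let $\psi'$ act on morphisms as the restriction of $\psi$, the axioms being inherited because $\Mor(U,V)=\Mor(U_{\mathbb G_1},V_{\mathbb G_1})$. You actually supply more detail than the paper does, in particular the verification of the non-degeneracy condition via the observation that every irreducible $\mathbb G_1$-summand of $b_{\mathbb G_1}\otimes c_{\mathbb G_1}$ is of the form $a_{\mathbb G_1}$ with $a\in\Irred(\mathbb A_1)$, which is exactly the point the paper glosses over.
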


\begin{proof}
Let $x\in \Irred(\mathbb A_1)$. Define $\Hi_{\psi'(x)}$ to be $\Hi_{\psi(x_\mathbb{G})}$ and $\psi'(S)=\psi(S)$ for every $S\in \Mor(y_1\otimes\ldots\otimes y_k,x_1\otimes\ldots\otimes x_r), y_1,\ldots,y_k,x_1,\ldots,x_r\in \Irred(\mathbb{A}_1)$. As $\psi$ is a unitary fiber functor, $\psi'$ will satisfy all the necessary conditions to be a unitary fiber functor as well. 
\end{proof}
Denoting by $\varphi :\mathbb G_1 \to \mathbb G_2$ the monoidal equivalence associated to $\psi$, we can see $C(\mathbb{G}_2)$ as the $C^*$-algebra generated (as vector space) by the coefficients of the $U^{\varphi(x)}, x\in \Irred(\mathbb{G}_1)$. Now we can define $A_2$ as the $C^*$-algebra generated (as vector space) by the coefficients of the $U^{\varphi(x_{\mathbb G_1})}, x\in \Irred(\mathbb A_1)$. Equivalently, $$A_2=[ (\omega\otimes \id)U^{\varphi(x_{\mathbb G_1})}| x\in \Irred(\mathbb A_1)]$$ and we also write 
$$\mathcal A_2=\langle(\omega\otimes \id)U^{\varphi(x_{\mathbb G_1})}| x\in \Irred(\mathbb A_1)\rangle$$

Now it is clear that $\psi'$ induces a monoidal equivalence $\varphi'$ between $\mathbb A_1$ and a compact quantum group with algebra $A_2$.

\begin{theorem}\label{indmoneq}
With the map $\Delta'_2=\Delta_{2|_{A_2}}$, $\mathbb{A}_2=(A_2,\Delta'_2)$ is a compact quantum group. Moreover the monoidal equivalence $\varphi'$, induced by $\psi$ is an equivalence between $\mathbb A_1$ and $\mathbb{A}_2$.
\end{theorem}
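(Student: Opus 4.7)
The plan is to establish the claim in two stages. First, verify that $\mathbb{A}_2=(A_2,\Delta'_2)$ is indeed a compact quantum group, i.e.\ that $A_2$ is a Woronowicz $C^*$-subalgebra of $\mathbb G_2$. Second, identify the CQG produced by applying Proposition \ref{unfibfunctismoneq} to the fiber functor $\psi'$ of Proposition \ref{indunifibfunct} with $\mathbb A_2$; from this the promised monoidal equivalence $\varphi':\mathbb A_1\to\mathbb A_2$ follows immediately.

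For the first stage, I would exploit closure properties of $\Irred(\mathbb A_1)\subset \Irred(\mathbb G_1)$. Since $\mathbb A_1$ is a Woronowicz subalgebra, the family $\{x_{\mathbb G_1}:x\in\Irred(\mathbb A_1)\}$ is closed in $\Irred(\mathbb G_1)$ under tensor products and contragredients (a consequence of $\Delta_1(A_1)\subset A_1\otimes A_1$ together with Peter--Weyl). The monoidal equivalence $\varphi:\mathbb G_1\to\mathbb G_2$ preserves fusion rules and contragredients (the latter being characterised through Mor-spaces), so $\{\varphi(x_{\mathbb G_1}):x\in\Irred(\mathbb A_1)\}\subset\Irred(\mathbb G_2)$ enjoys the same closure. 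Then the representation relation $(\id\otimes\Delta_2)U^{\varphi(x_{\mathbb G_1})}=U^{\varphi(x_{\mathbb G_1})}_{12}U^{\varphi(x_{\mathbb G_1})}_{13}$ shows $\Delta_2(\mathcal A_2)\subset \mathcal A_2\odot\mathcal A_2$, and closure of the index set under tensor products and contragredients promotes $\mathcal A_2$ to a Hopf $^*$-subalgebra of $\mathcal O(\mathbb G_2)$. Taking $C^*$-closure yields the cancellation conditions $[\Delta_2(A_2)(A_2\otimes 1)]=A_2\otimes A_2=[\Delta_2(A_2)(1\otimes A_2)]$, making $\mathbb A_2$ a CQG.

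For the second stage, Proposition \ref{unfibfunctismoneq} applied to $\psi'$ produces a unique universal CQG $\mathbb B$ with Hopf $^*$-algebra generated by matrix coefficients of unitary representations $U^{\psi'(x)}$ satisfying
\[
U^{\psi'(y)}_{13}U^{\psi'(z)}_{23}(\psi'(S)\otimes 1)=(\psi'(S)\otimes 1)U^{\psi'(x)},\qquad S\in\Mor_{\mathbb A_1}(y\otimes z,x),
\]
and whose matrix coefficients form a linear basis. I would prove $\mathbb B\cong \mathbb A_2$. Setting $U^{\psi'(x)}:=U^{\varphi(x_{\mathbb G_1})}$, viewed as representation of $\mathbb A_2$ (its matrix coefficients lying in $A_2$ by definition), the full-subcategory identity $\Mor_{\mathbb A_1}(y\otimes z,x)=\Mor_{\mathbb G_1}(y_{\mathbb G_1}\otimes z_{\mathbb G_1},x_{\mathbb G_1})$ together with $\psi'(S)=\psi(S)$ shows that the intertwining relations guaranteed for $\mathbb G_2$ by Proposition \ref{unfibfunctismoneq} applied to $\psi$ restrict to exactly the relations defining $\mathbb B$. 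By construction the coefficients of the $U^{\varphi(x_{\mathbb G_1})}$ form a linear basis of $\mathcal A_2$, and the uniqueness statement in Proposition \ref{unfibfunctismoneq} then gives $\mathbb B\cong\mathbb A_2$. Transporting the monoidal equivalence $\mathbb A_1\to\mathbb B$ across this isomorphism, one obtains $\varphi':\mathbb A_1\to\mathbb A_2$ with $\varphi'(x)$ represented by $U^{\varphi(x_{\mathbb G_1})}$.

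The main subtle point is the identification $\mathbb B\cong\mathbb A_2$ via universality: it works because representations of $\mathbb A_1$ form a full monoidal subcategory of those of $\mathbb G_1$, so no Mor-spaces are lost when restricting from $\mathbb G_1$ to $\mathbb A_1$ and the defining relations for $\mathbb B$ are nothing more than the subset of the relations defining $\mathbb G_2$ that involve only the representations $\{U^{\varphi(x_{\mathbb G_1})}\}$. Without this fullness one would only obtain a surjection $\mathbb B\twoheadrightarrow\mathbb A_2$ and would need an additional argument (e.g.\ using the Haar state of $\mathbb A_2$ inherited from $\mathbb G_2$) to upgrade it to an isomorphism.
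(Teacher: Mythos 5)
Your proposal is correct and follows essentially the same route as the paper: show that the matrix coefficients of the $U^{\varphi(x_{\mathbb G_1})}$, $x\in\Irred(\mathbb A_1)$, span a dense Hopf $^*$-subalgebra $\mathcal A_2$ of $\OGb$ (so that $A_2$ is a Woronowicz $C^*$-subalgebra and hence a CQG), and then identify the CQG produced by $\psi'$ with $\mathbb A_2$. The paper's own proof is terser --- it does not spell out the closure of $\{\varphi(x_{\mathbb G_1})\}$ under fusion and contragredients, nor the universality argument for $\mathbb B\cong\mathbb A_2$, both of which you supply correctly --- but the underlying argument is the same.
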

\begin{proof}
Written differently, $A_2$ is the closed linear span of the elements $u^{\varphi(x_{\mathbb G_1})}_{ij}, x\in \Irred(\mathbb{A}_1)$. For $\Delta'_2$ defined as above, we get: 
$$\Delta'_2(u^{\varphi(x_{\mathbb G_1})}_{ij})=\sum_k u^{\varphi(x_{\mathbb G_1})}_{ik}\otimes u^{\varphi(x_{\mathbb G_1})}_{kj}$$ and as $x\in\Irred(\mathbb{A}_1)$, we see that $\Delta'_2(A_2)\subset A_2\otimes A_2$. Now denote by $\varepsilon'$ and $S'$ the restrictions of the counit $\varepsilon $ and antipode $S$ of $\mathbb{G}_2$ defined on $\OGb$ to $\A_2$. Then $\A_2=\langle (\omega\otimes \id)U^{\varphi(x_{\mathbb G_1})}| x\in \Irred(\mathbb{A}_1)\rangle=\mathcal{O}(\mathbb{A}_2)$ is a Hopf $^*$-algebra which is dense in $A_2$. This proves that $\mathbb A_2=(A_2,\Delta'_2)$ is indeed a compact quantum group. By construction of $\varphi'$, it is evident that it is a monoidal equivalence between $\mathbb{A}_1$ and $\mathbb{A}_2$.
\end{proof}
Before we go the next paragraph, we want to explore how the $\mathbb G_1-\mathbb G_2$-bi-Galois object behaves with respect to the $\mathbb A_1-\mathbb A_2$-bi-Galois object .

\begin{theorem}\label{thm-galoisobjectquotient}
Let $\mathbb{G}_1,\mathbb{G}_2,\mathbb{A}$ be compact quantum groups such that $C(\mathbb A)$ is a Woronowicz $C^*$-subalgebra of $\Ga$ and such that $\varphi: \mathbb{G}_1\to\mathbb{G}_2$ is a monoidal equivalence. Let $\B$ be the $\mathbb{G}_1-\mathbb{G}_2$-bi-Galois object with $\beta_1:\B\to \OGa\odot\B $. Let $\varphi'$ be the monoidal equivalence between $\mathbb{A}_1$ and $\mathbb{A}_2$ as defined above and define $\B'$ to be the $\mathbb{A}_1-\mathbb{A}_2$-bi-Galois object with $\gamma_1:\B'\to \mathcal{O}(\mathbb A_1)\odot \B'$. Then we have
$$\B'=\{b\in \B| \beta_1(b)\in  \mathcal{O}(\mathbb A_1)\odot\B\}$$ and $\gamma_1=\beta_{1|_{\B'}}$.
\end{theorem}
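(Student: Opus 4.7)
The plan is to invoke the universal characterization in Theorem~\ref{monoidalGalois}(1) of the $\mathbb{A}_1$--$\mathbb{A}_2$ bi-Galois object. I would define
\[
\mathcal{C} := \{b\in\B\mid \beta_1(b)\in \mathcal{O}(\mathbb{A}_1)\odot\B\}
\]
and show that $(\mathcal{C},\omega|_{\mathcal{C}},(X^{y_{\mathbb{G}_1}})_{y\in \Irred(\mathbb{A}_1)})$ satisfies the three defining properties of the bi-Galois object associated to $\varphi'$. Since $\beta_1$ is a $^*$-algebra map and $\mathcal{O}(\mathbb{A}_1)\odot\B$ is a $^*$-subalgebra of $\OGa\odot \B$, it is immediate that $\mathcal{C}$ is a unital $^*$-subalgebra of $\B$. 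Faithfulness of $\omega|_\mathcal{C}$ is automatic from faithfulness of $\omega$. The intertwiner relation (a) of Theorem~\ref{monoidalGalois} passes from $\varphi$ to $\varphi'$ because $\Mor(y\otimes z,x)$ computed in $\mathbb{A}_1$ is the same as in $\mathbb{G}_1$ once $y,z,x\in\Irred(\mathbb{A}_1)$, and $\varphi'(S)=\psi(S)=\varphi(S)$ by construction.

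Next I would verify that the matrix coefficients of $X^{y_{\mathbb{G}_1}}$ for $y\in\Irred(\mathbb{A}_1)$ (i) lie in $\mathcal{C}$ and (ii) form a basis of $\mathcal{C}$. For (i), from $(\id\otimes\beta_1)(X^{y_{\mathbb{G}_1}})=U^{y_{\mathbb{G}_1}}_{12}X^{y_{\mathbb{G}_1}}_{13}$ and the fact that, by the construction of $\mathbb{A}_1$ as Woronowicz subalgebra, the matrix coefficients of $U^{y_{\mathbb{G}_1}}$ lie in $\mathcal{O}(\mathbb{A}_1)$ precisely when $y\in\Irred(\mathbb{A}_1)$, we get $\beta_1(X^{y_{\mathbb{G}_1}}_{ij})\in \mathcal{O}(\mathbb{A}_1)\odot\B$. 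For (ii), one uses that $\{X^x_{ij}\}_{x\in\Irred(\mathbb{G}_1),i,j}$ is a basis of $\B$: writing $b=\sum_{x,i,j}c^x_{ij}X^x_{ij}$, the identity $\beta_1(b)=\sum_{x,i,j,k}c^x_{ij}U^x_{ik}\otimes X^x_{kj}$ combined with the linear independence of the $U^x_{ik}$ across different $x\in\Irred(\mathbb{G}_1)$ forces $c^x_{ij}=0$ whenever $x\notin\Irred(\mathbb{A}_1)$, because $\mathcal{O}(\mathbb{A}_1)$ is exactly the span of matrix coefficients of reps in $\Irred(\mathbb{A}_1)$. This is the main technical step. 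Condition (c) on the state is trivial: $(\id\otimes\omega)(X^{y_{\mathbb{G}_1}})=0$ for $y\neq\varepsilon$ holds in $\B$ already. Unitarity of the $X^{y_{\mathbb{G}_1}}$ as elements of $B(\Hi_{\varphi'(y)},\Hi_y)\odot\mathcal{C}$ follows because $X^{y_{\mathbb{G}_1}}{X^{y_{\mathbb{G}_1}}}^*=1$ and ${X^{y_{\mathbb{G}_1}}}^*X^{y_{\mathbb{G}_1}}=1$ hold in $B(\ldots)\odot\B$, and both $X^{y_{\mathbb{G}_1}}$ and its adjoint have entries in $\mathcal{C}$ by step (i) and its $^*$-closedness.

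By the uniqueness statement in Theorem~\ref{monoidalGalois}(1) applied to $\varphi':\mathbb{A}_1\to\mathbb{A}_2$, the triple $(\mathcal{C},\omega|_\mathcal{C},(X^{y_{\mathbb{G}_1}}))$ is canonically isomorphic to $(\B',\omega',(X'^y))$; under this identification $\mathcal{C}=\B'$ inside $\B$. Finally, the equality $\gamma_1=\beta_1|_{\B'}$ is then forced: the restriction $\beta_1|_\mathcal{C}$ lands in $\mathcal{O}(\mathbb{A}_1)\odot\mathcal{C}$ by the very definition of $\mathcal{C}$ and step (i), and it satisfies $(\id\otimes\beta_1|_\mathcal{C})(X^{y_{\mathbb{G}_1}})=U^{y_{\mathbb{G}_1}}_{12}X^{y_{\mathbb{G}_1}}_{13}$, which is exactly the defining property of $\gamma_1$ on the generators $X'^y$. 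The main obstacle is the basis argument in step (ii); everything else is bookkeeping around the universal property.
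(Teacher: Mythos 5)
Your proof is correct, and it is organized differently from the paper's. The paper reaches into the \emph{internal} construction of the bi-Galois object from the original proof of Theorem~3.9 in Bichon--De Rijdt--Vaes: it realizes $\B=\oplus_{x\in\Irred(\mathbb G_1)}B(\Hi_{\varphi(x)},\Hi_x)^*$ and $\B'=\oplus_{x\in\Irred(\mathbb A_1)}B(\Hi_{\varphi(x)},\Hi_x)^*$ as vector spaces, so that $\B'\subset\B$ and $X^x=X^{x_{\mathbb G_1}}$ hold by inspection, and then checks the coaction condition on the generators in both directions. You instead stay entirely at the level of the \emph{statement} of Theorem~\ref{monoidalGalois}: you define the candidate $\mathcal C$ by the coaction condition, verify properties 1(a)--(c) for $(\mathcal C,\omega|_{\mathcal C},(X^{y_{\mathbb G_1}})_y)$, and let the uniqueness clause produce the identification $\mathcal C\cong\B'$, after which $\gamma_1=\beta_1|_{\B'}$ follows from the uniqueness of the coaction in part~2. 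This buys you a more self-contained argument (no appeal to the internals of the cited proof), at the cost of having to carry out the spanning argument in your step~(ii). That step is in fact the place where your write-up is \emph{more} complete than the paper's: the paper's final sentence (``for $b\in\B$ but $b\notin\B'$, $\beta_1(b)\notin\mathcal O(\mathbb A_1)\odot\B$'') silently needs exactly the expansion $b=\sum c^x_{ij}X^x_{ij}$ and the linear independence of the $U^x_{ik}$ over $x\in\Irred(\mathbb G_1)$ together with the linear independence of the $X^x_{kj}$, which you spell out. Both proofs rest on the same two facts --- that the matrix coefficients of the $X^x$ form a basis of $\B$, and that $\mathcal O(\mathbb A_1)$ is precisely the span of matrix coefficients of representations in $\Irred(\mathbb A_1)$ --- so the mathematical content coincides; only the packaging differs.
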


\begin{proof}
From the original proof of theorem \ref{monoidalGalois} (which is theorem 3.9 in \cite{Bichon2005}), we know that $\B'=\oplus_{x\in \Irred(\mathbb A_1)}B(\Hi_{\varphi(x)},\Hi_x)^*$ and $\B=\oplus_{x\in \Irred(\mathbb G_1)}B(\Hi_{\varphi(x)},\Hi_x)^*$. Hence $\B'\subset \B$. Also, $X^x\in B(\Hi_{\varphi(x)},\Hi_x)\odot \B$ is defined such that $(\omega_x\otimes \id)(X^x)=(\delta_{x,y}\omega_x)_{y\in \Irred(\mathbb G_1)}$ for all $\omega_x\in B(\Hi_{\varphi(x)},\Hi_x)^*$. By definition, we see that for $x\in  \Irred(\mathbb A_1)$, $X^x=X^{x_{\mathbb G_1}}$. As $\beta_1$ resp. $\gamma_1$ are defined by $(\id\otimes \beta_1)(X^x)=U_{12}^xX^x_{13}$ ($x\in \Irred(\mathbb G_1)$) resp. $(\id\otimes \gamma_1)(X^x)=U_{12}^xX^x_{13}$ ($x\in \Irred(\mathbb A_1))$, it follows directly that $\gamma_1=(\beta_1)_{|_{\B'}}$. Moreover, if $x\in \Irred(\mathbb A_1)$, $U_{12}^xX^x_{13}\in B(\Hi_{\varphi(x)},\Hi_x)\odot\mathcal{O}(\mathbb A_1)\odot\B$ and hence for $b\in \B', \beta_1(b)\in  \mathcal{O}(\mathbb A_1)\odot\B$. If  $x\in \Irred(\mathbb G_1)$ but $x\notin \Irred(\mathbb A_1)$, $U_{12}^xX^x_{13}\notin B(\Hi_{\varphi(x)},\Hi_x)\odot\mathcal{O}(\mathbb A_1)\odot\B$ and hence for $b\in \B$ but $b\notin \B', \beta_1(b)\notin  \mathcal{O}(\mathbb A_1)\odot\B$. This concludes the proof.
%is generated as vector space by the matrix coefficients of the $X^y$, $y\in \Irred(\mathbb A_2)$. Moreover, $(\id\otimes \beta'_1)X^y=U^{\varphi'(y)}_{12}X^y_{13}$ for all $y\in \Irred(\mathbb G_2)$, which implies that $(\id\otimes \beta'_1)X^y\in B(\Hi_y)\odot \mathcal C(\mathbb A_1)\odot \B$ if and only if $y\in \Irred(\mathbb A_2)$. Therefore  $\beta_1'(\B')\in \mathcal C(\mathbb A_1)\odot\B$. 
%Furthermore, for $b\in \B\setminus \B'$, $\beta_1(b)\notin \mathcal C(\mathbb A_1)$ as $(\id\otimes \beta_1)X^x=U^{x}_{12}X^x_{13}$ for all $x\in \Irred(\mathbb G_1)$ and hence $\B'=\{b\in \B| \beta_1(b)\in  \mathcal C(\mathbb A_1)\odot\B\}$. 
\end{proof}
%The above theorem reaffirms the equivalence between the algebraic description given in chapter \ref{cha-algdef} theorem \ref{indgalobj} and the analytic description, given in this chapter and theorem \ref{indmoneq}.

\begin{remark}
In the special case of compact quantum quotient groups, a compact quantum quotient group of $\mathbb G_1$ will be monoidally equivalent with a compact quantum group which has as algebra a Woronowicz $C^*$-subalgebra of $\mathbb G_2$. Whether that compact quantum group is a compact quantum quotient group as well is still unknown \cite{Wang1995}. 
\end{remark}

\subsection{Inducing monoidal equivalences on supergroups}\label{subsec-inducingonsupergroups}
In this subsection we describe, given a monoidal equivalence $\varphi: \mathbb G_1\to \mathbb G_2$, how to construct a monoidal equivalence between certain quantum supergroups of $\mathbb G_1$ and $\mathbb G_2$. \\Ê
So, let $\mathbb G_1$ and $\mathbb G_2$ be two compact quantum groups and let $\varphi: \mathbb G_1\to \mathbb G_2$ be a monoidal equivalence. Moreover suppose $\mathbb G_1$ is a compact quantum subgroup of a compact quantum group $\mathbb H_1$. As we have done in subsection \ref{subsec-inducing} for Woronowicz $C^*$-subalgebras, we will describe a method to construct a unitary fiber functor on $\mathbb H_1$ from the monoidal equivalence $\varphi$. \\\\
Let $\pi: C_u(\mathbb H_1)\to C_u(\mathbb G_1)$ be the surjective morphism which is compatible with the quantum group structure. Now note that for a representation $U$ of $\mathbb H$ on a Hilbert space $\Hi$, $(\id_{\Hi}\otimes \pi)U$ is a representation of $\mathbb G_1$. Therefore, for $x\in \Irred(\mathbb H_1)$ define $x_{\mathbb G_1}$ to be the equivalence class of $(\id\otimes \pi)U^x$ as representation of $\mathbb G_1$ and

\begin{itemize}
\item if $(\id\otimes \pi)U^x$ is irreducible, let $\Hi_{x_{\mathbb G_1}}=\Hi_x$;
\item If $(\id\otimes \pi)U^x$ is reducible, say $(\id\otimes \pi)U^x=\oplus_{i=1}^n U^{y_i}$, $y_i\in \Irred(\mathbb G_1)$, then let $\Hi_{x_{\mathbb G_1}}=\oplus_{i=1}^n \Hi_{y_i}$.
\end{itemize}
If $x^1, \ldots, x^r,y^1, \ldots,y^s$ are classes of irreducible representations of $\mathbb H_1$ with $U^{x^i_{\mathbb G_1}}=\oplus_{j_i} U^{z^i_{j_i}}$ and $U^{y^i_{\mathbb G_1}}=\oplus_{k_i} U^{t^i_{k_i}}$, we denote for a morphism $S\in \Mor(x_1\otimes \ldots\otimes x_r,y_1\otimes \ldots,y_s)$, $S_{\mathbb G_1}=\bigoplus_{j_1,\ldots,j_r,k_1,\ldots k_s} S^{j_1,\ldots,j_r}_{k_1,\ldots k_s}$ to be the morphism $S$ but seen as element of $\bigoplus_{j_1,\ldots,j_r,k_1,\ldots k_s} \Mor(z^1_{k_1}\otimes \ldots\otimes z^s_{k_s},t^1_{j_1}\otimes \ldots\otimes t^r_{j_r})$, i.e. $S^{j_1,\ldots,j_r}_{k_1,\ldots k_s}\in\Mor(z^1_{k_1}\otimes \ldots\otimes z^s_{k_s},t^1_{j_1}\otimes \ldots\otimes t^r_{j_r})$.

Then we can define the following map:
\begin{proposition}\label{inducedunitfinfunctonsupergroup}
Let $\mathbb G_1, \mathbb G_2,\mathbb H_1$ and $\varphi$ be as above. For $x\in \Irred(\mathbb H_1)$ with $U^{x_{\mathbb G_1}}=(\id\otimes \pi)U^x=\oplus_{i=1}^n U^{y_i}$, $y_i\in \Irred(\mathbb G_1)$ define $\Hi_{\psi'(x)}=\oplus_{i=1}^n\Hi_{\varphi(y_i)}$ and for $S\in \Mor(x_1\otimes \ldots\otimes x_r,y_1\otimes \ldots,y_s)$ with $S_{\mathbb G_1}=\bigoplus_{j_1,\ldots,j_r,k_1,\ldots k_s} S^{j_1,\ldots,j_r}_{k_1,\ldots k_s}$, let $\psi'(S)=\bigoplus_{j_1,\ldots,j_r,k_1,\ldots k_s} \varphi(S^{j_1,\ldots,j_r}_{k_1,\ldots k_s}).$ Then the collection of maps 
$$\Hi_x\mapsto \Hi_{\psi'(x)} \qquad S\in \Mor(x_1\otimes \ldots\otimes x_r,y_1\otimes \ldots,y_s)\mapsto \psi'(S) $$ constitutes a unitary fiber functor $\psi'$ on $\mathbb H_1$.
 \end{proposition}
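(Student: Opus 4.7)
The strategy is to present $\psi'$ as the composition of two tensor $*$-functors: first the ``pushforward'' along the surjection $\pi$, then the unitary fiber functor $\psi$ on $\mathbb G_1$ that is associated to $\varphi$ via Proposition \ref{unfibfunctismoneq}.

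First I would make the pushforward precise. For $x\in \Irred(\mathbb H_1)$, applying $\id\otimes\pi$ to $U^x\in B(\Hi_x)\otimes C(\mathbb H_1)$ produces a representation of $\mathbb G_1$ on $\Hi_x$ which decomposes as $\oplus_i U^{y_i}$ with $y_i\in \Irred(\mathbb G_1)$; this decomposition fixes an isometric identification $\Hi_x \cong \oplus_i \Hi_{y_i}$, so $x_{\mathbb G_1}$ is well-defined as a (reducible) representation of $\mathbb G_1$. Given $S\in \Mor_{\mathbb H_1}(x_1\otimes\cdots\otimes x_r, y_1\otimes\cdots\otimes y_s)$, the equation $(S\otimes 1)U^{x_1}_{1,r+1}\cdots = U^{y_1}_{1,r+1}\cdots (S\otimes 1)$ survives the application of $\id\otimes\pi$, so $S$, regarded as an operator between the spaces $\Hi_{x_i}$ (now decomposed along $\mathbb G_1$-irreducibles), lies in $\Mor_{\mathbb G_1}((x_1)_{\mathbb G_1}\otimes\cdots,(y_1)_{\mathbb G_1}\otimes\cdots)$; decomposing against the fixed direct sums gives $S_{\mathbb G_1}=\bigoplus S^{j_1,\ldots,j_r}_{k_1,\ldots,k_s}$ with each summand in $\Mor_{\mathbb G_1}(z^1_{k_1}\otimes\cdots\otimes z^s_{k_s},t^1_{j_1}\otimes\cdots\otimes t^r_{j_r})$. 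Because $\pi$ is a unital $*$-homomorphism and the decompositions are fixed once and for all, the assignment $S\mapsto S_{\mathbb G_1}$ preserves identities, compositions, adjoints and tensor products.

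Next I would extend $\psi$ from irreducibles to their direct sums in the only possible compatible way: if $a=\oplus_\alpha a_\alpha$ with $a_\alpha\in \Irred(\mathbb G_1)$, set $\Hi_{\psi(a)}=\oplus_\alpha \Hi_{\varphi(a_\alpha)}$, and for any morphism $T$ between such direct sums write $T$ as a matrix $(T_{\alpha\beta})$ of morphisms between irreducibles and define $\psi(T)=(\varphi(T_{\alpha\beta}))$. The coherence conditions of Definition \ref{defunitfiberfunctor} for this extended $\psi$ follow from bilinearity of tensor product and composition together with the corresponding conditions for $\varphi$ on irreducibles; in particular $\psi(\id)=\id$, $\psi(ST)=\psi(S)\psi(T)$, $\psi(S^*)=\psi(S)^*$ and $\psi(S\otimes T)=\psi(S)\otimes\psi(T)$. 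The definition of $\psi'$ in the statement is now nothing but $\psi'(\cdot)=\psi\big((\cdot)_{\mathbb G_1}\big)$ at the level of both objects and morphisms.

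Finally I would verify the axioms of Definition \ref{defunitfiberfunctor} for $\psi'$. Each identity is obtained by combining the corresponding property of $\psi$ with the corresponding property of $S\mapsto S_{\mathbb G_1}$: for instance $\psi'(ST)=\psi((ST)_{\mathbb G_1})=\psi(S_{\mathbb G_1}T_{\mathbb G_1})=\psi(S_{\mathbb G_1})\psi(T_{\mathbb G_1})=\psi'(S)\psi'(T)$, and similarly for the involution and the tensor product; $\psi'(1)=1$ is immediate. The step I expect to require the most care is keeping track of the bookkeeping in the decomposition $S_{\mathbb G_1}=\bigoplus S^{\vec j}_{\vec k}$ and checking that the tensor-product axiom $\psi'(S\otimes T)=\psi'(S)\otimes\psi'(T)$ is compatible with the identification $\Hi_{\psi'(x)}\otimes\Hi_{\psi'(y)}=\bigoplus_{i,j}\Hi_{\varphi(y_i)}\otimes\Hi_{\varphi(z_j)}$ coming from $\pi_*(U^x\otimes U^y)=\bigoplus_{i,j} U^{y_i}\otimes U^{z_j}$; once this identification is fixed consistently the verification reduces to the tensor-product axiom for $\varphi$ on irreducibles.
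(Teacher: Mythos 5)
Your proposal is correct and follows the same route the paper intends: the paper's own proof is the one-line remark that everything ``follows directly by construction of $\Hi_{\psi'}$ and $\psi'(S)$'', and your factorization $\psi'=\psi\circ(\,\cdot\,)_{\mathbb G_1}$ (pushforward along $\pi$ followed by the entrywise extension of $\psi$ to direct sums of irreducibles) is precisely the construction being invoked, with the routine verifications written out. The only minor point worth adding is that the identification $\Hi_x\cong\oplus_i\Hi_{y_i}$ depends on a choice of decomposition, but different choices yield isomorphic unitary fiber functors, so nothing is lost.
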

The proof follows directly by construction of $\Hi_{\psi'}$ and $\psi'(S)$. By theorem \ref{unfibfunctismoneq}, there exists a compact quantum group $\mathbb H_2$ and a monoidal equivalence $\varphi':\mathbb H_1\to \mathbb H_2$. In theorem \ref{thm-explicitsupergroup} we will describe the bi-Galois object associated to $\varphi$ and the compact quantum group $\mathbb H_2$ explicitly.

\begin{theorem}\label{thm-explicitsupergroup}
Let $\mathbb G_1, \mathbb G_2,\mathbb H_1$ be compact quantum groups such that $\mathbb G_1$ is a compact quantum subgroup of $\mathbb H_1$ with surjective morphism $\pi:C_u(\mathbb H_1)\to C_u(\mathbb G_1)$. Let $\varphi:\mathbb G_1\to \mathbb G_2$ be a monoidal equivalence as above and let $\mathbb H_2$ and $\varphi':\mathbb H_1\to \mathbb H_2$ be the compact quantum group and monoidal equivalence induced by $\varphi$ by propositions \ref{inducedunitfinfunctonsupergroup} and \ref{unfibfunctismoneq}. Denoting by $\B$ the ($\mathbb G_1$-$\mathbb G_2$)-bi-Galois object associated to $\varphi$, by $\tilde\B$ the ($\mathbb G_2$-$\mathbb G_1$)-bi-Galois object associated to $\varphi^{-1}$ and by $\B'$ the ($\mathbb H_1$-$\mathbb H_2$)-bi-Galois object associated to $\varphi'$, we have
\begin{equation}\label{bigaloisobjectsupergroup}
\B'\cong \OHa\boksdot[\OGa]\B
\end{equation}
and
\begin{equation}\label{equivsupergroup}
\OHb\cong \tilde \B\boksdot[\OGa]\OHa\boksdot[\OGa]\B
\end{equation}
using the right resp.\ left coactions $(\id\otimes \pi)\Delta_{\mathbb H_1}:\OHa\to \OHa\odot \OGa$ resp.\ $(\pi\otimes \id)\Delta_{\mathbb H_1}:\OHa\to \OGa\odot \OHa$ of $\OGa$ on $\OHa$. 
\end{theorem}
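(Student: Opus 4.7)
The plan is to prove both isomorphisms by invoking the uniqueness clause of Theorem~\ref{monoidalGalois}, producing on each right-hand side of \eqref{bigaloisobjectsupergroup} and \eqref{equivsupergroup} explicit generating unitaries that satisfy the three defining properties of the associated bi-Galois object. Throughout, for each $y\in\Irred(\mathbb{H}_1)$ with decomposition $(\id\otimes\pi)U^y=\bigoplus_i U^{z_i}$, we identify $\Hi_{\varphi'(y)}=\bigoplus_i\Hi_{\varphi(z_i)}$ and write $X^{y_{\mathbb{G}_1}}:=\bigoplus_i X^{z_i}\in B(\Hi_{\varphi'(y)},\Hi_y)\odot\B$, in accordance with Proposition~\ref{inducedunitfinfunctonsupergroup}; analogously $Z^{\varphi(y_{\mathbb{G}_1})}:=\bigoplus_i Z^{\varphi(z_i)}\in B(\Hi_y,\Hi_{\varphi'(y)})\odot\tilde\B$.

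For \eqref{bigaloisobjectsupergroup}, I would set
\[
Y^y := U^y_{12}\,X^{y_{\mathbb{G}_1}}_{13}\ \in\ B(\Hi_{\varphi'(y)},\Hi_y)\odot\OHa\odot\B.
\]
A direct tensor-leg calculation on the matrix coefficients $Y^y_{ac}=\sum_b u^y_{ab}\otimes X^{y_{\mathbb{G}_1}}_{bc}$, using the block extension $(\id\otimes\beta_1)X^{y_{\mathbb{G}_1}}=((\id\otimes\pi)U^y)_{12}X^{y_{\mathbb{G}_1}}_{13}$ of Theorem~\ref{monoidalGalois}(2) together with the coassociativity of $\Delta_{\mathbb{H}_1}$, shows that $Y^y_{ac}\in\OHa\boksdot[\OGa]\B$. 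Unitarity of $Y^y$ is immediate since $U^y$ and $X^{y_{\mathbb{G}_1}}$ occupy disjoint algebra legs, and the intertwining identity $Y^{y_1}_{13}Y^{y_2}_{23}(\psi'(S)\otimes 1)=(S\otimes 1)Y^{y_3}$ for $S\in\Mor(y_1\otimes y_2,y_3)$ reduces to combining the representation relation for the $U^{y_i}$ with Theorem~\ref{monoidalGalois}(1a) applied block-wise to $S_{\mathbb{G}_1}$, using the very definition $\psi'(S)=\psi(S_{\mathbb{G}_1})$. Uniqueness in Theorem~\ref{monoidalGalois} then identifies $\OHa\boksdot[\OGa]\B$ with $\B'$ as $^*$-algebras intertwining the natural left $\OHa$-coactions.

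For \eqref{equivsupergroup}, I would first repeat the above argument to establish the companion formula $\tilde\B'\cong\tilde\B\boksdot[\OGa]\OHa$, using $\tilde Y^y:=Z^{\varphi(y_{\mathbb{G}_1})}_{12}\,U^y_{13}$ together with the analogous relation $(\id\otimes\delta_1)Z^{\varphi(y_{\mathbb{G}_1})}=Z^{\varphi(y_{\mathbb{G}_1})}_{12}((\id\otimes\pi)U^y)_{13}$ from Proposition~\ref{propGbtildeBB}. Applying Proposition~\ref{propGbtildeBB} to the inverse monoidal equivalence $\varphi'^{-1}:\mathbb{H}_2\to\mathbb{H}_1$ then yields $\OHb\cong\tilde\B'\boksdot[\OHa]\B'$, and substituting the two companion formulas produces
\[
\OHb\ \cong\ (\tilde\B\boksdot[\OGa]\OHa)\boksdot[\OHa](\OHa\boksdot[\OGa]\B).
\]
The conclusion then follows from the elementary identity $M\boksdot[\OHa]\OHa\cong M$, valid for any right $\OHa$-comodule $M$ via the counit-contraction inverse to the coaction, which collapses the central $\OHa\boksdot[\OHa]\OHa$ to $\OHa$ and yields the target formula.

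The delicate step is verifying condition (1b) of Theorem~\ref{monoidalGalois}, namely that the matrix coefficients of the $Y^y$ actually span the whole of $\OHa\boksdot[\OGa]\B$ (and similarly for the $\tilde Y^y$). The tensor-leg identities above provide only the inclusion; the reverse requires a Peter--Weyl style decomposition of $\OHa\boksdot[\OGa]\B$ according to the residual left $\OHa$-coaction and a multiplicity count matching Theorem~\ref{monoidalGalois}(1b) for $\B'$. Here Theorem~\ref{thm-galoisobjectquotient} should supply the bridge: its identification of the $x$-isotypic component of $\B'$ (for $x\in\Irred(\mathbb{H}_1)$) with the data assembled from $\OHa$ and the $\mathbb{G}_1$-isotypic components of $\B$ should make the required multiplicity count essentially formal.
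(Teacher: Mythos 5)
Your construction for \eqref{bigaloisobjectsupergroup} is exactly the paper's: the paper also sets $Y^x=U^x_{12}X^{x_{\mathbb G_1}}_{13}$, checks properties 1(a) and 1(c) of Theorem \ref{monoidalGalois} together with membership of the matrix coefficients in $\OHa\boksdot[\OGa]\B$, and concludes by uniqueness. For \eqref{equivsupergroup} you genuinely diverge: the paper directly exhibits generating unitaries $V^{\varphi'(x)}=Z^{\varphi(x_{\mathbb G_1})}_{12}U^x_{13}X^{x_{\mathbb G_1}}_{14}$ inside $B(\Hi_{\varphi'(x)},\Hi_x)\odot\tilde\B\odot\OHa\odot\B$ and verifies the defining relations of $\OHb$ for them in one computation, whereas you first prove the companion formula $\tilde\B'\cong\tilde\B\boksdot[\OGa]\OHa$ and then compose with Proposition \ref{propGbtildeBB} applied to $\varphi'^{-1}$, collapsing the middle $\OHa\boksdot[\OHa]\OHa$ by the counit. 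Your route is more modular and makes the role of Proposition \ref{propGbtildeBB} transparent (it is in effect what the paper's formula for $V^{\varphi'(x)}$ encodes), at the cost of an extra verification (the companion formula) and an implicit appeal to associativity of cotensor products, which is harmless here since $\OHa$ is cosemisimple.

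The one genuine soft spot is the spanning condition 1(b), which you flag but do not close, and the bridge you propose is the wrong one: Theorem \ref{thm-galoisobjectquotient} concerns Woronowicz $C^*$-subalgebras of $\mathbb G_1$ (quotient groups in the group picture), not the present situation where $\mathbb G_1$ is a quantum subgroup of $\mathbb H_1$ via a surjection $\pi$, so it says nothing about the isotypic decomposition of $\OHa\boksdot[\OGa]\B$. The paper closes this step by a short direct computation: writing a general $z=\sum\lambda^{ij}_{st}u^x_{ij}\otimes b^y_{st}$ and imposing the cotensor relation $\sum\lambda^{ij}_{st}u^x_{ik}\otimes\pi(u^x_{kj})\otimes b^y_{st}=\sum\lambda^{ij}_{st}u^x_{ij}\otimes u^y_{sr}\otimes b^y_{rt}$ forces $z$ into the span of the coefficients of $U^x_{12}X^{x_{\mathbb G_1}}_{13}$. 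Alternatively, the multiplicity count you sketch does work unaided: the $x$-isotypic part of $\OHa\boksdot[\OGa]\B$ has dimension $\dim(\Hi_x)\sum_y\mathrm{mult}(y,x_{\mathbb G_1})\dim(\Hi_{\varphi(y)})=\dim(\Hi_x)\dim(\Hi_{\varphi'(x)})$, matching $\B'$; either of these should replace the appeal to Theorem \ref{thm-galoisobjectquotient}.
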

\begin{proof}
Let $X^x$, $x\in \Irred(\mathbb G_1)$ be the elements from theorem \ref{monoidalGalois} associated to $\varphi$. Define for $x\in \IrredHa$, $X^{x_{\mathbb G_1}}=\oplus_{i=1}^n X^{y_i}$ if $U^{x_{\mathbb G_1}}=(\id\otimes \pi)U^x=\oplus_{i=1}^n U^{y_i}$, $y_i\in \Irred(\mathbb G_1)$. Moreover define for $x\in \IrredHa$,
\begin{equation}
Y^x=U^x_{12}X^{x_{\mathbb G_1}}_{13}\in B(\Hi_{\varphi'(x)},\Hi_x)\odot  \OHa\odot\B. 
\end{equation}

We claim that the $Y^x$ with the functional $\omega'=h_{\mathbb H_1}\otimes \omega$ ($h_{\mathbb H_1}$ is the Haar state of $\mathbb H_1$) satisfy the properties 1(a), 1(b) and 1(c) of theorem \ref{monoidalGalois} applied to $\varphi'$.
Indeed, we have for $x,y,z\in \IrredHa$ and $S\in \Mor(y\otimes z,x)$
\begin{eqnarray*}
Y^y_{13}Y^z_{23}(\varphi'(S)\otimes \id)&=&U^y_{13}X^{y_{\mathbb G_1}}_{14}U^z_{23}X^{z_{\mathbb G_1}}_{24}(\varphi'(S)\otimes \id)\\
&=&U^y_{13}U^z_{23}X^{y_{\mathbb G_1}}_{14}X^{z_{\mathbb G_1}}_{24}(\varphi'(S)\otimes \id)\\
&=&U^y_{13}U^z_{23}(S\otimes \id)X^{x_{\mathbb G_1}}_{13}\\
&=&(S\otimes \id)U^x_{12}X^{x_{\mathbb G_1}}_{13}\\
&=&(S\otimes \id)Y^x.
\end{eqnarray*}
Moreover $(\id\otimes\omega')Y^x=(\id\otimes h_{\mathbb H_1}\otimes \omega)(U^x_{12}X^{x_{\mathbb G_1}}_{13})=0$ if $x\neq \varepsilon$.\\\\
Hence to prove \eqref{bigaloisobjectsupergroup} it suffices to prove that the matrix coefficients of the $Y^x$ constitute a linear basis of $\OHa\boksdot[\OGa]\B$. Note first that the matrix coefficients of the $Y^x$ are elements of $\OHa\boksdot[\OGa]\B$. Indeed, $$\big(\id\otimes(\id_\OHa\otimes \pi)\Delta_{\mathbb H_1}\otimes \id_\B\big)U^x_{12}X^{x_{\mathbb G_1}}_{13}=U^x_{12}U^{x_{\mathbb G_1}}_{13}X^{x_{\mathbb G_1}}_{14}=(\id\otimes\id_{\OHa}\otimes \beta_1)U^x_{12}X^{x_{\mathbb G_1}}_{13}.$$ Moreover, as every irreducible representation of $\mathbb G_1$ is a subrepresentation of some $x_{\mathbb G_1}$, $x\in \Irred(\mathbb H_1)$, the matrix coefficients of the $X^{x_{\mathbb G_1}}$ resp.\ the $U^x$ form a basis of $\B$ resp.\ $\OHa$. Hence, the matrix coefficients of the $Y^x$ are linearly independent. Finally we prove that they are also generating. Let $z$ be an arbitrary element of $\OHa\boksdot[\OGa] \B$. Then $z$ is of the form $\sum \lambda^{ij}_{st}u_{ij}^x\otimes b_{st}^y$ where the $u^x_{ij}$ resp.\ $b^y_{st}$ are the matrix coefficients of the $U^x$ resp.\ $X^y$, $x\in \IrredHa,y\in \IrredGa$ and $\lambda^{ij}_{st}\in \mathbb C$. As $z\in \OHa\boksdot[\OGa] \B$, $\sum \lambda^{ij}_{st} u_{ik}^x\otimes \pi(u_{kj}^x)\otimes b_{st}^y=\sum \lambda^{ij}_{st}u_{ij}^x\otimes u^y_{sr} \otimes b_{rt}^y$ and hence $z$ is a linear combination of matrix coefficients of $U^x_{12}X^{x_{\mathbb G_1}}_{13}$. As the unitaries satisfying properties 1(a), 1(b) and 1(c) of theorem \ref{monoidalGalois} are unique, the $Y^x$ are those unitaries and $\B'\cong \OHa\boksdot[\OGa]\B$. This concludes the proof of the first result \eqref{bigaloisobjectsupergroup}.\\Ê\\
For the second result \ref{equivsupergroup}, let $Z^y$, $y\in \Irred(\mathbb G_2)$ be the unitaries from theorem \ref{monoidalGalois} associated to $\varphi^{-1}$. If $U^{x_{\mathbb G_1}}=(\id\otimes \pi)U^x=\oplus_{i}U^{y_i}$ for $x\in \Irred(\mathbb H_1), y_i\in \Irred(\mathbb G_1)$, we will denote $U^{\varphi(x_{\mathbb G_1})}=\oplus_{i}U^{\varphi(y_i)}$ and $Z^{\varphi(x_{\mathbb G_1})}=\oplus_{i}Z^{\varphi(y_i)} \in B(\Hi_x,\Hi_{\varphi'(x)})\odot \tilde \B$.

Therefore, we can define $$V^{\varphi'(x)}=Z^{\varphi(x_{\mathbb G_1})}_{12}U^x_{13}X^{x_{\mathbb G_1}}_{14}.$$
Then, one can prove analogously as above that for $x,y,z\in \IrredHa$ and $S\in \Mor(y\otimes z,x)$
\begin{eqnarray*}
V^{\varphi'(y)}_{13}V^{\varphi'(z)}_{23}(\varphi'(S)\otimes \id)&=&Z^{\varphi(y_{\mathbb G_1})}_{13}U^y_{14}X^{y_{\mathbb G_1}}_{15}Z^{\varphi(z_{\mathbb G_1})}_{23}U^z_{24}X^{z_{\mathbb G_1}}_{25}(\varphi'(S)\otimes \id)\\
&=&Z^{\varphi(y_{\mathbb G_1})}_{13}Z^{\varphi(z_{\mathbb G_1})}_{23}U^y_{14}U^z_{24}X^{y_{\mathbb G_1}}_{15}X^{z_{\mathbb G_1}}_{25}(\varphi'(S)\otimes \id)\\
&=&Z^{\varphi(y_{\mathbb G_1})}_{13}Z^{\varphi(z_{\mathbb G_1})}_{23}U^y_{14}U^z_{24}(S\otimes \id)X^{x_{\mathbb G_1}}_{14}\\
&=&Z^{\varphi(y_{\mathbb G_1})}_{13}Z^{\varphi(z_{\mathbb G_1})}_{23}(S\otimes \id)U^x_{13}X^{x_{\mathbb G_1}}_{14}\\
&=&(\varphi'(S)\otimes \id)Z^{\varphi(x_{\mathbb G_1})}_{12}U^x_{13}X^{x_{\mathbb G_1}}_{14}\\
&=&(\varphi'(S)\otimes \id)V^{\varphi'(x)}.
\end{eqnarray*}
The argument to prove that the matrix coefficients of $V^{\varphi'(x)}$ form a linear basis of $\Hb$ is the same as in the first part of the proof.

\end{proof}
Moreover, the newly constructed compact quantum group $\mathbb H_2$ is a supergroup of $\mathbb G_2$.
\begin{proposition}\label{prop-defissupergroup}
We have a surjective morphism of compact quantum groups $\pi':C_u(\mathbb H_2)\to C_u(\mathbb G_2)$ such that 
\begin{equation}\label{quotmap}
(\id\otimes \pi')V^{\varphi'(x)}= U^{\varphi(x_{\mathbb G_1})}
\end{equation}
for every $x\in \IrredHa$ implying that $\mathbb G_2$ is a quantum subgroup of $\mathbb H_2$.
\end{proposition}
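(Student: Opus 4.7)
The plan is to construct $\pi'$ at the algebraic level on the dense Hopf $^*$-algebra $\OHb$ and then extend to $C_u(\mathbb H_2)$ by its universal property. By theorem \ref{thm-explicitsupergroup} we have the identification $\OHb\cong \tilde\B\boksdot[\OGa]\OHa\boksdot[\OGa]\B$, and by applying Proposition \ref{propGbtildeBB} to the monoidal equivalence $\varphi^{-1}:\mathbb G_2\to\mathbb G_1$ we obtain $\OGb\cong\tilde\B\boksdot[\OGa]\B$ with $U^{\varphi(y)}\leftrightarrow Z^{\varphi(y)}_{12}X^{y}_{13}$ for $y\in\IrredGa$. Under these identifications my candidate is $\pi'=\id_{\tilde\B}\otimes\varepsilon_{\mathbb H_1}\otimes\id_{\B}$, where $\varepsilon_{\mathbb H_1}$ denotes the counit of $\OHa$.

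First I would verify that $\id\otimes\varepsilon_{\mathbb H_1}\otimes\id$ really does restrict to a map $\tilde\B\boksdot[\OGa]\OHa\boksdot[\OGa]\B\to\tilde\B\boksdot[\OGa]\B$. This is a short Sweedler-notation computation using the two cotensor conditions defining the triple cotensor product together with $(\varepsilon\otimes\id)\Delta_{\mathbb H_1}=(\id\otimes\varepsilon)\Delta_{\mathbb H_1}=\id$. Once this is established, $\pi'$ is automatically a $^*$-algebra morphism, because it is the restriction of the componentwise $^*$-morphism $\id\otimes\varepsilon_{\mathbb H_1}\otimes\id$ (and $\varepsilon_{\mathbb H_1}$ is a $^*$-character on the Hopf $^*$-algebra $\OHa$).

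Next I would verify \eqref{quotmap}. Substituting $V^{\varphi'(x)}=Z^{\varphi(x_{\mathbb G_1})}_{12}U^x_{13}X^{x_{\mathbb G_1}}_{14}$ and applying $\varepsilon_{\mathbb H_1}$ on the middle factor, the elementary identity $(\id\otimes\varepsilon_{\mathbb H_1})U^x=1_{B(\Hi_x)}$ immediately gives $Z^{\varphi(x_{\mathbb G_1})}_{12}X^{x_{\mathbb G_1}}_{13}$, which by the identification above is exactly $U^{\varphi(x_{\mathbb G_1})}$ (extending $\oplus$-linearly over the decomposition of $x_{\mathbb G_1}$ into $\mathbb G_1$-irreducibles). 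Compatibility with the coproducts follows by evaluating on the spanning set of matrix coefficients $v^{\varphi'(x)}_{ij}$: both $(\pi'\otimes\pi')\Delta_{\mathbb H_2}$ and $\Delta_{\mathbb G_2}\pi'$ send $v^{\varphi'(x)}_{ij}$ to $\sum_k u^{\varphi(x_{\mathbb G_1})}_{ik}\otimes u^{\varphi(x_{\mathbb G_1})}_{kj}$, since the matrix coefficients of a unitary representation always comultiply in this way. The universal property of $C_u(\mathbb H_2)$ then provides the extension to a $C^*$-algebra morphism $\pi':C_u(\mathbb H_2)\to C_u(\mathbb G_2)$ intertwining the comultiplications.

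Surjectivity is essentially automatic. Since $\pi:C_u(\mathbb H_1)\to C_u(\mathbb G_1)$ is surjective, every matrix coefficient in $\OGa$ is a linear combination of matrix coefficients of the representations $(\id\otimes\pi)U^x$, $x\in\IrredHa$; equivalently, every $y\in\IrredGa$ is a subrepresentation of some $x_{\mathbb G_1}$. Hence all matrix coefficients $u^{\varphi(y)}_{ij}$, $y\in\IrredGa$, lie in the image of $\pi'$, and these span $\OGb$. The one delicate step that needs genuine care is checking that $\id\otimes\varepsilon_{\mathbb H_1}\otimes\id$ restricts between the cotensor products; everything else is essentially bookkeeping once one trusts the bi-Galois identifications already proved.
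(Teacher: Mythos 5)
Your proposal is correct and follows essentially the same route as the paper: both proofs hinge on realizing $\pi'$ as $f^{-1}\circ(\id_{\tilde\B}\otimes\varepsilon_{\mathbb H_1}\otimes\id_{\B})$ via the identifications of Theorem \ref{thm-explicitsupergroup} and Proposition \ref{propGbtildeBB}, the only difference being that the paper first defines $\pi'$ on the basis of matrix coefficients by \eqref{quotmap} and then identifies it with this composition to obtain multiplicativity, whereas you start from the composition and derive \eqref{quotmap}. Your explicit check that $\id\otimes\varepsilon_{\mathbb H_1}\otimes\id$ restricts between the cotensor products is a point the paper leaves implicit, and it is a welcome addition.
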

\begin{proof}
The map $\pi'$ is well defined by \eqref{quotmap} as the matrix coefficients of the $V^{\varphi'(x)}$ constitute a linear basis of $\OHb$. Moreover, it is a linear surjection and it follows directly that it is coalgebra map. It suffices to prove that $\pi'$ is an algebra map. Therefore, denoting by $f: \OGb\to \tilde B\boksdot[\OGa]\B$ the isomorphism of proposition \ref{propGbtildeBB} (applied to $\varphi^{-1}:\mathbb G_2\to \mathbb G_1$) such that $(\id\otimes f)U^{\varphi(x)}=Z_{12}^{\varphi(x)}X^x_{13}$, $x\in \Irred(\mathbb G_1)$ it is easy to see that 
\begin{eqnarray*}
(\id\otimes f^{-1})(\id\otimes \id_{\tilde \B}\otimes \varepsilon \otimes \id_{\B})V^{\varphi'(x)}&=&(\id\otimes f^{-1})(\id\otimes \id_{\tilde \B}\otimes \varepsilon \otimes \id_{\B})(Z^{\varphi(x_{\mathbb G_1})}_{12}U^x_{13}X^{x_{\mathbb G_1}}_{14})\\
&=&(\id\otimes f^{-1})(Z^{\varphi(x_{\mathbb G_1})}_{12}X^{x_{\mathbb G_1}}_{13})\\
&=&\oplus_{i}(\id\otimes f^{-1})(Z^{\varphi(y_i)}_{12}X^{y_i}_{13})\\
&=&\oplus_{i}U^{\varphi(y_i)}=U^{\varphi(x_{\mathbb G_1})}\\
&=&(\id\otimes\pi')V^{\varphi'(x)}
\end{eqnarray*}
if $U^{x_{\mathbb G_1}}=\oplus_{i}U^{y_i}$. Hence $(\id\otimes f^{-1})(\id\otimes \id_{\tilde \B}\otimes \varepsilon \otimes \id_{\B})=(\id\otimes\pi')$ proving that $\pi$ is multiplicative as composition of algebra maps. This concludes the proof.
\end{proof}

Finally we prove that the two monoidal equivalences $\varphi$ and $\varphi'$ make isomorphic deformed spectral triples. 
\begin{proposition}\label{prop-defSTsupergroup}
Let $\mathbb G_1, \mathbb G_2,\mathbb H_1$ be compact quantum groups such that $\mathbb G_1$ is a compact quantum subgroup of $\mathbb H_1$ with surjective morphism $\pi:C_u(\mathbb H_1)\to C_u(\mathbb G_1)$ and let $\varphi:\mathbb G_1\to \mathbb G_2$ be a monoidal equivalence as above. Let $\mathbb H_2$ and $\varphi'$ be the compact quantum group and monoidal equivalence induced by $\varphi$ as in proposition \ref{inducedunitfinfunctonsupergroup}. Suppose $\mathbb H_1$ resp. $\mathbb G_1$ act algebraically and by orientation preserving isometries with a unitary representation $V$ resp. $U$ on a spectral triple $(\A,\Hi,D)$ such that $U=(\id\otimes \pi)V$. Denoting by $\B$ the ($\mathbb G_1$-$\mathbb G_2$)-bi-Galois object associated to $\varphi$, by $\tilde\B$ the ($\mathbb G_2$-$\mathbb G_1$)-bi-Galois object associated to $\varphi^{-1}$ and by $\B'$ the ($\mathbb H_1$-$\mathbb H_2$)-bi-Galois object associated to $\varphi'$, the deformed spectral triples

$$(\A\boksdot[\OGa]\B,\Hi\bokstimes[\Ga]L^2(\B),\tilde D)$$ and $$(\A\boksdot[\OHa]\B',\Hi\bokstimes[\Ha]L^2(\B'), \tilde D')$$ (where $\tilde D'$ is the deformation of $D$ along $\varphi'$) are isomorphic.
\end{proposition}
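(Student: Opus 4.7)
The plan is to establish an isomorphism of spectral triples in the sense of Definition~\ref{isospectrtrpl}, by producing a unitary $\phi:\Hi\bokstimes[\Ga]L^2(\B)\to\Hi\bokstimes[\Ha]L^2(\B')$ and a $^*$-isomorphism $\lambda:\A\boksdot[\OGa]\B\to\A\boksdot[\OHa]\B'$ that intertwine the actions and match the two deformed Dirac operators. The backbone is Theorem~\ref{thm-explicitsupergroup}, which identifies $\B'$ with $\OHa\boksdot[\OGa]\B$, combined with the hypothesis $U=(\id\otimes\pi)V$, which forces $\ad_U=(\id\otimes\pi)\ad_V$ and thus ensures that the $\OGa$-comodule structure on $\A$ is induced from its $\OHa$-comodule structure along $\pi$.

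On the algebra side I would define
$$\lambda:\A\boksdot[\OGa]\B\longrightarrow\A\boksdot[\OHa]\B',\qquad \lambda(a\otimes b)=a_{(0)}\otimes(a_{(1)}\otimes b),$$
where $\ad_V(a)=a_{(0)}\otimes a_{(1)}$ and the tensor $a_{(1)}\otimes b$ is viewed inside $\OHa\odot\B$. Applying $\ad_V\otimes\id_{\OGa}\otimes\id_\B$ to the identity $(\ad_U\otimes\id)(z)=(\id\otimes\beta_1)(z)$ that defines $z\in\A\boksdot[\OGa]\B$, and combining it with coassociativity $(\ad_V\otimes\id)\ad_V=(\id\otimes\Delta_{\mathbb H_1})\ad_V$ and $\ad_U=(\id\otimes\pi)\ad_V$, one obtains that the $\OHa\odot\B$-slot of $\lambda(z)$ satisfies the defining condition for $\B'$, so $\lambda(z)\in\A\boksdot[\OHa]\B'$. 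The inverse of $\lambda$ is obtained by applying the counit $\varepsilon_{\Ha}$ to the middle factor. Multiplicativity and compatibility with the $^*$-structure transfer directly; this is an explicit instance of the associativity $\A\boksdot[\OHa](\OHa\boksdot[\OGa]\B)\cong(\A\boksdot[\OHa]\OHa)\boksdot[\OGa]\B\cong\A\boksdot[\OGa]\B$.

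On the Hilbert space side I would first check that the intrinsic invariant state $\omega'$ on $\B'$ supplied by Theorem~\ref{monoidalGalois} coincides with the restriction of $h_{\mathbb H_1}\otimes\omega$ to $\B'\subset\OHa\odot\B$, which follows quickly from the computation $(h_{\mathbb H_1}\otimes\id)\gamma_1(b)=1\otimes(h_{\mathbb H_1}\otimes\id_\B)(b)$ using left invariance of $h_{\mathbb H_1}$. This yields an isometric embedding $L^2(\B')\hookrightarrow L^2(\Ha)\otimes L^2(\B)$ that intertwines $\gamma_1'$ with the restriction of $(\text{regular representation})\otimes\id$. Writing $\Lambda_h$ for the GNS map of $h_{\mathbb H_1}$, I would then set
$$\phi:\Hi\bokstimes[\Ga]L^2(\B)\longrightarrow\Hi\bokstimes[\Ha]L^2(\B'),\qquad \phi\Bigl(\sum_i e_i\otimes f_i\Bigr)=\sum_i V\bigl(e_i\otimes\Lambda_h(1_{\Ha})\bigr)\otimes f_i,$$
initially valued in $\Hi\otimes L^2(\Ha)\otimes L^2(\B)$ and then viewed inside $\Hi\otimes L^2(\B')$. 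The check that $\phi(\xi)$ sits in the prescribed cotensor subspace parallels the algebraic one with $V$ in place of $\ad_V$; unitarity follows from $\|\Lambda_h(1_{\Ha})\|=1$ together with unitarity of $V$.

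With these data in hand the Dirac operators agree automatically, since $\tilde D$ and $\tilde D'$ are both restrictions of $D\otimes\id$ to their respective cotensor spaces and $\phi$ intertwines $D\otimes\id_{L^2(\B)}$ with $D\otimes\id_{L^2(\Ha)\otimes L^2(\B)}$ thanks to $[V,D\otimes 1]=0$. The intertwining $\phi(z\cdot\xi)=\lambda(z)\phi(\xi)$ is then a direct Sweedler calculation using that multiplication on both deformed algebras is inherited from the ambient $\A\odot\OHa\odot\B$ and that $V$ implements $\ad_V$ on $B(\Hi)$. The main technical obstacle I anticipate is pinning down the identification $L^2(\B')\cong L^2(\Ha)\bokstimes[\Ga]L^2(\B)$ as $\mathbb H_1$-representations, i.e.\ verifying that the GNS completion commutes with the cotensor product $\boksdot[\OGa]$ and that the abstract state $\omega'$ matches $(h_{\mathbb H_1}\otimes\omega)|_{\B'}$; once this compatibility is secured, the remaining verifications reduce to routine Sweedler bookkeeping.
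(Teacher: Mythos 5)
Your proposal is correct and follows essentially the same route as the paper: the paper's $\lambda$ is exactly $z\mapsto(\alpha_V\otimes\id_\B)z_{13}$ with inverse $\id_\A\otimes\varepsilon_{\mathbb H_1}\otimes\id_\B$, its $\phi$ is $\eta\mapsto V_{12}\eta_{13}$ with inverse $\xi\mapsto(\id\otimes h_{\mathbb H_1}\otimes\id)V_{12}^*\xi$, and the intertwining is the same one-line computation $\lambda(z)\phi(\eta)=V_{12}z_{13}V_{12}^*V_{12}\eta_{13}=\phi(z\eta)$. The compatibility of $\omega'$ with $h_{\mathbb H_1}\otimes\omega$ that you flag as the main technical point is already supplied by Theorem~\ref{thm-explicitsupergroup}, so your extra care there is sound but not an additional burden.
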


\begin{proof}
It is easy to see that the map
$$\lambda: \A\boksdot[\OGa]\B\to\A\boksdot[\OHa]\OHa\boksdot[\OGa]\B:z\mapsto (\alpha_V\otimes \id_\B)z_{13}$$ is an isomorphism of $^*$-algebras with inverse $(\id_\A\otimes \varepsilon_{\mathbb H_1}\otimes \id_\B)$.
Moreover, let $\phi:  \Hi\bokstimes[\Ga]L^2(\B)\to\Hi\bokstimes[\Ha]L^2(\OHa\boksdot[\OGa]\B): \eta\to V_{12}\eta_{13}$. Then defining $\phi':  \Hi\bokstimes[\Ha]L^2(\OHa\boksdot[\OGa]\B)\to\Hi\bokstimes[\Ga]L^2(\B) :\xi\mapsto (\id\otimes h_{\mathbb H_1}\otimes \id)V^*_{12}\xi$, one can prove that $\phi(\xi)_{13} =\xi\in \Hi\bokstimes[\Ha]L^2(\OHa\boksdot[\OGa]\B)$ as in proposition \ref{propdefhilbertspace}. Hence, it follows that $\phi'=\phi^{-1}$. Moreover, $\phi\tilde D =  \tilde D'\phi$. Finally, we have for $z\in \A\boksdot[\OGa]\B$ and $\eta \in\Hi\bokstimes[\Ga]L^2(\B)$,
\begin{eqnarray*}
\lambda(z)\phi(\eta)&=&V_{12}z_{13}V_{12}^*V_{12}\eta_{13}\\
&=&V_{12}z_{13}\eta_{13}\\
&=&\phi(z\eta)
\end{eqnarray*}
completing the proof.
\end{proof}
\subsection{Deformation of the quantum isometry group}\label{subse-defQISO}

\subsubsection{Deformation of the universal object in $\mathcal Q_R(\A,\Hi,D) $}\label{subsubsec-defQISO0}
In paragrah subsection, we will investigate how the universal objects in the category $\mathcal Q_R(\A,\Hi,D)$ behave with respect to our deformation procedure. 

\begin{proposition}\label{prop-tildeR}
Let $R$ be a positive invertible operator such that $(\A,\Hi,D)$ is a $R$-twisted spectral triple. Suppose $\mathbb G_1$ be a compact quantum group acting algebraically and by orientation-preserving isometries on $(\A,\Hi,D)$ with a representation $U$ and suppose $\varphi: \mathbb G_1\to \mathbb G_2$ is a monoidal equivalence. Denote by $(\tilde \A, \tilde \Hi, \tilde D)$ the deformed spectral triple (theorem \ref{mainthm}). Then there exists a positive invertible operator $\tilde R$ such that $(\tilde \A,\tilde \Hi,\tilde D)$ is an $\tilde R$-twisted spectral triple on which $\mathbb G_2$ acts by $\tilde R$-twisted volume- and orientation-preserving isometries. Moreover, applying the same construction to $\varphi^{-1}$, we obtain $R$ again.
\end{proposition}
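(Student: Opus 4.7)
The plan is to construct $\tilde R$ by first extracting the ``shape'' of $R$ along the isotypical decomposition of $U$, and then transporting each block via the monoidal equivalence. Since $R$ commutes with $D$ and $U$ commutes with $D$, both $R$ and the representation $U$ preserve each eigenspace $V_\lambda$. Decomposing $\Hi\cong \bigoplus_{x\in \Irred(\mathbb G_1)} \Hi_x \otimes M_x$ into isotypical components of $U$ (so that $U=\bigoplus_x U^x\otimes 1_{M_x}$), the $R$-twisted volume condition $(\tau_R\otimes \id)\alpha_U=\tau_R\cdot 1_{\Ga}$ combined with a Schur-type argument on each $\Mor(U^x,U^x)$-component forces $R$ to be of the form
\[
R \;=\; \bigoplus_{x\in \Irred(\mathbb G_1)} \rho_x \otimes T_x,
\]
where $\rho_x\in B(\Hi_x)$ is the canonical positive invertible Woronowicz operator associated with $x$ (satisfying $\Tr(\rho_x)=\Tr(\rho_x^{-1})=\dim_q(x)$) and $T_x$ is a positive invertible operator on $M_x$. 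Since $[R,D]=0$, each $T_x$ commutes with the operator induced by $D$ on $M_x$, so $T_x$ preserves the $D$-eigenspace decomposition within $M_x$.

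Next, using the identification $\tilde\Hi\cong \bigoplus_x \Hi_{\varphi(x)}\otimes M_x$ from Proposition \ref{propdefhilbertspace} together with the corresponding decomposition $\tilde U = \bigoplus_x U^{\varphi(x)}\otimes 1_{M_x}$, I define
\[
\tilde R \;:=\; \bigoplus_{x\in \Irred(\mathbb G_1)} \rho_{\varphi(x)} \otimes T_x,
\]
with $\rho_{\varphi(x)}$ the canonical positive operator on $\Hi_{\varphi(x)}$ for the irreducible $\varphi(x)\in \Irred(\mathbb G_2)$. Positivity and invertibility are immediate. The commutation $[\tilde R,\tilde D]=0$ follows from the fact that $\tilde D$ acts as the scalar $\lambda$ on each $V_\lambda\bokstimes[\Ga] L^2(\B)$, combined with the block-diagonal form of $\tilde R$ and the compatibility of each $T_x$ with the induced eigenspace decomposition of $M_x$.

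The main step, and likely the principal obstacle, is verifying the $\tilde R$-twisted volume invariance of the $\mathbb G_2$-action. Here I would use the fundamental property that monoidal equivalences preserve quantum traces and quantum dimensions (Bichon--De Rijdt--Vaes \cite{Bichon2005}): under the natural identifications that $\varphi$ induces on intertwiner spaces, one has $\dim_q(\varphi(x))=\dim_q(x)$ and more generally $\Tr(\rho_{\varphi(x)}\varphi(S))=\Tr(\rho_x S)$ for all $S\in\Mor(U^x,U^x)$. A block-by-block computation on $\mathcal E_{\tilde D}$ then reduces $(\tau_{\tilde R}\otimes \id)\alpha_{\tilde U}(z)=\tau_{\tilde R}(z)1$ to the original identity on $\mathcal E_D$, exploiting that $\tilde U$ differs from $U$ only by replacing each irreducible block $U^x$ by $U^{\varphi(x)}$ while leaving the multiplicity data $T_x$ intact. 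Matching the twisted traces carefully across this replacement, and checking that the generators of $\mathcal E_{\tilde D}$ are indeed captured by this block-by-block analysis, is the delicate technical point.

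Finally, the symmetry under $\varphi^{-1}$ is automatic from the construction. Applying the same procedure to $(\tilde \A,\tilde \Hi,\tilde D)$ and $\tilde R$ with the inverse monoidal equivalence $\varphi^{-1}:\mathbb G_2\to\mathbb G_1$, the isotypical decomposition of $\tilde U$ is labelled by $\varphi(x)$ with multiplicity space $M_x$ and multiplicity operator $T_x$, so the output is
\[
\bigoplus_{x\in \Irred(\mathbb G_1)} \rho_{\varphi^{-1}(\varphi(x))}\otimes T_x \;=\; \bigoplus_{x\in \Irred(\mathbb G_1)} \rho_x \otimes T_x \;=\; R,
\]
which matches $R$ under the spectral-triple isomorphism established earlier in this section.
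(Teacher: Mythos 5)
Your proposal follows essentially the same route as the paper's proof: decompose $\Hi$ into isotypical components $\Hi_x\otimes W_x$, use the twisted-volume condition (plus a Schur/Woronowicz argument) to force $R=\oplus_x F_x\otimes R_x$ with $F_x$ the canonical Woronowicz matrix of $x$, set $\tilde R=\oplus_x F_{\varphi(x)}\otimes R_x$ on $\tilde\Hi=\oplus_x\Hi_{\varphi(x)}\otimes W_x$, and verify positivity, commutation with $\tilde D$, the $\tilde R$-twisted volume condition blockwise, and the inverse statement exactly as the paper does. The only cosmetic difference is your normalization $\Tr(\rho_x)=\Tr(\rho_x^{-1})=\dim_q(x)$, which is immaterial since each block of $R$ is determined only up to a positive scalar that can be absorbed into $T_x$.
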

\begin{proof}
We can decompose $\Hi$ as $$\Hi = \oplus_{x\in \Irred(\mathbb G_1)}\Hi_x\otimes W_x$$ for some Hilbert spaces $W_x$ where the direct sum is taken over all $x\in \Irred(\mathbb G_1)$, all with multiplicity one. As $D$ commutes with the representation $U$, $D$ is of the form $D=\oplus_{x\in \Irred(\mathbb G_1)}\id_{\Hi_x}\otimes D_x$ where the $D_x$ are operators $W_x\to W_x$. As $\mathbb G_1$ acts by $R$-twisted volume-preserving isometries, 
$$(\tau_R\otimes \id)(\alpha_U(x))=\tau_R(x)1_{\G}$$ 
for all $x\in \mathcal E_D$, where $\tau_R(x)=\Tr(Rx)$ and where $\mathcal E_D$ is the $^*$-subalgebra of $B(\Hi)$ generated by the rank-one operators of the form $\eta\xi^*$, $\eta,\xi$ eigenvectors of $D$. Therefore, also $(\tau_R\otimes h_{\mathbb G_1})(\alpha_U(x))=\tau_R(x)$ from which it follows (as in the proof of theorem 3.8 of \cite{Goswami2014}) that $R$ must be of the form $R= \oplus_{x\in \Irred(\mathbb G_1)}F_x\otimes R_x$, where $F_x$ is the matrix such that $h_{\mathbb G_1}(u^x_{ij}(u^y_{st})^*)=\frac{\delta_{x,y}\delta_{i,s}(F_x)_{jt}}{\Tr(F_x)}$ (described by Woronowicz \cite{Woronowicz1998}) and $R_x:W_x\to W_x$ positive operators. As $(\A,\Hi,D)$ is an $R$-twisted spectral triple, $R$ and $D$ commute and hence each $D_x$ commutes with $R_x$ for all $x\in \Irred(\mathbb G_1)$. Now, in this presentation $\tilde \Hi=  \oplus_{x\in \Irred(\mathbb G_1)}\Hi_{\varphi(x)}\otimes W_x$ and $\tilde D= \oplus_{x\in \Irred(\mathbb G_1)}\id_{\Hi_{\varphi(x)}}\otimes D_x$. Therefore, define $\tilde R=  \oplus_{x\in \Irred(\mathbb G_1)}F_{\varphi(x)}\otimes R_x$. Then $\tilde R$ is again positive, and invertible and it commutes with $\tilde D$. Moreover, $\mathbb G_2$ acts by $\tilde R$-twisted volume preserving isometries by the defining property of $F_{\varphi(x)}$. It is clear that the inverse construction gives $R$ again.
\end{proof}

\begin{theorem}\label{thm-defisQISO0}
Let $R$ be a positive invertible operator on a Hilbert space $\Hi$ and let $(\A,\Hi,D)$ be an $R$-twisted compact spectral triple on which $\QISORnul$ acts algebraically. Suppose $\varphi: \QISORnul\to \mathbb G_2$ is a monoidal equivalence with bi-Galois object $\B$. Then 
$\mathbb G_2\cong \QISORtildenul$ for $\tilde R$ as in proposition \ref{prop-tildeR}.
\end{theorem}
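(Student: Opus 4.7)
The strategy is to invoke the induced-supergroup construction of Subsection 5.2 with the inverse monoidal equivalence $\varphi^{-1}\colon \mathbb G_2 \to \QISORnul$ in order to reduce everything to the universal property of $\QISORnul$ on the original spectral triple $(\A,\Hi,D)$. First, Theorem \ref{mainthm} combined with Proposition \ref{prop-tildeR} shows that $\mathbb G_2$ acts algebraically and by $\tilde R$-twisted volume- and orientation-preserving isometries on $(\tilde \A, \tilde \Hi, \tilde D)$, so the universal property of $\QISORtildenul$ yields a surjection $\pi\colon C(\QISORtildenul) \to C(\mathbb G_2)$ realising $\mathbb G_2$ as a quantum subgroup of $\QISORtildenul$.

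Next, I apply Proposition \ref{inducedunitfinfunctonsupergroup} and Theorem \ref{thm-explicitsupergroup} to the data $(\varphi^{-1}, \mathbb G_2 \subset \QISORtildenul)$, producing a compact quantum group $\mathbb H$ with a quantum subgroup embedding $\QISORnul \subset \mathbb H$ and a monoidal equivalence $\psi\colon \QISORtildenul \to \mathbb H$ that extends $\varphi^{-1}$. Proposition \ref{prop-defSTsupergroup} identifies the $\psi$-deformation of $(\tilde \A, \tilde \Hi, \tilde D)$ with its $\varphi^{-1}$-deformation, which by the inverse-deformation theorem at the end of Section 2 is $(\A,\Hi,D)$ itself. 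Therefore Theorem \ref{mainthm} and Proposition \ref{prop-tildeR} (applied along $\psi$) endow $\mathbb H$ with an algebraic, orientation-preserving, $R$-twisted volume-preserving action on $(\A,\Hi,D)$. The universal property of $\QISORnul$ then yields a surjection $C(\QISORnul) \to C(\mathbb H)$, which combined with the embedding $\QISORnul \subset \mathbb H$ forces $\mathbb H \cong \QISORnul$.

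To conclude, the explicit formula of Theorem \ref{thm-explicitsupergroup} (adapted to $\varphi^{-1}$ and the data of Step 2) gives $\mathcal O(\mathbb H) \cong \B \boksdot[\mathcal O(\mathbb G_2)] \mathcal O(\QISORtildenul) \boksdot[\mathcal O(\mathbb G_2)] \tilde \B$. Substituting $\mathbb H = \QISORnul$ and applying the functor $\tilde \B \boksdot[\OGa] (\,\cdot\,) \boksdot[\OGa] \B$ to both sides, Proposition \ref{propGbtildeBB} (via $\tilde \B \boksdot \B \cong \mathcal O(\mathbb G_2)$ together with the trivial cotensor identity for a Hopf algebra over itself) simplifies the left-hand side to $\mathcal O(\mathbb G_2)$ and the right-hand side to $\mathcal O(\QISORtildenul)$. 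This produces an isomorphism of Hopf algebras $\mathcal O(\QISORtildenul) \cong \mathcal O(\mathbb G_2)$; tracing it through the construction confirms its compatibility with the quotient $\pi$ of Step 1, so $\pi$ itself is an isomorphism and $\mathbb G_2 \cong \QISORtildenul$.

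The main obstacle is handling the $R$-twisted volume-preservation in the middle step: one must verify that the representation of $\mathbb H$ on $\Hi$ produced by the spectral-triple isomorphism of Proposition \ref{prop-defSTsupergroup} is precisely the lift, through the quotient map $C(\mathbb H) \to C(\QISORnul)$ coming from $\QISORnul \subset \mathbb H$, of the universal representation of $\QISORnul$, so that Proposition \ref{prop-tildeR} can legitimately be invoked in the reverse direction and the operator it produces from $\tilde R$ along $\psi$ genuinely coincides with $R$. A secondary subtlety is ensuring in the final step that the abstract Hopf-algebra isomorphism obtained from the cotensor identities is truly compatible with the quotient $\pi$ of Step 1 and with the Woronowicz $C^*$-structure of $\QISORtildenul$.
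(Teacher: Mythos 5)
Your proposal is correct and follows essentially the same route as the paper: establish $\mathbb G_2$ as a quantum subgroup of $\QISORtildenul$ via Proposition \ref{prop-tildeR} and universality, induce $\varphi^{-1}$ up to the supergroup $\QISORtildenul$ using Proposition \ref{inducedunitfinfunctonsupergroup} and Theorem \ref{thm-explicitsupergroup}, apply Propositions \ref{prop-defissupergroup} and \ref{prop-defSTsupergroup} to obtain a supergroup of $\QISORnul$ acting by $R$-twisted volume- and orientation-preserving isometries on $(\A,\Hi,D)$, and force equality by universality. Your concluding cotensor-product computation merely makes explicit the descent from $\mathbb H\cong\QISORnul$ back to $\mathbb G_2\cong\QISORtildenul$, which the paper states without elaboration.
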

\begin{remark}\label{rem-algactQISO}
Note that the condition that $\QISORnul$ acts algebraically on $(\A,\Hi,D)$ is not essential. If $\QISORnul$ does not act algebraically on $(\A,\Hi,D)$, we know from proposition \ref{prop-existsalgactsubalg} that there exists a $^*$-algebra $\A_1$ which is SOT-dense in $\A''$ such that $(\A_1,\Hi,D)$ is a compact spectral triple on which $ \QISORnul$ acts algebraically. Moreover, $\QISORnul\cong \QISORnul[\A_1,\Hi,D]$ by proposition 3.9 of \cite{Goswami2014}.
\end{remark}

\begin{proof}[Proof of theorem \ref{thm-defisQISO0}]
By proposition \ref{QISO0existence}, there exists a universal object $\QISORnul$ in the category $\mathcal Q_R$ of compact quantum groups acting by $R$-twisted volume- and orientation preserving isometries on $(\A,\Hi,D)$.  For notational convenience, we will denote this quantum group by $\QISORnulb$. Now, as $\varphi:\QISORnulb \to \mathbb G_2$ is a monoidal equivalence, $\mathbb G_2$ acts algebraically and by orientation preserving isometries on $(\tilde\A,\tilde \Hi,\tilde D)=(\A\boksdot[\mathcal O(\QISORnulb)]\B,\Hi\bokstimes[C(\QISORnulb)]L^2(\B),\tilde D)$. Denote by $\tilde R$ the operator constructed in proposition \ref{prop-tildeR}. Then $\mathbb G_2$ is a quantum subgroup of $\QISORtildenul$. Moreover, the monoidal equivalence $\varphi^{-1}: \mathbb G_2 \to \QISORnul[\A,\Hi,D]$ induces a unitary fiber functor $\psi'$ on $\QISORtildenul$ by proposition \ref{inducedunitfinfunctonsupergroup}; we will denote the deformed quantum group by $\mathbb H_1$ and the monoidal equivalence associated to $\psi'$ (for notational convenience) by $\varphi'^{-1}:\QISORtildenul[\A,\Hi,D]\to \mathbb H_1$ and the associated bi-Galois object by $\tilde \B'$. As $\mathbb G_2$ is a quantum subgroup of $\QISORtildenul$, $\QISORnul$ is a quantum subgroup of $\mathbb H_1$ by proposition \ref{prop-defissupergroup} and both act by $R$-twisted volume- and orientation-preserving isometries on $(\A,\Hi,D)$ by proposition \ref{prop-defSTsupergroup}. Hence by universality
%hence it is a subgroup of $\QISORnuleen$. We have the following inclusions of quantum groups
%$$\QISORnuleen \leq \wQISORtildenuleen\leq \QISORnuleen$$ implying 
\begin{equation}
\QISORnuleen\cong  \mathbb H_1. 
\end{equation}
and also
$$\mathbb G_2\cong  \QISORtildenul.$$
This completes the proof.
\end{proof}

\subsubsection{Deformation of the quantum isometry group}\label{subsubsec-defQISO}
In this paragraph we use subsection \ref{subsec-inducing} and paragraph \ref{subsubsec-defQISO0} to strengthen the result of theorem \ref{thm-defisQISO0} to quantum isometry groups.
\begin{theorem}\label{thm-defisQISO} 
Let $(\A,\Hi,D)$ be an $R$-twisted compact spectral triple such that $\QISORnul$ acts algebraically on $(\A,\Hi,D)$. Suppose moreover that we have a monoidal equivalence $$\varphi:\QISORnul\to \mathbb G_2.$$ Then there exists a monoidal equivalence 
$$\varphi': \QISOR\to\QISORtilde$$
where $(\tilde \A,\tilde \Hi,\tilde D)$ is the spectral triple obtained by deformation with $\varphi$ by theorem \ref{mainthm} and $\tilde R$ the operator obtained from proposition \ref{prop-tildeR}.
\end{theorem}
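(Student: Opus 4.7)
The plan is to construct the desired monoidal equivalence by restricting $\varphi$ along the Woronowicz $C^*$-subalgebra inclusion $\QISOR\subseteq \QISORnul$, and then to identify the resulting target with $\QISORtilde$. First I would identify $\mathbb G_2$ with $\QISORtildenul$ via Theorem \ref{thm-defisQISO0}, so that $\varphi:\QISORnul\to \QISORtildenul$. Proposition \ref{indunifibfunct} and Theorem \ref{indmoneq} then produce a monoidal equivalence $\varphi':\QISOR\to \mathbb A_2$, where $C(\mathbb A_2)$ is a Woronowicz $C^*$-subalgebra of $C(\QISORtildenul)$. What remains is to show $\mathbb A_2=\QISORtilde$.

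For the inclusion $\QISORtilde\subseteq \mathbb A_2$, I would invoke Theorem \ref{thm-galoisobjectquotient} to describe the $\QISOR$--$\mathbb A_2$-bi-Galois object as $\B'=\{b\in \B\mid \beta_1(b)\in \mathcal O(\QISOR)\odot \B\}$, where $\B$ is the $\QISORnul$--$\QISORtildenul$-bi-Galois object, and then establish that $\tilde\A=\A\boksdot[\mathcal O(\QISORnul)]\B$ actually coincides with $\A\boksdot[\mathcal O(\QISOR)]\B'$. Indeed, Definition \ref{QISO} forces $\alpha_{U_0}(\A)\subseteq \A\odot \mathcal O(\QISOR)$, so for any $z=\sum_i a_i\otimes b_i\in \tilde\A$ with $\{a_i\}$ linearly independent the compatibility $(\alpha_{U_0}\otimes \id)z=(\id\otimes \beta_1)z$ forces $\beta_1(b_i)\in \mathcal O(\QISOR)\odot \B$, hence $b_i\in \B'$. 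The spectral decomposition $\B=\bigoplus_{x\in \Irred(\QISORnul)}\B_x$ combined with $(\id\otimes \beta_2)X^x=X^x_{12}U^{\varphi(x)}_{13}$ then shows that $\beta_2$ restricts to a coaction $\B'\to \B'\odot \mathcal O(\mathbb A_2)$, so $\alpha_{\tilde U_0}=(\id_\A\otimes \beta_2)|_{\tilde\A}$ takes values in $\tilde\A\odot \mathcal O(\mathbb A_2)$. As a Woronowicz $C^*$-subalgebra of $\QISORtildenul$, $\mathbb A_2$ inherits preservation of the $\tilde R$-twisted volume and orientation, and so lies in $\mathcal Q_{\tilde R}(\tilde\A,\tilde\Hi,\tilde D)$; the minimality built into Definition \ref{QISO} then yields $\QISORtilde\subseteq \mathbb A_2$.

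For the reverse inclusion I would exploit the inverse monoidal equivalence $\varphi^{-1}:\QISORtildenul\to \QISORnul$, whose iterated deformation returns $(\A,\Hi,D)$ by the last theorem of Section \ref{sec-mondef}. Restricting $\varphi^{-1}$ to $\QISORtilde\subseteq \QISORtildenul$ yields a monoidal equivalence $\QISORtilde\to \mathbb B_1$ with $\mathbb B_1\subseteq \QISORnul$, and the analogue of the previous paragraph, applied to the deformed spectral triple with quantum isometry group $\QISORtilde$, delivers $\QISOR\subseteq \mathbb B_1$. Denoting by $R_\varphi$ the operation of transporting a Woronowicz $C^*$-subalgebra of $\QISORnul$ to one of $\QISORtildenul$ along $\varphi$, this operation preserves inclusions and admits $R_{\varphi^{-1}}$ as two-sided inverse, since $\varphi$ is a bijection on equivalence classes of irreducible representations. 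Hence $\mathbb A_2=R_\varphi(\QISOR)\subseteq R_\varphi(\mathbb B_1)=R_\varphi(R_{\varphi^{-1}}(\QISORtilde))=\QISORtilde$, completing the identification $\mathbb A_2=\QISORtilde$. The main obstacle is the computation of the previous paragraph, namely the identification $\tilde\A=\A\boksdot[\mathcal O(\QISOR)]\B'$ together with the restriction $\beta_2(\B')\subseteq \B'\odot \mathcal O(\mathbb A_2)$; once these algebraic facts are in hand, the equality $\mathbb A_2=\QISORtilde$ follows formally from the invertibility of Woronowicz-subalgebra restriction along a monoidal equivalence.
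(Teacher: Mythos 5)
Your proposal is correct, and its skeleton coincides with the paper's: both identify $\mathbb G_2$ with $\QISORtildenul$ via Theorem \ref{thm-defisQISO0}, observe that $C(\QISOR)$ is a Woronowicz $C^*$-subalgebra of $C(\QISORnul)$, apply the induction machinery of subsection \ref{subsec-inducing} to obtain $\varphi'\colon \QISOR\to\mathbb A_2$, and reduce everything to showing $\mathbb A_2=\QISORtilde$. Where you genuinely diverge is in that last identification. The paper encodes both quantum isometry groups through their spectral supports, $C(\QISOR)=C^*(\{u^x_{ij}\mid \A_x\neq 0\})$ and $C(\QISORtilde)=C^*(\{u^{\varphi(x)}_{ij}\mid\tilde\A_{\varphi(x)}\neq 0\})$, and then quotes Theorem 7.3 of \cite{DeRijdt2010}, which asserts that the correspondence $\D_1\mapsto \D_1\boksdot[\OGa]\B$ preserves which spectral subspaces are nonzero; this yields both inclusions in one stroke. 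You instead prove $\QISORtilde\subseteq\mathbb A_2$ by hand, showing via Theorem \ref{thm-galoisobjectquotient} and the linear-independence argument that $\tilde\A$ already sits inside $\A\boksdot[\mathcal O(\QISOb_R)]\B'$, whence $\alpha_{\tilde U_0}(\tilde\A)\subseteq\tilde\A\odot\mathcal O(\mathbb A_2)$ and Definition \ref{QISO} gives the containment (your aside that $\mathbb A_2\in\mathcal Q_{\tilde R}$ is not needed for this and would anyway require exhibiting a representation of $\mathbb A_2$ on $\tilde\Hi$); you then get the reverse inclusion by running the same argument for $\varphi^{-1}$ on the deformed triple, using the double-deformation theorem at the end of Section \ref{sec-mondef} together with Proposition \ref{prop-tildeR}, and the fact that transport of Woronowicz $C^*$-subalgebras along a monoidal equivalence is an inclusion-preserving bijection. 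In effect you reprove the special case of the De Rijdt--Vander Vennet spectral-subspace theorem that the paper cites; this buys self-containedness at the cost of length, while the paper's citation keeps the proof to a few lines. I see no gap in your argument.
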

\begin{remark}
One can make again remark \ref{rem-algactQISO} here.
\end{remark}

\begin{proof}[Proof of theorem \ref{thm-defisQISO}]
Denote the universal object of $\Q_R$ for notational convenience by $\QISORnulb=(C(\QISORnulb),U_0)$. Analogously $\QISOb^0_{\tilde R}=\QISORtildenul$. As $C(\QISOb_R)=C^*(\{(f\otimes \id)\alpha_U(a)\mid a\in \A,f\in \A^*\})$, it is a Woronowicz $C^*$-subalgebra of $\QISORnulb$ and hence we can apply the theory of section \ref{subsec-inducing}. We obtain a compact quantum group $\mathbb H_2$ and a monoidal equivalence $\varphi':\QISOb_R\to\mathbb H_2$ and it suffices to prove $\mathbb H_2=\QISORtilde$. Note now that as $\QISORnulb$ acts algebraically on $(\A,\Hi,D)$, we can decompose $\A$ into spectral subspaces $\A_x$ and define the subset $I$ of $\Irred(\QISORnulb)$ by $I=\{x\in \Irred(\QISORnulb)\mid\A_x\neq 0\}$. Then we have $C(\QISOb_R)=C^*(\{u_{ij}^x\mid x\in I \})$ and $I=\Irred(\QISOb_R)$. Moreover, $C(\mathbb H_2)=C^*(\{u_{ij}^{\varphi(x)}\mid x\in I\})$ and by theorem 7.3 of \cite{DeRijdt2010}, we know that also $I= \{x\in \Irred(\QISORnulb)\mid\tilde \A_{\varphi(x)}\neq 0\}$. Hence we can conclude that $\mathbb H_2=\QISORtilde$.

This concludes the proof.

\end{proof}

\subsection{Deformation of the quantum isometry group of the Podle\'s sphere}
In this last subsection, we use subsection \ref{subse-defQISO} to find the quantum isometry group of the newly constructed spectral triple in theorem \ref{thm-defpodles}. Therefore we investigate first the quantum isometry group of the Podle\'s sphere. 

\begin{definition}[\cite{Podles1987}]
Define $B$ to be the unital $^*$-subalgebra of $C(SU_q(2))$ generated (as $^*$-algebra) by the elements $\alpha^2, \gamma^*\gamma,\gamma^2,\alpha\gamma$ and $\gamma^*\alpha$. The closure of $B$ is a Woronowicz $C^*$-algebra of $SU_q(2)$ and the associated compact quantum group is called $SO_q(3)$. 
\end{definition}
%\begin{remark}\label{basisSOq}
%Using proposition \ref{basisSUq}, we see that the elements of the form $\alpha^i\gamma^j(\gamma^k)^*$,$i,j,k\in \mathbb{N}$ with $i+j+k$ even and $(\alpha^*)^l\gamma^m(\gamma^*)^n$, $m,n\in \mathbb{N},l\in\mathbb{N}\setminus \{0\}$ with $l+m+n$ even form a basis of $SO_q(3)$.
%\end{remark}
In the classical situation, we know that $SO(3)$ is a quotient group of $SU(2)$, indeed $SO(3)=SU(2)/\{-1,1\}$. In the quantum versions this is also true: we can prove that $\mathbb{Z}_2$ is a normal quantum subgroup of $SU_q(2)$ and $SU_q(2)/\mathbb{Z}_2$ equals $SO_q(3)$.\\

\begin{theorem}[\cite{Bhowmick2010b}]
Let $S^2_{q,c}$ be the Podle\'s sphere as defined in subsection \ref{subsec-mondefpodles}. Then $$\QISOR[\mathcal O(S^2_{q,c}),\Hi,D]\cong SO_q(3).$$
\end{theorem}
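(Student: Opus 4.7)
The plan is to verify that $SO_q(3)$ is an object of $\Q_R(\mathcal{O}(S^2_{q,c}),\Hi,D)$ for an appropriate choice of twist $R$, and then to show that it enjoys the relevant universality within this category.

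First I would recall that $\mathcal{O}(S^2_{q,c})$ embeds into $\mathcal{O}(SU_q(2))$ as the subalgebra generated by $\tilde A, \tilde B$, all of whose generating monomials are even in $(\alpha,\gamma)$. Consequently $\mathcal{O}(S^2_{q,c})$ lies inside $\mathcal{O}(SO_q(3))$, the span of matrix coefficients of integer-spin representations of $SU_q(2)$, and the comultiplication of $SU_q(2)$ restricts to a coaction of $C(SO_q(3))$ on $\mathcal{O}(S^2_{q,c})$. The spectral triple of D\k abrowski--D'Andrea--Landi--Wagner is $SU_q(2)$-equivariant for the representation $U$ on $\Hi$ induced by $\Delta_{SU_q(2)}$, and $U$ commutes with $D$ by construction of $D$ in the spin-$n$ basis $e^n_{\pm 1/2,l}$; choosing $R$ of block-diagonal form with the Woronowicz $F_n$-matrices on each spin block $\Hi_n$, the $R$-twisted volume preservation is a consequence of the standard Woronowicz orthogonality relations. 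Since the induced action on $\mathcal{O}(S^2_{q,c})$ factors through $C(SO_q(3))$, this exhibits $SO_q(3)$ as an object of $\Q_R$.

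The technical core is universality. By definition of $\QISOR$ and Remark \ref{rem-algactQISO}, it suffices to identify $\QISORnul[\mathcal O(S^2_{q,c}),\Hi,D]$ and then extract the Woronowicz $C^*$-subalgebra generated by the matrix coefficients of the induced algebra coaction. A natural route is to prove $\QISORnul[\mathcal O(S^2_{q,c}),\Hi,D] \cong SU_q(2)$: given any $\mathbb{G}\in \Q_R$ with unitary representation $V$ commuting with $D$, decompose $V$ along the eigenspaces of $D$, each of the form $\mathbb{C}^2\otimes V_n$ for the spin-$n$ representation of $SU_q(2)$. The conditions that $\alpha_V$ preserves $\mathcal O(S^2_{q,c})\subset B(\Hi)$ (equivalently, compatibility with $\tilde A,\tilde B$ and the bounded commutators $[D,\tilde A],[D,\tilde B]$) together with $R$-twistedness force $V$ to arise from the spin-$1/2$ representation of $SU_q(2)$ composed with a morphism of compact quantum groups $SU_q(2)\to \mathbb{G}$, yielding the desired universality.

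Given this, $\QISORnul[\mathcal{O}(S^2_{q,c}),\Hi,D] \cong SU_q(2)$, and by Definition \ref{QISO}, $\QISOR$ is the Woronowicz $C^*$-subalgebra of $C(SU_q(2))$ generated by $\{(\phi\otimes\id)\alpha_{U}(a)\mid a\in\mathcal{O}(S^2_{q,c}),\,\phi\in\mathcal{O}(S^2_{q,c})^*\}$. Because this induced coaction takes values in $\mathcal{O}(S^2_{q,c})\odot\mathcal{O}(SO_q(3))$ by the first paragraph, the resulting subalgebra is exactly $C(SO_q(3))$, as claimed. The main obstacle is the universality step: disentangling the constraints imposed by $D$-equivariance, preservation of the algebra $\mathcal{O}(S^2_{q,c})$, and $R$-twistedness so as to force any admissible $V$ to descend from $SU_q(2)$; this requires explicit Woronowicz-type calculations with the spin-$n$ matrix coefficients on each eigenspace of $D$.
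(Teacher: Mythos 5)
The paper offers no proof of this statement at all: it is imported verbatim from Bhowmick--Goswami \cite{Bhowmick2010b}, so there is no internal argument to measure yours against. On its own terms, your proposal correctly isolates the two halves of such a proof, and the first half is essentially sound: $\tilde A$ and $\tilde B$ are of even degree in $\alpha,\gamma$, so $\mathcal O(S^2_{q,c})\subset \mathcal O(SO_q(3))$ and $\Delta_{SU_q(2)}$ restricts to a coaction of $C(SO_q(3))$; and taking $R$ block-diagonal with the Woronowicz $F$-matrices on each spin block gives $R$-twisted volume preservation by the orthogonality relations (the same device as in Proposition \ref{prop-tildeR} of this paper). A small imprecision: the eigenspaces of $D$ are not $\mathbb C^2\otimes V_n$ but the $\pm$-combinations $[(e^n_{1/2,l}\pm e^n_{-1/2,l})]_l$, each carrying a single copy of the spin-$n$ representation.

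The genuine gap is the universality step, which you yourself flag as ``the main obstacle'' and then do not carry out --- and that step is the entire content of the theorem. The claim that $D$-equivariance, preservation of $\mathcal O(S^2_{q,c})$ and $R$-twistedness ``force $V$ to arise from the spin-$1/2$ representation composed with a morphism $SU_q(2)\to\mathbb G$'' is precisely what must be proved, and nothing in the sketch indicates how: a priori each restriction $V_n$ of $V$ to a $D$-eigenspace is just some $(2n+1)$-dimensional unitary representation of the unknown $\mathbb G$, and tying the $V_n$ for different $n$ to one another and to a single fundamental copy requires imposing $\alpha_V(\tilde A),\alpha_V(\tilde B)\in\mathcal O(S^2_{q,c})\odot C(\mathbb G)$ generator by generator and extracting, by explicit computation with the spin-$1$ matrix coefficients, the defining relations of $SO_q(3)$ among the resulting elements of $C(\mathbb G)$. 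That computation occupies the bulk of \cite{Bhowmick2010b}; moreover the intermediate identification $\QISORnul[\mathcal O(S^2_{q,c}),\Hi,D]\cong SU_q(2)$ that your third paragraph takes as given is itself a nontrivial theorem of that paper, not a formality. As it stands the argument reduces the statement to an unproved assertion of equal difficulty, so the proposal does not constitute a proof.
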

Now we will investigate monoidal equivalences of $SO_q(3)$ in order to apply theorem \ref{thm-defisQISO} to find the quantum isometry group of the spectral triples constructed in \ref{thm-defpodles}.\\
%\subsubsection{Monoidal equivalences on $SO_q(3)$}
We defined $SO_q(3)$ as coming from a Woronowicz-$C^*$-subalgebra of $SU_q(2)$. Using the theorems of subsection \ref{subsec-inducing}, we will use the induction method to construct monoidal equivalences on $SO_q(3)$. 
%Now we have prepared all ingredients to apply the construction of paragraph \ref{subsec-inducing} to our concrete example. 
Therefore fix a monoidal equivalence between $SU_q(2)$ and a suitable $A_o(F')$ with $\dim(F')\geq 3$. %, then from proposition \ref{clasmoneqSUq}, we see that $q\leq 2-\sqrt 3$. 
As $SO_q(3)=SU_q(2)/\mathbb{Z}_2$, we find a Woronowicz subalgebra $R(F')$ of $A_o(F')$ such that $SO_q(3)$ is monoidally equivalent with $R(F')$. Now Theorem 4.1 in \cite{Wang2009}, gives us a concrete description of $R(F')$.

\begin{theorem}[Theorem 4.1 in \cite{Wang2009}]
Let $F\in \GL(n,\mathbb{C})$ be such that $F\overline{F}=\pm I_n$. Then every Woronowicz subalgebra of $A_o(F)$ is a quantum quotient group. Moreover it has only one normal subgroup of order 2 with quantum quotient group $C^*(r_{2m})$ (where $r_{2m}$ is the irreducible representation of dimension $2m$).
\end{theorem}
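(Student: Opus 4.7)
The plan is to combine Banica's explicit description of $\Irred(A_o(F))$ with a Tannaka--Krein style correspondence between Woronowicz $C^*$-subalgebras of a compact matrix quantum group and fusion-closed subsets of its set of irreducible representations. First I would recall from \cite{Banica1996} that $\Irred(A_o(F)) = \{r_k : k \in \mathbb{N}\}$ with $r_0$ trivial, $r_1$ the fundamental representation, each $r_k$ self-conjugate, and the $SU(2)$-type fusion rule
$$r_i \otimes r_j \;=\; r_{|i-j|} \oplus r_{|i-j|+2} \oplus \cdots \oplus r_{i+j}.$$
Next I would invoke the general principle that Woronowicz $C^*$-subalgebras of $A_o(F)$ correspond bijectively to subsets $S \subseteq \Irred(A_o(F))$ containing $r_0$ and closed under tensor products, direct summands and conjugation.

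The combinatorial classification of such $S$ is then straightforward: suppose $r_k \in S$ with $k \geq 1$. From $r_k \otimes r_k = r_0 \oplus r_2 \oplus \cdots \oplus r_{2k}$ we obtain $r_2 \in S$, and then the rule $r_2 \otimes r_m = r_{m-2} \oplus r_m \oplus r_{m+2}$ propagates membership by steps of two, yielding all $r_{k+2j}$ with $k+2j \geq 0$. If $k$ is odd, the same procedure eventually produces $r_1 \in S$ and hence every $r_j$. Thus the only fusion-closed subsets are $\{r_0\}$, the even part $\{r_{2j}\}_{j \geq 0}$ and the whole set, corresponding respectively to the three Woronowicz subalgebras $\mathbb{C}$, $C^*(r_{2m})$ (generated by the matrix coefficients of $r_2$) and $A_o(F)$ itself.

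It remains to realize each Woronowicz subalgebra as a genuine quantum quotient group and to single out the unique normal subgroup of order two. The trivial and total cases are immediate. For $C^*(r_{2m})$, define a Hopf $^*$-algebra surjection $\pi : \mathcal{O}(A_o(F)) \to C(\mathbb{Z}_2)$ by $u_{ij} \mapsto \delta_{ij}\, z$, where $z$ is the generator with $z^2 = 1$: the defining relations of $A_o(F)$ (unitarity of $U$ together with $U = F\overline{U}F^{-1}$) are invariant under $U \mapsto zU$, so $\pi$ is well defined; on $r_k$ it acts by multiplication by $z^k$, whence the $\pi$-coinvariant subalgebra is precisely the span of the matrix coefficients of the $r_{2j}$'s. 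Normality of this $\mathbb{Z}_2$ follows from the fact that $\pi$ is a central character, acting as a scalar on each irreducible. Any other order-two normal quantum subgroup would yield a nontrivial intermediate Woronowicz subalgebra, which by the classification must be $C^*(r_{2m})$, and a short argument shows the corresponding $\mathbb{Z}_2$ quotient is determined by its value on $r_1$, forcing it to coincide with the one constructed. The main obstacle is the Tannaka--Krein correspondence in step two together with the verification of normality: one has to check that the categorical bijection actually produces \emph{quantum quotient} groups rather than merely Woronowicz subalgebras, i.e.\ that the subalgebra attached to each fusion-closed $S$ is stable under both the left and the right regular coactions of the resulting finite quantum group, which is where the hypothesis $F\overline{F} = \pm I_n$ (giving the orthogonality/symplecticity of the fundamental representation) really enters.
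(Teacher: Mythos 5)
The paper offers no proof of this statement at all: it is imported verbatim as Theorem 4.1 of Wang's paper, so there is no internal argument to compare yours against. Taken on its own, your reconstruction is sound and follows what is essentially the standard route: Banica's $SU(2)$-type fusion rules for $A_o(F)$, the bijection between Woronowicz $C^*$-subalgebras and subsets of $\Irred(A_o(F))$ containing the trivial class and closed under tensor products, subobjects and conjugation, the resulting three-element lattice $\mathbb{C}\subset C^*(\{r_{2m}\}_m)\subset A_o(F)$, and the realization of the middle term as the coinvariant algebra of the central character $u_{ij}\mapsto \delta_{ij}z$. Two small corrections. First, your closing sentence misplaces where the hypothesis $F\overline{F}=\pm I_n$ really enters: it is consumed in the very first step, via Banica's theorem, to guarantee that $r_1$ is irreducible and self-conjugate with the stated fusion rules; the normality of the $\mathbb{Z}_2$ you construct needs nothing beyond the fact that $\pi(U)=zI_n$ is central, since then $(\id\otimes\pi)\Delta$ and $(\pi\otimes\id)\Delta$ both act on a matrix coefficient of $r_k$ by tensoring with $z^k$, so the left and right coinvariant algebras visibly coincide. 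Second, the uniqueness claim deserves one explicit line rather than the phrase ``a short argument'': any Hopf $^*$-algebra surjection $\pi$ onto $C(\mathbb{Z}_2)$ sends $U$ to $P_+\otimes 1+P_-\otimes z$ for complementary projections $P_\pm$, and the coinvariant algebra contains a nonzero coefficient of $r_1$ exactly when $P_+\neq 0$; since by your lattice classification the quotient of a proper nontrivial normal subgroup must be $C^*(\{r_{2m}\}_m)$, which contains no coefficient of $r_1$, one gets $P_+=0$ and $\pi$ is forced to be the map you wrote down. With those points made explicit the argument is complete.
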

Applying this theorem to $F=F_q$, it affirms that $SO_q(3)$ is the only compact quantum quotient group of $SU_q(2)$. Applying it to $F=F'$, we get a concrete description of $R(F')$. By remark \ref{nondimpreserving}, it can be seen that the induced monoidal equivalence is not dimension-preserving and hence not a 2-cocycle deformation (by proposition \ref{cocycle-unfibfunct}).

Combining all of this, we get

\begin{theorem}\label{mainthmcha5}
Let $F\in \GL(n,\mathbb{C})$ be such that $F\overline{F}=\pm I_n$ and $\varphi:SU_q(2)\to A_o(F)$ a monoidal equivalence with bi-Galois object $\B=A_o(F_q,F)$. Define $I(F)$ to be the $C^*$-algebra generated by the 
$U_{ij}U_{kl}$ where $U$ is the unitary in $M_n(A_o(F))$ satisfying the relation $U=F\overline U F^{-1}$ as in definition \ref{defAoF}. Define $P(F_q,F)$ to be the $^*$-algebra generated by the $Y_{ij}Y_{kl}$ where $Y$ is the unitary in $M_{n_2,n_1}(\mathbb{C})\otimes C_u(A_o(F_1,F_2))$ described in theorem \ref{moneqAoF}.
Then there exists a monoidal equivalence $\varphi':SO_q(3)\to I(F)$ with bi-Galois object $\B'=P(F_q,F)$ which is not dimension-preserving (by remark \ref{nondimpreserving}).
\end{theorem}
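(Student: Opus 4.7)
The plan is to apply the induction procedure for Woronowicz $C^*$-subalgebras developed in subsection \ref{subsec-inducing}. By construction, $SO_q(3)$ is the Woronowicz $C^*$-subalgebra of $SU_q(2)$ generated by the degree-two monomials $\alpha^2,\gamma^*\gamma,\gamma^2,\alpha\gamma,\gamma^*\alpha$. Applying Proposition \ref{indunifibfunct} and Theorem \ref{indmoneq} to $\varphi: SU_q(2)\to A_o(F)$, I obtain a compact quantum group $\mathbb A_2=(A_2,{\Delta_{A_o(F)}}_{|_{A_2}})$ together with a monoidal equivalence $\varphi': SO_q(3)\to \mathbb A_2$, where $A_2$ is the closed linear span of the matrix coefficients of $U^{\varphi(x_{SU_q(2)})}$ for $x\in\Irred(SO_q(3))$. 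It remains to identify $\mathbb A_2$ with $I(F)$ and the corresponding bi-Galois object with $P(F_q,F)$, and then to invoke Remark \ref{nondimpreserving}.

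For the first identification, I use the well-known fact that the irreducible representations of $SO_q(3)$, seen as representations of $SU_q(2)$, are precisely the even-indexed irreducibles $r_{2m}$ in Banica's labeling, and that each $r_{2m}$ occurs as a subrepresentation of the tensor power $r_1^{\otimes 2m}$ (where $r_1$ is the fundamental representation of $SU_q(2)$). Since monoidal equivalences preserve tensor products and subobjects, $\varphi(r_{2m})$ is a subrepresentation of $\varphi(r_1)^{\otimes 2m}$, and the matrix coefficients of $\varphi(r_1)$ are exactly the entries $U_{ij}$ of the generating unitary of $A_o(F)$. Hence the matrix coefficients of $\varphi(r_{2m})$ are linear combinations of products $U_{i_1j_1}\cdots U_{i_{2m}j_{2m}}$, which lie inside $I(F)$; conversely every such even product lies in the span of matrix coefficients of the $\varphi(r_{2m})$'s, so $A_2 = I(F)$.

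For the bi-Galois object, Theorem \ref{thm-galoisobjectquotient} gives $\B'=\{b\in\B\mid \beta_1(b)\in\mathcal O(SO_q(3))\odot \B\}$ and identifies $\B'$ as the span of the matrix coefficients of the unitaries $X^x$, $x\in\Irred(SO_q(3))$. Applying the same subrepresentation argument to the unitary $Y\in M_{n,2}(\mathbb C)\otimes \mathcal O(A_o(F_q,F))$ described in Theorem \ref{moneqAoF} (which plays the role of $X^{r_1}$), the matrix coefficients of the $X^{r_{2m}}$ are linear combinations of products $Y_{i_1j_1}\cdots Y_{i_{2m}j_{2m}}$, and together they span exactly $P(F_q,F)$. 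Thus $\B'=P(F_q,F)$. The non-dimension-preservation is then immediate from Remark \ref{nondimpreserving}: for $\dim(F)\geq 4$ one has $\dim(\varphi'(r_2))=\dim(\varphi(r_2))>3=\dim(r_2)$, so $\varphi'$ cannot be dimension-preserving, and in particular (by Proposition \ref{cocycle-unfibfunct} combined with Proposition \ref{dimpresunfibfunctinducescocycle}) does not arise from a 2-cocycle.

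The main obstacle I anticipate is the clean identification of the generating sets, in particular verifying that the $^*$-algebra generated by the degree-two monomials in the $U_{ij}$ (respectively the $Y_{ij}$) is genuinely the same as the span of matrix coefficients of the even-indexed subrepresentations. This uses that the $F$-relation $U = F\overline U F^{-1}$ converts complex conjugate entries into linear combinations of entries of $U$, and similarly for $Y$ via the relation $Y=(F\otimes 1)\overline Y(F_q^{-1}\otimes 1)$, so that products of entries and their adjoints stay in the even-length span. Once this bookkeeping is in place, the rest of the proof is a formal consequence of subsection \ref{subsec-inducing} together with Remark \ref{nondimpreserving}.
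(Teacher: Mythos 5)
Your proposal is correct and follows essentially the same route as the paper: induce the monoidal equivalence on the Woronowicz $C^*$-subalgebra $SO_q(3)\subset SU_q(2)$ via Proposition \ref{indunifibfunct} and Theorems \ref{indmoneq} and \ref{thm-galoisobjectquotient}, identify the resulting subalgebra and bi-Galois object with $I(F)$ and $P(F_q,F)$, and invoke Remark \ref{nondimpreserving} together with Proposition \ref{cocycle-unfibfunct} for the non-dimension-preserving claim. The only difference is that where you identify $A_2$ with $I(F)$ by hand through the decomposition of $r_1^{\otimes 2m}$ and the relation $U=F\overline{U}F^{-1}$, the paper outsources this identification to Theorem 4.1 of Wang, which describes the unique proper Woronowicz subalgebra of $A_o(F)$ as $C^*(r_{2m})$; both are valid, and your version is the more self-contained of the two.
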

Now we are ready to characterize the quantum isometry groups of the spectral triples constructed in \ref{thm-defpodles}.

\begin{theorem}\label{thm-defpodleswithQISO}
Let $q\in [-1,1]\setminus\{0\}$ and $n$ a natural number with $3\leq n\leq |q+1/q|$. If $q>0$, suppose $n$ is even.
With the matrix $F$ defined as in theorem \ref{thm-defpodles}, $I(F)$ as constructed in theorem \ref{mainthmcha5} is the quantum isometry group of the spectral triple $$\big(\mathcal{O}(S_{q,c}^2)\boksdot[\mathcal{O}(SU_q(2))] \mathcal O(A_o(F_q,F)),\quad\Hi\bokstimes[C(SU_q(2))] L^2\big(\mathcal O(A_o(F_q,F))\big), \quad\tilde D\big)$$ from theorem \ref{thm-defpodles}. 
\end{theorem}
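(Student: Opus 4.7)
My strategy is to use Theorem~\ref{thm-defisQISO}, which identifies $\QISORtilde$ up to monoidal equivalence with $\QISOR[\mathcal O(S^2_{q,c}),\Hi,D]\cong SO_q(3)$, and then to match this equivalence with the explicit one produced by Theorem~\ref{mainthmcha5} in order to conclude $\QISORtilde\cong I(F)$.

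First I would recall from Bhowmick--Goswami that $\QISOR[\mathcal O(S^2_{q,c}),\Hi,D]\cong SO_q(3)$, and from Theorem~\ref{thm-defpodles} that the deformed spectral triple is built from the monoidal equivalence $\varphi:SU_q(2)\to A_o(F)$. The subtlety is that Theorem~\ref{thm-defisQISO} requires a monoidal equivalence emanating from $\QISORnul$, not from $SU_q(2)$. I bridge this with the supergroup induction of Subsection~\ref{subsec-inducingonsupergroups}: the universal property of $\QISORnul[\mathcal O(S^2_{q,c}),\Hi,D]$ produces a morphism $C(\QISORnul[\mathcal O(S^2_{q,c}),\Hi,D])\to C(SU_q(2))$, and after factoring through the image we may take $SU_q(2)$ as a quantum subgroup of $\QISORnul[\mathcal O(S^2_{q,c}),\Hi,D]$ (invoking Remark~\ref{rem-algactQISO} to ensure algebraicity of the action of $\QISORnul$ whenever needed).

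Then Proposition~\ref{inducedunitfinfunctonsupergroup} lifts $\varphi$ to a monoidal equivalence $\tilde\varphi:\QISORnul[\mathcal O(S^2_{q,c}),\Hi,D]\to\tilde{\mathbb G}$, and Proposition~\ref{prop-defSTsupergroup} ensures that the spectral triple obtained by deforming along $\tilde\varphi$ is isomorphic to the one obtained by deforming along $\varphi$. Theorem~\ref{thm-defisQISO0} then identifies $\tilde{\mathbb G}\cong\QISORtildenul$, and Theorem~\ref{thm-defisQISO} produces a monoidal equivalence
\[
\varphi'':\QISOR[\mathcal O(S^2_{q,c}),\Hi,D]=SO_q(3)\ \longrightarrow\ \QISORtilde.
\]

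Finally, $\varphi''$ is built by applying the Woronowicz subalgebra induction (Theorem~\ref{indmoneq}) to $\tilde\varphi$ along $C(SO_q(3))\subset C(\QISORnul[\mathcal O(S^2_{q,c}),\Hi,D])$. Since $C(SO_q(3))$ sits inside $C(SU_q(2))\subset C(\QISORnul[\mathcal O(S^2_{q,c}),\Hi,D])$, unpacking the underlying fiber functors shows that $\varphi''$ coincides with the monoidal equivalence one obtains by Woronowicz subalgebra induction of $\varphi:SU_q(2)\to A_o(F)$ along $C(SO_q(3))\subset C(SU_q(2))$, which by Theorem~\ref{mainthmcha5} is exactly $\varphi':SO_q(3)\to I(F)$. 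Hence $\QISORtilde\cong I(F)$. The main obstacle is this last naturality step, namely the commutativity of the Woronowicz subalgebra induction with the supergroup induction; it should follow from writing out Propositions~\ref{indunifibfunct} and~\ref{inducedunitfinfunctonsupergroup} explicitly on morphisms $S\in\Mor(x_1\otimes\ldots\otimes x_r,y_1\otimes\ldots\otimes y_s)$ for $x_i,y_j\in\Irred(SO_q(3))$, but requires careful bookkeeping.
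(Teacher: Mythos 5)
Your overall strategy is viable, and you have correctly spotted a point the paper glosses over: Theorem \ref{thm-defisQISO} takes as input a monoidal equivalence on $\QISORnul$, whereas the deformation in Theorem \ref{thm-defpodles} is performed along $\varphi:SU_q(2)\to A_o(F)$. The paper's (implicit) proof is the direct one: the computation in \cite{Bhowmick2010b} identifies not only $\QISOR[\mathcal O(S^2_{q,c}),\Hi,D]$ with $SO_q(3)$ but also the universal object $\QISORnul[\mathcal O(S^2_{q,c}),\Hi,D]$ with $SU_q(2)$ acting via $U$; hence $\varphi$ already \emph{is} a monoidal equivalence on $\QISORnulb$, Theorem \ref{thm-defisQISO} applies verbatim, and the Woronowicz-subalgebra induction carried out in its proof (over $I=\Irred(SO_q(3))$ inside $\Irred(SU_q(2))$) is literally the $\varphi':SO_q(3)\to I(F)$ of Theorem \ref{mainthmcha5}. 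Your detour through the supergroup induction of Subsection \ref{subsec-inducingonsupergroups} buys independence from the identification $\QISORnul[\mathcal O(S^2_{q,c}),\Hi,D]\cong SU_q(2)$, at the cost of an extra naturality argument; the direct route costs one more citation and nothing else.

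Two points in your version need repair. First, ``$C(SO_q(3))$ sits inside $C(SU_q(2))\subset C(\QISORnulb)$'' is wrong as written: $C(SU_q(2))$ is a \emph{quotient} of $C(\QISORnulb)$ via the universal morphism $\pi$, not a subalgebra. What you actually need is that the Woronowicz $C^*$-subalgebra $C(\QISOR)\subset C(\QISORnulb)$ of Definition \ref{QISO} is mapped by $\pi$ isomorphically onto $C(SO_q(3))\subset C(SU_q(2))$: surjectivity is immediate from $\alpha_U=(\id\otimes\pi)\alpha_{U_0}$, but injectivity is precisely the content of $\QISOR[\mathcal O(S^2_{q,c}),\Hi,D]\cong SO_q(3)$, and you must use that the isomorphism of \cite{Bhowmick2010b} is implemented by $\pi$. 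Second, the final ``naturality'' step is the actual crux of the statement --- without it you only obtain that $\QISORtilde$ is \emph{some} Woronowicz $C^*$-subalgebra of the supergroup deformation, not that it equals $I(F)$ --- and you assert it rather than prove it. It does hold: for $x\in I=\Irred(\QISOR)$ the representation $(\id\otimes\pi)U^x$ has coefficients in $C(SO_q(3))$ and is irreducible for $SU_q(2)$ (the $\Mor$-spaces agree by the proposition in Subsection \ref{subsec-inducing}), so in Proposition \ref{inducedunitfinfunctonsupergroup} there are no direct sums to take on these classes and the induced fiber functor restricted to $I$ reduces to $\varphi$ on $\Irred(SO_q(3))$; consequently $C(\mathbb H_2)=C^*(\{u^{\varphi(x)}_{ij}\mid x\in \Irred(SO_q(3))\})=C(I(F))$. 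Spell this out; as written the proposal stops just short of the identification the theorem claims.
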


%\subsubsection{Deformation of the quantum isometry group of the Podle\'s sphere}

%After that, we will recall the result in \cite{Bhowmick2010b} that $SO_q(3)$ is the quantum isometry group of $S^{2}_{q,c}$ and finally, we will use the unitary fiber functors on $SO_q(3)$ developed above to find a new spectral triple with a quantum group acting on it in an isometrical way.
%
%Furthermore, in \cite{Bhowmick2010b}, Bhowmick and Goswami proved that $SO_q(3)$ acts isometrically and by orientation-preserving isometries on $(C(S^2_{q,c}),\Hi, D)$. Indeed, from the description of $S^2_{q,c}$ with $\tilde A$ and $\tilde B$ as in the last section, one can prove that the coproduct of $SU_q(2)$ restricted to the Podle\'s sphere is indeed an action $C(S^2_{q,c})\to C(S^2_{q,c})\otimes C(SO_q(3))$. Even more, the action is algebraic and $SO_q(3)$ is the quantum isometry group (defined in \cite{Bhowmick2009}) of $S^2_{q,c}$.

\section*{Acknowledgments}
I want to thank Johan Quaegebeur and Pierre Bieliavsky for many fruitful discussions and most valuable and appreciated advice. Also Kenny De Commer, Yuki Arano and Stefaan Vaes are warmly thanked for discussing about the subject. The valuable and detailed comments and suggestions of the referees are warmly appreciated. They made it possible to improve the paper substantially. Finally I want to thank the IAP DYGEST for financial support.

\bibliographystyle{acm}
\bibliography{paperldesadeleerrevision}

%Where the bibliography will be printed
 %\printbibliography

\end{document}